\tikzset{>=latex}
\newcommand{\R}{\mathbb R}
\newcommand{\N}{\mathbb N}
\newcommand{\Q}{\mathbb Q}
\newcommand{\Z}{\mathbb Z}
\newcommand{\D}{\displaystyle}
\newcommand{\GG}{\mathcal{G}}
\newcommand{\DLOG}{\textbf{DLOGSPACE}}
\newcommand{\PREDU}{\textsf{U-PRED}}
\newcommand{\PREDB}{\textsf{B-PRED}}
\newcommand{\PREDC}{\textsf{PRED-CHG}}
\newcommand{\PRED}{\textsf{PRED}}
\newcommand{\REACH}{\textsf{REACH}}
\newcommand{\FNC}{\textbf{FNC}}
\newcommand{\NC}{\textbf{NC}}
\DeclareMathOperator{\lcm}{lcm}
\newtheorem{theorem}{Theorem}
\newtheorem{corollary}{Corollary}
\newtheorem{lemma}[theorem]{Lemma}
\newtheorem{proposition}{Proposition}
\newtheorem*{problem}{Problem}
\newtheorem{definition}[theorem]{Definition}
\newtheorem{remark}{Remark}
\newcommand\TODOBIB[1]{{\color{red}{\textbf{BIBLIO:} #1}}}
\newcommand\TODO[1]{{\par\noindent\color{red}{\textbf{TODO:} #1}}}
\newcommand\TODOFIG[1]{{\par\noindent\color{red}{\textbf{FIGURE:} #1}}}
\newcommand{\problemtitle}[1]{\gdef\@problemtitle{#1}}
\newcommand{\probleminput}[1]{\gdef\@probleminput{#1}}
\newcommand{\problemquestion}[1]{\gdef\@problemquestion{#1}}
\newcounter{ProblemCounter}
	\par\addvspace{.5\baselineskip}
	\par\addvspace{.5\baselineskip}
\title{Intrinsic Simulations and Universality in Automata Networks\thanks{This research was partially supported by French ANR project FANs ANR-18-CE40-0002 (G.T., M.R.W.), ECOS project C19E02 (G.T., M.R.W.) and ANID FONDECYT Postdoctorado 3220205 (M.R-W)}}
\author{ Mart\'in R\'ios-Wilson \\
Facultad de Ingenier\'ia y Ciencias, Universidad Adolfo Ib\'a\~nez.\\ 
	\texttt{martin.rios@uai.cl} \\
	\And
	Guillaume Theyssier \\
Aix-Marseille Universit\'e, CNRS, I2M (UMR 7373), Marseille, France.\\
	\texttt{guillaume.theyssier@cnrs.fr} \\
}
\date{}
\renewcommand{\shorttitle}{\textit{arXiv} Template}
\begin{document}

\maketitle
\begin{abstract}
  An automata network (AN) is a finite graph where each node holds a state from a finite alphabet and is equipped with a local map defining the evolution of the state of the node depending on its neighbors. They are studied both from the dynamical and the computational complexity point of view. Inspired from well-established notions in the context of cellular automata, we develop a theory of intrinsic simulations and universality for families of automata networks. We establish many consequences of intrinsic universality in terms of complexity of orbits (periods of attractors, transients, etc) as well as hardness of the standard well-studied decision problems for automata networks (short/long term prediction, reachability, etc). In the way, we prove orthogonality results for these problems: the hardness of a single one does not imply hardness of the others, while intrinsic universality implies hardness of all of them. As a complement, we develop a proof technique to establish intrinsic simulation and universality results which is suitable to deal with families of symmetric networks were connections are non-oriented. It is based on an operation of glueing of networks, which allows to produce complex orbits in large networks from compatible pseudo-orbits in small networks. As an illustration, we give a short proof that the family of networks were each node obeys the rule of the 'game of life' cellular automaton is strongly universal. This formalism and proof technique is also applied in a companion paper devoted to studying the effect of update schedules on intrinsic universality for concrete symmetric families of automata networks.
\end{abstract} 
\newpage

  \section{Introduction}
  An automata network is a (finite) graph where each node holds a state from a finite set $Q$ and is equipped with a local transition map that determines how the state of the node evolves depending on the states of neighboring nodes.
  Automata networks introduced in the 40s \cite{McCulloch_1943} are both a family of dynamical systems frequently used in the modeling of biological networks \cite{Thomas_1973,KAUFFMAN_1969} and a computational model \cite{GR15,ChM14,goles1997reaction,WR79ii,WR79i}.
  As such they can exhibit complexity from two very different point of view: complexity of orbits in their phase space (cycles, transient, etc) and computational complexity of canonical problems associated to them (prediction, reachability, etc).  
An automata network is completely described by its global map ${F:Q^V\rightarrow Q^V}$ that defines the collective evolution of all nodes $V$ of the network.
However, this global map hides two fundamental aspects at the heart of automata network literature \cite{DemongeotS20,Gadouleau_2019,chatain2018most,ARS17b,GN12,Aracena_2009,Robert_1969}: the interaction graph (knowing on which nodes effectively depends the behavior of a given node) and the update schedule (knowing in which order and with which degree of synchrony are local transition maps of each node applied).
Moreover, when considering computational complexity, the concrete representation of automata networks crucially maters.
An arbitrary global map $F$ requires as much information to be described as the complete description of its orbits.
On the contrary, many standard families of automata networks are actually described concretely by the interaction graph and the parameters of each local transition map (see below).
We are mainly interested in such natural families with concrete succinct representations and not in arbitrary global maps.

This paper is the first of a series of two that tackles core questions of automata networks theory:
\begin{itemize}
\item how dynamical complexity relates to computational complexity in automata networks? 
\item what families exhibit maximal complexity in both aspects and what are the key ingredients or sufficient conditions to achieve it?
\item how can update schedules compensate for the limitation coming from restrictions and symmetries of a family?
\item what hierarchy of different types of update schedules can be established with respect to the rise of complexity when applied to a family?
\end{itemize}

One of the key concept that we put forward to tackle these question is that of intrinsic universality of a family, \textit{i.e.} the ability to simulate arbitrary automata networks.
In this first paper, we focus on the formalism of universality of families of automata networks and the common roots of dynamics and computational complexity.
It aims both at studying the consequences of universality and establishing proof techniques for it.
Building upon these tools, the second paper is devoted to update schedules and their key influence on concrete families of automata networks, in particular how a non-universal family can recover universality under a particular update schedule.

\subsection{Motivating examples}

The initial motivation of this paper lies in the striking interplay established in some cases between the computational complexity and the dynamical properties of automata networks.
These two aspects are often measured in terms of periods of attractors or transient length and the computational complexity of various prediction or reachability problems respectively.
The prediction problem consists in predicting the future state of an objective node, given an initial condition and has been widely studied for different classes of automata networks \cite{goles2014computational,goles2016pspace,goles2021freezing,goles2022complexity} or cellular automata \cite{moore1998predicting,griffeath1996life,moore1997majority,goles2018complexity}.
The reachability problems generally asks whether a given input configuration will reach a given target configuration \cite{Barrett_2006,Barrett_2003}, sometimes giving only partial information on the target configuration \cite{Folschette_2015}. It has also been considered in a much generalized version as a benchmark for computational universality for arbitrary symbolic dynamical systems \cite{DelvenneKB06}.

However, if we consider only these measurements, the link between computational complexity and dynamical complexity in a particular automata network family is not clear. In fact, different families exhibit various types of behavior both from a computational and from a dynamical standpoint. Let us consider three of them: threshold networks, algebraic networks and the outer-totalistic networks.

In the first family, all the nodes hold a binary state: $0$ or $1$. In addition, an integer $\theta_v$ is assigned to each node representing its  \emph{threshold}. The dynamics of the network is defined locally by the sum of the states of its neighbors. If at least $\theta_v$ neighbors of $v$ are in state $1$ the node $v$ will change its internal state to $1$. In any other case, the node will change its state to $0$. A standard example of threshold networks are majority networks where $\theta_v$ is set to half the size of the neighborhood. 
A  seminal result \cite{GolesO80,PaperGoles} shows that symmetric threshold networks (the underlying interaction graph is non-directed) cannot have periodic orbits of period more than 2, and that they have polynomially bounded transients; this implies the existence of a polynomial time algorithm to predict the future of a node from any given initial configuration \cite{goles2014computational}, however the prediction problem can be P-complete on symmetric majority networks \cite{moore1997majority}. Here, the strong limitation on attractor periods does not totally exclude computational complexity.
On the other hand, majority networks under partially asynchronous updates (precisely block-sequential update modes) were shown to have super-polynomial periodic orbits and a PSPACE-complete prediction problem \cite{GolesMST16,ChM14}. A similar result was obtained recently for conjunctive networks under a more general update mode called 'firing memory' \cite{goles2020firing}.

The second family, the algebraic networks, are networks that have a linear global map and can be represented by a matrix. Thus, an orbit of the system is completely determined by the powers of the corresponding matrix. A notorious example is the case of the elementary 1D cellular automata rule $90$ in which the local rule in each cell is simply the sum modulo $2$ of the states of the left and the right neighbors. On one hand it is shown in \cite{martin1984algebraic} that the largest period $\pi_N$ of a periodic orbits of this CA on a network of size $N$ is exponential, precisely: ${\Pi_N\leq 2^{\frac{N-1}{2}} - 1}$ for all $N$ and ${\limsup_N\frac{\log\Pi_N}{N}=\frac{1}{2}}$, but the value of ${\Pi_N}$ highly depends on the multiplicative number theoretic properties of $N$. On the other hand, since an orbit is determined by the powers of some matrix, the prediction problem can be solved by an efficient parallel algorithm (see for example \cite{goles2014computational} for an efficient algorithm for the prediction problem on disjunctive networks and \cite{joseph1992introduction} for more details on efficient parallel algorithms for the prefix sum problem). Here, the strongly limited computational complexity does not exclude long periods of attractors.

As a last example, the outer-totalistic networks are characterized by dynamics depending only in the state of the node and the sum of the states of the neighbors. A very famous example is the cellular automaton known as Conway's \emph{Game of Life}. In this two dimensional cellular automaton,  each cell can be either dead or alive and the state of each cell depends on how many alive cells are nearby. If a cell is dead and has exactly $3$ alive neighbors, or if it it alive and has $2$ or $3$ alive neighbors, then the cell is alive at the next step. In any other case, the cell is dead at the next step.
In \cite{liferokadur}, this cellular automaton is shown to be intrinsically universal, \textit{i.e.} able to closely simulate any other cellular automaton.
It has consequences in terms of computational complexity but also dynamics \cite{bulk2,goles2011communication}.
These two forms of complexity can be observed on periodic configurations, where the cellular automaton can be seen as a family of (finite) automata networks, and through hardness of the prediction problem and exponential periods of attractors for instance.

\subsection{Our contributions}

The previous examples show that focusing on a single aspect (like exponential periods of some orbits, or the hardness of a prediction problem) is not enough to accurately describe the complexity of families of automata networks.
However, complexity of different parameters or problems for a family can have a common root with many consequences in terms of complexity: intrinsic universality, \textit{i.e.} the ability to simulate arbitrary behaviors of automata networks.
This notion has been successfully developed in the context of cellular automata \cite{goles2011communication,bulk2,OllingerRichard4states,ollinger2008universalities,boyertheyssier09,mazoyer1999inducing} and for other models like self-assembly tilings \cite{MeunierWoods2017,DLPSSW2012,PSTWW14}.
Although it appears implicitly in some results like in \cite{goles1997reaction}, the notion has never been explicitly ported to automata networks to our knowledge.
In this paper, as our first main contribution, we explore this idea in depth: we develop a formalism of intrinsic simulation and universality for families of automata networks, and we study its implication for various notions of complexity, both of dynamical and computational nature.

A common aspect to most hardness results for automata networks is the formulation of gadgets simulating logic gates in order to perform a reduction to classic computational complexity problems such as the circuit value problem or the boolean satisfiability problem.
Since there are similarities in the way these gadgets are constructed, it is usually accepted that the mere existence of these gadgets imply that the reductions are correct because a 'simulation of Boolean circuit' is achieved.
To our knowledge, no formalism for gadgets and their composition has been proposed so far, and we see two problems in this situation:
\begin{itemize}
\item First, it is not clear what 'simulating a Boolean circuit with gadgets' means since the asynchronous nature of Boolean circuit evaluation does not match with the extreme sensitivity to synchronism of automata networks; moreover, in a symmetric (non-oriented) network family like in the examples above, gadgets have no clear notion of input and output nodes and, due to potential unwanted feedback behaviors, it is not as simple to correctly connect them as it is to connect logic gates in a Boolean circuit.
\item Second, the lack of gadget formalism pushes the authors to state results about hardness of a particular decision problem or lower bound of a particular dynamical parameter: this approach is not modular since, as illustrated above, complexity of one particular aspect does not generally imply complexity of another. On the contrary, subsequent literature can possibly derive several consequences of a suitable statement about existence of gadgets. We believe for instance that proving intrinsic universality with gadgets is a much more far reaching objective than proving the hardness of a particular decision problem.
\end{itemize}

In this paper, as our second main contribution, we establish a general framework to make proofs based on gadget, suitable to prove intrinsic simulation and universality, even for families with symmetric networks which are prone to feedback problems.
At the core of our approach is a concept of glueing of small unoriented networks, which permits to build a large network with a prescribed oriented flow of information through the dynamics (pseudo-orbits).
If we consider two families of automata networks $A$ and $B$, our approach allows to prove intrinsic simulation in order to hit two targets with one bullet: with a proof that A simulates a previously analyzed B we show complexity lower bounds on A both in the dynamical and in the computational sense; in particular, intrinsic universality of A can be shown like this.
Our approach is not a priori limited to a small set of benchmark problems or properties: universality results can be used as a black box to then prove new corollaries on the complexity of other decision problems or other dynamical aspects.

We aim at making our approach general and modular, while giving concrete and relevant examples that were considered in literature.
That is why, behind the main goals of the paper presented above, we carefully deal with many aspects of the formalism that are often left implicit in the literature or treated in a specific way for selected examples. A more detailed list of our contributions is as follows:
\begin{enumerate}
\item We define general notion of family of automata networks (Definition~\ref{def:standdef}) that takes care of the size and complexity of representations.
\item As a running example we consider a set of concrete families (Definition~\ref{def:globalcsan}) that capture many natural examples from the literature (Section~\ref{sec:exampl-csan-netw}) and will be at the heart of the companion paper.
\item We define a notion of intrinsic simulation between families of automata networks (Definition~\ref{def:simu-family}) and define two variants of intrinsic universality (Definition~\ref{def:universal}).
\item We study the theoretical implications of universality in the computational complexity of various problems (Corollary~\ref{cor:universality}) as well as dynamical complexity (Theorem~\ref{them:univ-rich-dynamics}).
\item We show that, while universality implies hardness of all problems, there exist non-universal families which are hard for one problem but easy for another (Theorems~\ref{theo:orthogonalpredictions} and~\ref{theo:orthogonalreach}). We also show that neither polynomially bounded periods nor polynomially bounded transients can discard a maximal complexity of the prediction problem alone, but both bounds together does (Theorem~\ref{theo:orthodyncompu}).
\item We show that the automata network family associated to an intrinsically universal of cellular automata lies between our two notions of intrinsic universality for automata networks (Theorems~\ref{theo:iuca} and~\ref{theo:ballgrowthnouniv}).
\item We consider several families of automata networks based on a finite set of local update rules called $\GG$-networks (Definition~\ref{def:g-network}), and show various simulations or (non-)universality results on standard examples (Section~\ref{sec:gmonuniv} and Theorems~\ref{theo:Gconj-networks} and~\ref{theo:non-poly-transient} and~\ref{theo:transient-nonuniversal}).
\item We develop a proof tool for intrinsic simulation results in the context of automata networks on non-directed graphs: it is based on a way to compose small networks into a larger one called \emph{glueing} (Definition~\ref{def:glueing}) that preserves pseudo-orbits (Lemma~\ref{lem:pseudo-orbit-glueing}) and can be directly used to prove simulation of $\GG$-network families through a concept of gadgets (Lemma~\ref{lem:from-gadgets-to-networks}).
\item Based on previously analyzed families of $\GG$-networks, we obtain a sufficient condition for intrinsic universality that boils down to the existence of a coherent finite set of pseudo-orbits of a finite set of networks from the considered family (Corollary~\ref{cor:univfrommon}).
\item We apply this techniques on the family of automata networks on arbitrary undirected graphs where each node behaves like the cells of the famous 'Game of Life' cellular automaton, and establish strong universality of the family (Theorem~\ref{theo:goluniv}) by new gadgets that are smaller and more time-efficient than the classical ones used to prove intrinsic universality of the cellular automaton on the grid.
\end{enumerate}

As a main by-product of this framework, we obtain a precisely formalized proof technique that from a finite set of conditions to checks on a finite set of networks of a family, deduces intrinsic universality of 
the family and therefore several hardness complexity results as well as several lower bounds on various parameters of the possible dynamics within the family.
This proof technique will be heavily used in the companion paper dedicated to update schedules.

\section{Automata networks and families}
\label{sec:automataandfamilies}

A \emph{graph} is a pair $G = (V,E)$  where $V$ and $E$ are finite sets satisfying $E \subseteq V \times V.$ We will call $V$ the set of \emph{nodes} and the set $E$ of \emph{edges}. We call $|V|$ the \emph{order} of $G$ and we usually identify this quantity by the letter $n$. Usually, as $E$ and $V$ are finite sets we will implicitly assume that there exists an ordering of the vertices in $V$ from $1$ to $n$ (or from $0$ to $n-1$). Sometimes we will denote the latter set as $[n].$ If $G = (V,E)$ and $V' \subseteq V, E' \subseteq E$ we say that $G'$ is a \emph{subgraph} of $G.$ We call a graph $P=(V,E)$ of the form $V= \{v_{1},\hdots, v_{n}\}$ $E = \{(v_{1}v_{2}),\hdots,(v_{n-1},v_{n})\}$  a \emph{path graph}, or simply a \emph{path}. We often refer to a path by simply denoting its sequence of vertices $\{v_{1},\hdots,v_{n}\}$. We denote the \emph{length} of a path by its number of edges. Whenever $P = (V= \{v_{1},\hdots, v_{n}\},E = \{(v_{1}v_{2}),\hdots,(v_{n-1},v_{n})\}$ is a path we call the graph in which we add the edge $\{v_{n},v_{1}\}$
a \emph{cycle graph} or simply a \emph{cycle} and we call it $C$ where $C = P + \{v_{n},v_{1}\}.$ Analogously, a cycle  is denoted usually by a sequence of nodes and its length is also given by the amount of edges (or vertices) in the cycle. Depending of the length of $C$ we call it a $k$-cycle when $k$ is its length. A non-empty graph is called \emph{connected} if any pair of two vertices $u,v$ are linked by some path. Given any non-empty graph, a maximal connected subgraph is called a connected component.

We call \emph{directed graph} a pair $G = (V,E)$ together with two functions $\text{init}:E \to V$ and $\text{ter}:E \to V$ where each edge $e \in E$ is said to be directed from $\text{init}(e)$ to $\text{ter}(e)$ and we write $e=(u,v)$ whenever $\text{init}(e) = u$ and $\text{ter}(e) = v.$ There is also a natural extension of the definition of paths, cycles and connectivity for directed graphs in the obvious way. We say a directed graph is strongly connected if there is a directed path between any two nodes. A strongly connected component of a directed graph $G = (V,E)$  is a maximal strongly connected subgraph.

Given a (non-directed) graph $G=(V,E)$ and two vertices $u,v$ we say that $u$ and $v$ are neighbors if $(u,v) \in E$.  Remark that abusing notations, an edge $(u,v)$ is also denoted by $uv$. Let $v \in V,$ we call $N_{G}(v) = \{u \in V: uv \in E\}$ (or simply $N(v)$ when the context is clear)  the set of neighbors (or \emph{neighborhood}) of $v$ and $\delta(G)_v = |N_{G}(v)|$ to the \emph{degree} of $v$. Observe that if $G'=(V',E')$ is a subgraph of $G$ and $v \in V'$, we can also denote by $N_{G'}(v)$  the set of its neighbors in $G'$ and the degree of $v$ in $G'$ as $\delta(G')_{v} =|N_{G'}(v)|.$  In addition, we define the \emph{closed neighborhood} of $v$ as the set $N[v] = N(v) \cup \{v\}$ and we use the following notation $\Delta(G) = \max \limits_{v \in V} \delta_v$  for the \emph{maximum degree} of $G$. Additionally, given $v \in V$, we will denote by $E_{v}$ to its set of \emph{incident edges}, i.e., $E_{v} = \{e \in E: e=uv\}.$ We will use the letter $n$ to denote the order of $G$, i.e. $n = |V|$.  Also, if $G$ is a graph whose sets of nodes and edges are not specified, we use the notation $V(G)$ and $E(G)$ for the set of vertices and the set of edges of $G$ respectively. In the case of a directed graph $G = (V,E)$ we define for a node $v \in V$ the set of its \emph{in-neighbors} by $N^{-}(v) = \{ u \in V: (u,v) \in E\}$ and its \emph{out-neighbors} as $N^{+}(v) = \{ u \in V: (v,u) \in E\}.$ We have also in this context the indegree of $v$ given by $ \delta^{-} = |N^{-}(v)|$ and its \emph{outdegree} given by $ \delta^{+} = |N^{+}(v)|$

During the most part of of the text, and unless explicitly stated otherwise,  every graph $G$ will be assumed to be connected and undirected.
We start by stating the following basic definitions, notations and properties that we will be using in the next sections. In general, $Q$ and $V$ will denote finite sets representing the alphabet and the set of  nodes respectively. We define $\Sigma(Q)$ as the set of all possible permutations over alphabet $Q$. We call an \textit{abstract automata network} any function $F:Q^V \to Q^V$. Note that $F$ induces a dynamics in $Q^V$ and thus we can see $(Q^V,F)$ as dynamical system.
In this regard, we recall some classical definitions.  We call a \emph{configuration} to any element $x \in Q^{V}.$ If $S \subseteq V$ we define the restriction of a configuration $x$ to $V$ as the function $x|_{S} \in Q^{S}$ such that $(x|_{S})_{v} = x_{v}$ for all $v \in S$. In particular, if $S = \{v\},$ we write $x_{v}.$
 
Given an initial configuration $x \in Q^V$,  we define the \textit{orbit} of $x$ as  the sequence $\mathcal{O}(x) = (F^t(x))_{t\geq 0}$. 
We define the set of \emph{limit configurations} or \emph{recurrent configurations} of $F$ as $L(F) = \bigcap_{t\geq 0}F^t(Q^V)$.  Observe that since $Q$ is finite and $F$ is deterministic, each orbit is eventually periodic, i.e. for each $x \in Q^{V}$ there exist some $\tau, p \in \N$ such that $F^{\tau+p}(x) = F^{\tau}(x)$ for all $x \in Q^{V}$. Note that if $x$ is a limit configuration then, its orbit is periodic. In addition, any configuration $x \in Q^{V}$ eventually reaches a limit configuration in finite time. We denote the set of orbits corresponding to periodic configurations as $\text{Att}(F)=\{\mathcal{O}(x): x \in L(F)\}$ and we call it the set of $\emph{attractors}$ of $F.$
We define the \emph{global period} or simply the \emph{period} of $\overline{x} \in \text{Att}(F)$  by $p(\overline{x}) = \min \{p \in \N : \overline{x}(p) = \overline{x}(0)\}$. If $p(\overline{x}) = 1$ we say that $\overline{x}$ is a \emph{fixed point} and otherwise, we say that $\overline{x}$ is a \emph{limit cycle}.

Given a node $v$, its behavior ${x\mapsto F(x)_v}$ might depend or not on another node $u$.
This dependencies can be captured by a graph structure which plays an important role in the theory of automata networks (see \cite{Gadouleau_2019} for a review of known results on this aspect).
This motivates the following definitions.
\begin{definition}
 Let $F: Q^{V} \to Q^{V}$ be an abstract automata network and $G = (V,E)$ a directed graph. We say $G$ is a \emph{communication graph} of $F$ if for all $v \in V$ there exist $D \subseteq N^{-}_v$ and some function $f_v:  Q^{D} \to Q$ such that $F(x)_v = f_v(x|_{D}).$ The \emph{interaction graph} of $F$ is its minimal communication graph.
\end{definition}
Note that by minimality, for any node $v$ and any in-neighbor $u$ of $v$ in the interaction graph of some $F$, then the next state at node $v$ effectively depends on the actual state at node $u$.
More precisely, there is some configuration $c\in Q^V$ and some $q\in Q$ with ${q\neq c_u}$ such that ${F(c)_v\neq F(c')_v}$ where $c'$ is the configuration $c$ where the state of node $u$ is changed to $q$.
This notion of effective dependency is sometimes taken as a definition of edges of the interaction graph.


From now on, for an abstract automata network $F$ and some communication graph $G$ of $F$ we use the notation $\mathcal{A} = (G,F)$. In addition, by abuse of notation. we also call $\mathcal{A}$ an  abstract automata network. We define a set of automata networks or an \emph{abstract family} of automata networks on some alphabet $Q$ as a set $\mathcal{F} \subseteq \bigcup \limits_{n \in \N} \{F:Q^{V} \to Q^{V}:  V\subseteq [n]\}.$ 
Note that the latter definition provides a general framework of study as it allows us to analyze an automata network as an abstract dynamical system.
However, as we are going to be working also with a computational complexity framework, it is necessary to be more precise in how we represent them.
In this regard, one possible slant is to start defining an automata network from a communication graph.
It turns out that the most studied examples can be seen as labeled graphs: linear networks are given by a matrix (which is nothing else than a edge-labeled graph) and threshold networks


One of the main definition used all along this paper is that of \emph{concrete symmetric automata network}.
Roughly, they are non-directed labeled graph $G$ (both on nodes and edges) that represent an automata network.
They are \emph{concrete} because the labeled graph is a natural concrete representation upon which we can formalize decision problems and develop a computational complexity analysis.
They are \emph{symmetric} in two ways: first their communication graph is non-directed, meaning that an influence of node $u$ on node $v$ implies an equivalent influence of node $v$ on node $u$; second, the behavior of a given node is blind to the ordering of its neighbors in the communication graph, and it can only differentiate its dependence on neighbors when the labels of corresponding edges differ.

A \emph{multiset} over $Q$ is a map ${m:Q\to\N}$ (recall that ${0\in\N}$).
A $k$-bounded multiset over $Q$ is a map ${m:Q\to[k]=\{0,\ldots,k\}}$, the set of such multisets is denoted ${[k]^Q}$.
For instance a multiset in ${[2]^Q}$ is actually a set.
Note that when $Q$ is finite (which will always be the case below), any multiset is actually a bounded multiset.
\newcommand\mset[1]{m({#1})}
To any (partial) configuration $c\in Q^A$, we associate the multiset $\mset{c}$ which to any ${q\in Q}$ associates its number of occurrences in $c$, \textit{i.e.} 
\[\mset{c} = q\mapsto \#\{a\in A : c(a)=q\}.\]




\begin{definition}\label{def:csan}
  Given a non-directed graph $G = (V,E)$, a vertex label map $ \lambda: V \to ( Q\times \N^Q\to Q)$ and an edge label map $\rho: E \to (Q\to Q)$, we define the \textit{concrete symmetric automata network} (CSAN) $\mathcal{A} = (G,\lambda,\rho)$.
  A \emph{family of concrete symmetric automata networks} (CSAN family) $\mathcal{F}$ is given by an alphabet $Q$, a set of local labeling constraints ${\mathcal C\subseteq \Lambda\times R}$ where ${\Lambda = \{\phi : Q\times \N^Q\to Q\}}$ is the set of possible vertex labels and ${R = 2^{\{\psi : Q\to Q\}}}$ is the set of possible sets of neighboring edge labels.
  We say a CSAN ${(G,\lambda,\rho)}$ belongs to ${\mathcal{F}}$ if for any vertex $v$ of $G$ with incident edges $E_v$ it holds ${(\lambda(v),\rho(E_v))\in \mathcal C}$.
\end{definition}

Note that the labeling constraints defining a CSAN family are local.
In particular, the communication graph structure is a priori free.
This aspect will play an important role later when building arbitrarily complex objects by composition of simple building blocks inside a CSAN family.

Let us now define the abstract automata network associated to a CSAN, by describing the semantics of labels defined above.
Intuitively, labels on edges are state modifiers, and labels on nodes give a map that describes how the node changes depending on the set of sates appearing in the neighborhood, after application of state modifiers.
We use the following notation: given $\sigma \in \Sigma(Q)^V$ a collection of permutation and $x \in Q^A$ a partial configuration with ${A\subseteq V}$, we denote
by $x^{\sigma} = a\mapsto \sigma_a(x_a)$
In addition, given $x \in  Q^n$ we define the restriction of $x$  to some subset $U \subseteq V$ as the partial configuration $x|_U \in Q^{|U|}$ such that $(x|_U)_u = x_u$ for all $u \in U.$ 
\begin{definition}\label{def:globalcsan}
  Given a CSAN $(G,\lambda,\rho)$, its associated global map $F: Q^V \to Q^V$ is defined as follows.
  For all node $v \in V$ and  for all $x\in Q^n$:
  \begin{equation*}
    F(x)_v = \lambda_v (x_v,\mset{(x|_{N(v)})^{\rho}}).
  \end{equation*}
\end{definition}

Note that if $(G,\lambda,\rho)$ is a concrete automata network and $F$ its global rule then, $F$ is an abstract automata network with interaction graph included in $G$.
Note also that for any vertex $v$ and any configuration $x$, the multiset ${\mset{x|_{N(v)}}}$ is actually $k$-
bounded where $k$ is the degree of $v$.

\subsection{Examples of CSAN networks}
\label{sec:exampl-csan-netw}

In this subsection, we provide some examples from the literature that can be modeled as CSAN networks. In each case we provide the elements that define the corresponding labeled graph representation of each family.


\begin{enumerate}

\item \textbf{Symmetric Linear networks.} 
  When endowing the alphabet $Q$ with a finite ring structure, when can consider maps ${F:Q^n\to Q^n}$ defined by a ${n\times n}$ matrix $M=(m_{i,j})$ as ${F(x)=M\cdot x}$ when seeing configurations as column vectors.
  When $M$ is symmetric, one can see $F$ has the global map of a CSAN network defined as follows: the labeled graph $(G,\lambda,\rho)$ is defined by non-zero entries of $M$ and an edge between nodes $i$ and $j$ is labeled by the map ${q\mapsto m_{i,j}q}$, and, for each node $i$, the map $\lambda_i$ is just the map computing the sum in the ring $Q$ from a multiset, \textit{i.e.} such that ${\lambda_i (x_i,\mset{(x|_{N(i)})^{\rho}})= \sum_jm_{i,j}x_j}$. One of the most studied case is when $Q$ is actually endowed with a field structure, in particular in the Boolean case where the choice of coefficient is unique and the automata network is entirely determined by the graph $G$. Note however, that non-Boolean finite fields are worth studying since the graphs for which a given global property holds might differ from the Boolean case (see for instance Proposition 3.5 and Corollary~4.3 of \cite{Bridoux_2020}). A very well studied example of a network in this family is the case of additive elementary cellular automata. In particular, the rule $90$ in Wolfram notation, in which the new state of a cell is computed as the modulo 2 sum of its right and left neighbors. An example of its matrix representation for a small ring of $7$ cell is shown in Figure \ref{fig:linearnet}. In this case the matrix is defined over the field $\mathbb{F}_2.$ In the same Figure, it is shown an example of its dynamics starting from a uniformly random generated initial condition. 
  
  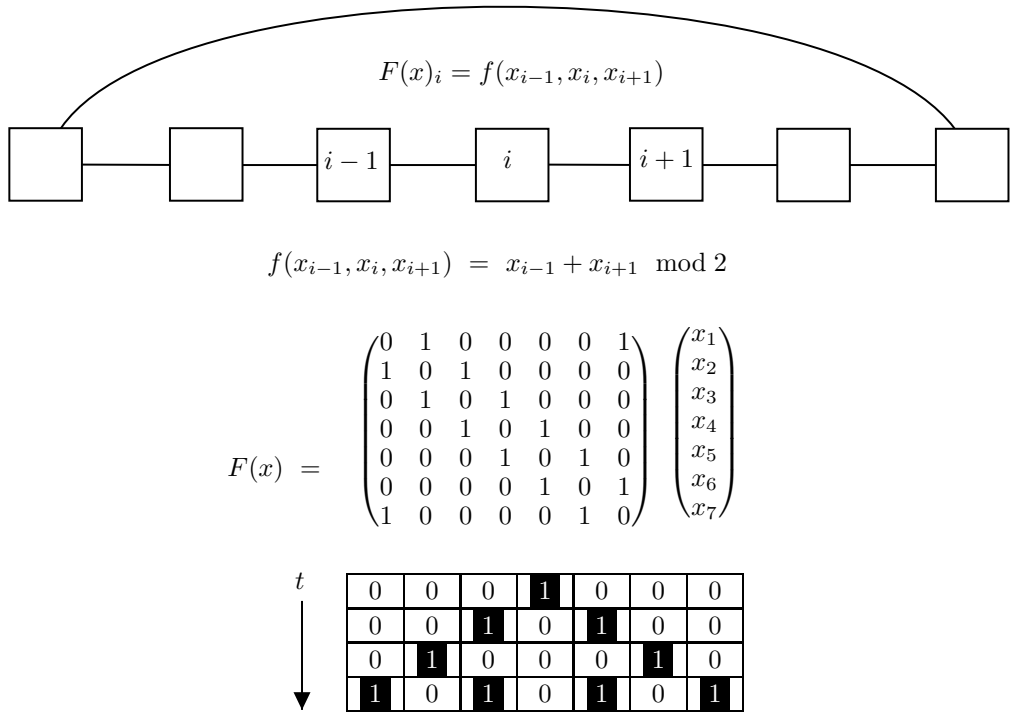
\begin{figure}

\centering
\tikzset{every picture/.style={line width=0.75pt}} 

\begin{tikzpicture}[x=0.75pt,y=0.75pt,yscale=-1,xscale=1]

\draw    (86.7,124.3) -- (162,124.6) ;
\draw    (162,124.6) -- (242,124.6) ;
\draw    (242,124.6) -- (322,124.6) ;
\draw    (318.7,124.3) -- (394,124.6) ;
\draw    (394,124.6) -- (474,124.6) ;
\draw    (474,124.6) -- (554,124.6) ;
\draw  [fill={rgb, 255:red, 255; green, 255; blue, 255 }  ,fill opacity=1 ] (149.4,106) -- (186,106) -- (186,142.6) -- (149.4,142.6) -- cycle ;
\draw  [fill={rgb, 255:red, 255; green, 255; blue, 255 }  ,fill opacity=1 ] (223.7,106.3) -- (260.3,106.3) -- (260.3,142.9) -- (223.7,142.9) -- cycle ;
\draw  [fill={rgb, 255:red, 255; green, 255; blue, 255 }  ,fill opacity=1 ] (303.7,106.3) -- (340.3,106.3) -- (340.3,142.9) -- (303.7,142.9) -- cycle ;
\draw  [fill={rgb, 255:red, 255; green, 255; blue, 255 }  ,fill opacity=1 ] (381.4,106) -- (418,106) -- (418,142.6) -- (381.4,142.6) -- cycle ;
\draw  [fill={rgb, 255:red, 255; green, 255; blue, 255 }  ,fill opacity=1 ] (455.7,106.3) -- (492.3,106.3) -- (492.3,142.9) -- (455.7,142.9) -- cycle ;
\draw    (86.7,124.3) .. controls (105,17.6) and (530,18.6) .. (554,124.6) ;
\draw  [fill={rgb, 255:red, 255; green, 255; blue, 255 }  ,fill opacity=1 ] (535.7,106.3) -- (572.3,106.3) -- (572.3,142.9) -- (535.7,142.9) -- cycle ;
\draw  [fill={rgb, 255:red, 255; green, 255; blue, 255 }  ,fill opacity=1 ] (68.4,106) -- (105,106) -- (105,142.6) -- (68.4,142.6) -- cycle ;
\draw    (231-15,424.6-80) -- (231-15,479.6-80) ;
\draw [shift={(231-15,479.6-80)}, rotate = 270] [fill={rgb, 255:red, 0; green, 0; blue, 0 }  ][line width=0.08]  [draw opacity=0] (8.93,-4.29) -- (0,0) -- (8.93,4.29) -- cycle    ;

\draw (197,166.4) node [anchor=north west][inner sep=0.75pt]    {$f( x_{i-1} ,x_{i} ,x_{i+1}) \ =\ x_{i-1} +x_{i+1} \ \bmod 2$};
\draw (242,206.4) node [anchor=north west][inner sep=0.75pt]    {$\begin{pmatrix}
0 & 1 & 0 & 0 & 0 & 0 & 1\\
1 & 0 & 1 & 0 & 0 & 0 & 0\\
0 & 1 & 0 & 1 & 0 & 0 & 0\\
0 & 0 & 1 & 0 & 1 & 0 & 0\\
0 & 0 & 0 & 1 & 0 & 1 & 0\\
0 & 0 & 0 & 0 & 1 & 0 & 1\\
1 & 0 & 0 & 0 & 0 & 1 & 0
\end{pmatrix}$};
\draw (177,270) node [anchor=north west][inner sep=0.75pt]    {$F( x) \ =$};
\draw (399,202) node [anchor=north west][inner sep=0.75pt]    {$\begin{pmatrix}
x_{1}\\
x_{2}\\
x_{3}\\
x_{4}\\
x_{5}\\
x_{6}\\
x_{7}
\end{pmatrix}$};
\draw (316,116) node [anchor=north west][inner sep=0.75pt]    {$i$};
\draw (253,70) node [anchor=north west][inner sep=0.75pt]    {$F( x)_{i} =f( x_{i-1} ,x_{i} ,x_{i+1}) \ $};
\draw (225.7,116.3) node [anchor=north west][inner sep=0.75pt]    {$i-1$};
\draw (384.7,115.3) node [anchor=north west][inner sep=0.75pt]    {$i+1$};
\draw (257-20,389.4-60) node [anchor=north west][inner sep=0.75pt]    {$\begin{array}{|c|c|c|c|c|c|c|}
\hline
0 & 0 & 0 & \colorbox{black!}{{\color{white} 1}} & 0 & 0 & 0\\
\hline
0 & 0 & \colorbox{black!}{{\color{white} 1}} & 0 & \colorbox{black!}{{\color{white} 1}}& 0 & 0\\
\hline
0 & \colorbox{black!}{{\color{white} 1}} & 0 & 0 & 0 & \colorbox{black!}{{\color{white} 1}} & 0\\
\hline
\colorbox{black!}{{\color{white} 1}} & 0 & \colorbox{black!}{{\color{white} 1}} & 0 & \colorbox{black!}{{\color{white} 1}} & 0 & \colorbox{black!}{{\color{white} 1}}\\
\hline
\end{array}$};
\draw (231-20,399-70) node [anchor=north west][inner sep=0.75pt]    {$t$};

\end{tikzpicture}

  \caption{\label{fig:linearnet} Representation and dynamics of Rule $90$ for a ring of $7$ cells. From up to bottom: (1) interaction graph; (2)  matrix representation and (3) dynamics.}
  \end{figure}

\item  \textbf{Symmetric  Threshold networks.}
The threhsold network family is a classical boolean network model that has been broadly studied both as a discrete dynamical system and also because of its computational properties.  In this case, we have that  $Q=\{0,1\}$, $\rho = \{\text{Id}\}$ where Id is the identity function and we have that $\lambda_{\theta_v}(x,m((x|_{N(v)}))) = \begin{cases}

1 & \text{ if }  m((x|_{N(v)}))[1] - \theta_v \geq 0  \\
0 & \text{ otherwise. }
\end{cases}$

where $\theta_v \in \mathbb{Z}$ is called the \emph{threshold} associated to each node in the network. An example of a threshold network represented as labeled graph is shown in Figure \ref{fig:threshold}. In that example, the threshold values are chosen so each node will change its state to $1$ if and only if the strict majority of their neighbors are in state $1$. Another example in which we exihibit the dynamics of a threshold network is the one Figure \ref{fig:dymthresh}. In this example the threshold are all the same and equal to $1$. In other words, each node will update its state to one if it has at least one neighbor in state $1$. In the same Figure, it is shown a particular case in which the dynamics reaches an attractor of period $2$.

\begin{figure}[t]
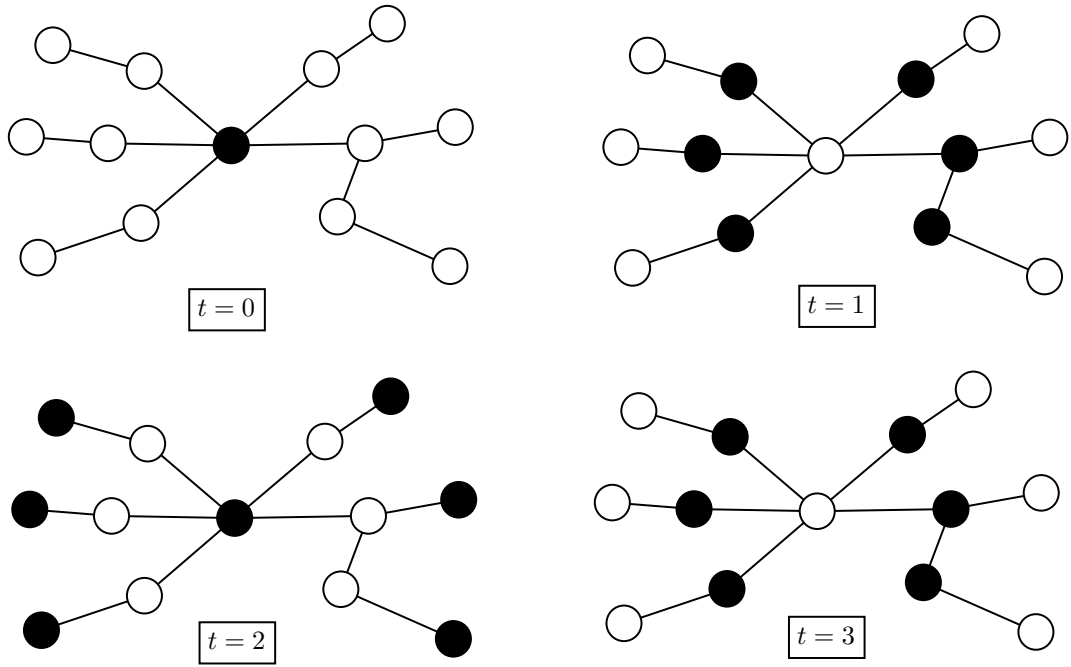

   \centering


\caption{Dynamics of a threshold network where $\theta_v = 0$ for every node $v$ in the graph. Black nodes are in state $1$ and white nodes are in state $1$.}
\label{fig:dymthresh}
\end{figure}

\item \textbf{Symmetric Min/Max networks.}
In this example, the local functions of each node in the network will compute the minimum or the maximum state value in the set of different states. Formally, Let $Q$ be a totally ordered set. Given a multiset $X\in\N^Q$ and an order on $Q$, we denote by ${\min X = \min \{q\in Q: X(q)>0\}}.$ The family of \emph{min-max automata networks} over $Q$ is the set of CSAN ${(G,\lambda,\rho)}$ such that for each edge $e$, $\rho_e$ is the identity map and, for each node $v$, $\lambda_v(q,X)= \max X$ or $\lambda_v(q,X) = \min X$. An interesting particular case is the case in which the set of states is just $0$ and $1$. In this case, the min-max automata networks are just the AND-OR networks, which have been broadly studied in the literature. More precisely, if  $Q = \{0,1\}$) then,  $\lambda(x)_i = \bigwedge \limits_{j \in N(i)} x_j $ or $\lambda(x)_i = \bigvee \limits_{j \in N(i)} x_j $, for each node $i \in V.$ In Figure \ref{fig:AND-OR} there is an example of an AND-OR network defined on a graph of $4$ nodes.

\begin{figure}[b]
   \centering
\begin{tikzpicture}
\Vertex[label=$ \vee $,color=white]{1}
\Vertex[y=-2,x=1,label=$\wedge$,color=white]{2}
\Vertex[y=2,x=2,label=$\vee$,color=white]{3}	
\Vertex[y=0,x=3,label=$\wedge$,color=white]{4}

\Edge[label=Id](1)(2)
\Edge[label=Id](1)(3)
\Edge[label=Id](1)(4)
\Edge[label=Id](4)(3)
\Edge[label=Id](4)(2)
\end{tikzpicture} 
\caption{An AND-OR network. }  
\label{fig:AND-OR} 
\end{figure}
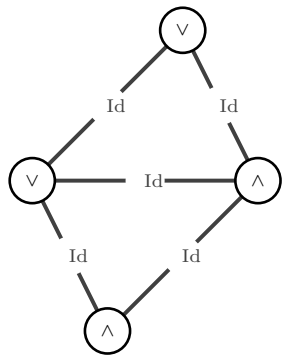

\item \textbf{Symmetric Outer-totalistic Boolean automata networks.} In this case, the alphabet is binary and local function $\lambda_v$ associated to each node in the network depends only the truncated sum of neighboring states, \textit{i.e.} on the number of occurrences of each state counted up to some constant number. In addition the edge labels $\rho$ are simply given by the identity function. 

\emph{Life-like rules} A classical example are the so-called life-like rules. These rules are defined for a binary alphabet where $1$ represents 'alive' nodes and $0$ dead ones, and the local rule is determined by two subsets $B$ (for birth) and $S$ (for survive) as follows: a dead node will change to state $1$ if the sum of the states of its neighbors is in $B$, and an alive nodes stays alive if and only if the sum of neighboring states belongs to $S$. A notable example (that actually inspired the definition of this class) is the famous Conway's Game of Life defined by ${B=\{3\}}$ and ${S=\{2,3\}}$. 

\emph{Interval rules}. Another natural class of examples are so-called interval rules defined by an interval ${[\alpha,\beta]}$ where ${0\leq\alpha\leq\beta}$ and a cell will change to state 1 if the sum the states of its neighbors is a number that is at least $\alpha$ and at most $\beta$. Observe that this case is similar to the latter family. In fact, the only difference is that for life-like rules, the transition rule depends on the state of each cell and the states of the neighbors and in the interval rules a cell depends only in the set of states of states of its neighbors. An example of a dynamics induced by an interval rule is shown in Figure \ref{fig:dymint}. In the case shown in the figure, the interval is given by $\alpha = 1$ and $\beta=2$. In other words, a node will update its state to $1$ if it has exactly one or two neighbors in state $1$ and it will update its state to $0$ otherwise.

\begin{figure}
   \centering
\begin{tikzpicture}
\Vertex[label=$ \lambda_{3,4} $,color=white]{1}
\Vertex[y=-2,x=1,label=$\lambda_{3,4}$,color=white]{2}
\Vertex[y=1,x=2,label=$\lambda_{3,4}$,color=white]{3}
\Vertex[y=0,x=3,label=$\lambda_{3,4}$,color=white]{4}

\Vertex[y=1,x=5,label=$\lambda_{3,4}$,color=white]{5}

\Edge[label=Id](1)(2)
\Edge[label=Id](1)(3)
\Edge[label=Id](1)(4)
\Edge[label=Id](4)(3)
\Edge[label=Id](4)(2)
\Edge[label=Id](5)(3)
\Edge[label=Id,bend=10](3)(2)
\Edge[label=Id](5)(4)
\Edge[label=Id, bend=20](5)(2)
\Edge[label=Id, bend=-90](5)(1)

\end{tikzpicture} 
\caption{Interval rule network for $\alpha=3, \beta=4$. }   
\label{fig:tot}
\end{figure}
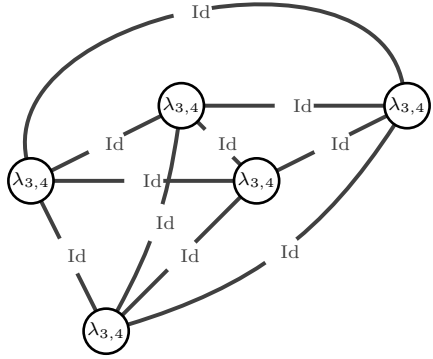
\begin{figure}[b]
\centering

\tikzset{every picture/.style={line width=0.75pt}} 

\begin{tikzpicture}[x=0.65pt,y=0.65pt,yscale=-1,xscale=1]

\draw    (82,138) -- (118,138) ;
\draw    (118,138) -- (166,138) ;
\draw    (72,138) -- (72.1,174.4) ;
\draw    (118,138) -- (118.1,174.4) ;
\draw  [fill={rgb, 255:red, 0; green, 0; blue, 0 }  ,fill opacity=1 ] (156.91,174.4) .. controls (156.91,168.68) and (161.47,164.04) .. (167.1,164.04) .. controls (172.73,164.04) and (177.29,168.68) .. (177.29,174.4) .. controls (177.29,180.12) and (172.73,184.76) .. (167.1,184.76) .. controls (161.47,184.76) and (156.91,180.12) .. (156.91,174.4) -- cycle ;
\draw    (167,138) -- (167.1,174.4) ;
\draw    (118.16,137.76) -- (118.12,101.36) ;
\draw  [fill={rgb, 255:red, 255; green, 255; blue, 255 }  ,fill opacity=1 ] (128.31,101.38) .. controls (128.31,107.1) and (123.73,111.73) .. (118.1,111.72) .. controls (112.47,111.71) and (107.92,107.07) .. (107.93,101.34) .. controls (107.93,95.62) and (112.51,90.99) .. (118.14,91) .. controls (123.77,91.01) and (128.32,95.65) .. (128.31,101.38) -- cycle ;
\draw  [fill={rgb, 255:red, 255; green, 255; blue, 255 }  ,fill opacity=1 ] (107.81,138) .. controls (107.81,132.28) and (112.37,127.64) .. (118,127.64) .. controls (123.63,127.64) and (128.19,132.28) .. (128.19,138) .. controls (128.19,143.72) and (123.63,148.36) .. (118,148.36) .. controls (112.37,148.36) and (107.81,143.72) .. (107.81,138) -- cycle ;
\draw    (165.16,136.76) -- (165.12,100.36) ;
\draw  [fill={rgb, 255:red, 0; green, 0; blue, 0 }  ,fill opacity=1 ] (175.31,100.38) .. controls (175.31,106.1) and (170.73,110.73) .. (165.1,110.72) .. controls (159.47,110.71) and (154.92,106.07) .. (154.93,100.34) .. controls (154.93,94.62) and (159.51,89.99) .. (165.14,90) .. controls (170.77,90.01) and (175.32,94.65) .. (175.31,100.38) -- cycle ;
\draw    (72.16,136.76) -- (72.12,100.36) ;
\draw  [fill={rgb, 255:red, 255; green, 255; blue, 255 }  ,fill opacity=1 ] (82.31,100.38) .. controls (82.31,106.1) and (77.73,110.73) .. (72.1,110.72) .. controls (66.47,110.71) and (61.92,106.07) .. (61.93,100.34) .. controls (61.93,94.62) and (66.51,89.99) .. (72.14,90) .. controls (77.77,90.01) and (82.32,94.65) .. (82.31,100.38) -- cycle ;
\draw  [fill={rgb, 255:red, 0; green, 0; blue, 0 }  ,fill opacity=1 ] (61.81,138) .. controls (61.81,132.28) and (66.37,127.64) .. (72,127.64) .. controls (77.63,127.64) and (82.19,132.28) .. (82.19,138) .. controls (82.19,143.72) and (77.63,148.36) .. (72,148.36) .. controls (66.37,148.36) and (61.81,143.72) .. (61.81,138) -- cycle ;
\draw    (165,138) -- (213,138) ;
\draw  [fill={rgb, 255:red, 0; green, 0; blue, 0 }  ,fill opacity=1 ] (203.91,174.4) .. controls (203.91,168.68) and (208.47,164.04) .. (214.1,164.04) .. controls (219.73,164.04) and (224.29,168.68) .. (224.29,174.4) .. controls (224.29,180.12) and (219.73,184.76) .. (214.1,184.76) .. controls (208.47,184.76) and (203.91,180.12) .. (203.91,174.4) -- cycle ;
\draw    (214,138) -- (214.1,174.4) ;
\draw  [fill={rgb, 255:red, 0; green, 0; blue, 0 }  ,fill opacity=1 ] (202.81,138) .. controls (202.81,132.28) and (207.37,127.64) .. (213,127.64) .. controls (218.63,127.64) and (223.19,132.28) .. (223.19,138) .. controls (223.19,143.72) and (218.63,148.36) .. (213,148.36) .. controls (207.37,148.36) and (202.81,143.72) .. (202.81,138) -- cycle ;
\draw    (212.16,136.76) -- (212.12,100.36) ;
\draw  [fill={rgb, 255:red, 0; green, 0; blue, 0 }  ,fill opacity=1 ] (222.31,100.38) .. controls (222.31,106.1) and (217.73,110.73) .. (212.1,110.72) .. controls (206.47,110.71) and (201.92,106.07) .. (201.93,100.34) .. controls (201.93,94.62) and (206.51,89.99) .. (212.14,90) .. controls (217.77,90.01) and (222.32,94.65) .. (222.31,100.38) -- cycle ;
\draw  [fill={rgb, 255:red, 255; green, 255; blue, 255 }  ,fill opacity=1 ] (61.91,174.4) .. controls (61.91,168.68) and (66.47,164.04) .. (72.1,164.04) .. controls (77.73,164.04) and (82.29,168.68) .. (82.29,174.4) .. controls (82.29,180.12) and (77.73,184.76) .. (72.1,184.76) .. controls (66.47,184.76) and (61.91,180.12) .. (61.91,174.4) -- cycle ;
\draw  [fill={rgb, 255:red, 255; green, 255; blue, 255 }  ,fill opacity=1 ] (107.91,174.4) .. controls (107.91,168.68) and (112.47,164.04) .. (118.1,164.04) .. controls (123.73,164.04) and (128.29,168.68) .. (128.29,174.4) .. controls (128.29,180.12) and (123.73,184.76) .. (118.1,184.76) .. controls (112.47,184.76) and (107.91,180.12) .. (107.91,174.4) -- cycle ;
\draw  [fill={rgb, 255:red, 255; green, 255; blue, 255 }  ,fill opacity=1 ] (155.81,138) .. controls (155.81,132.28) and (160.37,127.64) .. (166,127.64) .. controls (171.63,127.64) and (176.19,132.28) .. (176.19,138) .. controls (176.19,143.72) and (171.63,148.36) .. (166,148.36) .. controls (160.37,148.36) and (155.81,143.72) .. (155.81,138) -- cycle ;

\draw    (72,138) .. controls (-34.9,-6.8) and (309.1,-9.8) .. (214,138) ;

\draw    (285,139) -- (321,139) ;
\draw    (321,139) -- (369,139) ;
\draw    (275,139) -- (275.1,175.4) ;
\draw    (321,139) -- (321.1,175.4) ;
\draw    (370,139) -- (370.1,175.4) ;
\draw    (321.16,138.76) -- (321.12,102.36) ;
\draw  [fill={rgb, 255:red, 255; green, 255; blue, 255 }  ,fill opacity=1 ] (331.31,102.38) .. controls (331.31,108.1) and (326.73,112.73) .. (321.1,112.72) .. controls (315.47,112.71) and (310.92,108.07) .. (310.93,102.34) .. controls (310.93,96.62) and (315.51,91.99) .. (321.14,92) .. controls (326.77,92.01) and (331.32,96.65) .. (331.31,102.38) -- cycle ;
\draw  [fill={rgb, 255:red, 0; green, 0; blue, 0 }  ,fill opacity=1 ] (310.81,139) .. controls (310.81,133.28) and (315.37,128.64) .. (321,128.64) .. controls (326.63,128.64) and (331.19,133.28) .. (331.19,139) .. controls (331.19,144.72) and (326.63,149.36) .. (321,149.36) .. controls (315.37,149.36) and (310.81,144.72) .. (310.81,139) -- cycle ;
\draw    (368.16,137.76) -- (368.12,101.36) ;
\draw  [fill={rgb, 255:red, 255; green, 255; blue, 255 }  ,fill opacity=1 ] (378.31,101.38) .. controls (378.31,107.1) and (373.73,111.73) .. (368.1,111.72) .. controls (362.47,111.71) and (357.92,107.07) .. (357.93,101.34) .. controls (357.93,95.62) and (362.51,90.99) .. (368.14,91) .. controls (373.77,91.01) and (378.32,95.65) .. (378.31,101.38) -- cycle ;
\draw    (275.16,137.76) -- (275.12,101.36) ;
\draw  [fill={rgb, 255:red, 0; green, 0; blue, 0 }  ,fill opacity=1 ] (285.31,101.38) .. controls (285.31,107.1) and (280.73,111.73) .. (275.1,111.72) .. controls (269.47,111.71) and (264.92,107.07) .. (264.93,101.34) .. controls (264.93,95.62) and (269.51,90.99) .. (275.14,91) .. controls (280.77,91.01) and (285.32,95.65) .. (285.31,101.38) -- cycle ;
\draw  [fill={rgb, 255:red, 0; green, 0; blue, 0 }  ,fill opacity=1 ] (264.81,139) .. controls (264.81,133.28) and (269.37,128.64) .. (275,128.64) .. controls (280.63,128.64) and (285.19,133.28) .. (285.19,139) .. controls (285.19,144.72) and (280.63,149.36) .. (275,149.36) .. controls (269.37,149.36) and (264.81,144.72) .. (264.81,139) -- cycle ;
\draw    (368,139) -- (416,139) ;
\draw    (417,139) -- (417.1,175.4) ;
\draw    (415.16,137.76) -- (415.12,101.36) ;
\draw  [fill={rgb, 255:red, 255; green, 255; blue, 255 }  ,fill opacity=1 ] (425.31,101.38) .. controls (425.31,107.1) and (420.73,111.73) .. (415.1,111.72) .. controls (409.47,111.71) and (404.92,107.07) .. (404.93,101.34) .. controls (404.93,95.62) and (409.51,90.99) .. (415.14,91) .. controls (420.77,91.01) and (425.32,95.65) .. (425.31,101.38) -- cycle ;
\draw  [fill={rgb, 255:red, 0; green, 0; blue, 0 }  ,fill opacity=1 ] (264.91,175.4) .. controls (264.91,169.68) and (269.47,165.04) .. (275.1,165.04) .. controls (280.73,165.04) and (285.29,169.68) .. (285.29,175.4) .. controls (285.29,181.12) and (280.73,185.76) .. (275.1,185.76) .. controls (269.47,185.76) and (264.91,181.12) .. (264.91,175.4) -- cycle ;
\draw  [fill={rgb, 255:red, 255; green, 255; blue, 255 }  ,fill opacity=1 ] (310.91,175.4) .. controls (310.91,169.68) and (315.47,165.04) .. (321.1,165.04) .. controls (326.73,165.04) and (331.29,169.68) .. (331.29,175.4) .. controls (331.29,181.12) and (326.73,185.76) .. (321.1,185.76) .. controls (315.47,185.76) and (310.91,181.12) .. (310.91,175.4) -- cycle ;
\draw  [fill={rgb, 255:red, 255; green, 255; blue, 255 }  ,fill opacity=1 ] (358.81,139) .. controls (358.81,133.28) and (363.37,128.64) .. (369,128.64) .. controls (374.63,128.64) and (379.19,133.28) .. (379.19,139) .. controls (379.19,144.72) and (374.63,149.36) .. (369,149.36) .. controls (363.37,149.36) and (358.81,144.72) .. (358.81,139) -- cycle ;
\draw    (275,139) .. controls (168.1,-5.8) and (512.1,-8.8) .. (417,139) ;
\draw  [fill={rgb, 255:red, 255; green, 255; blue, 255 }  ,fill opacity=1 ] (359.91,175.4) .. controls (359.91,169.68) and (364.47,165.04) .. (370.1,165.04) .. controls (375.73,165.04) and (380.29,169.68) .. (380.29,175.4) .. controls (380.29,181.12) and (375.73,185.76) .. (370.1,185.76) .. controls (364.47,185.76) and (359.91,181.12) .. (359.91,175.4) -- cycle ;
\draw  [fill={rgb, 255:red, 255; green, 255; blue, 255 }  ,fill opacity=1 ] (406.91,175.4) .. controls (406.91,169.68) and (411.47,165.04) .. (417.1,165.04) .. controls (422.73,165.04) and (427.29,169.68) .. (427.29,175.4) .. controls (427.29,181.12) and (422.73,185.76) .. (417.1,185.76) .. controls (411.47,185.76) and (406.91,181.12) .. (406.91,175.4) -- cycle ;
\draw  [fill={rgb, 255:red, 255; green, 255; blue, 255 }  ,fill opacity=1 ] (405.81,139) .. controls (405.81,133.28) and (410.37,128.64) .. (416,128.64) .. controls (421.63,128.64) and (426.19,133.28) .. (426.19,139) .. controls (426.19,144.72) and (421.63,149.36) .. (416,149.36) .. controls (410.37,149.36) and (405.81,144.72) .. (405.81,139) -- cycle ;
\draw    (477,139) -- (513,139) ;
\draw    (513,139) -- (561,139) ;
\draw    (467,139) -- (467.1,175.4) ;
\draw    (513,139) -- (513.1,175.4) ;
\draw    (562,139) -- (562.1,175.4) ;
\draw    (513.16,138.76) -- (513.12,102.36) ;
\draw  [fill={rgb, 255:red, 0; green, 0; blue, 0 }  ,fill opacity=1 ] (523.31,102.38) .. controls (523.31,108.1) and (518.73,112.73) .. (513.1,112.72) .. controls (507.47,112.71) and (502.92,108.07) .. (502.93,102.34) .. controls (502.93,96.62) and (507.51,91.99) .. (513.14,92) .. controls (518.77,92.01) and (523.32,96.65) .. (523.31,102.38) -- cycle ;
\draw  [fill={rgb, 255:red, 0; green, 0; blue, 0 }  ,fill opacity=1 ] (502.81,139) .. controls (502.81,133.28) and (507.37,128.64) .. (513,128.64) .. controls (518.63,128.64) and (523.19,133.28) .. (523.19,139) .. controls (523.19,144.72) and (518.63,149.36) .. (513,149.36) .. controls (507.37,149.36) and (502.81,144.72) .. (502.81,139) -- cycle ;
\draw    (560.16,137.76) -- (560.12,101.36) ;
\draw  [fill={rgb, 255:red, 255; green, 255; blue, 255 }  ,fill opacity=1 ] (570.31,101.38) .. controls (570.31,107.1) and (565.73,111.73) .. (560.1,111.72) .. controls (554.47,111.71) and (549.92,107.07) .. (549.93,101.34) .. controls (549.93,95.62) and (554.51,90.99) .. (560.14,91) .. controls (565.77,91.01) and (570.32,95.65) .. (570.31,101.38) -- cycle ;
\draw    (467.16,137.76) -- (467.12,101.36) ;
\draw  [fill={rgb, 255:red, 0; green, 0; blue, 0 }  ,fill opacity=1 ] (477.31,101.38) .. controls (477.31,107.1) and (472.73,111.73) .. (467.1,111.72) .. controls (461.47,111.71) and (456.92,107.07) .. (456.93,101.34) .. controls (456.93,95.62) and (461.51,90.99) .. (467.14,91) .. controls (472.77,91.01) and (477.32,95.65) .. (477.31,101.38) -- cycle ;
\draw    (560,139) -- (608,139) ;
\draw    (609,139) -- (609.1,175.4) ;
\draw    (607.16,137.76) -- (607.12,101.36) ;
\draw  [fill={rgb, 255:red, 255; green, 255; blue, 255 }  ,fill opacity=1 ] (617.31,101.38) .. controls (617.31,107.1) and (612.73,111.73) .. (607.1,111.72) .. controls (601.47,111.71) and (596.92,107.07) .. (596.93,101.34) .. controls (596.93,95.62) and (601.51,90.99) .. (607.14,91) .. controls (612.77,91.01) and (617.32,95.65) .. (617.31,101.38) -- cycle ;
\draw  [fill={rgb, 255:red, 0; green, 0; blue, 0 }  ,fill opacity=1 ] (456.91,175.4) .. controls (456.91,169.68) and (461.47,165.04) .. (467.1,165.04) .. controls (472.73,165.04) and (477.29,169.68) .. (477.29,175.4) .. controls (477.29,181.12) and (472.73,185.76) .. (467.1,185.76) .. controls (461.47,185.76) and (456.91,181.12) .. (456.91,175.4) -- cycle ;
\draw  [fill={rgb, 255:red, 0; green, 0; blue, 0 }  ,fill opacity=1 ] (502.91,175.4) .. controls (502.91,169.68) and (507.47,165.04) .. (513.1,165.04) .. controls (518.73,165.04) and (523.29,169.68) .. (523.29,175.4) .. controls (523.29,181.12) and (518.73,185.76) .. (513.1,185.76) .. controls (507.47,185.76) and (502.91,181.12) .. (502.91,175.4) -- cycle ;
\draw  [fill={rgb, 255:red, 0; green, 0; blue, 0 }  ,fill opacity=1 ] (550.81,139) .. controls (550.81,133.28) and (555.37,128.64) .. (561,128.64) .. controls (566.63,128.64) and (571.19,133.28) .. (571.19,139) .. controls (571.19,144.72) and (566.63,149.36) .. (561,149.36) .. controls (555.37,149.36) and (550.81,144.72) .. (550.81,139) -- cycle ;
\draw    (467,139) .. controls (360.1,-5.8) and (704.1,-8.8) .. (609,139) ;
\draw  [fill={rgb, 255:red, 255; green, 255; blue, 255 }  ,fill opacity=1 ] (551.91,175.4) .. controls (551.91,169.68) and (556.47,165.04) .. (562.1,165.04) .. controls (567.73,165.04) and (572.29,169.68) .. (572.29,175.4) .. controls (572.29,181.12) and (567.73,185.76) .. (562.1,185.76) .. controls (556.47,185.76) and (551.91,181.12) .. (551.91,175.4) -- cycle ;
\draw  [fill={rgb, 255:red, 255; green, 255; blue, 255 }  ,fill opacity=1 ] (598.91,175.4) .. controls (598.91,169.68) and (603.47,165.04) .. (609.1,165.04) .. controls (614.73,165.04) and (619.29,169.68) .. (619.29,175.4) .. controls (619.29,181.12) and (614.73,185.76) .. (609.1,185.76) .. controls (603.47,185.76) and (598.91,181.12) .. (598.91,175.4) -- cycle ;
\draw  [fill={rgb, 255:red, 255; green, 255; blue, 255 }  ,fill opacity=1 ] (597.81,139) .. controls (597.81,133.28) and (602.37,128.64) .. (608,128.64) .. controls (613.63,128.64) and (618.19,133.28) .. (618.19,139) .. controls (618.19,144.72) and (613.63,149.36) .. (608,149.36) .. controls (602.37,149.36) and (597.81,144.72) .. (597.81,139) -- cycle ;
\draw  [fill={rgb, 255:red, 255; green, 255; blue, 255 }  ,fill opacity=1 ] (456.81,139) .. controls (456.81,133.28) and (461.37,128.64) .. (467,128.64) .. controls (472.63,128.64) and (477.19,133.28) .. (477.19,139) .. controls (477.19,144.72) and (472.63,149.36) .. (467,149.36) .. controls (461.37,149.36) and (456.81,144.72) .. (456.81,139) -- cycle ;

\draw    (117,42.2) -- (161,42.2) -- (161,66.2) -- (117,66.2) -- cycle  ;
\draw (120,46.6) node [anchor=north west][inner sep=0.75pt]    {$t=0$};
\draw    (323,42.2) -- (367,42.2) -- (367,66.2) -- (323,66.2) -- cycle  ;
\draw (326,46.6) node [anchor=north west][inner sep=0.75pt]    {$t=1$};
\draw    (513,42.2) -- (557,42.2) -- (557,66.2) -- (513,66.2) -- cycle  ;
\draw (516,46.6) node [anchor=north west][inner sep=0.75pt]    {$t=2$};

\end{tikzpicture}
\caption{Dynamics of the interval rule $[1,2]$. Black cells are in state $1$ and white cells are in state $0$.}
\label{fig:dymint}
\end{figure}
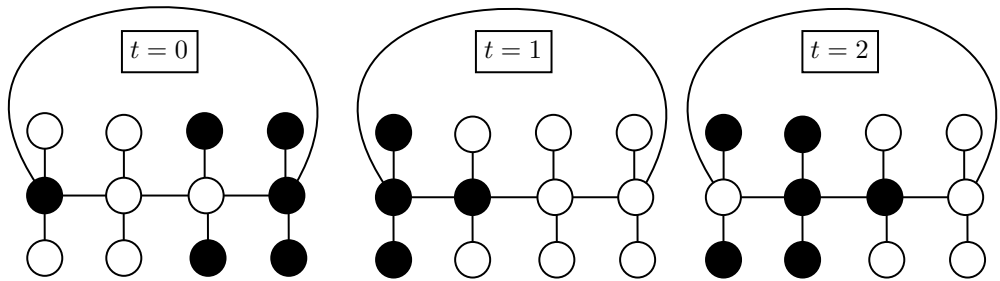

\item \textbf{Reaction-difussion networks.}  This family is similar to the case of Threshold networks. However, there is a non-trivial set of possible labels for the edges. In this case we have that $Q = \alpha \cup \epsilon$ where $\alpha = \{0,1\}$ and $\epsilon = \{2,\hdots, q'\}$ and $q'\geq2.$ Roughly speaking, $Q$ has reluctant states $\{0\} \cup \epsilon$ and an active state $1.$ If a node reaches state $1$ it will automatically go to the next state until it will reach $q'$. However, while the node goes from $2$ to $q'$ it will be considered as inactive for its neighbors. When a node $v$ is in state $0$ it needs at least $\theta_v$ active neighbors to turn into $1$. More precisely, we define these CSAN networks by asking $\rho(e)\}$ to be the map $a\mapsto \begin{cases} 1 & \text{ if } a=1\\ 0 & \text{otherwise} \end{cases}$ and $\lambda_{\theta_v}(x_v,m((x|_{N(v)}))) = \begin{cases}

1 & \text{ if }  m((x|_{N(v)})^{\rho})[1] - \theta_v \geq 0 \wedge x_v = 0  \\
q+1 & \text{ if } x_v \not = 0\\
0 & \text{ otherwise. }
\end{cases}$

An example of a reaction-diffusion dynamics is given in Figure \ref{fig:dymreact}. In this case, there is only one reluctant state ($q'=2$) and all the thresholds are set to $0$ as same as in the example shown in Figure \ref{fig:dymthresh}. In addition, observe that the graph is the same graph shown in Figure \ref{fig:dymthresh}. However, as reaction-diffusion dynamics considers a reluctant state, the dynamics reaches a uniform fixed point contrary to the case of Figure \ref{fig:dymthresh} in which the dynamics reaches an attractor of period two, starting from the same initial condition.

\begin{figure}
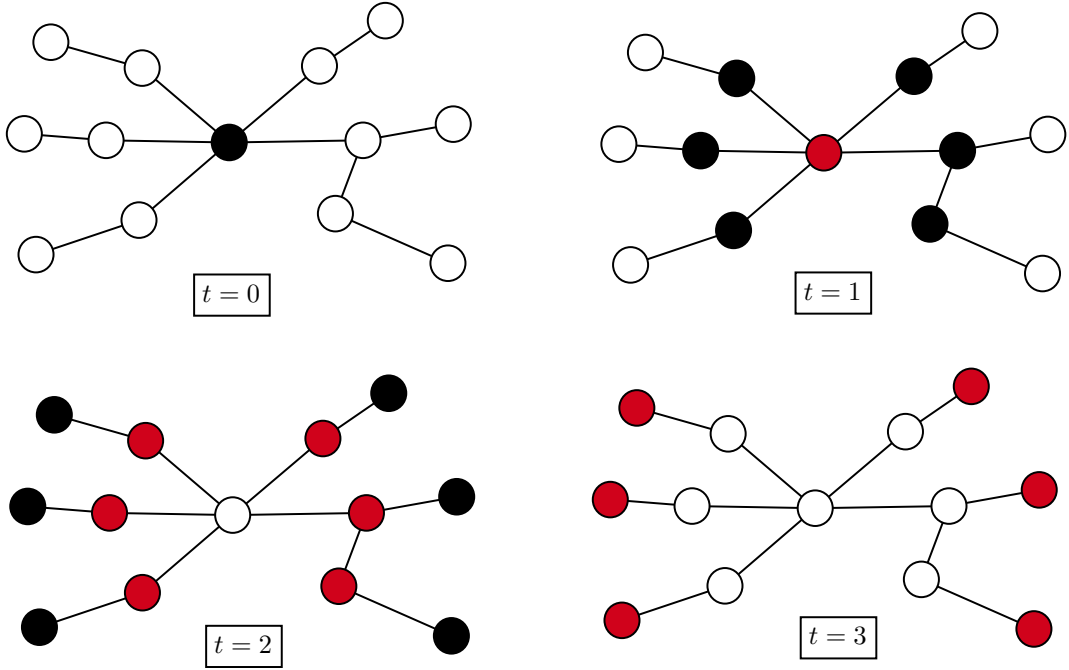


\centering
\tikzset{every picture/.style={line width=0.75pt}} 


\caption{An example of a dynamics of a reaction-difussion network with only one reluctant state and $\theta_v = 1$ for any node in the network. Nodes in black are nodes in state $1$, nodes in red are in reluctant state and nodes in white are in state $0$.}
\label{fig:dymreact}
\end{figure}


\end{enumerate}

\section{Representing Automata Networks}
\label{sec:representation}
As we are interested in measuring computational complexity of decision problems related to the dynamics of automata networks belonging to a particular family, we introduce hereunder a general notion for the representation of a family of automata networks.
We can always fix a canonical representation of automata networks as Boolean circuits.
However, as we show in the first part of this section, families can have different natural representation which are closely related to their particular properties.
Considering this fact, we introduce the notion of \textit{standard representation} in order to denote some representation from which we can efficiently obtain a circuit family computing original automata network family.
Finally, we resume previous discussion on different representations for some particular families, showing how difficult it is to transform one particular representation into another one.



\subsection{Standard representations}

\sloppy We fix for any alphabet $Q$ an injective map ${m_Q : Q\to \{0,1\}^{k_Q}}$ which we extend cell-wise for each $n$ to ${m_Q:Q^n\to\{0,1\}^{k_Qn}}$. Given an abstract automata network ${F:Q^n\to Q^n}$, a circuit encoding of $F$ is a Boolean circuit ${C:\{0,1\}^{k_Qn}\to \{0,1\}^{k_Qn}}$ such that ${m_Q\circ F = C\circ m_Q}$ on ${Q^n}$. We also fix a canonical way to represent circuits as words of ${\{0,1\}^*}$ (for instance given by a number of vertices, the list of gate type positioned at each vertex and the adjacency matrix of the graph of the circuit).




\begin{definition}\label{def:standdef}
Let $\mathcal{F}$ be a set of abstract automata network over alphabet $Q$. A standard representation $\mathcal{F}^*$ for $\mathcal{F}$ is a language $L_{\mathcal{F}} \subseteq \{0,1\}^*$ together with a \DLOG{} algorithm such that:
\begin{itemize}
\item the algorithm transforms any $w \in L_{\mathcal{F}}$ into the canonical
  representation of a circuit encoding ${C(w)}$ that code an abstract
  automata network ${F_w\in\mathcal{F}}$;
\item for any ${F\in\mathcal{F}}$ there is ${w\in L_{\mathcal{F}}}$ with ${F=F_w}$.
\end{itemize}
\end{definition}
The default general representations we will use are circuit representations, \textit{i.e.} representations where $w\in L_{\mathcal{F}}$ is just a canonical representation of a circuit. In this case the \DLOG{} algorithm is trivial (the identity map). However, we sometimes want to work with more concrete and natural representations for some families of networks: in such a case, the above definition allows any kind of coding as soon as it is easy to deduce the canonical circuit representation from it.

\subsection{Example of standard representations of some particular families}

Observe that communication graph is often an essential piece of information for describing an automata network, however, this information is usually not enough.
In this section three examples of canonical types of families are discussed: bounded degree networks, CSAN and algebraic families.
They all have the property that a complete description of an automata network can be done in a size comparable to the communication graph (more precisely, polynomial in the number of nodes), \textit{i.e.} much less than the set of all possible configurations.

\paragraph{The CSAN case}
A CSAN family is a collection of labeled graphs and thus is naturally represented as a graph $G$ together with some representations of local functions (i.e. $\lambda$ and $\rho$).
At each node $v\in V$ the local function $\lambda(v)$ will only be applied to pairs made of a state and a a $k$-bounded multiset of states where $k$ is the degree of $v$.
In any case, it holds ${k\leq |V|=n}$ so it is sufficient to consider only ${n}$-bounded multisets in order to completely specify the global map of the CSAN (Definition~\ref{def:globalcsan}).
For a fixed alphabet $Q$, there are ${(n+1)^{|Q|}}$ distinct ${n}$-bounded multisets so any $\lambda(v)$ can be described via a table of states of $Q$ of size polynomial in the graph $G$.
There are only finitely many possible $\rho$ maps and we represent them by an arbitrary numbering.
From the graph $G$, the table describing $\lambda(v)$ and the $\rho$ labels of adjacent edges, it is not difficult to produce in \DLOG{} a circuit that computes the transition map $F_v$ of node $v$ following Definition~\ref{def:globalcsan}.
We therefore have a standard representation for CSAN families where the encoding of an automata network is made of the encoding of $G$, the encoding of $\lambda(v)$ for each node $v$ and the encoding of $\rho_e$ for each edge $e$.
This encoding is of size polynomial in the number of nodes.

\paragraph{The case of bounded degree communication graphs}
Let us fix some positive constant $\Delta$. It is natural to consider the family of automata networks whose interaction graph has a maximum degree bounded by $\Delta$ (see Remark~\ref{rem:represent-g-networks} below). We associate to this family the following representation: an automata network $F$ is given as pair ${(G,(\tau_v)_{v\in V(g)})}$ where $G$ is a communication graph of $F$ of maximum degree at most $\Delta$ and ${(\tau_v)_{v\in V(G)}}$ is the list for all nodes of $G$ of its local transition map $F_v$ of the form ${Q^d\to Q}$ for ${d\leq\Delta}$ and represented as a plain transition table of size ${|Q|^d\log(|Q|)}$.

\begin{remark}\label{rem:csan-bounded-degree}
  Given any CSAN family, there is a \DLOG{} algorithm that transforms a bounded degree representation of an automata network of the family into a CSAN representation: indeed in this case all local maps are bounded objects, so it is just a matter of making a bounded computation for each node.
\end{remark}

\paragraph{The algebraic case}
When endowing the alphabet $Q$ with a finite field structure, the set of configurations $Q^n$ is a vector space and one can consider automata networks that are actually linear maps. In this case the natural representation is a ${n\times n}$ matrix. It is clearly a standard representation in the above sense since circuit encodings can be easily computed from the matrix. Moreover, as in the CSAN case, when a linear automata network is given as a bounded degree representation, it is easy to recover a matrix in \DLOG{}.

More generally, we can consider matrix representations without field structure on the alphabet. An interesting case is that of Boolean matrices: ${Q=\{0,1\}}$ is endowed with the standard Boolean algebra structure with operations ${\vee,\wedge}$ and matrix multiplication is defined by: 
\[\bigl(AB)_{i,j}=\bigvee_kA_{i,k}\wedge B_{k,j}.\]
They are a standard representation of disjunctive networks (and by switching the role of $0$ and $1$ conjunctive networks), \textit{i.e.} networks $F$ over alphabet ${\{0,1\}}$ whose local maps are of the form ${F_i(x) = \vee_{k\in N(i)}x_k}$ (respectively ${F_i(x) = \wedge_{k\in N(i)}x_k}$) . When their dependency graph is symmetric, disjunctive networks (resp. conjunctive networks) are a particular case of CSAN networks for which $\rho_v$ maps are the identity and $\lambda(v)$ are just max (resp. min) maps. For disjunctive networks (resp. conjunctive networks) the CSAN representation and the matrix representation are \DLOG{} equivalent. 


\subsection{Computing interaction graphs from representations}

One of the key differences between all the representations presented so far is in the information they give about the interaction graph of an automata network. For instance, it is straightforward to deduce the interaction graph of a linear network from its matrix representation in \DLOG{}: the non-zero entries of the matrix gives the edges of the interaction graph.
The situation doesn't change much if the linear network is given by a circuit representation: it is sufficient to evaluate the circuit on the $n$ input configurations that form a base of the vector space ${Q^n}$ to completely know the matrix of the linear network.

At the other extreme, one can see that it is NP-hard to decide whether a given edge belongs to the interaction graph of an automata network given by a circuit representation: indeed, one can build in \DLOG{} from any SAT formula $\phi$ with $n$ variables a circuit representation of an automata network ${F:\{0,1\}^{n+1}\to\{0,1\}^{n+1}}$ with 
\[F(x)_1 =
  \begin{cases}
    x_{n+1} &\text{ if }\phi(x_1,\ldots,x_n)\text{ is true,}\\
    0 &\text{ else.}
  \end{cases}
\]
This $F$ is such that node $1$ depends on node ${n+1}$ if and only if $\phi$ is satisfiable.

For automata networks with communication graphs of degree at most $\Delta$, there is a polynomial time algorithm to compute the interaction graph from a circuit representation: for each node $v$, try all the possible subsets $S$ of nodes of size at most $\Delta$ and find the largest one such that the following map 
\[x\in Q^S\mapsto F_v(\phi(x))\] effectively depends on each node of $S$, where ${\phi(x)_w}$ is $x_{w}$ if ${w\in S}$ and some arbitrary fixed state ${q\in Q}$ else. Note also, that we can compute a bounded degree representation in polynomial time with the same idea.

In the CSAN case, the situation is ambivalent. On one hand, the interaction graph can be computed in polynomial time (in the number of nodes) from a CSAN representation because for any given node $v$ there is only polynomially many possible multiset of the form ${\mset{(x|_{N(v)})^{\rho}}}$ that can appear in the neighborhood of $v$ and, for each neighbor $v'$ of $v$, we can compute (in polynomial time) the set of pairs of such multisets that can be realized by changing just the state of $v'$. This allows to determine whether $v$ depends on $v'$ by verifying whether any such pair can change the state of $v$ through the local map $\lambda_v$ . The dependence of $v$ on itself is also easy to check once the set of possible multisets is computed.

On the other hand, a polynomial time algorithm to compute the interaction graph from a circuit representation would give a polynomial algorithm solving Unambiguous-SAT (which is very unlikely following Valiant-Vazirani theorem \cite{Valiant_1986}). Indeed, any ``dirac'' map ${\delta : \{0,1\}^n\to\{0,1\}}$ with ${\delta(x)= 1}$ if and only if ${x_1\cdots x_n = b_1\cdots b_n}$ can be seen as the local map of a CSAN network because it can be written as ${\gamma(\{\rho_i(x_i) : 1\leq i\leq n\})}$ where $\rho_i(x_i)=x_i$ if $b_i=1$ and $\neg x_i$ else, and $\gamma$ is the map ${[2]^{\{0,1\}}\to\{0,1\}}$
\[S\subseteq Q\mapsto
  \begin{cases}
    1 &\text{ if }S=\{1\}\\
    0 &\text{ else.}
  \end{cases}
\]
A constant map can also be seen as the local map of some CSAN network.
Therefore, given a Boolean formula $\phi$ with the promise that is at has at most one satisfying assignment, one can easily compute the circuit representation of some CSAN network which has some edge in its interaction graph if and only if $\phi$ is satisfiable: indeed, the construction of $F$ above from a SAT formula always produce a CSAN given the promise on $\phi$.

It follows from the discussion above that a polynomial time algorithm to compute a CSAN representation of a CSAN represented by circuit would give a polynomial time algorithm to solve Unambiguous-SAT.

The following table synthesizes the computational hardness of representation conversions. It shall be read as follows: given a family ${(\mathcal{F},\mathcal{F}^*)}$ listed horizontally and a family ${(\mathcal{H},\mathcal{H}^*)}$ listed vertically, the corresponding entry in the table indicates the complexity of the problem of transforming ${w\in L_F}$ with the promise that ${F_w\in\mathcal{F}\cap\mathcal{H}}$ into $w'\in L_H$ such that ${F_w = H_{w'}}$.

\begin{center}
\begin{tabular}{c|c|c|c|c}
  \diagbox{output}{input}&circuit&CSAN&$\Delta$-bounded degree&matrix\\
  \hline
  circuit&trivial &\DLOG{} &\DLOG{}&\DLOG{}\\
  CSAN&USAT-hard&trivial&\DLOG{}&\DLOG{}\\
  $\Delta$-bounded degree&PTIME&PTIME&trivial&\DLOG{}\\
  matrix&PTIME&\DLOG{}&\DLOG{}&trivial\\
\end{tabular}
\end{center}

USAT-hard means that any PTIME algorithm would imply a PTIME algorithm for Unambiguous-SAT.

\section{Simulation and universality}
\label{sec:simuniv}

In this section we introduce a key tool used in this paper: simulations.
The goal is to easily prove computational or dynamical complexity of some family of automata networks by showing it can simulate some well-known reference family where the complexity analysis is already established.
It can be thought as a complexity or dynamical reduction.
Simulations of various kinds are often implicitly used in proofs of dynamical or computational hardness.
We are going instead to explicitly define a notion of simulation and establish hardness results as corollaries of simulation results later in the paper.
To be more precise, we will first define a notion of simulation between individual automata networks, and then extend it to a notion of simulation between families.
This latter notion, which is the one we are really interested in requires more care if we want to use it as a notion of reduction for computational complexity.
We introduce all the useful concepts progressively in the next subsections.

\subsection{Simulation between individual automata networks}

At the core of our formalism is the following definition of simulation where an automata network $F$ is simulated by an automata network $G$ with a constant time slowdown and using blocks of nodes in $G$ to represent nodes in $F$.
Our definition is rather strict and requires in particular an injective encoding of configurations of $F$ into configurations of $G$.
We are not aware of a published work with this exact same formal definition, but close variants certainly exist and it is a direct adaptation to finite automata networks of a classical definition of simulation for cellular automata \cite{bulk2}.

\begin{definition}\label{def:bloc-simu}
  Let ${F:Q_F^{V_F}\rightarrow Q_F^{V_F}}$ and ${G:Q_G^{V_G}\rightarrow Q_G^{V_G}}$ be abstract automata networks.
  A \emph{block embedding} of $Q_F^{V_F}$ into $Q_G^{V_G}$ is a collection of blocks ${D_i\subseteq V_G}$ for each ${i\in V_F}$ which forms a partition of ${V_G}$ together with a collection of patterns ${p_{i,q}\in Q_G^{D_i}}$ for each ${i\in V_F}$ and each ${q\in Q_F}$ such that ${p_{i,q}=p_{i,q'}}$ implies ${q=q'}$.
  This defines an injective map ${\phi:Q_F^{V_F}\rightarrow Q_G^{V_G}}$ by ${\phi(x)_{D_i} = p_{i,x_i}}$ for each $i\in V_F$.
  We say that $G$ simulates $F$ via block embedding $\phi$ if there is a time constant $T$ such that the following holds on ${Q_F^{V_F}}$: 
  \[\phi\circ F = G^T\circ\phi.\]
\end{definition}

See Figure  \ref{fig:bsimulation} for a scheme of block simulation. In the following, when useful we represent a block embedding as the list of blocks together with the list of patterns. The size of this representation is linear in the number of nodes (for fixed  alphabet). 


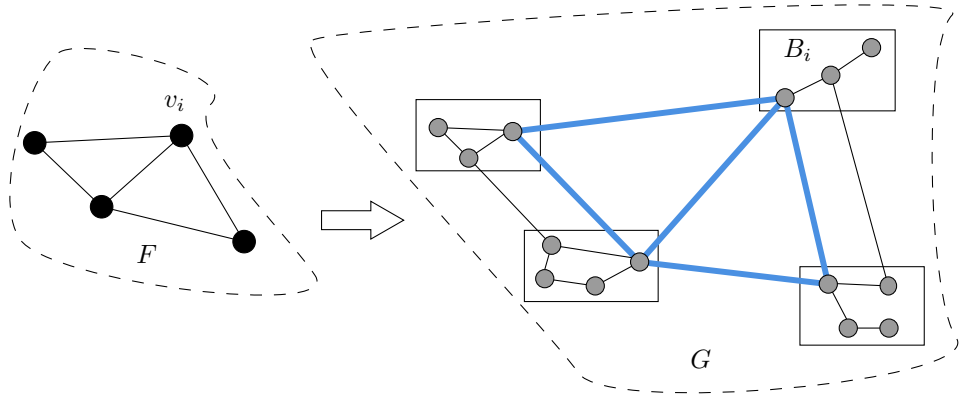
\begin{figure}

\centering
\begin{tikzpicture}[x=0.55pt,y=0.55pt,yscale=-1,xscale=1]

\draw  [fill={rgb, 255:red, 255; green, 255; blue, 255 }  ,fill opacity=1 ] (552,188) -- (637.5,188) -- (637.5,242) -- (552,242) -- cycle ;
\draw  [fill={rgb, 255:red, 255; green, 255; blue, 255 }  ,fill opacity=1 ] (524.25,25) -- (617.5,25) -- (617.5,80.75) -- (524.25,80.75) -- cycle ;
\draw  [fill={rgb, 255:red, 255; green, 255; blue, 255 }  ,fill opacity=1 ] (288,73) -- (373.5,73) -- (373.5,122) -- (288,122) -- cycle ;
\draw  [fill={rgb, 255:red, 255; green, 255; blue, 255 }  ,fill opacity=1 ] (362.5,163) -- (454.5,163) -- (454.5,212) -- (362.5,212) -- cycle ;
\draw  [dash pattern={on 4.5pt off 4.5pt}] (43.5,40) .. controls (63.5,30) and (167,56) .. (147,76) .. controls (127,96) and (198.5,164) .. (218.5,194) .. controls (238.5,224) and (35.5,199) .. (15.5,169) .. controls (-4.5,139) and (23.5,50) .. (43.5,40) -- cycle ;
\draw  [fill={rgb, 255:red, 0; green, 0; blue, 0 }  ,fill opacity=1 ] (18,102.75) .. controls (18,98.47) and (21.47,95) .. (25.75,95) .. controls (30.03,95) and (33.5,98.47) .. (33.5,102.75) .. controls (33.5,107.03) and (30.03,110.5) .. (25.75,110.5) .. controls (21.47,110.5) and (18,107.03) .. (18,102.75) -- cycle ;
\draw  [fill={rgb, 255:red, 0; green, 0; blue, 0 }  ,fill opacity=1 ] (64,146.75) .. controls (64,142.47) and (67.47,139) .. (71.75,139) .. controls (76.03,139) and (79.5,142.47) .. (79.5,146.75) .. controls (79.5,151.03) and (76.03,154.5) .. (71.75,154.5) .. controls (67.47,154.5) and (64,151.03) .. (64,146.75) -- cycle ;
\draw  [fill={rgb, 255:red, 0; green, 0; blue, 0 }  ,fill opacity=1 ] (119,97.75) .. controls (119,93.47) and (122.47,90) .. (126.75,90) .. controls (131.03,90) and (134.5,93.47) .. (134.5,97.75) .. controls (134.5,102.03) and (131.03,105.5) .. (126.75,105.5) .. controls (122.47,105.5) and (119,102.03) .. (119,97.75) -- cycle ;
\draw  [fill={rgb, 255:red, 0; green, 0; blue, 0 }  ,fill opacity=1 ] (162,170.75) .. controls (162,166.47) and (165.47,163) .. (169.75,163) .. controls (174.03,163) and (177.5,166.47) .. (177.5,170.75) .. controls (177.5,175.03) and (174.03,178.5) .. (169.75,178.5) .. controls (165.47,178.5) and (162,175.03) .. (162,170.75) -- cycle ;
\draw    (25.75,102.75) -- (71.75,146.75) ;
\draw    (71.75,146.75) -- (126.75,97.75) ;
\draw    (126.75,97.75) -- (169.75,170.75) ;
\draw    (71.75,146.75) -- (169.75,170.75) ;
\draw    (25.75,102.75) -- (126.75,97.75) ;

\draw [color={rgb, 255:red, 74; green, 144; blue, 226 }  ,draw opacity=1 ][line width=2.25]    (541.75,71.75) -- (571.5,200) ;
\draw [color={rgb, 255:red, 74; green, 144; blue, 226 }  ,draw opacity=1 ][line width=2.25]    (441.75,184.75) -- (571.5,200) ;
\draw [color={rgb, 255:red, 74; green, 144; blue, 226 }  ,draw opacity=1 ][line width=2.25]    (441.75,184.75) -- (541.75,71.75) ;
\draw [color={rgb, 255:red, 74; green, 144; blue, 226 }  ,draw opacity=1 ][line width=2.25]    (354.5,95) -- (541.75,71.75) ;
\draw [color={rgb, 255:red, 74; green, 144; blue, 226 }  ,draw opacity=1 ][line width=2.25]    (354.5,95) -- (441.75,184.75) ;
\draw    (324.25,113.25) -- (352.25,94.25) ;
\draw    (303.25,92.25) -- (352.25,94.25) ;
\draw    (303.25,92.25) -- (324.25,113.25) ;
\draw  [fill={rgb, 255:red, 155; green, 155; blue, 155 }  ,fill opacity=1 ] (297,92.25) .. controls (297,88.8) and (299.8,86) .. (303.25,86) .. controls (306.7,86) and (309.5,88.8) .. (309.5,92.25) .. controls (309.5,95.7) and (306.7,98.5) .. (303.25,98.5) .. controls (299.8,98.5) and (297,95.7) .. (297,92.25) -- cycle ;
\draw    (411.25,201.25) -- (441.75,184.75) ;
\draw    (381.25,173.25) -- (439.25,182.25) ;
\draw  [fill={rgb, 255:red, 155; green, 155; blue, 155 }  ,fill opacity=1 ] (435.5,184.75) .. controls (435.5,181.3) and (438.3,178.5) .. (441.75,178.5) .. controls (445.2,178.5) and (448,181.3) .. (448,184.75) .. controls (448,188.2) and (445.2,191) .. (441.75,191) .. controls (438.3,191) and (435.5,188.2) .. (435.5,184.75) -- cycle ;
\draw    (381.25,173.25) -- (376.5,190) ;
\draw    (568.25,199.25) -- (612.82,201.25) ;
\draw    (571.5,200) -- (585,225.75) ;
\draw  [fill={rgb, 255:red, 155; green, 155; blue, 155 }  ,fill opacity=1 ] (565.25,200) .. controls (565.25,196.55) and (568.05,193.75) .. (571.5,193.75) .. controls (574.95,193.75) and (577.75,196.55) .. (577.75,200) .. controls (577.75,203.45) and (574.95,206.25) .. (571.5,206.25) .. controls (568.05,206.25) and (565.25,203.45) .. (565.25,200) -- cycle ;
\draw    (573.25,56.25) -- (601.25,37.25) ;
\draw    (541.75,71.75) -- (571.5,57) ;
\draw  [fill={rgb, 255:red, 155; green, 155; blue, 155 }  ,fill opacity=1 ] (535.5,71.75) .. controls (535.5,68.3) and (538.3,65.5) .. (541.75,65.5) .. controls (545.2,65.5) and (548,68.3) .. (548,71.75) .. controls (548,75.2) and (545.2,78) .. (541.75,78) .. controls (538.3,78) and (535.5,75.2) .. (535.5,71.75) -- cycle ;
\draw    (376.5,196.25) -- (411.25,201.25) ;
\draw  [fill={rgb, 255:red, 155; green, 155; blue, 155 }  ,fill opacity=1 ] (370.25,196.25) .. controls (370.25,192.8) and (373.05,190) .. (376.5,190) .. controls (379.95,190) and (382.75,192.8) .. (382.75,196.25) .. controls (382.75,199.7) and (379.95,202.5) .. (376.5,202.5) .. controls (373.05,202.5) and (370.25,199.7) .. (370.25,196.25) -- cycle ;
\draw  [fill={rgb, 255:red, 155; green, 155; blue, 155 }  ,fill opacity=1 ] (405,201.25) .. controls (405,197.8) and (407.8,195) .. (411.25,195) .. controls (414.7,195) and (417.5,197.8) .. (417.5,201.25) .. controls (417.5,204.7) and (414.7,207.5) .. (411.25,207.5) .. controls (407.8,207.5) and (405,204.7) .. (405,201.25) -- cycle ;
\draw    (585.25,230.25) -- (613.25,230.25) ;
\draw  [fill={rgb, 255:red, 155; green, 155; blue, 155 }  ,fill opacity=1 ] (579,230.25) .. controls (579,226.8) and (581.8,224) .. (585.25,224) .. controls (588.7,224) and (591.5,226.8) .. (591.5,230.25) .. controls (591.5,233.7) and (588.7,236.5) .. (585.25,236.5) .. controls (581.8,236.5) and (579,233.7) .. (579,230.25) -- cycle ;
\draw  [fill={rgb, 255:red, 155; green, 155; blue, 155 }  ,fill opacity=1 ] (607,230.25) .. controls (607,226.8) and (609.8,224) .. (613.25,224) .. controls (616.7,224) and (619.5,226.8) .. (619.5,230.25) .. controls (619.5,233.7) and (616.7,236.5) .. (613.25,236.5) .. controls (609.8,236.5) and (607,233.7) .. (607,230.25) -- cycle ;
\draw    (573.25,56.25) -- (612.82,201.25) ;
\draw  [fill={rgb, 255:red, 155; green, 155; blue, 155 }  ,fill opacity=1 ] (567,56.25) .. controls (567,52.8) and (569.8,50) .. (573.25,50) .. controls (576.7,50) and (579.5,52.8) .. (579.5,56.25) .. controls (579.5,59.7) and (576.7,62.5) .. (573.25,62.5) .. controls (569.8,62.5) and (567,59.7) .. (567,56.25) -- cycle ;
\draw  [fill={rgb, 255:red, 155; green, 155; blue, 155 }  ,fill opacity=1 ] (607.13,201.25) .. controls (607.13,197.8) and (609.68,195) .. (612.82,195) .. controls (615.96,195) and (618.5,197.8) .. (618.5,201.25) .. controls (618.5,204.7) and (615.96,207.5) .. (612.82,207.5) .. controls (609.68,207.5) and (607.13,204.7) .. (607.13,201.25) -- cycle ;
\draw  [dash pattern={on 4.5pt off 4.5pt}] (217,32) .. controls (237,22) and (676.5,-5) .. (656.5,15) .. controls (636.5,35) and (639.5,207) .. (659.5,237) .. controls (679.5,267) and (419.5,291) .. (399.5,261) .. controls (379.5,231) and (197,42) .. (217,32) -- cycle ;
\draw  [fill={rgb, 255:red, 255; green, 255; blue, 255 }  ,fill opacity=1 ] (223,149.5) -- (256.9,149.5) -- (256.9,143) -- (279.5,156) -- (256.9,169) -- (256.9,162.5) -- (223,162.5) -- cycle ;
\draw  [fill={rgb, 255:red, 155; green, 155; blue, 155 }  ,fill opacity=1 ] (348.25,95) .. controls (348.25,91.55) and (351.05,88.75) .. (354.5,88.75) .. controls (357.95,88.75) and (360.75,91.55) .. (360.75,95) .. controls (360.75,98.45) and (357.95,101.25) .. (354.5,101.25) .. controls (351.05,101.25) and (348.25,98.45) .. (348.25,95) -- cycle ;
\draw  [fill={rgb, 255:red, 155; green, 155; blue, 155 }  ,fill opacity=1 ] (595,37.25) .. controls (595,33.8) and (597.8,31) .. (601.25,31) .. controls (604.7,31) and (607.5,33.8) .. (607.5,37.25) .. controls (607.5,40.7) and (604.7,43.5) .. (601.25,43.5) .. controls (597.8,43.5) and (595,40.7) .. (595,37.25) -- cycle ;
\draw    (324.25,113.25) -- (381.25,173.25) ;
\draw  [fill={rgb, 255:red, 155; green, 155; blue, 155 }  ,fill opacity=1 ] (318,113.25) .. controls (318,109.8) and (320.8,107) .. (324.25,107) .. controls (327.7,107) and (330.5,109.8) .. (330.5,113.25) .. controls (330.5,116.7) and (327.7,119.5) .. (324.25,119.5) .. controls (320.8,119.5) and (318,116.7) .. (318,113.25) -- cycle ;
\draw  [fill={rgb, 255:red, 155; green, 155; blue, 155 }  ,fill opacity=1 ] (375,173.25) .. controls (375,169.8) and (377.8,167) .. (381.25,167) .. controls (384.7,167) and (387.5,169.8) .. (387.5,173.25) .. controls (387.5,176.7) and (384.7,179.5) .. (381.25,179.5) .. controls (377.8,179.5) and (375,176.7) .. (375,173.25) -- cycle ;

\draw (539,29.4) node [anchor=north west][inner sep=0.75pt]    {$B_{i}$};
\draw (113,61.4+7) node [anchor=north west][inner sep=0.75pt]    {$v_{i}$};
\draw (94,171.4) node [anchor=north west][inner sep=0.75pt]    {$F$};
\draw (475,243.4) node [anchor=north west][inner sep=0.75pt]    {$G$};

\end{tikzpicture}

\caption{Scheme of one-to-one block simulation. In this case, network $F$ is simulated by $G$. Each node in $F$ is assigned to a block in $G$ and state coding is injective. Observe that blocks are connected (one edge in the original graph may be represented by a path in the communication graph of $G$) according to connections between nodes in the original network $F$. This connections are represented by blue lines}
\label{fig:bsimulation}
\end{figure}

\begin{remark}\label{rem:bloc-simul}
  It is convenient in many concrete cases to define a block embedding through blocks $D_i$ that are disjoint but do not cover ${V_G}$ and add a context block $C$ disjoint from the $D_i$ that completes the covering of ${V_G}$. In this variant a block embedding of $Q_F^{V_F}$ into ${Q_G^{V_G}}$ is given by patterns $p_{i,q}$ and a constant context pattern $p_C\in Q_ G^C$ which define an injective map ${\phi:Q_F^{V_F}\rightarrow Q_G^{V_G}}$ by ${\phi(x)_{D_i} = p_{i,x_i}}$ for each $i\in V_F$ and ${\phi(x)_C = p_C}$. This variant is actually just a particular instance of Definition~\ref{def:bloc-simu} because we can include $C$ in an arbitrary block (${D_i\leftarrow D_i\cup C}$) and define the block embedding as in Definition~\ref{def:bloc-simu}.


  Another natural particular case of Definition~\ref{def:bloc-simu} corresponding to localized information is when in each block $D_i$, there is a special node $v_i\in D_i$ such that the map ${q\mapsto p_{i,q}(v_i)}$ is injective. It is only possible when $Q_G$ is larger than $Q_F$, but it will be the case in several examples of Boolean automata networks below. Interestingly, this local coding phenomena is forced when some automate network $G$ simulates some Boolean automata network $G$: indeed, in any block $D_i$ of $G$ at least one node $v_i$ must change between patterns $p_{i,0}$ and $p_{i,1}$, but the map ${x\in\{0,1\}\mapsto p_{i,x}}$ being injective, it means that ${x\mapsto p_{i,x}(v_i)}$ is injective too.
\end{remark}

\begin{remark}
  The simulation relation of Definition~\ref{def:bloc-simu} is a pre-order on automata networks.
\end{remark}

The orbit graph $G_F$ associated to network $F$ with nodes $V$ and alphabet $Q$ is the digraph with vertices ${Q^V}$ and an edge from $x$ to $F(x)$ for each $x\in Q^V$.
We also denote $G_F^t = G_{F^t}$.

\begin{lemma}\label{lem:simu-dynamic}
  If $G$ simulates $F$ via block embedding with time constant $T$ then the orbit graph $G_F$ of $F$ is a subgraph of $G_G^T$. In particular if $F$ has an orbit with transient of length $t$ and period of length $p$, then $G$ has an orbit with transient of length $Tt$ and period $Tp$.
\end{lemma}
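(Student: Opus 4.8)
The plan is to treat the block embedding $\phi$ as a partial semiconjugacy: the defining identity $\phi\circ F = G^T\circ\phi$ holds on all of $Q_F^{V_F}$, and $\phi$ is injective by Definition~\ref{def:bloc-simu}. These are the only two ingredients, so the whole argument is a short chase of this identity along orbits, together with the injectivity needed to keep configurations distinct.

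First I would prove the subgraph claim. The orbit graph $G_F$ has vertex set $Q_F^{V_F}$ and an edge $x\to F(x)$ for each $x$, while $G_G^T=G_{G^T}$ has vertex set $Q_G^{V_G}$ and an edge $y\to G^T(y)$ for each $y$. Since $\phi$ is injective it is a bijection onto its image, and for every edge $x\to F(x)$ of $G_F$ the identity gives $G^T(\phi(x))=\phi(F(x))$, which is exactly the out-neighbour of $\phi(x)$ in $G_G^T$; hence $\phi$ maps $G_F$ isomorphically onto a subgraph of $G_G^T$ (in fact the induced subgraph on the image, again by injectivity, though only ``subgraph'' is needed). For the ``in particular'' part I would iterate the identity: by an immediate induction $G^{Tk}\circ\phi=\phi\circ F^k$ on $Q_F^{V_F}$ for all $k\ge 0$. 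Fix $x$ whose $F$-orbit has transient $t$ and period $p$, i.e. $F^{t+p}(x)=F^{t}(x)$ with $t,p$ minimal. Then $G^{T(t+p)}(\phi(x))=\phi(F^{t+p}(x))=\phi(F^{t}(x))=G^{Tt}(\phi(x))$, so the $G$-orbit of $\phi(x)$ reaches a cycle within $Tt$ steps, with period dividing $Tp$; moreover the $p$ configurations $\phi(F^{t}(x)),\dots,\phi(F^{t+p-1}(x))$ are pairwise distinct (injectivity of $\phi$ plus minimality of $p$) and all $G$-periodic, so they lie on that $G$-cycle and are visited exactly $T$ steps apart along it, which forces its length to be a multiple of $p$, hence an orbit of $G$ of period between $p$ and $Tp$ with transient at most $Tt$, matching the scaling claimed.

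The step I expect to be the delicate one is precisely pinning the two lengths down to exactly $Tt$ and $Tp$: the iterated identity only controls $G$ at times that are multiples of $T$, so a priori the period of the $G$-orbit of $\phi(x)$ is only known to be a multiple of $p$ dividing $Tp$, and likewise the transient is only bounded by $Tt$ from above. To get the exact values I would argue by contraposition, using that $\phi$ has no collisions and that $t,p$ are minimal for the $F$-orbit: any premature return along the $G$-cycle that also aligns with a multiple of $T$ pulls back through $\phi$ to a shorter cycle of $F$, and similarly any earlier onset of periodicity at a $T$-aligned time pulls back to a shorter transient of $F$. This is the only place where something more than the bare identity is used, and it is also the only place where the statement is at all sensitive to how the time constant $T$ interacts with $t$ and $p$. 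For every downstream use in the paper, however, only the direction provided directly by the subgraph inclusion is needed --- a cycle of length $p$ in $G_F$ yields a cycle of length at least $p$ in $G_G$, and a transient of length $t$ yields one of length at least $t$ --- so the constant factor $T$ plays no essential role there.
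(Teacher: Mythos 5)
Your treatment of the subgraph claim and of the bounds on periods and transients is correct and is essentially the paper's argument, just spelled out in much more detail: the paper's own proof is two sentences, resting on the identity $\phi\circ F=G^T\circ\phi$ and on the observation that $x$ is $F$-periodic iff $\phi(x)$ is $G$-periodic. Everything you derive up to ``period a multiple of $p$ dividing $Tp$, transient at most $Tt$ and at least $t$'' is sound.

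The one step you defer --- pinning the lengths down to \emph{exactly} $Tt$ and $Tp$ by contraposition --- cannot be completed, because the statement read literally is false. Take $F=G$ to be the identity on $\{0,1\}^{\{1\}}$, $\phi=\mathrm{id}$ and $T=2$: then $\phi\circ F=G^2\circ\phi$, so $G$ simulates $F$ with time constant $2$, $F$ has an orbit of period $p=1$, yet $G$ has no orbit of period $Tp=2$. Similarly the constant-zero map on $\{0,1\}$ simulated by itself with $T=2$ has maximal transient $1$, not $2$; and a cyclic shift on three nodes simulated by itself with $T=2$ (via the coordinate permutation conjugating the shift to its square) has maximal period $3$, not $6$. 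So a ``premature return aligned with a multiple of $T$'' need not pull back to anything shorter in $F$: it only forces the $G$-period to be a multiple of $p$ dividing $Tp$, which is exactly what you already proved. This is a defect of the lemma's phrasing rather than of your argument --- the paper's proof does not establish the exact values either, and, as you correctly note, the only direction used downstream (Theorem~\ref{them:univ-rich-dynamics}) is the lower bound, which your subgraph embedding and divisibility argument do deliver. I would simply drop the contraposition step and state the conclusion as ``transient at least $t$ and at most $Tt$, period a multiple of $p$ dividing $Tp$.''
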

\begin{proof}
  The embedding of $G_F$ inside $G_G^T$ is realized by definition by the block embedding of the simulation. The consequence on the length of periods and transients comes from the fact that the embedding $\phi$ verifies: $x$ is in a periodic orbit if and only if $\phi(x)$ is in a periodic orbit.   
\end{proof}

\subsection{Simulation between automata network families}

From now on, a family of automata networks will be given as a pair ${(\mathcal{F},\mathcal{F}^*)}$ where $\mathcal{F}$ is the set of abstract automata networks and $\mathcal{F}^*$ a standard representation.
We can now present our notion of simulation between families: a family $\mathcal{A}$ can simulate another family $\mathcal{B}$ if we are able to \textit{effectively} construct for any $B\in\mathcal{B}$ some automata network ${A\in\mathcal{A}}$ that is able to simulate $B$ in the sense of Definition~\ref{def:bloc-simu}.
More precisely, we ask on one hand that the automata network which performs the simulation do this task in reasonable time and reasonable space in the \emph{size of the simulated automata network}, and, on the other hand, that the construction of the simulator is efficient in the \emph{size of the representation of the simulated one}.

\begin{definition}\label{def:simu-family}
  Let $(\mathcal{F},\mathcal{F}^*)$ and $(\mathcal{H},\mathcal{H}^*)$ be two families with standard representations on alphabets $Q_F$ and $Q_H$ respectively. Let $T,S: \N \to \N$ be two functions. We say that $\mathcal{F}^*$ simulates $\mathcal{H}^*$ in time $T$ and space $S$  if there exists a \DLOG{} Turing machine $M$ such that for each $w \in L_{\mathcal{H}}$ representing some automata network ${H_w \in \mathcal{H} : Q_H^n\to Q_H^n}$, the machine produces a pair $M(w)$ which consists in:
  \begin{itemize}
  \item $w'\in L_{\mathcal{F}}$ with $F_{w'}:Q_F^{n_F}\to Q_F^{n_F}$,
  \item ${T(n)}$ and a representation of a block embedding $\phi:Q^{n_F}\to Q^{n}$, 
  \end{itemize}
  such that ${n_F=S(n)}$ and $F_{w'}$ simulates $H_w$ in time $T= T(n)$ under block embedding $\phi$. 
\end{definition}
From now on, whenever $\mathcal{F}^*$ simulates $\mathcal{H}^*$ in time $T$ and space $S$ we write $\mathcal{H}^* \preccurlyeq^T_S \mathcal{F}^*.$ 
In the above definition, the map $T(n)$ represents the temporal rescaling factor for a simulated network of size $n$.
The spatial rescaling factor is ${S(n)/n}$.
In the sequel we will mostly consider two cases: when the spatio-temporal rescaling factors are polynomial in $n$, and when they are constant.

\begin{remark}
  Note that both $T$ and $S$ maps must be \DLOG{} computable from this definition. Moreover, the simulation relation between families is transitive because the class \DLOG{} is closed under composition and simulation between individual automata networks is also transitive. When composing simulations time and space maps $S$ and $T$ get composed. 
\end{remark}

\subsection{Decision problems and automata network dynamics}
\label{sec:decis-probl-autom}
Studying the complexity of decision problems related to the dynamics of some discrete dynamical system is a very well known and interesting approach for measuring the complexity of the dynamics. In this section we introduce three variants of a classical decision problem that is closely related to the dynamical behavior of automata networks: the prediction problem.
This problem consists in predicting the state of one node of the network at a given time.
We study short term and long term versions of the problem depending on the way the time step is given in input.
In addition, we explore a variant in which we ask if some node has eventually changed without specifying any time step, but only a constant observation time rate $\tau$.
In other words, we check the system for any changement on the state of a particular node every multiple of $\tau$ time steps.
The main point of this subsection is to show that these problems are coherent with our simulation definition in the sense that if some family of automata networks $(\mathcal{F}_2,\mathcal{F}^*_2)$ simulates $(\mathcal{F}_1,\mathcal{F}^*_1)$ then, if some of the latter problem is hard for  $\mathcal{F}_1$ it will also be hard for $\mathcal{F'}_2$.
We will precise this result in the following lines.

Let $(\mathcal{F},\mathcal{F}^*)$ an automata network family and let $L \in \{0,1\}^* \times \{0,1\}^*$ be a parametrized language. We say that $L$ is parametrized by $\mathcal{F}$ if $L$  has $\mathcal{F}^*$ encoded as parameter. We note $L_{\mathcal{F}} \in \{0,1\}^*  $ as the language resulting on fixing $\mathcal{F}^*$ as a constant.

In particular, we are interested in studying prediction problems. We start by defining two variants of this well-known decision problem:
\begin{problem}[Unary Prediction (\PREDU)]\ 
	\label{prob:upred}
	\begin{description}
		\item[Parameters: ] alphabet $Q$, a standard representation $\mathcal{F}^*$ of an automata network family $\mathcal{F}$
		\item[Input: ]\ 
		\begin{enumerate}
			\item a word $w_F \in \mathcal{F}^*$ representing an automata network $F:Q^n \to Q^n$ on alphabet $Q$, with $F \in \mathcal{F}$,
			\item a node $v \in V(F) = [n]$,
			\item an initial condition $x \in Q^V$,
                        \item a state $q\in Q$,
			\item a natural number $t$ represented in unary.
		\end{enumerate}
		\item[Question: ] $F^t(x)_v = q$?
	\end{description}
\end{problem}
\begin{problem}[Binary Prediction (\PREDB)]\ 
	\label{prob:bpred}
	\begin{description}
		\item[Parameters: ] alphabet $Q$, a standard representation $\mathcal{F}^*$ of an automata network family $\mathcal{F}$
		\item[Input: ]\ 
		\begin{enumerate}
			\item a word $w_F \in \mathcal{F}^*$ representing an automata network $F:Q^n \to Q^n$ on alphabet $Q$, with $F \in \mathcal{F}$,
			\item a node $v \in V(F) = [n]$,
			\item an initial condition $x \in Q^V$,
                        \item a state $q\in Q$,
			\item a natural number $t$ represented in binary $t \in \{0,1\}^*$.
		\end{enumerate}
		\item[Question: ] $F^t(x)_v = q$?
	\end{description}
\end{problem}
Note that two problems are essentially the same, the only difference is the representation of time $t$ that we call the \textit{observation time}. We will also call node $v$ the objective node. Roughly,  as it happens with other decision problems, such as integer factorization,  the representation of observation time  will have an impact on the computation complexity of prediction problem. When the context is clear we will refer to both problems simply as $\PRED.$ In order to precise the latter observation we present now some general complexity results concerning $\PRED.$ 
\begin{proposition}
Let $\mathcal{F}$ be a concrete automata network family. The following statements hold:
\begin{enumerate}
	\item $\PREDU_{\mathcal{F}} \in \textbf{P}$
	\item $\PREDB_{\mathcal{F}} \in \textbf{PSPACE}$
\end{enumerate} 
\end{proposition}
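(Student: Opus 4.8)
The plan is to exploit the defining property of a standard representation (Definition~\ref{def:standdef}): from any $w_F\in L_{\mathcal F}$ a \DLOG{} algorithm produces the canonical representation of a circuit $C=C(w_F)$ that is a circuit encoding of $F_{w_F}$, i.e.\ $m_Q\circ F_{w_F}=C\circ m_Q$ on $Q^n$. Since a \DLOG{} procedure runs in time polynomial in $|w_F|$, the circuit $C$ has size polynomial in $|w_F|$, and since the input to $\PRED$ contains in particular the initial configuration $x\in Q^n$ (so that $n\leq|x|$ is bounded by the input size), $|C|$ is polynomial in the total input size. Thus in both problems the first step is to compute $C$ once and from then on treat it as a polynomial-size object on which we iterate.

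For the first item, on input $(w_F,v,x,q,t)$ with $t$ written in unary, the value $t$ is at most the input size. After computing $C$, the algorithm sets $y_0=m_Q(x)$ and computes $y_{s+1}=C(y_s)$ for $s=0,\dots,t-1$; each evaluation of the polynomial-size circuit $C$ takes polynomial time, and there are $t=O(\text{input size})$ of them, so the whole run is polynomial time. It accepts iff the bits of $y_t$ encoding node $v$ equal $m_Q(q)$. Correctness is immediate: $m_Q(F_{w_F}^t(x))=C^t(m_Q(x))=y_t$. Hence $\PREDU_{\mathcal F}\in\textbf{P}$.

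For the second item the only change is that $t$ is written in binary, so it may be as large as $2^{|t|}$, exponential in the input size; we can no longer count the iterations within polynomial time, but we can within polynomial space. The algorithm keeps a current configuration $y\in\{0,1\}^{k_Qn}$ (polynomial size) and a binary counter $s$ with values in $\{0,\dots,t\}$ ($|t|$ bits, polynomial size). Starting from $y=m_Q(x)$ and $s=0$, it repeatedly replaces $y$ by $C(y)$ — evaluating a Boolean circuit can be done in space polynomial in its size, and the work tape used for this is cleared and reused at each step — and increments $s$, halting when $s=t$; it then accepts iff the bits of $y$ encoding node $v$ equal $m_Q(q)$. All intermediate data fit in polynomial space, so $\PREDB_{\mathcal F}\in\textbf{PSPACE}$.

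There is no serious obstacle here; the statement is essentially a consistency check that the standard-representation formalism is well behaved. The two points deserving a line of care are that the \DLOG{} transformation of Definition~\ref{def:standdef} produces a circuit of size polynomial in the input (which follows from \DLOG{}$\subseteq$\textbf{P} together with the fact that the configuration $x$ is part of the input, bounding $n$), and that evaluating a Boolean circuit lies in polynomial space (standard). Everything else is just iterated application of the global map.
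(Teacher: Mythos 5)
Your proof is correct, and it is exactly the standard argument the authors intend: the paper in fact states this proposition without any proof, treating it as immediate from Definition~\ref{def:standdef}. The two points you flag — that the \DLOG{} conversion yields a circuit of size polynomial in the input, and that iterating a polynomial-size circuit exponentially many times fits in polynomial space with a reusable work tape and a binary step counter — are precisely the content that makes the claim go through.
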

Finally, we show that latter problem is coherent with our definition of simulation, in the sense that we can preserve the complexity of $\PRED$. Note that this give us a powerful tool in order to classify concrete automata rules according to the complexity of latter decision problem.

\begin{lemma}
Let $(\mathcal{F},\mathcal{F}^*)$ and $(\mathcal{H}, \mathcal{H}^*)$ be two automata network families. Let $T,S:\N \to \N$ be two polynomial functions such that $\mathcal{H}^* \preccurlyeq^T_S \mathcal{F}^*$ then, $\PRED_{\mathcal{H}^*} \leq^T_{\textbf{L}} \PRED_{\mathcal{F}^*}$\footnote{Here we denote $\leq^T_{\textbf{L}}$ as a $\DLOG$ Turing reduction. The capital letter ``T'' stands for \emph{Turing reduction} and it is not related to the simulation time function which is also denoted by T.} where {\PRED} denotes either \PREDU or \PREDB
\label{lemma:predsim}
\end{lemma}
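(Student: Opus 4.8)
The plan is to turn an instance of $\PRED_{\mathcal{H}^*}$ into an instance of $\PRED_{\mathcal{F}^*}$ by running the \DLOG{} machine $M$ witnessing $\mathcal{H}^*\preccurlyeq^T_S\mathcal{F}^*$, and then translating the query through the block embedding $\phi$. Concretely, given an input $(w_H,v,x,q,t)$ for $\PRED_{\mathcal{H}^*}$ where $H_w:Q_H^n\to Q_H^n$, I would first call $M$ on $w_H$ to obtain $w'\in L_{\mathcal{F}}$ with $F_{w'}:Q_F^{n_F}\to Q_F^{n_F}$, the value $T(n)$, and a representation of the block embedding $\phi:Q_H^n\to Q_F^{n_F}$. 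Since $F_{w'}$ simulates $H_w$ with time constant $T(n)$, we have $\phi\circ H_w = F_{w'}^{T(n)}\circ\phi$ on $Q_H^n$, hence by an immediate induction $\phi\circ H_w^k = F_{w'}^{T(n)\cdot k}\circ\phi$ for every $k\geq 0$. In particular, $H_w^t(x)_v = q$ holds if and only if the configuration $F_{w'}^{T(n)\cdot t}(\phi(x))$ restricted to the block $D_v$ equals the pattern $p_{v,q}$.

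The reduction then works as follows: compute $\phi(x)\in Q_F^{n_F}$ (each block $D_i$ gets the pattern $p_{i,x_i}$, read off from the block-embedding representation — a \DLOG{} task), compute the time $t' = T(n)\cdot t$ (a product of two numbers each of size polynomial in the input, doable in \DLOG{} for the binary case, and in the unary case $t'$ is still polynomially bounded since $T$ is polynomial), and then for each node $u\in D_v$ and the state $p_{v,q}(u)$ query the oracle $\PRED_{\mathcal{F}^*}$ on the instance $(w',u,\phi(x),p_{v,q}(u),t')$. Accept iff all $|D_v|$ oracle answers are ``yes''. This is a Turing (Cook) reduction making $|D_v|\leq n_F = S(n)$ oracle calls, which is polynomial; and all the glue computations — reading $M$'s output, evaluating $\phi$, forming $t'$, iterating over $D_v$ — compose \DLOG{} procedures, so the whole reduction is in \DLOG{}. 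For the time-bound superscript in $\leq^T_{\mathbf{L}}$ one notes the observation time in the produced instances is $T(n)$ times the original, matching the simulation's temporal rescaling. Both $\PREDU$ and $\PREDB$ are handled uniformly by this argument; the only difference is whether $t$ (and hence $t'$) is written in unary or binary, and in both cases the size stays polynomial because $T$ is polynomial.

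The main obstacle — really the only subtle point — is bookkeeping the \DLOG{} space bound across the composition. One must be careful that $M$ is a \DLOG{} transducer whose (polynomially long) output we then post-process, and that reading symbol $j$ of $M(w_H)$ on demand, re-running $M$ as needed, keeps everything in logarithmic space; this is exactly the standard closure of \DLOG{} under composition, invoked already in the remark following Definition~\ref{def:simu-family}, so it is safe to cite. A second minor point is that the block embedding is given as ``the list of blocks together with the list of patterns'' of size linear in $n_F$, so extracting $D_v$, the patterns $p_{v,q}$, and the coordinate values $p_{v,q}(u)$ is a direct lookup. No genuine difficulty arises beyond assembling these pieces; the conceptual content is entirely carried by the identity $\phi\circ H_w^t = F_{w'}^{T(n)t}\circ\phi$ together with injectivity of $\phi$ on the block $D_v$ (which lets a single-node, single-state query to $F_{w'}$ recover the full pattern comparison).
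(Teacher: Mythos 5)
Your proposal is correct and follows essentially the same route as the paper's proof: run the \DLOG{} simulation transducer, push the instance through the block embedding, rescale the time by $T(n)$, and decide the answer via polynomially many single-node oracle calls to $\PRED_{\mathcal{F}^*}$ on the block $D_v$, using injectivity of the patterns $p_{v,q}$. The only (immaterial) difference is that you query every node of $D_v$ against the target pattern $p_{v,q}$, whereas the paper probes a small distinguishing set of node/state pairs to identify which pattern is realized; both yield a \DLOG{} Turing reduction with polynomially many calls.
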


\begin{proof}
	Let $(w_H,v,x,q,t)$ be an instance of $\PRED_{\mathcal{H}^*}$. By definition of simulation, there exists a $\DLOG$ algorithm which takes $w_H$ and produces a word $w_F \in L_{\mathcal{F}}$ with $F:Q^{n_F} \to Q^{n_F}$ and a block representation $\phi: Q^{n_F} \to Q^n$ such that $n_F = S(n)$ and $F$ simulates $H$ in time $T(n)$ under block embedding $\phi$. Particularly, there exists a partition of blocks $D_v \subseteq V(F) = [n_F]$ for each $v \in V(H) = [n]$ and a collection of injective patterns, i.e. patterns $p_{i,q} \in Q_F^{D_i}$ such that $p_{i,q} = p_{i,q'} \implies q=q'.$ In addition, we have $\phi \circ H = F^T \circ \phi.$ Let us define the configuration $y \in Q_F^{n_F}$ as $y_{D_i} = p_{i,x_i}$, i.e., $\phi(x)_{D_i} = y_{D_i}$. Note that $y$ is well-defined as the block map is injective. In addition, let us choose an arbitrary vertex $v' \in D_v$ and an arbitrary ${q'\in Q_F}$ and let us consider now the instance of $\PRED_{\mathcal{F}^*}$ given by $(w_F,v',y,t \times T)$. Note that for each $v' \in D_v$ the transformation $(w_H,v,x,q,t) \to (w_F,v',y,q',t \times T)$ can be done in $\DLOG(|w_H|)$ because we can read the representation of $\phi$ for each block $p_{i,x_i}$ and then output the configuration $y$. We claim that  there exists a $\DLOG(|w_H|)$ algorithm that decides if $(w_H,v,x,t) \in L_{\PRED_{\mathcal{H}^*}}$  with oracle calls to $\PRED_{\mathcal{F}^*}$. More precisely, as a consequence of the injectivity of block embedding, it is sufficient to runs oracle calls of $\PRED_{\mathcal{F}^*}$ for $(w_F,v',q',y,t \times T)$ for a set of pairs $v' \in D_v$ and ${q'\in Q_F}$ of size at most ${|Q|}$ in order to determine which pattern ${p_{v,q}}$ verifies ${F^{t\times T}(y)_{D_v}=p_{v,q}}$. Finally, all of this can be done in $\DLOG$ as $n_F = S(n) = n^{\mathcal{O}(1)}$ and $T =  n^{\mathcal{O}(1)}$ and thus, a polynomial amount of calls to each oracle is needed.
\end{proof}

Finally, we would like to study the case in which the observation time is not unique and ask whether the state of some node eventually changes.
However, in order to preserve complexity properties under simulation, we still need to have some sort of restriction on observation times.
This will allow us to avoid giving misleading answers when the simulating network is performing one step of simulation: indeed, it could take several time steps for the simulating network in order to represent one step of the dynamics of the simulated network, so some state change could happen in the intermediate steps while the simulated dynamics involve no state change.
In order to manage this sort of time dilation phenomenon between simulating and simulated systems, we introduce the following decision problem.


\begin{problem}[Prediction change $\PREDC_{\mathcal{F^*}}$]\ 
	\label{prob:cpred}
	\begin{description}
		\item[Parameters: ]  alphabet $Q$, a standard representation $\mathcal{F}^*$ of an automata network family $\mathcal{F}$
		\item[Input: ]\ 
		\begin{enumerate}
			\item a word $w_F \in \mathcal{F}^*$ representing an automata network $F:Q^n \to Q^n$ on alphabet $Q$, with $F \in \mathcal{F}$,
			\item a node $v \in V(G)$,
			\item an initial condition $x \in Q^V$,
			\item a time gap $k \in \mathbb{N}$ in unary.
		\end{enumerate}
		\item[Question: ] $\exists t \in \N: F^{kt}(x)_v \not = x_v$
	\end{description}
\end{problem}

As we did with previous versions of prediction problem, we introduce a general complexity result and then, we show computation complexity is consistent under simulation.

\begin{proposition}
	Let $(\mathcal{F}, \mathcal{F}^*)$ be a automata network family. $\PREDC_{\mathcal{F}} \in \textbf{PSPACE}.$
\end{proposition}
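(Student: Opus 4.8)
The plan is to give a direct polynomial-space simulation of the orbit. The key observation is that $F:Q^n\to Q^n$ acts on a finite set, so the orbit $\bigl(F^s(x)\bigr)_{s\ge 0}$ is eventually periodic with transient plus period at most $|Q|^n$; in particular every configuration that ever occurs in the orbit already occurs among $F^0(x),F^1(x),\dots,F^{|Q|^n}(x)$. Consequently there exists $t\in\N$ with $F^{kt}(x)_v\neq x_v$ if and only if there exists such a $t$ with $kt\le |Q|^n$. Thus it suffices for the algorithm to iterate $F$ step by step for at most $|Q|^n$ steps, inspecting node $v$ at every step whose index is a positive multiple of $k$, and to accept as soon as a discrepancy with $x_v$ is found (rejecting if the horizon is reached without one).

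Concretely, first I would run the \DLOG{} algorithm attached to the standard representation $\mathcal{F}^*$ on the input word $w_F$ to obtain the canonical description of a circuit $C$ encoding $F$; this circuit has size polynomial in $|w_F|$, hence fits in polynomial space and can be evaluated in polynomial space (Boolean circuit evaluation is even polynomial time). The machine then maintains a current configuration $y\in Q^n$ initialised to $x$ (stored via the fixed encoding $m_Q$, using $O(n)$ bits for a fixed alphabet), a step counter $s$ ranging over $\{0,1,\dots,|Q|^n\}$, and a residue counter $r=s\bmod k$. At each iteration it checks whether $s\ge 1$, $r=0$ and $y_v\neq x_v$, accepting if so; otherwise it computes $y\leftarrow F(y)$ by evaluating $C$, increments $s$, and updates $r$ (resetting it to $0$ when it reaches $k$). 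If $s$ reaches $|Q|^n$ with no acceptance, it rejects.

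For correctness this is exactly the reformulation of the $\exists t$ quantifier above, and for the space bound: $y$ uses $O(n)$ bits, the counter $s$ has value at most $|Q|^n$ so its binary encoding uses $O(n)$ bits, the residue counter $r$ uses $O(\log k)$ bits which is bounded by the input length since $k$ is given in unary, and storing/evaluating $C$ is polynomial in the input. Hence the whole procedure runs in polynomial space, which gives $\PREDC_{\mathcal{F}}\in\textbf{PSPACE}$. I do not expect a genuine obstacle here; the only point needing a moment of care is the choice of the finite time horizon $|Q|^n$ — one must note that eventual periodicity of the finite deterministic dynamics makes an unbounded search unnecessary, and that although this horizon is exponential, the step counter tracking it still has only polynomially many bits, so it does not break the space bound.
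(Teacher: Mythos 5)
The paper states this proposition without proof, so there is nothing to compare against; on its own merits your approach (a step-by-step simulation within a finite horizon, using the fact that counters with exponential range only need polynomially many bits) is the natural one and is essentially sound, but the specific horizon you chose is wrong. The claim ``there exists $t$ with $F^{kt}(x)_v\neq x_v$ if and only if there exists such a $t$ with $kt\le |Q|^n$'' does not follow from the fact that every configuration of the orbit appears among $F^0(x),\dots,F^{|Q|^n}(x)$: the witnessing configuration must appear at a time that is a multiple of $k$, and its first occurrence within the first $|Q|^n$ steps need not be such a multiple. Concretely, take $Q=\{0,1\}$, let $F$ act on the orbit of $x$ as a pure cycle of odd length $p=2^n-1$, and take $k=2$. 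The residues $2t\bmod p$ do eventually cover all of $\mathbb{Z}_p$, but an odd residue $r$ is first reached at time $2t=r+p$, which exceeds $|Q|^n=p+1$ as soon as $r\geq 3$. If node $v$ differs from $x_v$ only at such a position of the cycle, your algorithm rejects a YES-instance.

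The repair is immediate and preserves your space analysis: apply eventual periodicity to the map $F^k$ rather than to $F$. The sequence $t\mapsto F^{kt}(x)$ is the orbit of $x$ under the deterministic map $F^k$ on the finite set $Q^n$, hence every configuration of the form $F^{kt}(x)$ already occurs for some $t\le |Q|^n$. So the correct horizon is $t\le|Q|^n$, i.e.\ at most $k\cdot|Q|^n$ applications of $F$; the step counter then needs $O(n\log|Q|+\log k)$ bits, which is still polynomial in the input (recall $k$ is given in unary). With that change the rest of your argument --- producing the circuit via the \DLOG{} algorithm of the standard representation, evaluating it repeatedly on a stored configuration, and checking node $v$ at multiples of $k$ --- goes through and yields $\PREDC_{\mathcal{F}}\in\textbf{PSPACE}$.
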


The injectivity of block encodings in our definition of simulation is essential for the following lemma as it guaranties that a state change in the simulating network always represent a state change in the simulated network at the corresponding time steps.

\begin{lemma}
Let $(\mathcal{F},\mathcal{F}^*)$ and $(\mathcal{H}, \mathcal{H}^*)$ be two automata network families and $T,S:\N \to \N$ two polynomial functions such that $\mathcal{H^*} \preccurlyeq^T_{S} \mathcal{F^*}$ then, $\PREDC_{\mathcal{H^*}} \leq^T_{\textbf{L}} \PREDC_{\mathcal{F^*}}$.
\label{lemma:predcsim}
\end{lemma}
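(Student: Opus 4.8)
The plan is to mimic the proof of Lemma~\ref{lemma:predsim}, adapting it to the ``change'' variant and paying particular attention to the time gap. First I would take an instance $(w_H, v, x, k)$ of $\PREDC_{\mathcal{H}^*}$, write $H := H_{w_H}:Q_H^n\to Q_H^n$, and invoke the \DLOG{} machine $M$ furnished by ${\mathcal{H^*} \preccurlyeq^T_{S} \mathcal{F^*}}$ to produce, in logarithmic space, a word $w_F\in L_{\mathcal{F}}$ with $F := F_{w_F}:Q_F^{n_F}\to Q_F^{n_F}$, $n_F = S(n)$, the value $T = T(n)$, and a representation of a block embedding $\phi:Q_F^{n_F}\to Q_H^{n}$ given by blocks $(D_i)_{i\in[n]}$ and injective patterns $p_{i,q}\in Q_F^{D_i}$, satisfying $\phi\circ H = F^{T}\circ\phi$. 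I would then compute, still in \DLOG{}, the configuration $y := \phi(x)\in Q_F^{n_F}$ by writing $y_{D_i} = p_{i,x_i}$ block by block.

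The core observation is the following: for every $t\in\N$ and every $i\in[n]$, iterating $\phi\circ H = F^T\circ\phi$ gives ${F^{Tkt}(y)_{D_i} = p_{i,\, H^{kt}(x)_i}}$, and by injectivity of ${q\mapsto p_{i,q}}$ this pattern equals ${y_{D_i}=p_{i,x_i}}$ if and only if ${H^{kt}(x)_i = x_i}$. Hence ${H^{kt}(x)_v\neq x_v}$ holds for some $t\in\N$ if and only if ${F^{(Tk)t}(y)_{D_v}\neq y_{D_v}}$ for some $t$, which in turn holds if and only if there is a node $v'\in D_v$ with ${F^{(Tk)t}(y)_{v'}\neq y_{v'}}$ for some $t$. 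Since the two existential quantifiers over $t$ and over $v'$ commute, this is equivalent to: there exists $v'\in D_v$ such that $(w_F, v', y, Tk)$ is a positive instance of $\PREDC_{\mathcal{F}^*}$.

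This directly yields the reduction: enumerate the (at most $n_F$) nodes $v'\in D_v$ from the representation of $\phi$, make an oracle call to $\PREDC_{\mathcal{F}^*}$ on $(w_F, v', y, Tk)$ for each of them, and answer positively iff at least one call returns yes. It then remains to argue this is a \DLOG{} Turing reduction: computing $w_F$, $T$, $\phi$ and $y$ is \DLOG{} by definition of simulation; enumerating $D_v$ and launching the polynomially many oracle calls is \DLOG{}; and the time gap $Tk$, which must be produced in unary, has polynomial length (as ${T=T(n)}$ is \DLOG{}-computable and polynomially bounded, and $k$ is given in unary), so it can be emitted by two nested counters each using only $O(\log n)$ bits.

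The step I expect to require the most care is exactly the multiplication of the time gap by $T$, and it is what the remark preceding the statement anticipates: observing $F$ only at multiples of $Tk$ discards the intermediate micro-steps performed by $F$ within a single simulation macro-step, so that a detected change in $F$ at such a time genuinely witnesses a change in $H$, and — crucially — conversely. Injectivity of the block patterns $p_{i,q}$ is precisely what makes both directions of this correspondence exact (a change of pattern on $D_v$ is equivalent to a change of the simulated state at $v$, with no spurious or missed changes), so I would state and use it explicitly, as in Lemma~\ref{lemma:predsim}.
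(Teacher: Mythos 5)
Your proposal is correct and follows essentially the same route as the paper's proof: compute the simulator and the encoded configuration $y=\phi(x)$ in \DLOG{}, query the oracle on $(w_F,v',y,kT(n))$ for each $v'$ in the block $D_v$, and accept iff some call accepts, with injectivity of the patterns guaranteeing the equivalence. Your explicit spelling-out of the quantifier commutation and of why multiplying the gap by $T$ makes both directions of the correspondence exact is a slightly more detailed version of what the paper leaves implicit.
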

\begin{proof}
	Proof is analogous to short term prediction case. Let $(w_H,v,x,k)$ be an instance of $\PREDC_{\mathcal{H}^*}$. Again, by the definition of simulation, there exists a $\DLOG$ algorithm which takes $w_H$ and produces a word $w_F \in L_{\mathcal{F}}$ with $F:Q^{n_F} \to Q^{n_F}$ and a block representation $\phi: Q^{n_F} \to Q^n$ such that $n_F = S(n)$ and $F$ simulates $H$ in time $T(n)$ under block embedding $\phi$. The latter statements means, particularly, that there exists a partition of blocks $D_v \subseteq V(F) = [n_F]$ for each $v \in V(H) = [n]$ and a collection of injective patterns, i.e. patterns $p_{i,q} \in Q_F^{D_i}$ such that $p_{i,q} = p_{i,q'} \implies q=q'$ and also that $\phi \circ H = F^{T(n)} \circ \phi.$ Let us define the configuration $y \in Q_F^{n_F}$ as $y_{D_i} = p_{i,x_i}$, i.e., $\phi(x)_{D_i} = y_{D_i}$. Note, again, that $y$ is well-defined as the block map is injective. Now we proceed in using the same approach than before: for each $v' \in D_v$ we can produce an instance $(w_F,v',y,kT(n))$ of $\PREDC_{\mathcal{F^*}}.$  There exists a $\DLOG$ machine which produces $(w_F,v',y,kT(n))$ for each $v' \in D_v$ and calls for an oracle solving $\PREDC_{\mathcal{F^*}}(w_F,v',y,kT(n))$ and outputs $1$ if there is at least one YES-instance for some $v'$. By definition of simulation and injectivity of block embedding function we have that this algorithm outputs $1$ if and only if $(w_H,v,x,k) \in \PREDC_{\mathcal{H}^*}.$
\end{proof}

To end this subsection, let us show that problems $\PREDC$ and $\PREDB$ are actually orthogonal: depending of the family of automata networks considered, one can be harder than the other and reciprocally.

\begin{theorem}\label{theo:orthogonalpredictions}
  The exists a family with circuit representation ${(\mathcal{F},\mathcal{F}^*)}$ such that ${\PREDB_{\mathcal{F}}}$ is solvable in polynomial time while ${\PREDC_{\mathcal{F}}}$ is NP-hard.
  Conversely, the exists a family with circuit representation ${(\mathcal{G},\mathcal{G}^*)}$ such that ${\PREDB_{\mathcal{G}}}$ is PSPACE-complete while ${\PREDC_{\mathcal{G}}}$ is solvable in polynomial time.
\end{theorem}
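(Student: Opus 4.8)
The plan is to construct two artificial families, each tailored so that one prediction variant is easy and the other is hard, exploiting the freedom allowed by circuit representations. For the first family $(\mathcal{F},\mathcal{F}^*)$, I would design each network so that the binary prediction problem degenerates (the objective node's orbit is trivial to compute in polynomial time) while the ``eventually changes at a multiple of $k$'' question secretly encodes an NP-complete problem such as SAT. Concretely, for a SAT instance $\phi$ with $n$ variables, build an automata network on node set consisting of a ``counter'' component of $O(n)$ nodes that cycles deterministically through all $2^n$ assignments (so its state at time $t$ is a simple function of $t \bmod 2^n$, computable in polynomial time), together with a single objective node $v$ whose update rule reads the current counter value $a$, evaluates $\phi(a)$, and sets $v$ to $1$ iff $\phi(a)$ holds, and is initialized to $0$. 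Then $F^t(x)_v$ is computable in polynomial time for any $t$ given in binary (just compute the counter value and evaluate $\phi$ once), so $\PREDB_{\mathcal{F}}\in\textbf{P}$; but with time gap $k=1$, $\exists t: F^t(x)_v\neq x_v$ holds iff $\phi$ has a satisfying assignment, giving NP-hardness of $\PREDC_{\mathcal{F}}$. I would need to double-check that $\PREDB$ being polynomial here is a property of the \emph{family} uniformly, which it is since the counter dynamics and the embedded formula are both read off the circuit representation in polynomial time.

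For the converse family $(\mathcal{G},\mathcal{G}^*)$, I want $\PREDB$ PSPACE-complete but $\PREDC$ polynomial. PSPACE-hardness of $\PREDB$ is standard: one can take a family that simulates space-bounded Turing machines (or directly iterate a circuit), so that $F^t(x)_v=q$ for $t$ in binary encodes the halting-within-$t$-steps behavior of a polynomial-space machine; this is a classical construction and I would invoke it (or a known PSPACE-complete iterated-circuit problem) rather than rebuild it. The subtlety is to arrange simultaneously that $\PREDC_{\mathcal{G}}$ is easy. The idea is to make the objective node $v$ \emph{monotone} in a controlled way: design the networks so that node $v$, once it changes, is detectable quickly — for instance, let $v$ be a node whose value is ``latched'' and such that $v$ changes from its initial value if and only if it does so within the first polynomially many steps (e.g. $v$ simply copies the state of a designated neighbor at time $1$ and then freezes). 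Then $\PREDC_{\mathcal{G}}$ reduces to checking a bounded number of initial steps, hence is in $\textbf{P}$, while the PSPACE-hardness of $\PREDB$ can be carried by a \emph{different} node of the same networks — the decision problems $\PREDB$ and $\PREDC$ fix the objective node as part of the input, so nothing forces them to probe the same node, and a single family can host both behaviors.

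The main obstacle I anticipate is the second family: ensuring the two requirements do not conflict. The PSPACE-hardness construction for $\PREDB$ typically makes \emph{every} node's long-term behavior complex, so naively the ``freeze after step $1$'' trick for $v$ is incompatible with $v$ also being the node that witnesses PSPACE-hardness. The resolution is exactly to decouple: include in each network both a ``computation zone'' (carrying the PSPACE-hard dynamics, with some node $v_{\mathrm{hard}}$ whose value at time $t$ encodes the Turing machine computation) and an independent ``change-test zone'' whose designated node $v_{\mathrm{easy}}$ has trivial change behavior; the family's standard representation must make it transparent which nodes are which, and one must verify that adding the inert zone does not disturb the PSPACE-completeness reduction (it does not, since the zones do not interact). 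A secondary point to verify carefully is that $\PREDC$ restricted to $\mathcal{G}$ genuinely lies in $\textbf{P}$ \emph{uniformly over the family} and not merely for each fixed network; this holds because the structural promise (which node is the inert latch, and that it stabilizes in $O(1)$ or $\mathrm{poly}(n)$ steps) is a syntactic property of the representation, readable in logspace, so the polynomial-time decider works uniformly.
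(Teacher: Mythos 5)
Your construction for the first family is essentially the paper's own: a counter cycling through all $2^n$ valuations with a witness node evaluating $\phi$ on the current counter value, so that $\PREDB$ reduces to one modular addition plus one evaluation of $\phi$, while $\PREDC$ with gap $k=1$ encodes satisfiability. That half is correct.

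The second family has a genuine gap. You propose to host the PSPACE-hard dynamics at a node $v_{\mathrm{hard}}$ and to make $\PREDC$ easy by designating a separate inert node $v_{\mathrm{easy}}$ that freezes after one step. But $\PREDC_{\mathcal{G}}$ takes the objective node as part of its \emph{input}: for the problem to be solvable in polynomial time, it must be decidable in polynomial time for \emph{every} node of every network in the family, including $v_{\mathrm{hard}}$. Nothing in your construction controls the change-behavior of $v_{\mathrm{hard}}$; on the contrary, if its value at time $t$ encodes the acceptance of a polynomial-space machine, then the instance $(w,v_{\mathrm{hard}},x,k{=}1)$ of $\PREDC$ asks whether that node ever leaves its initial state, which is exactly the kind of question your hardness gadget is designed to make intractable. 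So the decoupling does not resolve the conflict you correctly anticipated; it only relocates it. The paper's resolution is different in kind: it builds, for each $n$, an auxiliary network $H_n$ with the property that from \emph{any} configuration, \emph{every} node changes its state at some multiple of $k$ for every $k\leq 2^n$ (via nested counters producing the pattern $0^k1^k$ at all nodes), and then takes the product $F\times H_n$ with a $\PREDB$-hard network $F$, each component evolving independently. Since a product node changes whenever its $H_n$-component changes, $\PREDC$ becomes trivially true for all nodes whenever $k\leq 2^n$, and for $k\geq 2^n$ the unary encoding of $k$ makes the entire orbit graph polynomial in the input size, so brute force applies; meanwhile $\PREDB$ on the $F$-component survives the product because asking whether a node is in state $q$ on the first component is a finite disjunction over states of the second. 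Some device of this sort, forcing a positive $\PREDC$ answer uniformly over all nodes rather than at a single designated node, is what your argument is missing.
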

\begin{proof}
  Given a SAT formula $\phi$ with $n$ variables, let us define the automata network $F_\phi$ on ${\{0,1\}^{n+1}}$ which interprets any configuration as a pair ${(b,v)\in \{0,1\}\times\{0,1\}^n}$ where $b$ is the state of node $1$ and $v$ is both a number represented in base 2 and a valuation for $\phi$ and does the following:
  \[F(b,i) =
    \begin{cases}
      (1, i+1\bmod 2^n) &\text{ if $\phi$ is true on valuation $v$,}\\
      (0, i+1\bmod 2^n) &\text{ else.}
    \end{cases}
  \]
  A circuit representation of size polynomial in $n$ can be computed in $\DLOG$ from $\phi$ and we define ${(\mathcal{F},\mathcal{F}^*)}$ as the family obtained by considering all $F_\phi$ for all SAT formulas $\phi$.
  First, ${\PREDB_{\mathcal{F}}}$ can be solved in polynomial time: given $F_\phi$, an initial configuration ${(b,v)}$ and a time $t$, it is sufficient to compute ${v'=v+t-1\bmod 2^n}$ and verify the truth $b'$ of $\phi$ on valuation $v'$ and we have ${(b',v'+1\bmod 2^n)=F^t(b,v)}$.
  To see that ${\PREDC_{\mathcal{F}}}$ is NP-hard, it suffices to note that, on input ${(0,0\cdots0)}$, $F_\phi$ will test successively each possible valuation for $\phi$ and the state of node $1$ will change to $1$ at some time if and only if formula $\phi$ is satisfiable.

  For the second part of the proposition, the key is the construction for any $n$ of an automata network $H_n$ on $Q^n$ that completely trivializes problem ${\PREDC}$ in the following sense: for any configuration ${c\in Q^n}$ and any ${k\leq 2^n}$ and any node $v$, there is some $t$ such that ${c_v\neq H_n^{kt}(c)_v}$.
  Taking any automata network $F$ with $n$ nodes, the product automata network ${F\times H_n}$ (working on the product of alphabets in such a way that each component evolves independently) has the same property, namely that all instances of ${\PREDC}$ with ${k\leq 2^n}$ have a positive answer.
  From this, taking any family with a PSPACE-hard ${\PREDC}$ problem (they are known to exist, see Corollary~\ref{cor:universality} for details), and replacing each automata network $F$ with $n$ nodes by the product ${F\times H_n}$ (the circuit representation of the product is easily deduced from the representations of each component), we get a family ${(\mathcal{G},\mathcal{G}^*)}$ such that ${\PREDB_{\mathcal{G}}}$ is \textbf{PSPACE}-complete while ${\PREDC_{\mathcal{G}}}$ is easy: on one hand, taking products does not simplify ${\PREDB}$ problem (because deciding whether node $v$ is in state $q$ on the $F$ component reduces to deciding whether node $v$ is in state ${(q,q')}$ for some state $q'$ of the $H_n$ component); on the other hand, ${\PREDC}$ becomes trivial (always true) on inputs where the observation interval $k$ is less than ${2^n}$, and if ${k\geq 2^n}$ then the size of the whole orbit graph of the input network is polynomial in $k$ (since $k$ is given in unary), so the entire orbit of the input configuration can be computed explicitly in polynomial time and the ${\PREDC}$ can be answered in polynomial time.
  
  Let us complete the proof by giving an explicit construction of the automata networks $H_n$ over $Q^n$ with the desired property.
  ${Q=\{0,1\}\times\{0,1\}\times\{0,1\}}$ and $H_n$ interprets any configuration as a triplet of Boolean configurations ${(c,i,k)}$ with the following meaning: $k$ is a global counter that will take all possible values between $0$ and $2^n-1$ and loop, $i$ is a local counter that will run from $0$ to $2k$ and $c$ is the component where state changes will be realized at precise time steps to ensure the desired property of $H_n$. The goal is to produce in any orbit and for any $k$ and at any node the sequence of states ${O^k1^k}$ on the $c$-component: such a behavior is sufficient to ensure the desired property on $H_n$. This is obtained by defining ${H_n(c,i,k)=(c',i',k')}$ as follows:
  \begin{itemize}
  \item for any node $v$, ${c'_v=0}$ if $i<k$ or ${i\geq 2k}$, and ${c'_v=1}$ else,
  \item $i'=0$ if ${i\geq 2k}$ and $i'+1$ else,
  \item ${k'=k+1\bmod 2^n}$ if ${i\geq 2k}$ and ${k'=k}$ else.
  \end{itemize}
  It is clear that such an $H_n$ admits a polynomial circuit representation \DLOG{} computable from $n$.
\end{proof}
Finally, besides prediction problems, a classical type of problems studied in the literature are reachability problems \cite{Barrett_2006,Barrett_2003,Folschette_2015}.
Studying each variant of this type of problems goes beyond the scope of the present paper, but let us show that the most natural one is actually equivalent to $\PREDB$ for general automata networks.

\begin{problem}[Reachability (\REACH)]\ 
	\label{prob:reach}
	\begin{description}
		\item[Parameters: ] alphabet $Q$, a standard representation $\mathcal{F}^*$ of an automata network family $\mathcal{F}$
		\item[Input: ]\ 
		\begin{enumerate}
			\item a word $w_F \in \mathcal{F}^*$ representing an automata network $F:Q^n \to Q^n$ on alphabet $Q$, with $F \in \mathcal{F}$.
			\item an initial configuration $x \in Q^V$.
			\item a target configuration $y \in Q^V$.
		\end{enumerate}
		\item[Question: ] is there some $t$ such that $F^t(x) = y$?
	\end{description}
\end{problem}

\begin{proposition}\label{prop:predtoreach}
	If we consider the family of all automata networks over some alphabet $Q$ with circuit representations, then $\PREDB$ and $\REACH$ are equivalent under \DLOG{} reductions.
\end{proposition}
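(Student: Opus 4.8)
The plan is to prove the two \DLOG{} reductions separately; in each direction the reduction just augments the network with a handful of control bits whose update circuit is plainly \DLOG{}-computable. I assume $|Q|\geq 2$ throughout, the case $|Q|=1$ being trivial since there is then a single configuration. For the reduction from $\PREDB$ to $\REACH$, given an instance $(w_F,v,x,q,t)$ with $F:Q^n\to Q^n$ and $t$ written in binary, I would build an automata network $G$ over $Q$ whose nodes are those of $V(F)$ together with $\lceil\log_2(t+1)\rceil$ extra nodes carrying in binary a counter $k\in\{0,\dots,t\}$ and one flag node $f\in\{0,1\}$. Its dynamics is: if $k<t$, apply $F$ on the $V(F)$-component and set $k\mapsto k+1$, $f\mapsto 0$; if $k=t$ and $f=0$, perform a one-step cleanup that keeps the state of $v$, forces every other node of $V(F)$ to a fixed state $0$, and sets $f\mapsto 1$; if $k=t$ and $f=1$, do nothing. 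Started from $(x,0,0)$, the orbit of $G$ visits $(F^i(x),i,0)$ for $i=0,\dots,t$, then the configuration with $v$ in state $F^t(x)_v$, every other $V(F)$-node in state $0$, counter $t$ and flag $1$, and then stays there forever. Since the cleanup is the only transition producing a configuration with $f=1$, the fully specified target $y$ having $v$ in state $q$, all other $V(F)$-nodes in state $0$, counter $t$, flag $1$ is reachable from $(x,0,0)$ if and only if $F^t(x)_v=q$. A circuit for $G$ is \DLOG{}-computable from the circuit for $F$ (counter increment and a comparison against the hard-wired value $t$ are standard), and the initial and target configurations over $Q$ are likewise \DLOG{}-computable.

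For the reduction from $\REACH$ to $\PREDB$, the key remark is that if $F^t(x)=y$ for some $t\in\N$ then it already holds for some $t<|Q|^n$, because the orbit of $x$ starts repeating within $|Q|^n$ steps, so the set of configurations reachable from $x$ equals $\{x,F(x),\dots,F^{|Q|^n-1}(x)\}$. Given $(w_F,x,y)$, I would build $G$ over $Q$ on $V(F)$ plus a single sticky flag node $f$, with $G(c,f)=(F(c),\,f\vee[c=y])$, so that the flag of $G^t(x,0)$ equals $\bigvee_{s=0}^{t-1}[F^s(x)=y]$. Then $y$ is reachable from $x$ if and only if the flag node is in state $1$ in $G^{|Q|^n}(x,0)$, i.e.\ if and only if the $\PREDB$ instance $(w_G,f,(x,0),1,|Q|^n)$, with the observation time $|Q|^n$ written in binary (it has $O(n)$ bits), is a YES-instance. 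A circuit for $G$ is \DLOG{}-computable from the circuit for $F$ (an equality test against the hard-wired constant $y$ followed by an OR with the flag), and $|Q|^n$ in binary is \DLOG{}-computable from $n$ since $|Q|$ is fixed.

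I expect the work to be almost entirely routine; the only points needing care are bookkeeping, namely that the auxiliary gadgets (a bounded counter with a hard-wired bound, comparison and equality circuits against hard-wired constants, and the encoding of the extra nodes into the alphabet $Q$ via $\{0,1\}\subseteq Q$) are genuinely \DLOG{}-uniform, and that the first construction admits no spurious orbit reaching the target. The latter is precisely why the cleanup is done in a single step that immediately freezes the configuration: this makes the flagged configuration produced the unique reachable configuration with $f=1$, so that reaching the target is equivalent to the single equality $F^t(x)_v=q$ that we want to decide.
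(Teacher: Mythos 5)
Your proof is correct and follows essentially the same route as the paper's: for $\PREDB\to\REACH$ a counter plus a mechanism that collapses the outcome into a fully specifiable target configuration, and for $\REACH\to\PREDB$ a sticky flag detecting $F^s(x)=y$ within the first $|Q|^n$ steps, queried at observation time $|Q|^n$ written in binary. The differences are cosmetic (counting up with a one-step cleanup versus counting down and collapsing to a special configuration $\alpha^m$; hard-wiring $y$ into the circuit versus carrying it in the configuration), and your explicit remark on why the cleanup makes the flagged configuration the unique reachable one with $f=1$ is a point the paper leaves implicit.
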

\begin{proof}
	Consider first an instance ${(F,v,x,q,t)}$ of $\PREDB$ and denote by $n$ the number of nodes of $F$ and $m$ the maximum between $n$ and the number of bits of the binary representation of $t$.
	It can be transformed into an instance ${(\overline{F},x',y)}$ of $\REACH$ in \DLOG{} as follows.
	$\overline{F}$ is a map ${A^m\to A^m}$ where ${A = Q\times Q\times\{0,1\}\cup\{\alpha\}}$.
	To simplify notation, we will write any configuration of $A^m$ without occurence of $\alpha$ as ${(x,y,t)}$ and see $t$ as a number written in binary.
	Then $\overline{F}$ is the following map: 
	\[\overline{F}(x,y,t) =
	\begin{cases}
		(x,F(y),t-1) &\text{ if $t>0$},\\
		\alpha^m &\text{ if $t=0$ and $x_v=q$},\\
		(x,y,0) &\text{ else},
	\end{cases}
	\]
	and $\overline(F)$ is the identity on any other configurations,
	where ${F(y)}$ denote the application of $F$ on the first $n$ nodes and the identity on the other ones.
	It is then straightforward to check that ${F^t(x)_v\neq x_v}$ if and only if $\overline{F}$ reaches configuration ${y=\alpha^m}$ starting from configuration ${x'=(\overline{x},\overline{x},t)}$ where $\overline{x}$ is equal to configuration $x$ on the $n$ first nodes and fixed to an arbitrary constant on the remaining ones.
	
	Conversely, any instance ${(F,x,y)}$ of $\REACH$ can be transformed into an instance ${(\overline{F},v,x,q,t)}$ of $\PREDB$ in \DLOG{} as follows.
	$\overline{F}$ is a map ${A^n\to A^n}$ where $n$ is the number of nodes of $F$ and ${A = Q\times Q\times Q}$.
	Let us write any configuration of ${A^{n+1}}$ as ${(x,y,t,a)}$ where $a\in A$ is the state of the ${(n+1)}$-th node, ${(x,y,t)}$ represent the content of the $n$ first nodes on each component, and $t$ is seen as a number between ${0}$ and ${|Q|^n-1}$ written in base $Q$.
	Fix arbitrarily $a_0\neq a_1\in A$.
	Then $\overline{F}$ is the following map: 
	\[\overline{F}(x,y,t,a) =
	\begin{cases}
		(F(x),y,t-1,a) &\text{ if $t>0$ and $x\neq y$},\\
		(F(x),y,t-1,a_1) &\text{ if $t>0$ and $x= y$},\\
		(y,y,0,a) &\text{ if $t=0$ and $x\neq y$},\\
		(y,y,0,a_1) &\text{ if $t=0$ and $x=y$}.
	\end{cases}
	\]
	One can check that $F$ reaches $y$ starting from $x$ if and only if ${\overline{F}^{|Q|^n}(x,y,|Q|^n-1,a_0)}$ is of the form ${(*,*,*,a_1)}$, hence a reduction of ${\REACH_{F}}$ to ${\PREDB_{\overline{F}}}$.
	Indeed, if $F$ ever reaches $y$ from $x$, it must be at step $t$ with ${0\leq t\leq |Q|^n -1}$ and the behavior of $\overline{F}$ consists exactly in testing if at any steps equality ${x=y}$ holds and then memorize this in the ${(n+1)}$-th node. Moreover, ${\overline{F}}$ always converges in at most ${|Q|^n}$ steps to a fixed point of the form ${(y,y,0,a)}$.
\end{proof}

The above equivalence works in general, but may fail for particular families as shown below.

\begin{theorem}\label{theo:orthogonalreach}
  There exists a family with circuit representation ${(\mathcal{F},\mathcal{F}^*)}$ such that ${\PREDB_{\mathcal{F}}}$ and ${\PREDC_{\mathcal{F}}}$ are NP-hard while ${\REACH_{\mathcal{F}}}$ is solvable in polynomial time.
  Conversely, there exists a family with circuit representation ${(\mathcal{G},\mathcal{G}^*)}$ such that ${\REACH_{\mathcal{G}}}$ is PSPACE-complete while ${\PREDC_{\mathcal{G}}}$ is solvable in polynomial time.
\end{theorem}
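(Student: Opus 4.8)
The plan is to treat the two halves of the statement separately, in each case engineering a family where two of the three problems decouple. For the first half I would mimic the construction used in the first part of Theorem~\ref{theo:orthogonalpredictions}: take a SAT formula $\phi$ with $n$ variables and build an automata network $F_\phi$ on ${\{0,1\}^{n+1}}$ whose $n$-bit component $i$ is a counter running through all valuations $\mathrm{mod}\ 2^n$, and whose extra node records whether $\phi$ is true on the current valuation. As observed before, this makes ${\PREDC}$ NP-hard (from input ${(0,0\cdots 0)}$ the flag node changes iff $\phi$ is satisfiable) and, because the orbit from a given start is a simple arithmetic progression plus a circuit evaluation, ${\PREDB}$ is also NP-hard rather than merely polynomial — indeed ${\PREDB}$ already contains the SAT instance at $t=1$ from a well-chosen configuration, or one can simply note that deciding $F_\phi^t(x)_1=1$ amounts to evaluating $\phi$ on a computed valuation, which is trivially in P but one wants hardness here. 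To get ${\PREDB}$ NP-hard as claimed I would instead let the flag node, once set to $1$, stay at $1$ forever (a ``latch''), so that ``node $1$ equals $1$ at time $2^n$'' is exactly satisfiability of $\phi$, with $t=2^n$ writable in binary. Then ${\REACH}$ must be shown easy: the reachable set from any configuration is an explicit short list — the valuation counter is a bijection, the latch is monotone — so given a target $y$ one checks in polynomial time whether $y$ lies on the (linear-length, explicitly describable) orbit. The only subtlety is making the latch behavior compatible with a clean description of orbits so that ${\REACH}$ stays decidable in polynomial time; I expect this to be routine once the automaton is laid out carefully.

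For the second half I would build a family ${(\mathcal{G},\mathcal{G}^*)}$ with PSPACE-complete ${\REACH}$ but trivial ${\PREDC}$, reusing the trick from the second part of Theorem~\ref{theo:orthogonalpredictions}: take any family with PSPACE-complete ${\REACH}$ (Proposition~\ref{prop:predtoreach} plus Corollary~\ref{cor:universality} give one, since ${\REACH}\equiv\PREDB$ in general and universal families have PSPACE-hard ${\PREDB}$), and replace each $n$-node network $F$ by the product ${F\times H_n}$ with the auxiliary ``change-forcing'' network $H_n$ from the earlier proof, which guarantees that on any configuration and any observation gap ${k\leq 2^n}$ some coordinate of the $H_n$-component changes at a $k$-multiple time step. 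One must check that this product does not ruin ${\REACH}$: since the two components evolve independently, ${(F\times H_n)^t(x,y)=(F^t x, H_n^t y)}$ reaches a target ${(x',y')}$ iff $F$ reaches $x'$ from $x$ \emph{and} $H_n$ reaches $y'$ from $y$ at the \emph{same} time $t$; so I would either make $H_n$'s own reachability structure transparent enough (it is, being a deterministic counter-like system with a polynomially describable orbit) that the simultaneous-time constraint is easy to intersect with $F$'s reachability, or — cleaner — encode the target's $H_n$-component so that it can be reached at \emph{every} sufficiently large time, e.g. by letting $H_n$ pass through a universal ``all states appear infinitely often with a fixed period'' pattern; then ${\REACH}$ for the product reduces back to ${\REACH}$ for $F$ with only a polynomial overhead, preserving PSPACE-hardness, while ${\PREDC}$ is trivially ``yes'' whenever ${k\leq 2^n}$ and, for ${k> 2^n}$ given in unary, the whole orbit graph is polynomial-size so ${\PREDC}$ is decidable in polynomial time.

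The main obstacle I anticipate is the interaction between the product construction and ${\REACH}$ in the second half: unlike ${\PREDB}$ (where projecting to one component is immediate), reachability in a product demands a \emph{common} time $t$ witnessing reachability in both components, and a careless $H_n$ could make that constraint the bottleneck — or, worse, could accidentally make product-${\REACH}$ easy by forcing the $H_n$-component to disagree with most targets. The fix is to design $H_n$ so that its reachability relation is ``time-flexible'': the target $H_n$-configuration should be reachable from the chosen start at a dense set of times (ideally all $t\geq t_0$ for a small $t_0$), which removes the synchronization constraint and makes the reduction from the original ${\REACH}$ go through verbatim up to a polynomial blow-up in $n$. Everything else — the \DLOG{}-computability of the circuit representations of $F_\phi$, of $H_n$, and of the products; the polynomial-time orbit computations when $k$ is unary — follows the template already established in the proof of Theorem~\ref{theo:orthogonalpredictions}, so I would not expect surprises there.
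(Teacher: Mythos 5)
Both halves of your proposal contain genuine gaps, and in both cases the missing idea is the one the paper actually relies on.

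For the first half, the latch is fatal to the requirement that $\REACH$ be easy. With a monotone flag, deciding whether the target $(1,v_1)$ is reachable from $(0,v_0)$ is exactly deciding whether the latch ever flips, i.e.\ whether $\phi$ is satisfiable; and deciding whether $(0,v_1)$ is reachable amounts to deciding whether \emph{no} valuation in the cyclic interval from $v_0$ to $v_1-1$ satisfies $\phi$. So $\REACH$ for your latched family is NP-hard and coNP-hard, which defeats the purpose of the construction. Your claim that the orbit is a ``linear-length, explicitly describable'' list is also wrong: the orbit has length $\Theta(2^n)$, and locating the latch-flip on it is itself SAT. The paper instead keeps a non-latching flag but makes the counter \emph{pause} for one step whenever a satisfying valuation is seen with flag $0$. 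Then reachability of a target configuration is time-independent and decidable by a single evaluation of $\phi$ on the target's valuation (so $\REACH$ is in P), while the hardness of $\PREDB$ comes entirely from the timing: $F_\phi^{2^n}(0,v)=(0,v)$ iff no pause ever occurred, iff $\phi$ is unsatisfiable. Hardness from \emph{when} things happen, easiness of \emph{whether} they happen --- that decoupling is the idea your latch destroys.

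For the second half, you correctly identify the synchronization obstacle in an independent product $F\times H_n$, but neither of your proposed fixes works. A target $H_n$-configuration reachable at \emph{all} sufficiently large times is necessarily a fixed point of $H_n$; starting the product at that fixed point kills all state changes in the $H_n$-layer, so $\PREDC$ for the product on such initial configurations reduces back to $\PREDC$ for $F$ and is no longer polynomial. A merely periodic $H_n$ (``all states appear infinitely often'') leaves you with a \emph{timed} reachability question --- is there $t$ with $F^t(x)=y$ \emph{and} $t$ in a prescribed residue class --- which is not the problem you started from, and the reduction direction you need (plain $\REACH_F$ to product $\REACH$) fails when $F$ reaches $y$ only at transient times you do not control. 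The paper sidesteps all of this by not taking an independent product: it gates the $F$-component so that it advances only when the $H_n$-component sits at a distinguished configuration $c_0$, which $H_n$ revisits every $T_0$ steps. Then the product orbit from $(x,c_0)$ meets the slice $\{(\cdot,c_0)\}$ exactly at times $tT_0$, where it equals $(F^t(x),c_0)$, giving a clean equivalence between $\REACH_F$ and $\REACH$ of the product while the $H_n$-layer still forces state changes for every observation gap $k\le 2^n$.
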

\begin{proof}
  For each $n$ and each SAT formula $\phi$ with $n$ variables let us define $F_\phi:\{0,1\}^{n+1}\to\{0,1\}^{n+1}$
  which interprets any configuration as a pair ${(b,v)\in \{0,1\}\times\{0,1\}^n}$ where $b$ is the state of node $1$ and $v$ is both a number represented in base 2 and a valuation for $\phi$ and does the following:
  \[F_\phi(b,v) = \begin{cases}
      (0,v+1\bmod 2^n)&\text{ if $b=1$ or ($b=0$ and $\neg\phi(v)$)}\\
      (1,v)&\text{ else.}
    \end{cases}
  \]
  Note that circuits computing $F_\phi$ can be constructed in \DLOG{} from $\phi$.
  We thus have a well-defined family with circuit representation.
  A configuration $(b,v)$ is reachable (from any initial configuration) if and only if either $b=0$, or if $b=1$ and $\neg\phi(v)$.
  ${\REACH_{\mathcal{F}}}$ is therefore solvable in polynomial time in $n$.
  
  However, node $1$ will change its state in the orbit of configuration ${(0,v)}$ (for some arbitrary ${v\in\{0,1\}^n}$) if and only if $\phi$ is satisfiable.
  Besides, it holds from the definition above that $F_\phi^{2^n}(0,v)=(0,v)$ if and only if there is no configuration of the form $(1,v')$ in the orbit of $(0,v)$, \textit{i.e.} if and only if $\phi$ is not satisfiable. We deduce that $\PREDB_{\mathcal{F}}$ is NP-hard.

  The construction of family ${(\mathcal{G},\mathcal{G}^*)}$ making ${\REACH_{\mathcal{G}}}$ hard and ${\PREDC_{\mathcal{G}}}$ easy is inspired from the construction of Theorem~\ref{theo:orthogonalpredictions}.
  Taking any $F_n:Q_F^n\to Q_F^n$, we will use again the automata network $H_n:Q_H^n\to Q_H^n$ from the proof of Theorem~\ref{theo:orthogonalpredictions}, and we construct ${G_n:(Q_F\times Q_H)^n\to(Q_F\times Q_H)^n}$ defined by: 
  \[G_n(c_F,c_H) =
    \begin{cases}
      (F_n(c_F),H_n(c_H))&\text{ if $c_H$ is of the form $(0,0,0)$},\\
      (c_F,H_n(c_H))&\text{ else.}
    \end{cases}
  \]
  So $G_n$ is just $H_n$ on the $Q_H$ component.
  Moreover, it can be checked that $H_n$ starting from ${c_0=(0,0,0)}$ goes back to $c_0$ after some time $T_0$.
  Thus when $G_n$ is started from some configuration ${(x,c_0)}$ it reaches ${(F_n^t(x),c_0)}$ after ${tT_0}$ time steps (for any ${t\geq 1}$) and no other configuration in this orbit is equal to $c_0$ on the second component.
  We deduce that $F_n$ reaches configuration $y$ starting from configuration $x$ if and only if ${G_n}$ reaches ${(y,c_0)}$ from configuration ${(x,c_0)}$.
  By choosing the family of networks $F_n$ with PSPACE-hard reachability, we obtain the family ${(\mathcal{G},\mathcal{G}^*)}$ with the desired properties.
\end{proof}

As a last 'orthogonality' result, let us show that a bound on a single dynamical parameter (periods or transients) is generally not sufficient to discard maximal computational complexity, while a bound on both periods and transients is.

\begin{theorem}\label{theo:orthodyncompu}
  For any family ${(\mathcal{F},\mathcal{F}^*)}$ where all periods and all transients are polynomially bounded by the number of nodes, the problems ${\REACH_{\mathcal{F}}}$, ${\PREDC_{\mathcal{F}}}$ and ${\PREDB_{\mathcal{F}}}$ are solvable in polynomial time.
  However there exists a family with circuit representation ${(\mathcal{G},\mathcal{G}^*)}$ made only of reversible automata networks (\textit{i.e.} having only periodic orbits and no transients) such that ${\PREDB_{\mathcal{G}}}$ is PSPACE-complete.
  There exists also a family with circuit representation ${(\mathcal{H},\mathcal{H}^*)}$ made of automata networks whose only periodic orbits are fixed points such that ${\PREDB_{\mathcal{H}}}$ is PSPACE-complete.
\end{theorem}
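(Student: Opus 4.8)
I would treat the three assertions separately. For the first, suppose there are polynomials $\alpha,\beta$ such that every $F\in\mathcal{F}$ on $n$ nodes has all transients at most $\alpha(n)$ and all periods at most $\beta(n)$; then the orbit of any configuration has at most $\alpha(n)+\beta(n)$ distinct elements. Given an instance whose network is encoded by $w_F$ (so $n\leq|w_F|$), produce the circuit $C(w_F)$ in $\DLOG{}$ and iterate it to compute the polynomially long prefix $x,F(x),\dots,F^{\alpha(n)+\beta(n)}(x)$; each application of $F$ costs $\mathrm{poly}(|w_F|)$, so this is polynomial time, and from the prefix one reads off the transient length and the cycle length $c\leq\beta(n)$ of the orbit of $x$. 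For $\PREDB$ at time $t$ it then suffices to inspect the entry $F^{t}(x)_v$ of the prefix when $t\leq\alpha(n)$, and $F^{\alpha(n)+((t-\alpha(n))\bmod c)}(x)_v$ otherwise (both configurations are already computed), which is polynomial in $|w_F|+|t|$. For $\REACH$ the orbit of $x$ as a set is exactly the computed prefix, so one just tests whether $y$ occurs in it. For $\PREDC$ with gap $k$, observe that $t\mapsto F^{kt}(x)$ is the orbit of $x$ under $F^{k}$, a subset of the orbit of $x$ under $F$, hence eventually periodic with transient plus period at most $\alpha(n)+\beta(n)$; so it is enough to test whether $F^{kt}(x)_v\neq x_v$ for $t=0,\dots,\alpha(n)+\beta(n)$, each term being read off the prefix as above and $kt$ being polynomially bounded since $k$ is given in unary.

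For the second assertion, fix a $\textbf{PSPACE}$-complete language $L$ and, using the classical fact that space-bounded computation can be carried out reversibly with only polynomial space overhead (Bennett's reversible simulation; Lange--McKenzie--Tapp), a reversible Turing machine $R$ deciding $L$ in space $s(n)=n^{O(1)}$. I would pad $R$ with an internal step counter so that on any input $w$ with $|w|=n$ it writes its accept/reject bit into a designated cell and then, at a deterministically known step $N(n)=2^{m(n)}-1$ where $m(n)=O(s(n))$ is large enough to address every configuration of $R$ in space $s(n)$, reaches a halting configuration, i.e. one with no successor. For each $w$ define an automata network $G_w$ on $O(m(n))$ Boolean nodes whose state encodes a configuration $d$ of $R$ together with a direction bit: in the forward direction $G_w$ applies one step of $R$, switching to the backward direction whenever $d$ has no $R$-successor; in the backward direction it applies the inverse step of $R$, switching to the forward direction whenever $d$ has no $R$-predecessor; and on bit strings not encoding valid configurations it acts as the identity. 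Arranging, as usual, that the start configuration of $R$ on $w$ has no $R$-predecessor, this \emph{ping-pong} map is a bijection of the whole configuration space, so every $G_w$ is reversible and therefore has only periodic orbits and no transients. One step of $R$ together with the counter update is computed by a circuit of size $\mathrm{poly}(m(n))$ that is $\DLOG{}$-computable from $w$, which yields a well-defined family $(\mathcal{G},\mathcal{G}^*)$ with circuit representation. The halting configuration of $R$, carrying the answer bit, is reached in the forward direction at the known time $N(n)$, so $w\mapsto(\,G_w,\,a,\,(c_0(w),\mathrm{fwd}),\,1,\,N(|w|)\,)$, with $a$ the node of the designated cell, is a $\DLOG{}$-computable reduction from $L$ to $\PREDB_{\mathcal{G}}$. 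Since $\PREDB$ is always in $\textbf{PSPACE}$, $\PREDB_{\mathcal{G}}$ is $\textbf{PSPACE}$-complete.

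For the third assertion I would reuse the same machine $R$ (reversibility is now unnecessary but harmless) with a different wrapper: $H_w$ is defined on a configuration $d$ of $R$ together with an $m(n)$-bit clock, and it applies one step of $R$ while incrementing the clock as long as $d$ has an $R$-successor and the clock is below $2^{m(n)}-1$; otherwise it freezes, acting as the identity. Every orbit thus reaches within $2^{m(n)}$ steps a configuration where $R$ has halted or the clock is maximal, and stays there, so the only periodic orbits of $H_w$ are fixed points. As before $(\mathcal{H},\mathcal{H}^*)$ is a well-defined family with circuit representation. Starting from $(c_0(w),0^{m(n)})$, the network $H_w$ freezes on the configuration carrying $R$'s answer, so $w\mapsto(\,H_w,\,a,\,(c_0(w),0^{m(|w|)}),\,1,\,2^{m(|w|)}-1\,)$ reduces $L$ to $\PREDB_{\mathcal{H}}$, which is therefore $\textbf{PSPACE}$-complete.

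The main obstacle is the reversible construction of the second part: making the global map of $G_w$ a genuine bijection of the entire configuration space requires handling at once the halting configurations of $R$, the bit strings not coding valid configurations, and the bounce at the start configuration, and one must pad $R$ so that the configuration holding the answer is reached at a step the reduction can name explicitly rather than merely bound. The only non-elementary ingredient is the reversibility of space-bounded computation, which is classical.
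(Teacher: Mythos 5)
Your proposal is correct and follows essentially the same route as the paper's proof: explicit computation of the polynomially long orbit prefix for the first assertion, a reversible Turing machine simulated in a forward/backward ``ping-pong'' mode (with identity on invalid encodings and the answer read off at an explicitly computable time) for the second, and a counter that forces convergence to a fixed point while still permitting exponentially many simulation steps for the third. The only differences are cosmetic: the paper reads the answer inside a window where the bit is held stable rather than at the exact halting step, and its third construction is a generic transformation of any PSPACE-hard family rather than a direct clocked wrapper around the machine.
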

\begin{proof}
  First, if all periods and transients are polynomially bounded, then there is a PTIME algorithm that given an initial configuration computes the (polynomial) list of configurations that are in its limit period and the first time at which they are reached. From this information it is straightforward to solve problems ${\REACH_{\mathcal{F}}}$, ${\PREDC_{\mathcal{F}}}$ and ${\PREDB_{\mathcal{F}}}$ in PTIME.

  For the second assertion of the theorem, let us recall that any deterministic Turing machine working in space S and time T no more than  exponential can be simulated by a deterministic reversible Turing machine in time polynomial in T and space polynomial in S \cite{Bennett_1989,Levine_1990}.
  By deterministic reversible Turing machine, we mean a machine whose transition graph has in-degree and out-degree at most 1, \textit{i.e.} any Turing configuration (tape, head state and position) has $0$ or $1$ successor configuration (for instance $0$ in the case of an halting state) and $0$ or $1$ predecessor. Let us therefore consider a fixed deterministic reversible machine $M$ working in space ${N(n)\in poly(n)}$, that solves the truth problem for quantified boolean formulas (QBF)  in time ${k^{N(n)}}$ where $k$ is a constant and $n$ is the size of the QBF instance.
  By using an additional counter mechanism we can delay arbitrarily the time at which the machine enters an halting state after the QBF computation has been done, precisely: we can suppose without loss of generality that the machine writes somewhere the acceptance information (truth of the QBF formula) of any well formed input in time \emph{at most} ${(k/2)^{N(n)}}$ and leaves it untouched during \emph{at least} ${(k/2)^{N(n)}}$ additional steps before entering an halting state.
  This means that from any well-formed input configuration, the machine runs for at least ${(k/2)^{N(n)}}$ steps without halting (each configuration has a successor) and at time exactly ${(k/2)^{N(n)}}$, the configuration always contains the information of the acceptance of the QBF input.
  For some large enough constant alphabet $Q$ and for each $n$, we can construct an automata network ${G_n: Q^{N(n)+1}\to Q^{N(n)+1}}$ where the $N(n)$ first nodes are used to simulate $M$ on QBF instances of size $n$, and the last node holds two Boolean informations: the simulation direction (forward or backward) and an acceptation bit. $G_n$ behaves as follows on any configuration ${x\in Q^{N(n)+1}}$:
  \begin{itemize}
  \item if $x$ is not well encoded and does not represent a valid Turing configuration (tape, head position within space bounds and head state), then let it unchanged ($x$ is a fixed point);
  \item if the ${N(n)}$ first nodes (correctly) encode a configuration with no successor (resp. predecessor) and node ${N(n)+1}$ indicates the forward (resp. backward) direction, then change the direction and let the encoded configuration unchanged;
  \item finally if the (correctly) encoded configuration has a successor (resp. predecessor) and node ${N(n)+1}$ indicates the forward (resp. backward) direction, then do one step of simulation, let the direction unchanged and update the acceptance bit according to the new configuration obtained.
  \end{itemize}
  By reversibility of $M$, $G_n$ is itself reversible. Precisely, $G_n$ has only periodic orbits, which are of three kinds: 'garbage' fixed points that do not correspond to any valid configuration of M because of bad encoding, periodic orbits corresponding to periodic orbits of $M$ (without halt), and periodic orbits that correspond to a loop of back and forth simulation of a single orbit of $M$ that starts from a configuration without predecessor and ends in a configuration without successor.
  Moreover, on any well formed input configuration $x$ representing a QBF $\phi$ of size $n$, it holds that the acceptance bit of node ${N(n)+1}$ in ${F^{(k/2)^{N(n)}}(x)}$ tells whether $\phi$ is true or not.
  We deduce that ${\PREDB_{\mathcal{G}}}$ is PSPACE-complete where ${\mathcal{G}}$ is the family of networks $G_n$ with circuit representation (a circuit representation for $G_n$ is easy to compute from $n$ since $M$ is fixed).

  For the last claim of the theorem, take any family with a PSPACE-hard $\PREDB$ problem and transform it into a new family where each automata network is simulated step by step by a new one that implements an additional $k$-ary counter layer in states that is decreased at each simulation steps, and that stops the simulation and forces a fixed point when value zero is reached.
  Such networks have only fixed point by construction, and for a suitable choice of constant $k$, the PSPACE-hardness of $\PREDB$ is preserved because on well initialized configurations, the counter mechanism still allows an exponential number of simulation steps.
\end{proof}

\subsection{Universal automata network families}

Building upon our definition of simulation, we can now define a precise notion of universality. In simple words, an universal family is one that is able to simulate every other automata network under any circuit encoding. Our definition of simulation ensures that the amount of resources needed in order to simulate is controlled so that we can deduce precise complexity results.

Consider some alphabet $Q$ and some polynomial map $P:\N\to\N$.
We denote by $\mathcal{U}_{Q,P}$ the class of all possible functions $F:Q^n \to Q^n$ for any $n\in \N$ that admits a circuit representation of size at most ${P(n)}$. We also denote $\mathcal{U}_{Q,P}^*$ the language of all possible circuit representations of size bounded by $P$ of all functions from $\mathcal{U}_{Q,P}$.
First, note than any family with standard representation ${(\mathcal{F},\mathcal{F}^\ast)}$ is actually simulated by ${(\mathcal{U}_{Q,P},\mathcal{U}_{Q,P}^\ast)}$ for some $P$ because the definition of standard representation implies that there is a $\DLOG$ algorithm to produce a circuit representation of a given automata network of family $\mathcal{F}$ from its representation.
The notion of universality is about simulations in the other direction.

Finally for any ${\Delta\geq 1}$, denote by ${\mathcal{B}_{Q,\Delta}}$ the set of automata networks on alphabet $Q$ with a communication graph of degree bounded by $\Delta$ and by ${\mathcal{B}_{Q,\Delta}^*}$ their associated bounded degree representations made of a pair (graph, local maps) as discussed above.
They form a smaller set of automata networks that may be simulated more tightly, which is the idea of the notion of strong universality.


\begin{definition}
  \label{def:universal}
  A family of automata networks $(\mathcal{F},\mathcal{F}^*)$ is :
  \begin{itemize}
  \item \emph{universal} if for any alphabet $Q$ and any polynomial
    map $P$ it can simulate $(\mathcal{U}_{Q,P},\mathcal{U}_{Q,P}^*)$
    in time $T$ and space $S$ where $T$ and $S$ are polynomial
    functions;
  \item \emph{strongly universal} if for any alphabet $Q$ and any degree ${\Delta\geq 1}$ it can simulate ${(\mathcal{B}_{Q,\Delta},\mathcal{B}_{Q,\Delta}^*)}$ in time $T$ and space $S$ where $T$ is a constant and $S$ is a linear map.
\end{itemize}
\end{definition}

\begin{remark}
  The link between the size of automata networks and the size of their representation is the key in the above definitions: a universal family must simulate any individual automata network $F$ (just take $P$ large enough so that ${F\in\mathcal{U}_{Q,P}}$), however it is not required to simulate in polynomial space and time the family of all possible networks without restriction. Actually no family admitting polynomial circuit representation could simulate the family of all networks in polynomial time and space by the Shannon effect (most $n$-ary Boolean function have super-polynomial circuit complexity). In particular the family  ${\mathcal{B}_{Q,\Delta}}$ can't.

  At this point it is clear, by transitivity of simulations, that if some ${\mathcal{B}_{Q,\Delta}}$ happens to be universal then, any strongly universal family is also universal. It turns out that ${\mathcal{B}_{\{0,1\},3}}$ is universal. We will however delay the proof until section~\ref{sec:gmonuniv} below where we prove a more precise result which will prove to be very useful to get universality result in concrete families. 
  
  Finally, observe that universality allows a polynomial spatio-temporal rescaling, while strong universality allows only constant one.
  The fact that $S$ and $T$ are polynomial maps implies that they are computable in $\DLOG$ which is coherent with the reductions presented in Lemma \ref{lemma:predsim} and Lemma \ref{lemma:predcsim}.
\end{remark}


Now, we introduce an important corollary of universality regarding complexity. Roughly speaking, a universal family exhibits all the complexity in terms of dynamical behaviour and computational complexity of prediction problems. Concerning computational complexity, we state the result for universality, but there is actually no difference between strong or standard universality since the former implies the latter as we will see latter (Corollary~\ref{coro:strongunivimpliesuniv}).

\begin{corollary}\label{cor:universality}
  Let $(\mathcal{F},\mathcal{F}^*)$ be a universal automata network family, then it is computationally complex in the following sense:
  \begin{enumerate}
  \item $\PREDU_{\mathcal{F}}$ is $\textbf{P}$-hard.
  \item $\PREDB_{\mathcal{F}}$ is $\textbf{PSPACE}$-hard.
  \item $\PREDC_{\mathcal{F}}$ is $\textbf{PSPACE}$-hard.
  \item $\REACH_{\mathcal{F}}$ is $\textbf{PSPACE}$-hard.
  \end{enumerate}
\end{corollary}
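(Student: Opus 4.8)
The plan is to derive each hardness claim as a consequence of the simulation machinery already set up, reducing to known hardness results about Turing machines or circuits that are established by earlier constructions in the paper. The common skeleton is: exhibit a concrete family whose prediction (or reachability) problem is known to be hard for the stated complexity class, observe that a universal family simulates it in polynomial time and space, and invoke Lemma~\ref{lemma:predsim} and Lemma~\ref{lemma:predcsim} (and Proposition~\ref{prop:predtoreach} for \REACH) to transport the hardness. So the core of the proof is to pin down, for each class, a witness family of maximal complexity.

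For item (1), I would use the \textbf{P}-hardness of the Circuit Value Problem. The plan is to encode an instance of CVP --- a circuit $C$ with designated output gate and an input assignment --- as an automata network $F_C$ on one node per gate, where each node computes its gate's Boolean function from the states of its input gates, updated synchronously; after a number of steps equal to the depth of $C$ (given in unary as the observation time) the output node carries the value of $C$. This $F_C$ has a polynomial circuit representation computable in \DLOG{} from $C$, so the family of all such $F_C$ is simulated in polynomial time/space by $(\mathcal{F},\mathcal{F}^*)$, and Lemma~\ref{lemma:predsim} gives a \DLOG{} Turing reduction from CVP to $\PREDU_{\mathcal{F}}$, hence \textbf{P}-hardness. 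For items (2) and (3), I would invoke the family $(\mathcal{G},\mathcal{G}^*)$ (or $(\mathcal{H},\mathcal{H}^*)$) constructed in the proof of Theorem~\ref{theo:orthodyncompu}: there it is shown that a deterministic reversible Turing machine solving QBF in polynomial space can be turned into a family of automata networks with polynomial circuit representation whose $\PREDB$ problem is PSPACE-complete, and in particular PSPACE-hard. Since that family is simulated in polynomial time and space by any universal family, Lemma~\ref{lemma:predsim} transfers PSPACE-hardness of $\PREDB$ to $\mathcal{F}$. For $\PREDC$, I would take the same QBF-solving networks and compose them (as in the product construction $F\times H_n$ of Theorem~\ref{theo:orthogonalpredictions}, or more simply by padding) so that a state change of a designated node encodes acceptance; the point is only that there exists \emph{some} family with PSPACE-hard $\PREDC$ with polynomial circuit representation, which the universal family simulates, and then Lemma~\ref{lemma:predcsim} applies.

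For item (4), I would combine item (2) with Proposition~\ref{prop:predtoreach}: that proposition shows $\PREDB$ and $\REACH$ are \DLOG{}-equivalent for the family of \emph{all} automata networks over an alphabet with circuit representation. So starting from the family $(\mathcal{G},\mathcal{G}^*)$ with PSPACE-hard $\PREDB$, I first apply the $\PREDB$-to-$\REACH$ direction of Proposition~\ref{prop:predtoreach} to obtain a family $(\mathcal{G}',\mathcal{G}'^*)$ (the networks $\overline{F}$ produced by the reduction) with PSPACE-hard $\REACH$ and polynomial circuit representation; this family is simulated in polynomial time and space by $(\mathcal{F},\mathcal{F}^*)$; and then I need the analogue of Lemma~\ref{lemma:predsim} for \REACH{}, namely that simulation between families preserves hardness of \REACH{}. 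That analogue is proved exactly like Lemma~\ref{lemma:predcsim}: given an instance $(w_H,x,y)$ of $\REACH_{\mathcal{H}^*}$, produce $w_F$, the block embedding $\phi$, and the instance $(w_F,\phi(x),\phi(y))$; injectivity of the block encoding guarantees $F^{kT}(\phi(x)) = \phi(y)$ for some $k$ iff $H^k(x) = y$ for some $k$, and the reindexing of time by the multiplicative factor $T$ causes no issue because reachability quantifies existentially over all time steps. Hence PSPACE-hardness of $\REACH_{\mathcal{F}}$.

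The main obstacle I anticipate is not in any single reduction but in making the reduction \emph{layers compose correctly}: each hardness claim is obtained by chaining (a) a reduction from a classical problem to a concrete automata network family, (b) the polynomial-time/space simulation guaranteed by universality, and (c) the hardness-preservation lemma for the relevant decision problem. I must be careful that at every link the representations stay within standard representations, the time and space blow-ups remain polynomial so that the composed \DLOG{} reductions are valid (using that \DLOG{} is closed under composition, as noted after Definition~\ref{def:simu-family}), and that the observation-time rescaling by the factor $T(n)$ --- unary for $\PREDU$ and $\PREDC$, binary for $\PREDB$ --- is \DLOG{}-computable, which it is since $T$ is polynomial. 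The one genuinely new ingredient needed beyond what is stated is the \REACH{} analogue of Lemma~\ref{lemma:predcsim}, and I would state and prove it as a short lemma immediately before this corollary, following verbatim the structure of the $\PREDC$ argument with ``state change of a node'' replaced by ``equality of the whole configuration with the target''.
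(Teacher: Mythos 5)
Your treatment of items (1)--(3) is correct but assembles its witness families differently from the paper. The paper obtains all three prediction hardnesses at once from the single family $\mathcal{B}_{Q,\Delta}$, by embedding space-bounded Turing machines into periodic configurations of an intrinsically universal cellular automaton (Banks, or Ollinger--Richard), and then applies Lemmas~\ref{lemma:predsim} and~\ref{lemma:predcsim}; you instead use the Circuit Value Problem for $\PREDU$ and the reversible-QBF networks of Theorem~\ref{theo:orthodyncompu} for $\PREDB$. Both routes work: your use of Theorem~\ref{theo:orthodyncompu} is not circular (its proof is self-contained and the statement precedes this corollary), the one-node-per-gate CVP encoding is standard, and the bookkeeping you describe (polynomial rescaling of the observation time, closure of \DLOG{} under composition) is exactly what is needed. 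Two small cautions on item (3): the product construction $F\times H_n$ of Theorem~\ref{theo:orthogonalpredictions} is precisely the device that makes $\PREDC$ \emph{easy}, so that reference should be dropped in favour of the witness-node idea alone (a node that changes state if and only if the machine accepts, which is also the paper's parenthetical remark), and you must arrange that the witness node, once set, never reverts, so that ``does it ever change'' decides acceptance.

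The genuine gap is in item (4): the \REACH{} analogue of Lemma~\ref{lemma:predcsim} that you propose is false as stated. Your reduction maps $(w_H,x,y)$ to $(w_F,\phi(x),\phi(y))$ and argues that injectivity of $\phi$ gives the equivalence of $\exists k\,H^k(x)=y$ with $\exists k\,F^{kT}(\phi(x))=\phi(y)$. That equivalence is true, but $\REACH_{\mathcal{F}}$ quantifies over \emph{all} time steps of the simulator, not only multiples of $T$, and Definition~\ref{def:bloc-simu} places no constraint on the intermediate configurations $F^{kT+r}(\phi(x))$ for $0<r<T$, which may accidentally coincide with $\phi(y)$. A minimal counterexample: let $H$ be the identity on a single Boolean node, let $F$ be the negation on that node, $T=2$, and $\phi$ the identity block embedding; then $\phi\circ H=F^{2}\circ\phi$, so $F$ simulates $H$ in the sense of Definition~\ref{def:bloc-simu}, yet $(H,0,1)$ is a negative instance of \REACH{} while $(F,0,1)$ is positive. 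So the ``existential quantification over all time steps'' is exactly what breaks the reduction rather than what saves it, and a generic hardness-transfer lemma for \REACH{} under block simulation cannot hold. Any correct argument must rule out off-phase hits of the target, either by exploiting the specific structure of the hard instances produced by Proposition~\ref{prop:predtoreach} or by enriching the encoding with a phase marker; the paper itself routes item (4) through Proposition~\ref{prop:predtoreach} applied at the level of the circuit family $\mathcal{U}_{Q,P}$ and is terse on precisely this point, so this is the step you should rework rather than claim follows ``verbatim'' from the $\PREDC$ argument.
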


\begin{proof}
  We first show that the family ${\mathcal{B}_{Q,\Delta}}$ is computationally complex for all problems except $\REACH$ and large enough $Q$ and $\Delta$, which shows the hardness results of the three first problems by definition of universality and Lemmas~\ref{lemma:predsim} and~\ref{lemma:predcsim}. First, any Turing machine working in bounded space can be directly embedded into a cellular automaton on a periodic configuration which is a particular case of automata network on a bounded degree communication graph (for the {\PREDC} variant we can always add a witness node that changes only when the Turing machine accepts for instance). This direct embedding is such that one step of the automata network correspond to one step of the Turing machine and one node of the network corresponds to one cell of the Turing tape. However, the alphabet of the automata network depends on the tape alphabet and the state set of the Turing machine. To obtain the desired result we need to fix the target alphabet, while allowing more time and/or more space. Such simulations of any Turing machine by fixed alphabet cellular automata with linear space/time distortion are known since a long time \cite{lindgren90}, but a modern formulation would be as follows: if there exists an intrinsically universal cellular automaton \cite{bulk2} with states set $Q$ and neighborhood size $\Delta$ (whatever the dimension), then ${\mathcal{B}_{Q,\Delta}}$ is computationally complex. The 2D cellular automaton of Banks \cite{banks} is intrinsically universal \cite{surveyOllinger} and has two states and $5$ neighbors, which shows that $\mathcal{B}_{Q,\Delta}$ is computationally complex when $\Delta\geq 5$ and $Q$ is not a singleton. The 1D instrinsically universal cellular automaton of Ollinger-Richard \cite{OllingerRichard4states} has $4$ states and $3$ neighbors so $\mathcal{B}_{Q,\Delta}$ is computationally complex when $\Delta\geq 3$ and $|Q|\geq 4$.
  Finally, $\REACH_{\mathcal{F}}$ is also \textbf{PSPACE}-hard because Proposition~\ref{prop:predtoreach} shows that ${\REACH_{\mathcal{F}}}$ is as hard as ${\PREDB_{\mathcal{U}_{Q,P}}}$ for any large enough $P$, and this latter problem is \textbf{PSPACE}-hard because ${\PREDB_{\mathcal{B}_{Q,\delta}}}$ is and Lemma~\ref{lemma:predsim} applies since ${\mathcal{U}_{Q,P}}$ clearly simulates ${\mathcal{B}_{Q,\Delta}}$.
\end{proof}


We now turn to the dynamical consequences of universality. By definition simulations are particular embeddings of orbit graphs into larger ones, but the parameters of the simulation can generate some distortion and the set of orbit graphs that can be embedded have succinct descriptions by circuits.
Before stating the main theorem, let us give some definitions to clarify these aspects.

\begin{definition}
  Fix a map ${\rho:\N\to\N}$, we say that the orbit graph $G_F$ of $F$ with $n$ nodes is $\rho$-succinct if $F$ can be represented by circuits of size at most $\rho(n)$. We say that the orbit graph $G_H$ of $H$ with $m$ nodes embeds $G_F$ with distortion $\delta:\N\to\N$ if ${m\leq\delta(n)}$ and there is $T\leq\delta(n)$ such that $G_F$ is a subgraph of $G_{H^T}$.
\end{definition}

\begin{remark}
  The embedding of orbit graphs with distortion obviously modify the relation between the number of nodes of the automata netwroks and the length of paths or cycles in the orbit graph. In particular, with polynomial distortion $\delta$, if $F$ has $n$ nodes and a cyclic orbit of length ${2^n}$ (hence exponential in the number of nodes) then in $H$ it gives a cyclic orbit of size ${O(\delta(n)2^n)}$ for up to $\delta(n)$ nodes, which does not guarantee an exponential length in the number of nodes in general, but just a super-polynomial one (${n\mapsto 2^{n^\alpha}}$ for some $0<\alpha< 1$).
\end{remark}

To fix ideas, we give examples of orbit graphs of bounded degree automata networks with large components corresponding to periodic orbits or transient.

\begin{proposition}\label{prop:bounded-degree-dynamics}
  There is an alphabet $Q$ such that for any ${n\geq 1}$ there is an automata network ${F_n\in\mathcal{B}_{Q,2}}$ whose orbit graph $G_{F_n}$ has the following properties:
  \begin{itemize}
  \item it contains a cycle $C$ of length at least ${2^n}$;
  \item there is a complete binary tree $T$ with $2^n$ leaves connected to some $v_1\in C$, \textit{i.e.} for all ${v\in T}$ there is a path from $v$ to $v_1$;
  \item there is a node ${v_2\in C}$ with a directed path of length $2^{n}$ pointing towards $v_2$;
  \item it possesses at least ${2^n}$ fixed points.
  \end{itemize}
\end{proposition}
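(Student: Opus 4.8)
The plan is to construct a single "master" alphabet $Q$ and, for each $n$, a bounded-degree automata network $F_n$ whose state space decomposes naturally into independent layers, each layer being responsible for producing one of the four required features of the orbit graph. Concretely I would take $Q = Q_{\mathrm{cyc}}\times Q_{\mathrm{tree}}\times Q_{\mathrm{path}}\times Q_{\mathrm{fix}}$ (or a disjoint-union variant if a product blows up the degree), and let $F_n$ act componentwise, so that the orbit graph of $F_n$ is essentially the product of the orbit graphs of the four component networks; a large cycle in one factor, a tree hanging off a cyclic vertex in another, etc., survive in the product because product of orbit graphs contains each factor's orbit graph as a subgraph (restricting the other coordinates to a fixed point). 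The degree bound $\Delta=2$ forces each layer to be laid out on a path or cycle of nodes, which is the only real constraint to respect.

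For the long cycle, the standard trick is a binary counter on $n$ nodes arranged on a cycle graph: with a carefully designed local rule over a constant alphabet (states encoding a bit plus a "carry travelling" marker, as in classical CA odometer constructions), the configuration cycles through all $2^n$ counter values, giving a periodic orbit of length exactly $2^n$ on $n$ nodes of degree $2$. For the transient path of length $2^n$ pointing to a vertex $v_2$ of the cycle, I would use a similar counter that, instead of wrapping around, has a distinguished "overflow" configuration which maps into the cyclic regime: one starts from the all-zero-with-frozen-top-bit configuration and after $2^n$ steps of counting falls into the cycle; the chain of $2^n$ predecessor configurations is exactly the desired directed path. For the complete binary tree with $2^n$ leaves: here I would exploit that the orbit graph records $x\mapsto F(x)$, so in-degree can be large even though out-degree is $1$. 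The idea is a "shift-and-forget" mechanism on $n$ nodes: the rule deterministically shifts a block of bits one step and overwrites the vacated cell with $0$; running this $n$ steps from an arbitrary $n$-bit word collapses everything to a single configuration (all zeros, suitably marked), and the reverse reachability structure is precisely a binary tree of depth $n$ with $2^n$ leaves, whose root we then glue onto a node $v_1$ of the cycle by making that sink configuration map into the cyclic orbit. For the $2^n$ fixed points, the cheapest option is a layer whose local rule is the identity on $n$ free bits (degree $0$, certainly within degree $2$), giving $2^n$ fixed configurations; merging this as a factor multiplies the fixed-point count of the whole network by $2^n$ while not destroying the other features (one must just check the features above are realized on sub-orbits where this layer sits at one of its fixed points, which is automatic).

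The steps, in order, would be: (1) fix $Q_{\mathrm{cyc}},Q_{\mathrm{tree}},Q_{\mathrm{path}},Q_{\mathrm{fix}}$ and check each is a finite constant-size alphabet independent of $n$; (2) define the four local rules on cycle/path graphs of $n$ nodes, each of degree $\le 2$, verifying the claimed orbit-graph feature for each in isolation (odometer cycle, pre-collapsing tree, linear transient, trivial fixed points); (3) take $F_n$ to be the coordinatewise product (or a disjoint union of these gadgets if one prefers to keep degree literally $2$ rather than worry about product alphabets — a disjoint union is even simpler since the orbit graph is then literally the disjoint union, and one can argue the four features separately); (4) invoke Lemma~\ref{lem:simu-dynamic}-style reasoning, or just direct inspection, to lift each feature from the factor to $F_n$; (5) assemble the $\mathcal{B}_{Q,2}$ membership statement.

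The main obstacle I anticipate is the binary-tree requirement: producing a complete binary tree of depth $n$ with a bounded-degree \emph{communication} graph and constant alphabet is the delicate part, because one needs a deterministic rule whose predecessor structure is exactly $2$-to-$1$ at every level for $n$ levels and then funnels into the cycle at vertex $v_1$. The "shift one bit and zero-fill" idea gives in-degree $2$ at each non-leaf (two choices for the bit that was just shifted out), total depth $n$, and $2^n$ leaves — but one must be careful that the shift rule really is realizable with degree-$2$ communication (each cell needs to see only its neighbor to copy from), that no spurious extra predecessors appear, and that the sink of the tree can be made to coincide with a genuine point $v_1$ on the length-$\ge 2^n$ cycle by a small "merge" modification of the rule near that configuration. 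A clean way to sidestep subtleties is to let the four gadgets live on four disjoint node-sets (so $F_n$ has $4n$ nodes, still $O(n)$, degree still $\le 2$), in which case the orbit graph of $F_n$ is the disjoint union of the four orbit graphs and every bullet of the proposition is verified component by component with no interaction to check at all; I would likely present it this way, remarking that a connected single network can also be arranged at the cost of a bit more bookkeeping.
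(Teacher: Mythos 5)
There is a genuine gap in the shortcut you say you would actually use. If the four gadgets live on four \emph{disjoint node sets}, the resulting network's configuration space is the \emph{product} $Q^{V_1}\times Q^{V_2}\times Q^{V_3}\times Q^{V_4}$, so its orbit graph is the tensor product of the four orbit graphs, not their disjoint union: a global configuration assigns a state to every node of every gadget simultaneously. Your claim that ``the orbit graph is then literally the disjoint union \dots with no interaction to check at all'' is therefore false. Worse, even if one \emph{could} realize a disjoint union of the four orbit graphs, it would not prove the proposition: the second and third bullets require the binary tree and the length-$2^n$ directed path to attach to vertices $v_1,v_2$ lying on the long cycle $C$ itself, so these features cannot sit in separate connected components. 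The interaction is precisely the content of the statement and must be arranged, not assumed away. Your product-alphabet variant can be repaired (a cycle in one factor times a tree converging to a fixed point in another does yield a tree hanging off the long cycle in the product graph), but you would still have to verify that the in-degree structure of the product gives a complete binary tree and that the transient path lands on $C$; these checks are not ``automatic.''

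For comparison, the paper avoids the product entirely: it builds a single network on $n$ nodes of degree $2$ over one alphabet that is a \emph{union} of disjoint state regimes. Configurations over $\{0,1,2\}$ run a ternary odometer of period $3\cdot 2^{n-1}$ (the cycle $C$); configurations containing states $a,b$ are erased by a left-to-right wave of $0$s, and their predecessor structure is the binary tree, which funnels directly into a configuration of $C$; a second odometer over states $c_0,c_1,c_2$ overflows into $0^n$ after more than $2^n$ steps, giving the transient path into $C$; two extra identity states give the fixed points. Because all of these orbits live in the same space $Q^n$ and literally flow into the odometer configurations, the attachment of the tree and the path to $C$ is immediate. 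Your individual ingredients (odometer, overflow counter, collapsing tree, identity layer) match the paper's, but you should replace the disjoint-union packaging with either the paper's single-network union-of-alphabets construction or a carefully verified product construction.
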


\begin{proof}
  First on a component of states ${\{0,1,2\}\subseteq Q}$ the large cycle $C$ is obtained by the following 'odometer' behavior of $F_n$: if ${x_n\in\{0,1,2\}}$ then ${F_n(x)_n = x_n+1 \bmod 3}$, and if both ${x_i,x_{i+1}\in\{0,1,2\}}$ for ${1\leq i<n}$ then 
  \[F_n(x)_i =
    \begin{cases}
      0 &\text{ if }x_i=2\\
      x_i+1 &\text{ else if }x_{i+1}=2,\\
      x_i &\text{ else.}
    \end{cases}
  \]
  $C$ is realized on ${\{0,1,2\}^n}$ as follows. For $x\in\{0,1,2\}^n$ denote by $S_i$ the sequence ${(F^t(x)_i)_{t\geq 0}}$ for any ${1\leq i\leq n}$. Clearly $S_n$ is periodic of period $012$. $S_{n-1}$ is ultimately periodic of period ${200111}$ (of length $6$) and by a straighforward induction we get that $S_1$ is ultimately periodic of period ${20^{3\cdot 2^{n-2}-1}1^{3\cdot 2^{n-2}}}$ which is of length ${3\cdot 2^{n-1}}$.


  For the tree $T$, just add states ${\{a,b\}\subseteq Q}$ with the following behavior: if $x_1\in\{a,b\}$ then ${F_n(x)_1=0}$ and if ${x_{i}\in\{a,b\}}$ and ${x_{i-1}=0}$ then ${F_n(x)_i=0}$ for ${1<i\leq n}$. In any other case, we set ${F(x)_i=x_i}$ for ${x\in\{0,1,2,a,b\}^n}$ and ${1\leq i\leq n}$.

  Using similar mechanisms as above on additional states ${c_0,c_1,c_2\in Q}$, $F_n$ runs another odometer whose behavior is isomorphic to the behavior of $F_n$ on ${\{0,1,2\}^n}$ through ${i\mapsto c_i}$, but with the following exception: when ${x_1=c_2}$ we set ${F_n(x)_1=0}$ and then state $0$ propagates from node $1$ to node $n$ as in the construction of tree $T$. We thus get a transient behavior of length more than ${3\cdot 2^{n-1}}$ which yields to configuration ${0^n}$, which itself (belongs or) yields to cycle $C$.

  Finally, the fixed points are obtained by adding two more states to the alphabet on which the automata network just acts like the identity map.
\end{proof}

We can now state that any universal family must be dynamically rich in a precise sense.

\begin{theorem}\label{them:univ-rich-dynamics}
  Let $\mathcal{F}$ be an automata networks family.
  \begin{itemize}
  \item If $\mathcal{F}$ is \emph{universal} then, for any polynomial map
    $\rho$, there is a polynomial distortion $\delta$ such that, any
    $\rho$-succinct orbit graph can be embedded into some
    ${F\in\mathcal{F}}$ with distortion $\delta$. In particular
    $\mathcal{F}$ contains networks with super-polynomial periods and
    transients, and a super-polynomial number of disjoint periodic orbits of period at most polynomial.
  \item If $\mathcal{F}$ is \emph{strongly universal} then it embeds the orbit graph of any bounded-degree automata network with linear distortion. In particular it contains networks with exponential periods and transients, and an exponential number of disjoint periodic orbits of period at most linear.
\end{itemize}
\end{theorem}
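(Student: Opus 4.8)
The plan is to read both statements directly off the simulation framework: Lemma~\ref{lem:simu-dynamic} converts a simulation into an embedding of orbit graphs whose distortion is controlled by the spatial and temporal rescaling factors, so it suffices to invoke universality (resp.\ strong universality) against the appropriate benchmark family, and then, for the ``in particular'' clauses, to push the explicit bounded-degree networks of Proposition~\ref{prop:bounded-degree-dynamics} through these embeddings.

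For the first bullet, fix a polynomial $\rho$ and let $G_F$ be a $\rho$-succinct orbit graph, i.e.\ $F:Q^n\to Q^n$ admits a circuit representation of size at most $\rho(n)$, so $F\in\mathcal{U}_{Q,\rho}$. Since $\mathcal{F}$ is universal, it simulates $(\mathcal{U}_{Q,\rho},\mathcal{U}_{Q,\rho}^*)$ in polynomial time $T$ and polynomial space $S$ (Definition~\ref{def:universal}): there is $H\in\mathcal{F}$ with $S(n)$ nodes and a block embedding under which $H$ simulates $F$ with time constant $T(n)$. By Lemma~\ref{lem:simu-dynamic}, $G_F$ is a subgraph of $G_{H^{T(n)}}$, hence $G_F$ embeds into $H$ with distortion $\delta:=\max(S,T)$, a polynomial (depending on $\rho$, and a priori on $Q$, which is harmless). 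For the consequences, apply this to the networks $F_n\in\mathcal{B}_{Q_0,2}$ of Proposition~\ref{prop:bounded-degree-dynamics}: being of bounded degree over the fixed alphabet $Q_0$, they have circuit representations of size polynomial in $n$, so their orbit graphs are $\rho_0$-succinct for a fixed polynomial $\rho_0$. The resulting simulators $H_n\in\mathcal{F}$ have $\mathrm{poly}(n)$ nodes and, by the second part of Lemma~\ref{lem:simu-dynamic}, the cycle of length $\geq 2^n$ and the transient path of length $2^n$ in $G_{F_n}$ yield a periodic orbit of period $\geq 2^n$ and a transient of length $\geq 2^n$ in $H_n$; as the node count is polynomial in $n$, these are super-polynomial in the number of nodes. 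Finally, the $\geq 2^n$ fixed points of $F_n$ are sent by the injective block embedding to configurations of $H_n$ whose orbits all have period dividing $T(n)$; each such orbit contains at most $T(n)$ of them, so they form at least $2^n/T(n)$ pairwise disjoint periodic orbits of period at most $T(n)$ --- super-polynomially many, of polynomial period. (Note the same $H_n$ realizes all three phenomena at once.)

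The second bullet is the identical argument with $\mathcal{B}_{Q,\Delta}$ in place of $\mathcal{U}_{Q,\rho}$ and constant time / linear space rescaling instead of polynomial. Strong universality gives, for any $F\in\mathcal{B}_{Q,\Delta}$ with $n$ nodes, a network $H\in\mathcal{F}$ with $O(n)$ nodes simulating $F$ with a constant time constant $T$; Lemma~\ref{lem:simu-dynamic} makes $G_F$ a subgraph of $G_{H^T}$, i.e.\ an embedding with linear distortion. Applying this to $F_n\in\mathcal{B}_{Q_0,2}$ from Proposition~\ref{prop:bounded-degree-dynamics} and using the period/transient part of Lemma~\ref{lem:simu-dynamic} as above produces networks in $\mathcal{F}$ with $O(n)$ nodes carrying a periodic orbit of period $\geq 2^n$, a transient of length $\geq 2^n$, and at least $2^n/T$ disjoint periodic orbits of period at most $T$; since $T$ is constant and the node count is $O(n)$, these quantities are exponential in the number of nodes, the last family having at most linear (indeed constant) period.

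There is no genuinely new idea here: the content is the chaining of Definition~\ref{def:universal}, Lemma~\ref{lem:simu-dynamic} and Proposition~\ref{prop:bounded-degree-dynamics}, plus the elementary book-keeping that exponential growth in $n$ stays super-polynomial (resp.\ exponential) in the number of nodes of the simulator when one only rescales polynomially (resp.\ linearly). The one point that genuinely needs a careful sentence is the count of disjoint periodic orbits: a fixed point of $F$ only yields a period-$(\leq T)$ orbit of $H$, and up to $T$ distinct fixed points could a priori be encoded into configurations of the same $H$-orbit, hence $2^n/T$ rather than $2^n$ disjoint orbits --- still enough. A second, minor subtlety is the alphabet-dependence of the rescaling maps $S,T$ in the universal case, which does not affect the statement since $\rho$-succinctness does not fix the alphabet and the explicit consequences concern the single alphabet $Q_0$.
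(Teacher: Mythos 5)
Your proposal is correct and follows exactly the route the paper takes: its proof of this theorem is a one-line citation of Lemma~\ref{lem:simu-dynamic}, the definition of simulation/universality, and Proposition~\ref{prop:bounded-degree-dynamics}, which is precisely the chain you spell out (including the $2^n/T$ count of disjoint orbits, which the paper leaves implicit).
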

\begin{proof}
  This is a direct consequence of Lemma~\ref{lem:simu-dynamic}, Definition~\ref{def:simu-family} and Proposition~\ref{prop:bounded-degree-dynamics} above.
\end{proof}



Of course, we do not claim that computational complexity and dynamical richness as stated above are the only meaningful consequences of universality.
To conclude this subsection about universality, let us show that it allows to prove finer results linking the global dynamics with the interaction graph.

In a directed graph, we say a node $v$ belongs to a strongly connected component if there is a directed path from $v$ to $v$.

\begin{corollary}
  Any universal family $\mathcal{F}$ satisfies the following: there is a constant $\alpha$ with ${0<\alpha\leq 1}$ such that for any $m>0$ there is a network $F\in\mathcal{F}$ with $n\geq m$ nodes such that some node $v$ belonging to a strongly connected component of the interaction graph of $F$ and a periodic configuration $x$ such that the trace at $v$ of the orbit of $x$ is of period at least ${2^{n^\alpha}}$.
  \label{coro:trickyuniversal}
\end{corollary}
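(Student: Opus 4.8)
The plan is to transport, through the simulation provided by universality, a small network whose dynamics forces a long-period trace at one node for \emph{number-theoretic} reasons, and then recover a cycle in the interaction graph by a purely dynamical argument. For each $n$ fix a prime $p=p(n)$ with $2^{n-1}\le p<2^n$ (Bertrand's postulate) and let $C_n\colon\{0,1\}^n\to\{0,1\}^n$ be the ``increment modulo $p$'' network: reading a configuration as an integer $b\in\{0,\dots,2^n-1\}$ in binary, $C_n(b)=(b+1)\bmod p$ if $b<p$ and $C_n(b)=b$ otherwise. Hard-coding the bits of $p$, a ripple-carry implementation of $C_n$ has a circuit of size $O(n)$, so $C_n\in\mathcal U_{\{0,1\},P}$ for a fixed polynomial $P$. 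Let $v$ be the least significant bit node and $x_n=0^n$; then $C_n^{\,p}(x_n)=x_n$ and, as a sequence in $t$, $\bigl(C_n^{\,t}(x_n)\bigr)_v$ is the last bit of $t\bmod p$, which is non-constant with period dividing $p$, hence of period exactly $p$ because $p$ is prime. A node with a trace of \emph{prime} period is the feature we want to carry over.

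Next I would invoke universality. Since $\mathcal F$ is universal it simulates $(\mathcal U_{\{0,1\},P},\mathcal U_{\{0,1\},P}^*)$ in polynomial time $T$ and polynomial space $S$ (Definitions~\ref{def:universal} and~\ref{def:simu-family}); applied to a circuit representation of $C_n$ this produces some $F_n\in\mathcal F$ with $N:=S(n)$ nodes simulating $C_n$ in time $T_n:=T(n)$ via a block embedding $\phi_n$ with blocks $(D_i)_i$. Because $C_n$ is Boolean, Remark~\ref{rem:bloc-simul} yields a node $\nu\in D_v$ such that $b\mapsto p_{v,b}(\nu)$ is injective. Put $y_n:=\phi_n(x_n)$; from $\phi_n\circ C_n=F_n^{\,T_n}\circ\phi_n$ we get $F_n^{\,T_n t}\circ\phi_n=\phi_n\circ C_n^{\,t}$, so $F_n^{\,T_n p}(y_n)=\phi_n(C_n^{\,p}(x_n))=y_n$ (hence $y_n$ is periodic for $F_n$) and $F_n^{\,T_n t}(y_n)_\nu=p_{v,(C_n^{\,t}(x_n))_v}(\nu)$, an injective image of a period-$p$ sequence, hence of period exactly $p$. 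Writing $\pi_u$ for the period of the trace of $F_n$ at node $u$ along the orbit of $y_n$, and recalling that the $T_n$-sampled subsequence of a period-$\pi_\nu$ sequence has period $\pi_\nu/\gcd(\pi_\nu,T_n)$, we conclude $p\mid\pi_\nu$.

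The last ingredient, applied inside $F_n$, is the elementary fact that on a periodic orbit $\pi_u$ divides $\lcm\{\pi_{u'}:u'\in N^-(u)\}$ for every node $u$ (because the joint trace of the in-neighbours is $\lcm$-periodic and $u$'s next state is a function of it), and that a node with no in-neighbour has constant trace, so for such a node $\pi_u=1$. Since $p$ is prime and $p\mid\pi_\nu$, node $\nu$ has an in-neighbour $u_1$ with $p\mid\pi_{u_1}$, which in turn has an in-neighbour $u_2$ with $p\mid\pi_{u_2}$, and so on forever; in the finite interaction graph of $F_n$ this backward walk must revisit a vertex, producing a directed cycle all of whose vertices have trace period divisible by $p$. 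Any vertex $w^\ast$ on this cycle lies in a non-trivial strongly connected component of the interaction graph of $F_n$ and its trace along the orbit of $y_n$ has period at least $p\ge 2^{n-1}$.

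It remains to bound the distortion. As $C_n$ has $n$ nodes, $n\le N=S(n)$, so $S$ has some degree $d\ge 1$ and there is a constant $\alpha\in(0,1]$ depending only on $S$ (say $\alpha=1/(2d)$) with $2^{n-1}\ge 2^{N^{\alpha}}$ for $n$ large, while $N\to\infty$ with $n$; choosing $n$ large enough that $S(n)\ge m$ gives the network $F:=F_n\in\mathcal F$, the node $w^\ast$ and the periodic configuration $y_n$ required by the statement. The delicate point — and the reason this is not a one-liner — is that a block embedding constrains the simulator only on the image of $\phi_n$, not on all configurations, so one cannot simply lift the interaction graph of $C_n$ into $F_n$; using a \emph{prime} period is exactly what lets divisibility propagate backwards along in-neighbours and reconstruct a cycle from purely dynamical data, and for the same reason it is essential that $C_n$ be Boolean so that Remark~\ref{rem:bloc-simul} delivers a single node, not just a block, carrying the full period $p$. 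The remaining verifications (the polynomial size of the hard-coded modular incrementer, and the $\lcm$/divisibility bookkeeping) are routine.
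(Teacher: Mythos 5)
Your proof is correct, but it reaches the strongly connected component by a genuinely different mechanism than the paper. The paper uses the full counter modulo $2^m$ and a \emph{structural} argument: since node $1$ of the counter depends on itself, and since the non-coding nodes of the block $D_1$ carry the same state in both patterns $p_{1,0}$ and $p_{1,1}$, any change of the simulated bit forces a dependency path in the interaction graph of the simulator from a coding node of $D_1$ back to a coding node of $D_1$; iterating over the finite set of coding nodes yields a cycle. You instead use a counter modulo a \emph{prime} $p$ and a purely \emph{dynamical} argument: the minimal period of the trace at a node divides the $\lcm$ of the periods of its in-neighbours' traces, so primality lets the divisor $p$ propagate backwards along in-neighbours until the backward walk closes into a cycle. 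Your route buys rigour and self-containedness — it never has to reason about which configurations witness a dependency edge, only about periods along a single orbit, which sidesteps the (real, though repairable) delicacy that block simulation constrains the simulator only on the image of $\phi$; the price is the extra number-theoretic setup (Bertrand's postulate, the hard-coded modular incrementer, the exact-period computation of the sampled subsequence). The paper's route is shorter once one accepts the dependency-path step, and it needs no primality. Both correctly deliver the same quantitative conclusion, with $\alpha$ determined by the degree of the polynomial space map $S$. All the auxiliary claims you flag as routine (the $\lcm$ divisibility, the injective single-node coding from Remark~\ref{rem:bloc-simul} for Boolean simulated networks, and $p\mid\pi_\nu$ from the period of the $T_n$-sampled trace) check out.
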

\begin{proof}
  Consider a Boolean network $F$ with nodes ${V=\{1,\ldots,m\}}$ that do the following on configuration $x\in \{0,1\}^V$: it interprets ${x_1,\ldots,x_m}$ as an number $k$ written in base $2$ where $x_1$ is the most significant bit and produces ${F(x)}$ which represents number ${k+1 \bmod 2^{m}}$. 

  $F$ is such that node $1$ has a trace of exponential period and belongs to a strongly connected component of the interaction graph of $F$ (because it depends on itself).
  Note that $F$ has a circuit representation which is polynomial in $m$, and take $F'\in\mathcal{F}$ of size polynomial in $m$ that simulates $F$ in polynomial time (by universality of family $\mathcal{F}$).
  Taking the notations of Definition~\ref{def:bloc-simu}, we have that each node $v\in D_i$ for each block $D_i$ is such that the map ${q\in\{0,1\}\mapsto p_{i,q}(v)}$ is either constant or bijective (because $F$ has a Boolean alphabet, see Remark~\ref{rem:bloc-simul}).
  In the last case, the value of the node $v\in D_i$ completely codes the value of the corresponding node $i$ in $F$.
  Take any $v\in B_1$ that has this coding property.
  Since node $1$ depends on itself in $F$, there must be a path from $v$ to some node $v'\in D_1$ that is also coding in the interaction graph of $F'$. 
  Then we can also find a path from $v'$ to some coding node in $D_1$.
  Iterating this reasoning we must find a cycle, and in particular we have a coding node in $D_1$ which belongs to some strongly connected component of the interaction graph of $F'$.
  Since this node is coding the values taken by node $1$ of $F$ and since the simulation is in polynomial time and space, we deduce the super-polynomial lower bound on the period of its trace for a well-chosen periodic configuration.   
\end{proof}


\subsection{Link with cellular automata and intrinsic universality}

A cellular automaton is essentially an infinite automata network which is uniform both in the communication graph and the local rule of nodes.
Let $d\geq 1$ be an integer and ${N\subseteq \Z^d}$ be a finite set.
A cellular automaton of dimension $d$ and neighborhood $N$ and state set $Q$ is defined by a local rule ${\delta : Q^N\to Q}$ which induces a global map ${F_\delta : Q^{\Z^d}\to Q^{\Z^d}}$ defined as follow: 
\[F(c)_z = \delta\bigl(z'\in N\mapsto c_{z+z'}\bigr).\]

One can naturally associate to such a cellular automaton a familly of automata networks defined over regular graphs that are $d$-dimensional tori with uniform adjacency relation defined by $N$, and such that each node has the same local rule $\delta$.
Formally, for any ${n\in\N}$, let ${G_{d,N,n}}$ be the graph of vertex set ${V_{d,n}=\Z_n^d}$ where $\Z_n$ denotes the integers modulo $n$, and such that ${(i,j)\in V_{d,n}^2}$ is an edge if and only if ${j-i\in N}$ (where the computation is done modulo $n$).
For large enough $n$, ${G_{d,N,n}}$ is a regular graph of degree ${|N|}$.
Then, on each ${G_{d,N,n}}$, we consider the automata network where the local rule of each node $i$ is $\delta$ where we identify the neighborhood of $i$ to $N$ by ${j\mapsto j-i}$ (which is one-to-one for large enough $n$, the choice for a constant number of small values of $n$ does not matter).
We denote this family ${\mathcal{F}_{d,N,Q,\delta}}$ and consider it together with its bounded-degree representation.

A well-established notion of universality in cellular automata, \emph{intrinsic universality}, is actually very close to our formalism and relies on the notion of \emph{intrinsic simulations}.
The goal of this subsection is to clarify the links between intrinsic universality of a cellular automaton, and universality of the associated automata networks family.

To be precise we consider the notion of intrinsic universality \cite[Definition 5.1]{bulk2} associated to injective simulation \cite[Definition 2.1]{bulk2}.
The definition of injective simulation between $d$-dimensional cellular automata is essentially equivalent to Definition~\ref{def:bloc-simu} with the additional constraint that the block embedding uses the same 'rectangular' shape for all blocks.
We say that a block embedding between automata networks over graphs ${G_{d,N_1,n_2}}$ and ${G_{d,N_2,n_2}}$ is a \emph{uniform rectangular block embedding} of shape ${\vec{b}=(b_1,\ldots,b_d)}$ if it is such that each block $D_i$ is of the form ${v_i+[0,\ldots,b_1]\times\cdots\times[0,\ldots,b_d]}$ for some $v_i$.
Injective intrinsic universality of cellular automata can then be defined on the associated automata network families as follows.

\begin{definition}
  A cellular automaton of dimension $d$, neighborhood $N$, state set $Q$ and local rule $\delta$ is \emph{intrinsically universal} if for any neighborhood $N'$, state set $Q'$ and local rule $\delta'$, the family $\mathcal{F}_{d,N,Q,\delta}$ simulates $\mathcal{F}_{d,N',Q',\delta'}$ in constant time and with a uniform rectangular block embedding of fixed shape $\vec{b}$.
\end{definition}

As a first result, let us show that intrinsic universality implies a very general capacity of simulating automata networks, close to strong universality but slightly weaker. 

\begin{theorem}\label{theo:iuca}
  Consider any $d$-dimensional intrinsically universal CA and denote by $(\mathcal{F},\mathcal{F}^\ast)$ its associated automata network family with bounded degree representation.
  For any $Q$ and $\Delta$, $\mathcal{F}$ can simulate $\mathcal{B}_{Q,\Delta}$ in time ${O(n\log(n))}$ and space ${O(n\log(n))}$.
\end{theorem}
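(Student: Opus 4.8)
The goal is to show that an intrinsically universal $d$-dimensional CA, viewed as the family $(\mathcal{F},\mathcal{F}^\ast)$ of automata networks on tori $G_{d,N,n}$, can simulate $\mathcal{B}_{Q,\Delta}$ in time $O(n\log n)$ and space $O(n\log n)$. The natural route is to go through an intermediate family of cellular automata: first encode an arbitrary bounded-degree automata network $F$ on $n$ nodes into a \emph{single} cellular automaton $\mathcal{C}_F$ running on a torus of side $O(n)$ (or on a periodic configuration), and then invoke intrinsic universality to simulate $\mathcal{C}_F$ inside our fixed universal CA with constant-time, fixed-shape block embedding. The composition of the two simulations gives the result, and the $\log n$ factors will come precisely from the first step, where the alphabet of $\mathcal{C}_F$ cannot be bounded but must instead be emulated using a region of $O(\log n)$ cells of a fixed alphabet.

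\textbf{Step 1: from a bounded-degree network to a cellular automaton.} Given $F\in\mathcal{B}_{Q,\Delta}$ on node set $[n]$ with communication graph of degree $\le\Delta$, I would lay out the $n$ nodes along a $1$-dimensional periodic pattern (or a $d$-dimensional torus of side $\lceil n^{1/d}\rceil$), allocating to each node a contiguous block of $\Theta(\log n)$ cells over a \emph{fixed} alphabet $\Gamma$. Inside each block we store: (i) the node's current state in $Q$ (constant space, since $Q$ is fixed); (ii) a description of its local transition table $\tau_v:Q^{d_v}\to Q$ with $d_v\le\Delta$ (this is a bounded object, constant size); (iii) its address in $[n]$ written in binary, which is what costs $\Theta(\log n)$ cells; and (iv) the (at most $\Delta$) addresses of its in-neighbors, again $O(\log n)$ bits each. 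One simulated step of $F$ is then implemented by a bounded number of sweeps: each block broadcasts its address and state, travelling signals carry state values between blocks that have matching neighbor-addresses, and once a block has collected its $\le\Delta$ neighbor states it applies $\tau_v$ locally. A standard signal-routing argument on the line/torus shows this takes $O(n\log n)$ CA steps (the dominant cost is $O(n)$ blocks times $O(\log n)$ per block for address comparison, with at most $\Delta$ rounds). This yields a block embedding of $F$ into a fixed-alphabet CA $\mathcal{C}$ (independent of $F$, the network-specific data living in the configuration) with spatial blow-up $O(n\log n)$ and temporal blow-up $O(n\log n)$, matching exactly the target bounds; moreover a circuit/bounded-degree representation of the resulting network on the torus is $\DLOG$-computable from the $\mathcal{B}_{Q,\Delta}$ representation of $F$.

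\textbf{Step 2: apply intrinsic universality.} The CA $\mathcal{C}$ from Step 1 has some neighborhood $N'$, state set $Q'=\Gamma^k$ and local rule $\delta'$, all fixed. By intrinsic universality of the given CA, the family $\mathcal{F}=\mathcal{F}_{d,N,Q_\ast,\delta_\ast}$ simulates $\mathcal{F}_{d,N',Q',\delta'}$ in constant time and with a uniform rectangular block embedding of fixed shape $\vec b$. This multiplies both spatial and temporal factors by constants only, so composing with Step 1 preserves the $O(n\log n)$ bounds. Transitivity of simulation (Remark after Definition~\ref{def:simu-family}, closure of $\DLOG$ under composition) then gives a single simulation of $\mathcal{B}_{Q,\Delta}$ by $\mathcal{F}$ in time $O(n\log n)$ and space $O(n\log n)$, with the $\DLOG$ machine obtained by composing the $\DLOG$ translators of the two stages. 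One must also check the block embeddings compose correctly into a single block embedding with injective patterns, which is immediate since composition of injective block codes is an injective block code.

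\textbf{Main obstacle.} The delicate point is Step 1: making the $O(n\log n)$ time bound honest. One must organize the address-matching and state-forwarding so that neither the number of signal rounds nor the congestion on the line/torus exceeds $O(n\log n)$ total; naively comparing every block's address with every other would cost $\Theta(n^2\log n)$. The fix is to stream all $n$ addresses past each block once (an $O(n\log n)$-time sweep), having each block latch the states of the (constantly many) addresses it recognizes as its neighbors — so a constant number of global sweeps suffices, each of length $O(n\log n)$. Getting the signal machinery and the collision resolution exactly right, while keeping the alphabet fixed and independent of $n$, is the real work; everything else (the reduction to intrinsic universality, the bookkeeping of block embeddings, the $\DLOG$ translations) is routine.
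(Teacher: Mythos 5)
Your proposal is correct and follows essentially the same two-step route as the paper's proof: encode each node of the bounded-degree network in a block of $O(\log n)$ cells of a fixed-alphabet one-dimensional CA, route states between blocks by rotating packets around the ring in $O(n\log n)$ steps (with a firing-squad-style global clock to synchronize the transition), and then compose with the constant-factor, fixed-shape simulation provided by intrinsic universality. The only divergence is a detail of the routing: the paper's packets carry \emph{relative} travel distances that are decremented at each block boundary and deliver upon reaching zero, which sidesteps the address-comparison machinery you identify as the delicate point, but both mechanisms yield the same $O(n\log n)$ time and space bounds.
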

\begin{proof}
  We do the proof for dimension $1$, it is straightforward to lift the construction to higher dimension by periodization in all but one dimensions.
  It is sufficient to prove that some family ${\mathcal{F}_{1,\{-1,0,1\},Q',\delta}}$ associated to a cellular automaton can simulate $\mathcal{B}_{Q,\Delta}$ in time ${O(n\log(n))}$ and space ${O(n\log(n))}$ for each fixed choice of $Q$ and $\Delta$.
  When $Q$ and $\Delta$ are fixed, there are only finitely many possible local transition rules so, by choosing $Q'$ large enough, any such rule can be encoded locally as well as the set of possible values ${Q^\Delta}$ of the neighbors of a given node.
  The only difficulty of the simulation lies in the routing of states of nodes to their corresponding neighbors according to an arbitrary graph of degree $\Delta$. The main trick is that the communication of the state of each node to each of its neighbors is done by a routing mechanism of packets turning on a ring: each holds a number coding the travel distance it has to accomplish before delivering its information (a state), and, while turning on the ring, each packet decrements its number until it is 0 and then triggers the information delivery. 
  \begin{figure}

\tikzset{every picture/.style={line width=0.75pt}} 

\begin{tikzpicture}[x=0.75pt,y=0.75pt,yscale=-1,xscale=1]

\draw  [fill={rgb, 255:red, 255; green, 255; blue, 255 }  ,fill opacity=1 ] (89.42,146.46) -- (512.94,146.46) -- (512.94,178.05) -- (89.42,178.05) -- cycle ;
\draw    (118.79,146.46) -- (118.79,272.82) ;
\draw    (151.71,146.46) -- (151.71,272.82) ;
\draw    (177.51,147.48) -- (177.51,273.84) ;
\draw    (209.54,147.48) -- (209.54,273.84) ;
\draw    (236.23,146.46) -- (236.23,272.82) ;
\draw    (269.15,146.46) -- (269.15,272.82) ;
\draw    (294.95,147.48) -- (294.95,273.84) ;
\draw    (326.98,147.48) -- (326.98,273.84) ;
\draw    (480.02,147.48) -- (480.02,273.84) ;
\draw   (509.38,102.71) .. controls (509.38,98.04) and (507.05,95.71) .. (502.38,95.71) -- (308.07,95.71) .. controls (301.4,95.71) and (298.07,93.38) .. (298.07,88.71) .. controls (298.07,93.38) and (294.74,95.71) .. (288.07,95.71)(291.07,95.71) -- (93.76,95.71) .. controls (89.09,95.71) and (86.76,98.04) .. (86.76,102.71) ;
\draw  [fill={rgb, 255:red, 255; green, 255; blue, 255 }  ,fill opacity=1 ] (89.42,209.23) -- (512.94,209.23) -- (512.94,240.82) -- (89.42,240.82) -- cycle ;
\draw  [fill={rgb, 255:red, 255; green, 255; blue, 255 }  ,fill opacity=1 ] (89.42,177.64) -- (512.94,177.64) -- (512.94,209.23) -- (89.42,209.23) -- cycle ;
\draw  [fill={rgb, 255:red, 255; green, 255; blue, 255 }  ,fill opacity=1 ] (89.42,240.82) -- (512.94,240.82) -- (512.94,272.41) -- (89.42,272.41) -- cycle ;
\draw  [fill={rgb, 255:red, 255; green, 255; blue, 255 }  ,fill opacity=1 ] (89.42,272.41) -- (512.94,272.41) -- (512.94,304) -- (89.42,304) -- cycle ;
\draw    (354.79,290.75) -- (406.07,290.75) ;
\draw [shift={(351.79,290.75)}, rotate = 0] [fill={rgb, 255:red, 0; green, 0; blue, 0 }  ][line width=0.08]  [draw opacity=0] (8.93,-4.29) -- (0,0) -- (8.93,4.29) -- cycle    ;
\draw   (118.68,143.42) .. controls (118.73,139.21) and (116.66,137.07) .. (112.45,137.02) -- (112.45,137.02) .. controls (106.43,136.93) and (103.45,134.79) .. (103.51,130.58) .. controls (103.45,134.79) and (100.41,136.85) .. (94.4,136.77)(97.11,136.81) -- (94.4,136.77) .. controls (90.19,136.71) and (88.06,138.79) .. (88,143) ;
\draw   (82.31,276.49) .. controls (78.95,276.36) and (77.21,277.98) .. (77.09,281.34) -- (77.09,281.34) .. controls (76.92,286.13) and (75.15,288.47) .. (71.79,288.35) .. controls (75.15,288.47) and (76.74,290.93) .. (76.57,295.72)(76.64,293.57) -- (76.57,295.72) .. controls (76.44,299.08) and (78.06,300.82) .. (81.42,300.94) ;
\draw   (84.09,213.3) .. controls (80.73,213.18) and (78.99,214.8) .. (78.87,218.16) -- (78.87,218.16) .. controls (78.7,222.95) and (76.93,225.29) .. (73.57,225.17) .. controls (76.93,225.29) and (78.52,227.75) .. (78.35,232.54)(78.42,230.39) -- (78.35,232.54) .. controls (78.22,235.9) and (79.84,237.64) .. (83.2,237.76) ;
\draw   (526.28,300.94) .. controls (530.95,300.91) and (533.27,298.57) .. (533.24,293.9) -- (532.9,234.47) .. controls (532.86,227.8) and (535.17,224.46) .. (539.84,224.43) .. controls (535.17,224.46) and (532.82,221.14) .. (532.78,214.47)(532.8,217.47) -- (532.44,155.04) .. controls (532.41,150.37) and (530.07,148.05) .. (525.4,148.08) ;

\draw (270.18,67.62) node [anchor=north west][inner sep=0.75pt]    {$n\ \text{blocks }$};
\draw (386.78,154.14) node [anchor=north west][inner sep=0.75pt]    {$\cdots $};
\draw (232.28,215.91) node [anchor=north west][inner sep=0.75pt]    {$\text{Adyacency layers}$};
\draw (410.67,280.11) node [anchor=north west][inner sep=0.75pt]    {$\text{Routing layers}$};
\draw (231.13,183.3) node [anchor=north west][inner sep=0.75pt]    {$\text{Firing squad component}$};
\draw (231.36,248.52) node [anchor=north west][inner sep=0.75pt]    {$\text{Transition rule component}$};
\draw (70.27,109.34) node [anchor=north west][inner sep=0.75pt]    {$\lceil \log n\rceil \text{\mbox{-}cells}$};
\draw (7.32,279.1) node [anchor=north west][inner sep=0.75pt]    {$\Delta \text{\mbox{-} layers}$};
\draw (9.1,215.91) node [anchor=north west][inner sep=0.75pt]    {$\Delta \text{\mbox{-} layers}$};
\draw (543.78,214.88) node [anchor=north west][inner sep=0.75pt]    {$2\Delta +3\text{\mbox{-} layers}$};
\draw (231.13,119.3) node [anchor=north west][inner sep=0.75pt]    {$\text{Type component}$};

\end{tikzpicture}
\caption{Representation of the coding of an arbitrary graph by a one dimensional cellular automata as it is described in Theorem \ref{theo:iuca} }
  \end{figure}
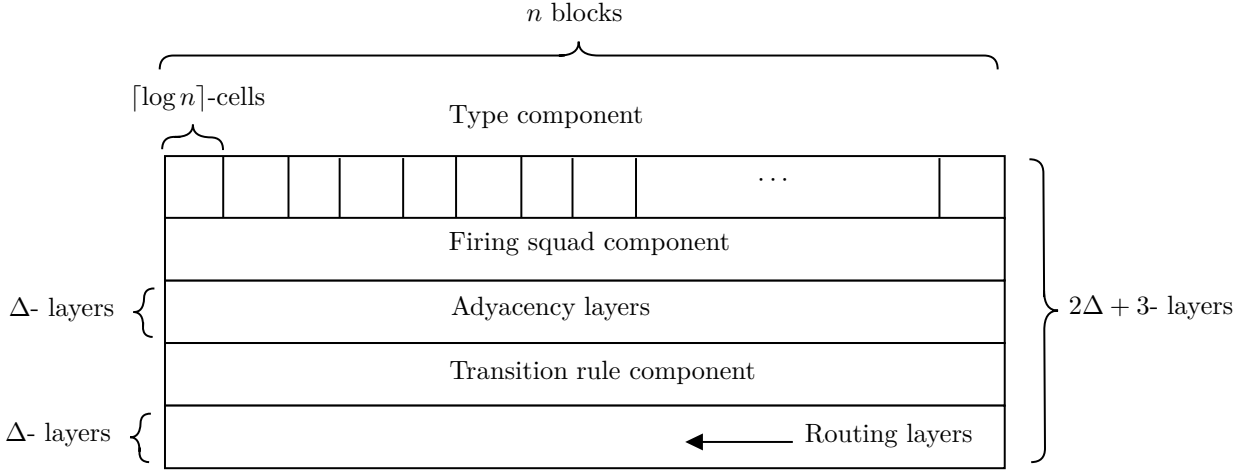
  Let us describe a cellular automaton of dimension $1$ of local rule $\delta$ that can achieve this routing task and the complete simulation. 
  Let us fix any network from ${\mathcal{B}_{Q,\Delta}}$ with $n$ nodes and communication graph $G$.
  It is simulated by the cellular automaton on the graph ${G_{1,\{-1,0,1\},n\log(n)}}$ as follows: 
  \begin{itemize}
  \item each node of $G$ is simulated by a block of ${\lceil\log(n)\rceil}$ adjacent cells of the 1D cellular automaton, the $i$th blocks corresponds to the $i$th node;
  \item the alphabet $Q'$ is structured in ${2\Delta+3}$ components:
    \begin{itemize}
    \item a \emph{type} component that serves as a marker on some cells to give them a particular behavior, or mark the limits of each block, called block skeleton; this component is invariant;
    \item a \emph{firing squad} component \cite{Umeo_2009} that serves as a
      global clock for the simulation that ``ticks'' every
      ${O(n\log(n))}$ steps by having a special ``fire'' flag appear at every cells exactly at these time steps (it is well-known that such a periodic behavior can be realized by a firing squad with a linear relation between the number of cells and the time period \cite{Umeo_2009});
    \item a \emph{transition rule} component that can hold the complete information about the transition rule of a simulated node, its state and the states of its neighbors; more precisely, this layer is empty, except for the rightmost position of each block of the skeleton where it is initialized with empty information about neighbors, just the state and the local rule of the simulated node corresponding to the block;
    \item $\Delta$ \emph{graph adjacency} components, that remain constant and that describe the adjacency relation of graph $G$: for each block $i$, each such component gives one neighbor of $i$ by coding, as a number written in binary from (less significant bit at the leftmost position), the distance in blocks starting from block $i$ to reach block $j$;
    \item $\Delta$ \emph{routing} components that are each organized as blocks of ${\lceil\log(n)\rceil}$ states, called \emph{packet}, that each hold a number (the address), that is initially aligned with the block skeleton, and initialized by the content of the graph adjacency components; moreover, the rightmost cell of the packet holds a state from $Q$ plus an index between $1$ and $\Delta$ (the data); the role of each packet is to send the state of a node to one of its neighbors; 
    \end{itemize}
  \item all packets shift ``to the left'' (from position $p$ to $p-1$ in the graph ${G_{1,\{-1,0,1\},n\log(n)}}$) synchronously cell by cell and each packet perform the following task as it travels: when the leftmost cell of a packet encounters the right boundary of block of the skeleton, it starts to decrement the number held in the packet by propagating a ${-1}$ carry that travels from left to right; if the carry reaches the rightmost position of the packet (meaning the number was $0$ and cannot be decremented), then it means that the packet has reached the block of the skeleton corresponding to its initial address: in this case the rightmost position of the packet transmits to the transition rule component of the rightmost cell of the block of the skeleton all its data exactly when it receives the carry and is aligned with the right boundary of the block; after this, the packet still travels by shifts but is deactivated so that all packets are deactivated and have correctly transmitted their data after ${n\log(n)}$ steps.
  \item the synchronous global clock achieved by the firing squad layer triggers two operations on other layers when it ticks: it forces the transition rule component to actually compute one transition and clean the local information about states of neighbors ; and it reset the routing component according to the graph adjacency component (for the address in packets) and the transition rule component (for the data of the packets). This cleaning process is done in one step since each node sees locally the special state of the firing squad and erases the layers to be cleaned locally in one step.
  \end{itemize}
  The block embedding of the simulation is very regular: each node $i$ of $G$ is represented by the $i$th block in the skeleton, with the firing squad layer initialized in the state obtained just after the ``fire'' step, the transition rule component holding only local states and transition rule of each node and no information about the neighbors, and the routing components initialized according to the graph adjacency components. By the description of the behavior above, in ${n\log(n)}$ steps, all packets have visited the entire block skeleton so they have all copied their information to the corresponding slots of the transition table component and are deactivated. After ${O(n\log(n))}$ steps the firing squads component fires, the correct transitions are computed everywhere and we are back to a well-formed block encoding representing the new configuration of nodes of graph $G$.
\end{proof}

As we will see later (section~\ref{sec:gmonuniv}), the simulation result of Theorem~\ref{theo:iuca} is enough to give universality.

\newcommand\Gmon{\GG_{m}}

\begin{corollary}
  The automata network family (with bounded degree representation) associated to any $d$-dimensional intrinsically universal CA is universal.
\end{corollary}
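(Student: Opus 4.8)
The plan is to obtain universality immediately from Theorem~\ref{theo:iuca}, the (deferred) universality of a fixed bounded-degree family, and transitivity of simulation between families. Write $(\mathcal{F},\mathcal{F}^\ast)$ for the automata network family with bounded degree representation associated to the given $d$-dimensional intrinsically universal cellular automaton. First I would instantiate Theorem~\ref{theo:iuca} with the alphabet $Q=\{0,1\}$ and degree $\Delta=3$: this yields $\mathcal{B}_{\{0,1\},3}^\ast\preccurlyeq^{T_0}_{S_0}\mathcal{F}^\ast$ for maps $T_0,S_0$ that are $O(n\log n)$, hence in particular polynomial in $n$ and \DLOG{} computable in the sense required by Definition~\ref{def:simu-family}.

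Next I would invoke the fact, stated in the remark following Definition~\ref{def:universal} and proved independently in Section~\ref{sec:gmonuniv}, that $(\mathcal{B}_{\{0,1\},3},\mathcal{B}_{\{0,1\},3}^\ast)$ is universal. Thus, given any alphabet $Q$ and any polynomial map $P$, there are polynomial maps $T_1,S_1$ with $\mathcal{U}_{Q,P}^\ast\preccurlyeq^{T_1}_{S_1}\mathcal{B}_{\{0,1\},3}^\ast$. Finally, I would apply transitivity of the simulation relation between families, as noted in the remark after Definition~\ref{def:simu-family}: composing the two simulations above gives $\mathcal{U}_{Q,P}^\ast\preccurlyeq^{T}_{S}\mathcal{F}^\ast$, where $T$ and $S$ are built from $T_0,S_0,T_1,S_1$ by composition (and bounded products), and where $\DLOG$ is closed under composition. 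Since composition and products of polynomial maps remain polynomial, both $T$ and $S$ are polynomial in $n$. As $Q$ and $P$ were arbitrary, $\mathcal{F}$ is universal by Definition~\ref{def:universal}.

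I expect essentially no obstacle here: all the real work has been isolated into Theorem~\ref{theo:iuca} (the routing/firing-squad construction) and into the universality of $\mathcal{B}_{\{0,1\},3}$ established later. The only points deserving a line of care are (i) that the quasi-linear bounds $O(n\log n)$ of Theorem~\ref{theo:iuca} fall within the \emph{polynomial} regime allowed by the definition of universality — so that composition with the polynomial maps of the universality of $\mathcal{B}_{\{0,1\},3}$ stays polynomial — rather than the constant/linear regime of strong universality; and (ii) that the composed time and space functions remain \DLOG{} computable, which follows from closure of $\DLOG$ under composition. Both are routine.
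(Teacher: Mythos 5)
Your proposal is correct and follows essentially the same route as the paper: the paper's proof likewise applies Theorem~\ref{theo:iuca} to obtain an $O(n\log n)$-time and -space simulation of a universal bounded-degree family (it uses $\Gmon$-networks, a bounded-degree subfamily whose universality is Theorem~\ref{teo:gmonuniv}, where you invoke $\mathcal{B}_{\{0,1\},3}$ as asserted in the remark after Definition~\ref{def:universal} and proved in the same section), and then concludes by transitivity of simulation with polynomial rescaling. The two points of care you flag (quasi-linear bounds lying in the polynomial regime, and \DLOG{} closure under composition) are exactly what makes the composition go through.
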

\begin{proof}
  Such a family simulates the family $\Gmon$ (a subfamily of $\mathcal{B}_{\{0,1\},4}$ defined in section~\ref{sec:gmonuniv}) in time ${O(n\log(n))}$ and space ${O(n\log(n))}$ by Theorem~\ref{theo:iuca}. Then, the corollary follows from Theorem~\ref{teo:gmonuniv}.
\end{proof}



Let us now prove that a family of automata networks coming from a $d$-dimensional cellular automaton cannot be strongly universal, whatever the cellular automaton considered.
The reason is that the rigidity of the network of the cellular automaton limits propagation of information and prevents a simulation with constant spatio-temporal rescaling factors of networks on arbitrary graphs.
To show this, we will use the notion of growth of balls in graph families.
Given a family of finite graphs ${(G_i)_{i\in I}}$, we say that it has a \emph{polynomial ball growth} if there is some exponent ${k\in\N}$ such that for any fixed $m$ there is a constant $C_m$ such that for any ${i\in I}$ and set $X$ of $m$ vertices of $G_i$:
\[|B(X,n)|\leq C_mn^k\]
where $B(X,n)$ denotes the set of nodes at distance $n$ from $X$ in $G_i$.
The family of d-dimensional grids has polynomial ball growth, while the family of binary trees hasn't.

Let $\mathcal{F}$ be a network family.
We say that $\mathcal{F}$ has polynomial ball growth if the family of its underlying communication graphs has polynomial ball growth.

\begin{theorem}\label{theo:ballgrowthnouniv}
  No family of polynomial ball growth can be strongly universal.
\end{theorem}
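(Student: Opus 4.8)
The plan is to derive a contradiction from a counting argument: strong universality forces $\mathcal{F}$ to simulate arbitrarily complex bounded-degree networks with only constant time and linear space blowup, but polynomial ball growth bounds how much information one node of a simulator can ``know'' after a constant number of steps, and this is too little to carry a faithful simulation. First I would fix the target family to simulate. A natural choice is the family $\mathcal{B}_{Q,\Delta}$ for suitable $Q,\Delta$ realized on a sequence of graphs with \emph{fast} ball growth --- for instance disjunctive (OR) networks on balanced binary trees of depth $\log n$, or more simply networks whose communication graph is a binary tree, where a single well-chosen node has $2^n$ nodes within distance $n$. The point is to pick a bounded-degree network $H$ on $n$ nodes such that the state of some designated node $v_0$ after one step of $H$ depends on the states of exponentially many (in the diameter) other nodes, and more precisely such that there are configurations differing only far away from $v_0$ that are distinguished by $H$ at $v_0$ within time proportional to the tree depth.

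Next I would unfold the definition of strong universality (Definition~\ref{def:universal}) together with block simulation (Definition~\ref{def:bloc-simu}): there is a constant $T$ and a linear map $S$ such that for each such $H$ with $n$ nodes there is $F_H\in\mathcal{F}$ with $S(n)=O(n)$ nodes, a block embedding $\phi$ with blocks $D_i$ partitioning $V(F_H)$, and $\phi\circ H = F_H^T\circ\phi$. Iterating, $\phi\circ H^t = F_H^{Tt}\circ\phi$ for all $t$. Now the key locality observation: the value of a node $w$ of $F_H$ after $Tt$ steps depends only on the states of nodes of $F_H$ within distance $Tt\cdot\Delta_{\mathcal F}$ in the communication graph of $F_H$, where $\Delta_{\mathcal F}$ bounds the radius of each local rule's dependency (a constant, since $F_H$ ranges over a fixed family with standard representation of bounded local complexity --- one should note here that the communication graphs in $\mathcal{F}$ need not be bounded degree, so the correct bound is a distance ball, and this is exactly where polynomial ball growth enters). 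Choosing a node $w\in D_{v_0}$ that effectively codes the value of $v_0$ (this coding node exists: since $H$ is Boolean-like and $v_0$ must change between patterns $p_{v_0,0}$ and $p_{v_0,1}$, as in Remark~\ref{rem:bloc-simul}), the value of $w$ after $Tt$ steps --- hence the value of $v_0$ of $H$ after $t$ steps --- is a function of $\phi(x)$ restricted to $B(w, Tt\Delta_{\mathcal F})$.

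By polynomial ball growth of the communication graphs of $\mathcal{F}$, with exponent $k$, this ball has at most $C_1 (Tt\Delta_{\mathcal F})^k = O(t^k)$ nodes, hence $B(w,Tt\Delta_{\mathcal F})$ is contained in $O(t^k)$ blocks $D_i$; since $\phi$ restricted to those blocks is determined by the states $x_i$ for those $i\in V(H)$, the state of $v_0$ in $H^t(x)$ depends on at most $O(t^k)$ coordinates of $x$. Taking $t = c\log n$ (the tree depth) this is $O((\log n)^k) = o(n)$. But $H$ was chosen precisely so that $H^t(x)_{v_0}$ for $t$ equal to the tree depth genuinely depends on all $\Omega(n)$ leaf coordinates (e.g.\ an OR-tree rooted at $v_0$ whose value at depth-time equals the OR of all leaves; flipping any single leaf from the all-$0$ configuration changes $v_0$'s value at that time). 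For large $n$ this is a contradiction. I would finish by remarking that the same argument rules out any spatio-temporal rescaling that keeps $t$-step influence polylogarithmic, and in particular the constant-time/linear-space regime of strong universality. The main obstacle I anticipate is the bookkeeping of constants --- pinning down that the per-step dependency radius in $\mathcal{F}$ is genuinely bounded (this follows from the standard representation being of polynomial size with local rules of bounded arity only in the bounded-degree case; in general one must phrase ``ball growth'' relative to the communication graph and argue that one $F_H^T$ step propagates information by a bounded graph distance, which holds because $T$ is constant and each single step of any automata network propagates information only to communication-graph neighbors) --- and choosing the target network $H$ so that the influence lower bound is clean; an OR-tree or a threshold/parity tree should make this routine.
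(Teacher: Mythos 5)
Your proposal is correct and follows essentially the same route as the paper's proof: both take as target the disjunctive networks on complete binary trees and use polynomial ball growth to show that in the $O(\log n)$ steps available under constant temporal rescaling only polylogarithmically many nodes of the simulator can be involved, contradicting the need to involve $\Omega(n)$ blocks. The only cosmetic difference is that you bound the backward dependency cone of a single coding node, whereas the paper bounds the forward light cone of a one-block perturbation of the fixed-point encoding $c_0$; these are dual phrasings of the same locality argument.
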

\begin{proof}
  Consider the family $\mathcal{F}_0$ of disjunctive networks on complete binary trees with self-loops, and let $\mathcal{F}$ be any family with polynomial ball growth (details about representations don't matter for the argument).
  Suppose for the sake of contradiction that $\mathcal{F}$ simulates $\mathcal{F}_0$ with constant spatio-temporal rescaling: $t$ steps of an automata network $F_0$ of size $n$ from $\mathcal{F}_0$ are simulated in time at most $\alpha t$ by an automata network $F$ of size at most $\alpha n$ from $\mathcal{F}$.
  The configuration everywhere $0$ is a fixed point of $F_0$ so its block encoding must be a fixed point of $F$.
  Let's call it $c_0$.
  In any case, there are some nodes of $F_0$ which are simulated by a block of nodes of $F$ of size at most $\alpha$.
  Choose any such node $v$ and consider the configuration of $F_0$ which is everywhere $0$ except on this node $v$ where it is $1$.
  Since the graph of $F_0$ is a complete binary tree, in $\log(n)$ steps the considered configuration becomes the configuration everywhere $1$ under the action of $F_0$.
  On the other hand, these $\log(n)$ steps must be simulated by $F$ in at most ${\alpha\log(n)}$ steps, starting from a configuration $c_1$ which differs from $c_0$ only on the block $D_v$ of size at most $\alpha$.
  In the orbit of $c_1$, after $t$ steps, only the nodes of $F$ that are at distance at most $t$ from $D_v$ can be in a different state than in $c_0$.
  By the polynomial ball growth of exponent $k$ with ${m=\alpha}$ there are at most ${C_m\log^k(n)}$ nodes that differ between ${c_0}$ and ${F^{\log(n)}(c_1)}$.
  For $n$ large enough, this is not sufficient to have $n$ blocks to change the state they represent as they should to correctly simulate the $n$ nodes of $F_0$.
\end{proof}

\begin{corollary}\label{cor:intrinsicnotstrongly}
  No automata network family coming from a $d$-dimensional cellular automaton can be strongly universal.
\end{corollary}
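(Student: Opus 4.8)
The plan is to deduce this immediately from Theorem~\ref{theo:ballgrowthnouniv}: it suffices to check that every automata network family $\mathcal{F}_{d,N,Q,\delta}$ coming from a $d$-dimensional cellular automaton has polynomial ball growth. First I would recall that the underlying communication graphs of this family are the tori $G_{d,N,n}$ on vertex set $\Z_n^d$, where $(i,j)$ is an edge iff ${j-i\in N}$ computed modulo $n$. Set ${R=\max_{v\in N}\|v\|_\infty}$, a constant depending only on the fixed finite neighborhood $N$. Then for any node $i$ of any $G_{d,N,n}$ and any radius ${r\geq 1}$, every node reachable from $i$ by a path of length at most $r$ differs from $i$ in $\Z_n^d$ by a sum of at most $r$ vectors of $N$, hence is the image of a point of the box ${\{-Rr,\dots,Rr\}^d}$ under the projection ${\Z^d\to\Z_n^d}$. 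Thus ${|B(\{i\},r)|\leq (2Rr+1)^d}$, and for a set $X$ of $m$ vertices, ${B(X,r)\subseteq\bigcup_{x\in X}B(\{x\},r)}$ gives ${|B(X,r)|\leq m(2Rr+1)^d\leq C_m r^k}$ with ${k=d}$ and ${C_m=m(2R+1)^d}$ (the case ${r=0}$ being trivial). This bound is uniform over the whole family since $R$ and $d$ do not depend on $n$, so ${(G_{d,N,n})_n}$ has polynomial ball growth of exponent $d$.

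Consequently $\mathcal{F}_{d,N,Q,\delta}$ is a network family of polynomial ball growth, and Theorem~\ref{theo:ballgrowthnouniv} applies verbatim to conclude that it cannot be strongly universal; since this holds for every dimension $d$, neighborhood $N$, state set $Q$ and local rule $\delta$, the corollary follows. There is essentially no obstacle in this argument: the only point requiring a little care is precisely the uniformity of the ball growth estimate across all torus sizes $n$, which is immediate from the fact that the growth constants are governed by the fixed data $(d,N)$ alone.
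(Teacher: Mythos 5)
Your proof is correct and follows exactly the paper's route: the paper also deduces the corollary directly from Theorem~\ref{theo:ballgrowthnouniv} by noting that the tori $(G_{d,N,n})_n$ have polynomial ball growth. Your explicit bound $|B(X,r)|\leq m(2Rr+1)^d$ just spells out the verification that the paper leaves implicit.
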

\begin{proof}
  It follows directly from Theorem~\ref{theo:ballgrowthnouniv} because for any $d$ and $N$, the family of graphs ${(G_{d,N,n})_{n\in\N}}$ is of polynomial ball growth.
\end{proof}

\section{Gadgets and glueing}
\label{sec:gadgetsandglueing}

In the same way as Boolean circuits are defined from Boolean gates, many automata network families can be defined by fixing a finite set of local maps $\GG$ that we can freely connect together to form a global network, called a $\GG$-network.

Such families can be strongly universal as we will see, even for very simple choices of $\GG$, which is an obvious motivation to consider them.
In this section, we introduce a general framework to prove simulation results of a $\GG$-network family by some arbitrary family that amounts to a finite set of conditions to check.
From this we will derive a framework to certify strong universality of an arbitrary family just by exhibiting a finite set of networks from the family that verify a finite set of conditions.
As already said above, our goal is to analyze automata networks with symmetric communication graph (CSAN families). Our framework is targeted towards such families.

The idea behind is that of building large automata networks from small automata networks in order to mimic the way a $\GG$-network is built from local maps in $\GG$.
The difficulty, and the main contribution of this section, is to formalize how small building blocks are glued together and what conditions on them guaranty that the large network correctly simulates the corresponding $\GG$-networks.
In particular, our formalism is perfectly suited to show that a family of \emph{undirected} networks can simulate a family of \emph{oriented} $\GG$-networks.

We will now introduce all the concepts used in this framework progressively.

\subsection{$\GG$-networks}

Let $Q$ be a fixed alphabet and $\GG$ be any set of maps of type ${g:Q^{i(g)}\rightarrow Q^{o(g)}}$ for some ${i(g),o(g)\in\N}$. We say $g$ is \emph{reducible} if it can be written as a disjoint union of two gates, and \emph{irreducible} otherwise. Said differently, if $G$ is the (bipartite) dependency graph of $g$ describing on which inputs effectively depends each output, then $g$ is irreducible if $G$ is weakly connected.

From $\GG$ we can define a natural family of networks:  a $\GG$-network is an automata network obtained by wiring outputs to inputs of a number of gates from $\GG$. To simplify some later results, we add the technical condition that no output of a gate can be wired to one of its inputs (no self-loop condition).

\begin{definition}\label{def:g-network}
  A $\GG$-network is an automata network ${F:Q^V\rightarrow Q^V}$ with set of nodes $V$ associated to a collection of gates ${g_1,\ldots, g_n\in\GG}$ with the following properties. Let 
  \begin{align*}
    I &=\{(j,k) : 1\leq j\leq n\text{ and }1\leq k\leq i(g_j)\} \text{ and}\\
    O &=\{(j,k) : 1\leq j\leq n\text{ and }1\leq k\leq o(g_j)\}
    \end{align*}
    be respectively the sets of inputs and outputs of the collection of gates ${(g_j)_{1\leq j\leq n}}$.
    We require ${|V|=|I|=|O|}$ and the existence of two bijective maps ${\alpha:I\rightarrow V}$ and ${\beta: V\rightarrow O}$ with the condition that there is no ${(j,k)\in I}$ such that $\beta(\alpha(j,k))=(j,k')$ for some $k'$ (no self-loop condition). For ${v\in V}$ with $\beta(v) = (j,k)$, let ${I_v = \{\alpha(j,1),\ldots,\alpha(j,i(g_j))\}}$ and denote by $g_v$ the map: ${x\in Q^{I_v}\mapsto g_j(\tilde x)_k}$ where ${\tilde x\in Q^{i(g_j)}}$ is defined by ${\tilde{x}_k = x_{\alpha(j,k)}}$.
    Then $F$ is defined as follows: 
  \[F(x)_v = g_v(x_{I_v}).\]
\end{definition}

\begin{figure}
  \centering
  \begin{minipage}{.3\linewidth}
    \begin{tikzpicture}[scale=.5]
      \draw (-1.5,1)--(1.5,1)--(1.5,-1)--(-1.5,-1)--cycle;
      \draw (0,0) node {$g_1$};
      \draw[->,red,very thick] (0,1)-- (0,2) node[above] {$(1,1)$};
      \draw[->,blue,very thick] (-1,-2) node[below] {$(1,1)$} -- (-1,-1);
      \draw[->,blue,very thick] (1,-2) node[below] {$(1,2)$} -- (1,-1);
      \draw (0,1.5) node[right] {$\phi_A$};
      \begin{scope}[shift={(4,0)}]
        \draw (-1.5,1)--(1.5,1)--(1.5,-1)--(-1.5,-1)--cycle;
        \draw (0,0) node {$g_2$};
        \draw[->,blue,very thick] (0,-2)  node[below] {$(2,1)$} -- (0,-1);
        \draw[->,red,very thick] (-1,1)  -- (-1,2) node[above] {$(2,1)$};
        \draw[->,red,very thick] (1,1)  -- (1,2) node[above] {$(2,2)$};
        \draw (-1,1.5) node[right] {$\phi_B$};
        \draw (1,1.5) node[right] {$\phi_C$};
      \end{scope}
    \end{tikzpicture}
  \end{minipage}
  \begin{minipage}{.25\linewidth}
    \begin{tikzpicture}[scale=.5]
      \draw (-1.5,1)--(1.5,1)--(1.5,-1)--(-1.5,-1)--cycle;
      \draw (0,0) node {$g_1$};
      \begin{scope}[shift={(0,4)}]
        \draw (-1.5,1)--(1.5,1)--(1.5,-1)--(-1.5,-1)--cycle;
        \draw (0,0) node {$g_2$};
      \end{scope}
      \draw[->,very thick] (0,1)-- node[midway,right] {$A$} (0,3);
      \draw[->,very thick] (-1,5) .. controls (-2.5,9) and (-2.5,-6) .. node[midway,left] {$B$} (-1,-1);
      \draw[->,very thick] (1,5) .. controls (2.5,9) and (2.5,-6) .. node[midway,right] {$C$} (1,-1);
    \end{tikzpicture}
  \end{minipage}
  \begin{minipage}{.3\linewidth}
    \begin{tikzpicture}[shorten >=1pt,node distance=2cm,on grid,auto]
      \tikzstyle{every state}=[fill={rgb:black,1;white,10}]
      
      \node[state]   (q_1)                    {$A$};
      \node[state] (q_2)  [left of=q_1]    {$B$};
      \node[state]           (q_3)  [right of=q_1]    {$C$};
      
      \path[<->]
      (q_1) edge (q_2);
      \path[<->]
      (q_1) edge (q_3);
      \draw (q_1)+(0,-.5) node[below] {\tiny\color{blue}(2,1) \color{red}(1,1)};
      \draw (q_2)+(0,-.5) node[below] {\tiny\color{blue}(1,1) \color{red}(2,1)};
      \draw (q_3)+(0,-.5) node[below] {\tiny\color{blue}(1,2) \color{red}(2,2)};
    \end{tikzpicture}\\
    \[
      \begin{cases}
        \phi(x)_A &= \phi_A(x_B,x_C)\\
        \phi(x)_B &= \phi_B(x_A)\\
        \phi(x)_C &= \phi_C(x_A)
      \end{cases}
    \]
  \end{minipage}
  \caption{On the left a set of maps $\GG$ over alphabet $Q$, in the middle an intuitive representation of input/output connections to make a $\GG$-network, on the right the corresponding formal $\GG$-network ${\phi : Q^3\rightarrow Q^3}$ together with the global map associated to it. The bijections $\alpha$ and $\beta$ from Definition~\ref{def:g-network} are represented in blue and red (respectively).}
  \label{fig:gnetwork}
\end{figure}
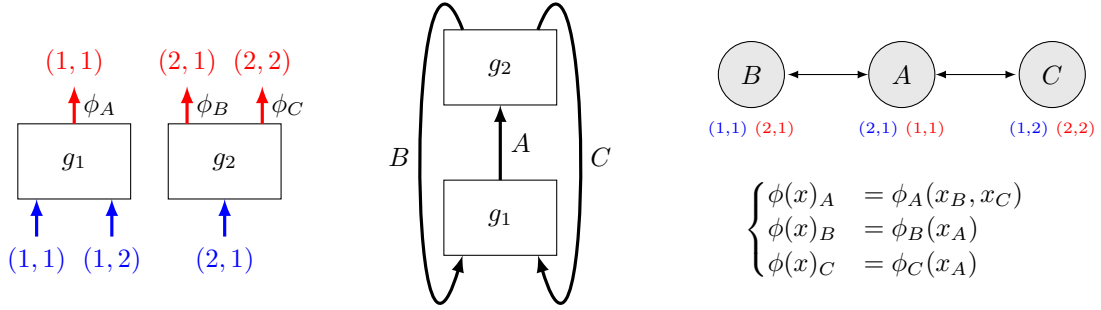

\begin{remark}\label{rem:represent-g-networks}
  Once $\GG$ is fixed, there is a bound on the degree of dependency graphs of all $\GG$-networks. Thus, it is convenient to represent $\GG$-networks by the standard representation of bounded degree automata networks (as a pair of a graph and a list of local update maps). Another representation choice following strictly Definition~\ref{def:g-network} consists in giving a list of gates ${g_1,\ldots, g_k\in\GG}$, fixing ${V=\{1,\ldots,n\}}$ and give the two bijective maps ${\alpha:I\to V}$ and ${\beta: V\to O}$ describing the connections between gates (maps are given as a simple list of pairs source/image). One can check that these two representations are \DLOG{} equivalent when the gates of $\GG$ are irreducible: we can construct the interaction graph and the local maps from the list of gates and maps $\alpha$ and $\beta$ in \DLOG{} (the incoming neighborhood of a node $v$, $I_v$, and its local map $g_v$ are easy to compute as detailed in Definition~\ref{def:g-network}); reciprocally, given the interaction graph $G$ and the list of local maps $(g_v)$, one can recover in \DLOG{} the list of gates and their connections as follows:
  \begin{itemize}
  \item for $v$ from $1$ to $n$ do:
    \begin{itemize}
    \item gather the (finite) incoming neighborhood $N^-(v)$ of $v$ then the (finite) outgoing neighborhood $N^+(N^-(v))$ and iterate this process until it converges (in finite time) to a set $I_v$ of inputs and $O_v$ of outputs with ${v\in O_v}$;
    \item check that all ${v'\in I_v\cup O_v}$ are such that ${v'\geq v}$ otherwise jump to next $v$ in the loop (this guaranties that each gate is generated only once);
    \item since the considered gates are irreducible, $I_v$ and $O_v$ actually correspond to input and output sets of a gate $g\in\GG$ that we can recover by finite checks from the local maps of nodes in $O_v$;
    \item output gate $g$ and the pairs source/image to describe $\alpha$ and $\beta$ for nodes in $I_v$ and $O_v$ respectively.
    \end{itemize}
  \end{itemize}
\end{remark}

In the sequel we denote $\Gamma(\mathcal{G})$ the family of all posible $\mathcal{G}$-networks associated to their bounded degree representation.

\subsection{Glueing of automata networks}
In this section we define an operation that allows us to 'glue' two different abstract automata networks on a common part in order to create another one which, roughly, preserve some dynamical properties in the sense that it allows to glue pseudo-orbits of each network to obtain a pseudo-orbit of the glued network.
One might find useful to think about the common part of the two networks as a dowel attaching two pieces of wood: each individual network is a piece of wood with the dowel inserted in it, and the result of the glueing is the attachment of the two pieces with a single dowel (see Figure~\ref{fig:glueingscheme}).

\begin{definition}\label{def:glueing}
  Consider ${F_1:Q^{V_1}\rightarrow Q^{V_1}}$ and ${F_2:Q^{V_2}\rightarrow Q^{V_2}}$  two automata networks with $V_1$ disjoint from $V_2$, $C$ a set disjoint from ${V_1\cup V_2}$, $\varphi_1: C \to V_1$ and $\varphi_2: C \to V_2$ two injective maps with ${\varphi_1(C)\cap\varphi_2(C)=\emptyset}$ and ${C_1,C_2}$ a partition of $C$ in two sets.
  We define \[V'= C\cup (V_1\setminus \varphi_1(C)) \cup (V_2\setminus \varphi_2(C))\] and the map $\alpha : V'\rightarrow V_1\cup V_2$ by 
  \[\alpha(v) =
    \begin{cases}
      v & \text{ if } v\not\in C\\
      \varphi_i(v) & \text{ if } v\in C_i, \text{ for }i=1,2.\\
    \end{cases}
  \]
  We then define the glueing of $F_1$ and $F_2$ over $C$ as the automata network $F' : Q^{V'}\rightarrow Q^{V'}$ where
  \[F'_v = \begin{cases}
      (F_1)_{\alpha(v)}\circ\rho_1 &\text{ if } \alpha(v) \in  V_1, \\
      (F_2)_{\alpha(v)}\circ\rho_2 &\text{ if } \alpha(v) \in  V_2,
    \end{cases}\]
  where ${\rho_i : Q^{V'}\rightarrow Q^{V_i}}$ is defined by 
  \[\rho_i(x)_v =
    \begin{cases}
      x_{\varphi_i^{-1}(v)} &\text{ if } v\in \varphi_i(C),\\
      x_v &\text{ else.}
    \end{cases}
  \]
\end{definition}

\begin{figure}
	\centering
\begin{tikzpicture}[x=0.75pt,y=0.75pt,yscale=-0.75,xscale=0.75]

\draw   (244,53) .. controls (244,41.95) and (278.14,33) .. (320.25,33) .. controls (362.36,33) and (396.5,41.95) .. (396.5,53) .. controls (396.5,64.05) and (362.36,73) .. (320.25,73) .. controls (278.14,73) and (244,64.05) .. (244,53) -- cycle ;
\draw   (115.5,98) .. controls (120.5,57) and (226.5,100) .. (209.5,124) .. controls (192.5,148) and (441.5,161) .. (235,201) .. controls (28.5,241) and (143,209) .. (79.5,180) .. controls (16,151) and (110.5,139) .. (115.5,98) -- cycle ;
\draw  [dash pattern={on 4.5pt off 4.5pt}] (158,171) .. controls (158,159.95) and (192.14,151) .. (234.25,151) .. controls (276.36,151) and (310.5,159.95) .. (310.5,171) .. controls (310.5,182.05) and (276.36,191) .. (234.25,191) .. controls (192.14,191) and (158,182.05) .. (158,171) -- cycle ;
\draw   (527.49,242.78) .. controls (522.49,283.78) and (416.49,240.78) .. (433.49,216.78) .. controls (450.49,192.78) and (201.49,179.78) .. (407.99,139.78) .. controls (614.49,99.78) and (499.99,131.78) .. (563.49,160.78) .. controls (626.99,189.78) and (532.49,201.78) .. (527.49,242.78) -- cycle ;
\draw  [dash pattern={on 4.5pt off 4.5pt}] (484.99,169.78) .. controls (484.99,180.82) and (450.85,189.78) .. (408.74,189.78) .. controls (366.63,189.78) and (332.49,180.82) .. (332.49,169.78) .. controls (332.49,158.73) and (366.63,149.78) .. (408.74,149.78) .. controls (450.85,149.78) and (484.99,158.73) .. (484.99,169.78) -- cycle ;

\draw   (193.5,285) .. controls (198.5,244) and (304.5,287) .. (287.5,311) .. controls (270.5,335) and (519.5,348) .. (313,388) .. controls (106.5,428) and (221,396) .. (157.5,367) .. controls (94,338) and (188.5,326) .. (193.5,285) -- cycle ;
\draw   (432.49,430.78) .. controls (427.49,471.78) and (321.49,428.78) .. (338.49,404.78) .. controls (355.49,380.78) and (106.49,367.78) .. (312.99,327.78) .. controls (519.49,287.78) and (404.99,319.78) .. (468.49,348.78) .. controls (531.99,377.78) and (437.49,389.78) .. (432.49,430.78) -- cycle ;
\draw   (237,358) .. controls (237,346.95) and (271.14,338) .. (313.25,338) .. controls (355.36,338) and (389.5,346.95) .. (389.5,358) .. controls (389.5,369.05) and (355.36,378) .. (313.25,378) .. controls (271.14,378) and (237,369.05) .. (237,358) -- cycle ;

\draw  [dash pattern={on 4.5pt off 4.5pt}]  (305.5,78) -- (271.01,137.41) ;
\draw [shift={(269.5,140)}, rotate = 300.14] [fill={rgb, 255:red, 0; green, 0; blue, 0 }  ][line width=0.08]  [draw opacity=0] (8.93,-4.29) -- (0,0) -- (8.93,4.29) -- cycle    ;
\draw  [dash pattern={on 4.5pt off 4.5pt}]  (334.5,79) -- (367.99,136.41) ;
\draw [shift={(369.5,139)}, rotate = 239.74] [fill={rgb, 255:red, 0; green, 0; blue, 0 }  ][line width=0.08]  [draw opacity=0] (8.93,-4.29) -- (0,0) -- (8.93,4.29) -- cycle    ;
\draw    (317.5,187) -- (317.5,254) ;
\draw [shift={(317.5,257)}, rotate = 270] [fill={rgb, 255:red, 0; green, 0; blue, 0 }  ][line width=0.08]  [draw opacity=0] (8.93,-4.29) -- (0,0) -- (8.93,4.29) -- cycle    ;

\draw (312,42.4) node [anchor=north west][inner sep=0.75pt]    {$C$};
\draw (228,160.4) node [anchor=north west][inner sep=0.75pt]    {$\varphi _{1}( C)$};
\draw (383,159.4) node [anchor=north west][inner sep=0.75pt]    {$\varphi _{2}( C)$};
\draw (114,132.4) node [anchor=north west][inner sep=0.75pt]    {$V_{1}$};
\draw (518,169.4) node [anchor=north west][inner sep=0.75pt]    {$V_{2}$};
\draw (244,87.4) node [anchor=north west][inner sep=0.75pt]    {$\varphi _{1}$};
\draw (373,88.4) node [anchor=north west][inner sep=0.75pt]    {$\varphi _{2}$};
\draw (305,347.4) node [anchor=north west][inner sep=0.75pt]    {$C$};
\draw (311,284.4) node [anchor=north west][inner sep=0.75pt]    {$V'$};

\end{tikzpicture}
\caption{Scheme of a glueing}
\label{fig:glueingscheme}

\end{figure}
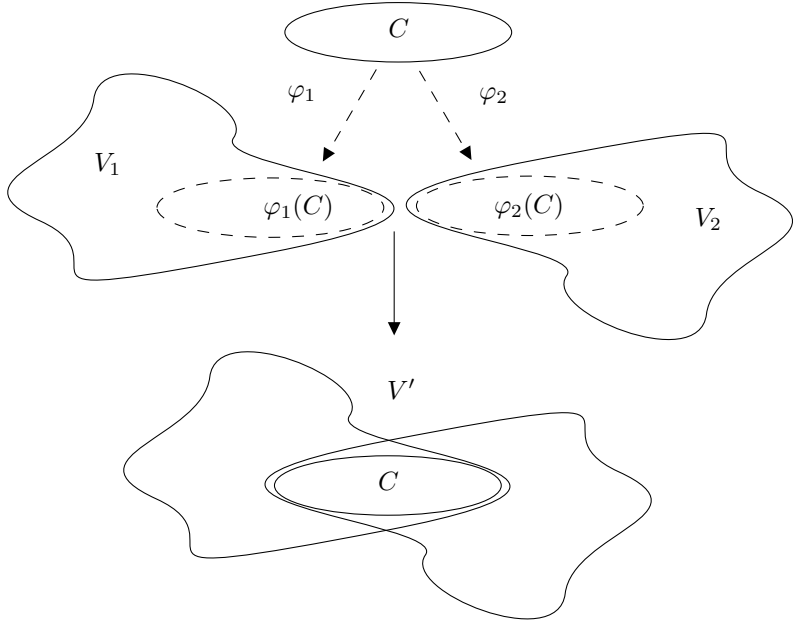

\newcommand\glueing[6]{#1\,{}_{#3}^{#5}\!\!\oplus_{#4}^{#6}\,#2}
When necessary, we will use the notation ${F' = \glueing{F_1}{F_2}{C_1}{C_2}{\phi_1}{\phi_2}}$ to underline the dependency of the glueing operation on its parameters.

Given an automata network ${F:Q^V\rightarrow Q^V}$ and a set ${X\subseteq V}$, we say that a sequence ${(x^t)_{0\leq t\leq T}}$ of configurations from $Q^V$ is a $X$-pseudo-orbit if it respects $F$ as in a normal orbit, except on $X$ where it can be arbitrary, formally: ${x^{t+1}_v = F(x^t)_v}$ for all ${v\in V\setminus X}$ and all ${0\leq t<T}$. The motivation for Definition~\ref{def:glueing} comes from the following lemma.

\begin{lemma}[Pseudo-orbits glueing]\label{lem:pseudo-orbit-glueing}
  Taking the notations of Definition~\ref{def:glueing}, let ${X\subseteq V_1\setminus\varphi_1(C)}$ and ${Y\subseteq V_2\setminus\varphi_2(C)}$ be two (possibly empty) sets. If ${(x^t)_{0\leq t\leq T}}$ is a ${X\cup\varphi_1(C_2)}$-pseudo-orbit for $F_1$ and if ${(y^t)_{0\leq t\leq T}}$ is a ${Y\cup\varphi_2(C_1)}$-pseudo-orbit for $F_2$ and if they verify for all ${0\leq t\leq T}$
  
  \begin{equation}
    \forall v\in C, x^t_{\varphi_1(v)}=y^t_{\varphi_2(v)},
    \label{eq:sametrace}
  \end{equation} 
  then the sequence ${(z^t)_{0\leq t\leq T}}$ of configurations of $Q^{V'}$ is a ${X\cup Y}$-pseudo-orbit of $F'$, where 
  \[z^t_v =
    \begin{cases}
      x^t_{\alpha(v)} &\text{ if } \alpha(v)\in V_1, \\
      y^t_{\alpha(v)} &\text{ if } \alpha(v)\in V_2.
    \end{cases}
  \]
\end{lemma}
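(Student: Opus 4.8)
The plan is to verify the defining equation of a pseudo-orbit directly, checking node by node how $z^{t+1}_v$ relates to $F'(z^t)_v$, and using the glueing definition together with the two given pseudo-orbit hypotheses and the trace-matching condition~\eqref{eq:sametrace}. First I would record the key algebraic fact that makes everything work: for any configuration $z\in Q^{V'}$ that satisfies the matching condition $z_v$ consistent across $C$ (that is, $z^t$ restricted to $C$ agrees with both $x^t$ via $\varphi_1$ and $y^t$ via $\varphi_2$), the ``pull-back'' maps $\rho_i$ of Definition~\ref{def:glueing} reconstruct exactly the configurations $x^t$ and $y^t$ on $V_1$ and $V_2$ respectively: namely $\rho_1(z^t) = x^t$ and $\rho_2(z^t) = y^t$. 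This is because $\rho_i(z^t)_v = z^t_v = x^t_v$ (resp. $y^t_v$) when $v\notin\varphi_i(C)$, and $\rho_i(z^t)_v = z^t_{\varphi_i^{-1}(v)} = x^t_{\varphi_i^{-1}(v)}$ when $v\in\varphi_i(C)$ — and in the latter case, writing $v = \varphi_i(c)$, we have $z^t_c = x^t_{\varphi_1(c)}$ by definition of $z^t$ when $\alpha(c)\in V_1$, and $z^t_c = y^t_{\varphi_2(c)}$ when $\alpha(c)\in V_2$; in either case~\eqref{eq:sametrace} makes this equal the required value. So the first step is to establish $\rho_i(z^t) = x^t$ and $\rho_i(z^t) = y^t$ for all $t$.

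Next I would split the verification $z^{t+1}_v = F'(z^t)_v$ according to which side $\alpha(v)$ lands on and whether $v$ lies in the excluded set $X\cup Y$. Consider $v\in V'$ with $\alpha(v)\in V_1\setminus X$ (in particular $\alpha(v)\notin X\cup\varphi_1(C_2)$ — I need to check $\alpha(v)\notin\varphi_1(C_2)$: indeed if $v\in C$, then $\alpha(v) = \varphi_i(v)$ with $v\in C_i$, so $\alpha(v)\in V_1$ forces $i=1$ and $v\in C_1$, hence $\alpha(v)\in\varphi_1(C_1)$, disjoint from $\varphi_1(C_2)$; if $v\notin C$ then $\alpha(v)=v\in V_1\setminus\varphi_1(C)$). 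So $\alpha(v)\notin X\cup\varphi_1(C_2)$, and since $(x^t)$ is an $X\cup\varphi_1(C_2)$-pseudo-orbit for $F_1$, we get $x^{t+1}_{\alpha(v)} = F_1(x^t)_{\alpha(v)} = (F_1)_{\alpha(v)}(x^t)$. On the other hand $F'_v = (F_1)_{\alpha(v)}\circ\rho_1$ by Definition~\ref{def:glueing}, so $F'(z^t)_v = (F_1)_{\alpha(v)}(\rho_1(z^t)) = (F_1)_{\alpha(v)}(x^t)$ using step one. Comparing, $F'(z^t)_v = x^{t+1}_{\alpha(v)} = z^{t+1}_v$, as desired. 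The symmetric case $\alpha(v)\in V_2\setminus Y$ is identical with roles swapped. For $v$ with $\alpha(v)\in X$ we have $v\notin C$ (since $X\subseteq V_1\setminus\varphi_1(C)$ and $\alpha$ is the identity off $C$ with image in $\varphi_1(C)$ or $\varphi_2(C)$ on $C$), so $v=\alpha(v)\in X$, hence $v\in X\cup Y$, where the pseudo-orbit condition imposes nothing; similarly $\alpha(v)\in Y$ gives $v\in Y\subseteq X\cup Y$. This covers all nodes of $V'$, since $\alpha$ maps $V'$ onto $V_1\cup V_2$.

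The last thing to check is that the matching condition~\eqref{eq:sametrace} is genuinely available at every time $t$ in the range — it is a hypothesis of the lemma, so there is nothing to prove there; but I should make sure step one is applied only with that hypothesis in hand, which it is, since it holds for all $0\leq t\leq T$. I do not anticipate a serious obstacle: the proof is essentially bookkeeping, and the only subtle points are (a) verifying that $\alpha(v)\in V_1$ together with $v\notin X\cup Y$ really does put $\alpha(v)$ outside the exception set $X\cup\varphi_1(C_2)$ of the $F_1$-pseudo-orbit (the $\varphi_1(C_2)$ part being the reason $C_2$ appears in the hypothesis on $(x^t)$ and $C_1$ in the hypothesis on $(y^t)$), and (b) the careful unwinding of $\rho_i(z^t)$ on $\varphi_i(C)$ using~\eqref{eq:sametrace}. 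I would present step one as a short displayed computation and then the case analysis as three bullet-free paragraphs, keeping everything inline to avoid any blank-line-in-display pitfalls.
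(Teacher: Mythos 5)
Your proposal is correct and follows essentially the same route as the paper's proof: establish $\rho_1(z^t)=x^t$ and $\rho_2(z^t)=y^t$ from condition~\eqref{eq:sametrace}, then verify $z^{t+1}_v=F'(z^t)_v$ for each $v\notin X\cup Y$ by reducing to the pseudo-orbit hypothesis on the appropriate side. Your explicit check that $\alpha(v)\in V_1$ with $v\notin X\cup Y$ forces $\alpha(v)\notin X\cup\varphi_1(C_2)$ is a detail the paper leaves implicit, but the argument is the same.
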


\begin{proof}
  Take any ${v\in V'\setminus(X\cup Y)}$. Suppose first that ${\alpha(v)\in V_1}$. By definition of $F'$, we have ${F'(z^{t})_v = (F_1)_{\alpha(v)}\circ\rho_1 (z^t)}$ but ${\rho_1(z^t) = x^t}$ (using the Equation~\ref{eq:sametrace} in the hypothesis) so ${F'(z^t)_v=F_1(x^t)_{\alpha(v)}}$. Since ${(x^t)}$ is a ${X\cup\varphi_1(C_2)}$-pseudo-orbit and since ${\alpha(v)\not\in X\cup\varphi_1(C_2)}$, we have \[F_1(x^t)_{\alpha(v)}=x^{t+1}_{\alpha(v)}=z^{t+1}_v.\] We conclude that ${z^{t+1}_v=F'(z^{t})_v}$. 
  By a similar reasoning, we obtain the same conclusion if ${\alpha(v)\in V_2}$.
  We deduce that ${(z^t)}$ is a ${X\cup Y}$-pseudo-orbit of $F'$. 
\end{proof}
In order to illustrate the latter lemma, we show an example of glueing considering classical life-like automata network, given by the \emph{Game of life}. In this case, we have that $B=\{3\}$ and $S=\{2,3\}$ meaning that a dead cell can update its state to alive if it has exactly three alive neighbors and it will survive only it has exactly $2$ or $3$ alive neighbors.

First, we introduce, In Figure \ref{fig:clockgl}, the dynamics of a clock network (roughly, a network exhibiting a dynamics consisting in a periodic sequence of patterns that move from left to right).  This network will be very important to the construction we will show in the next section and for the example of pseudo-orbit glueing that we are going to introduce hereunder. Now, observe that the communication graph of the clock network is composed by six layers of three independent nodes connected to all the nodes in the next layer. The layers in gray boxes (the first and the last) are connected. In addition, all the layers are connected to an auxiliary node. This node will allow the layer to return to state $0$ when the next layer is in state $1$ since it will provide an additional node in state $1$ so all the nodes in the layer will have exactly four nodes in state $1$ (note that each node in a layer has three neighbors in the adjacent layers and the auxiliary node). Observe that in each time step two layers are in state one and the rest o the layers are in state zero. This pattern is shifted in each time step and thus it takes $6$ time steps in order to go from the first layer to the last one.

Now, let us consider a slightly different network, which is essentially the same as in Figure \ref{fig:clockgl} but the first and the last layer are not connected. We call this a wire network and we show its communication graph in Figure \ref{fig:wiregl}. We are going to glue two of the wire networks, using the previous lemma, in order to create a larger wire. In Figure \ref{fig:cablegluegl}, it is represented the glueing operation between two of these wire networks. The glueing parts are highlighted inside a box in both gadgets. Dashed boxes indicate the zones that are not ruled by the dynamics (the X, Y , $C_1$ and $C_2$ in the latter lemma). Observe that the pseudo-orbit which is induces the pattern that goes through the layers from left to right in both gadgets is preserved in the new network. This pseudo-orbit is shown in Figure \ref{fig:wiregl}. More precisely, the sequence of configurations $(w^i_1)_{i=0,\hdots,4}$ and $w_0$ are both $X$ and $Y$ pseudo-orbits in each of the wire networks that we are glueing. Observe that both of this sequences satisfy the conditions asked by the lemma and thus, they induce a $X\cup Y$-pseudo orbit on the glued network. 

\begin{figure}
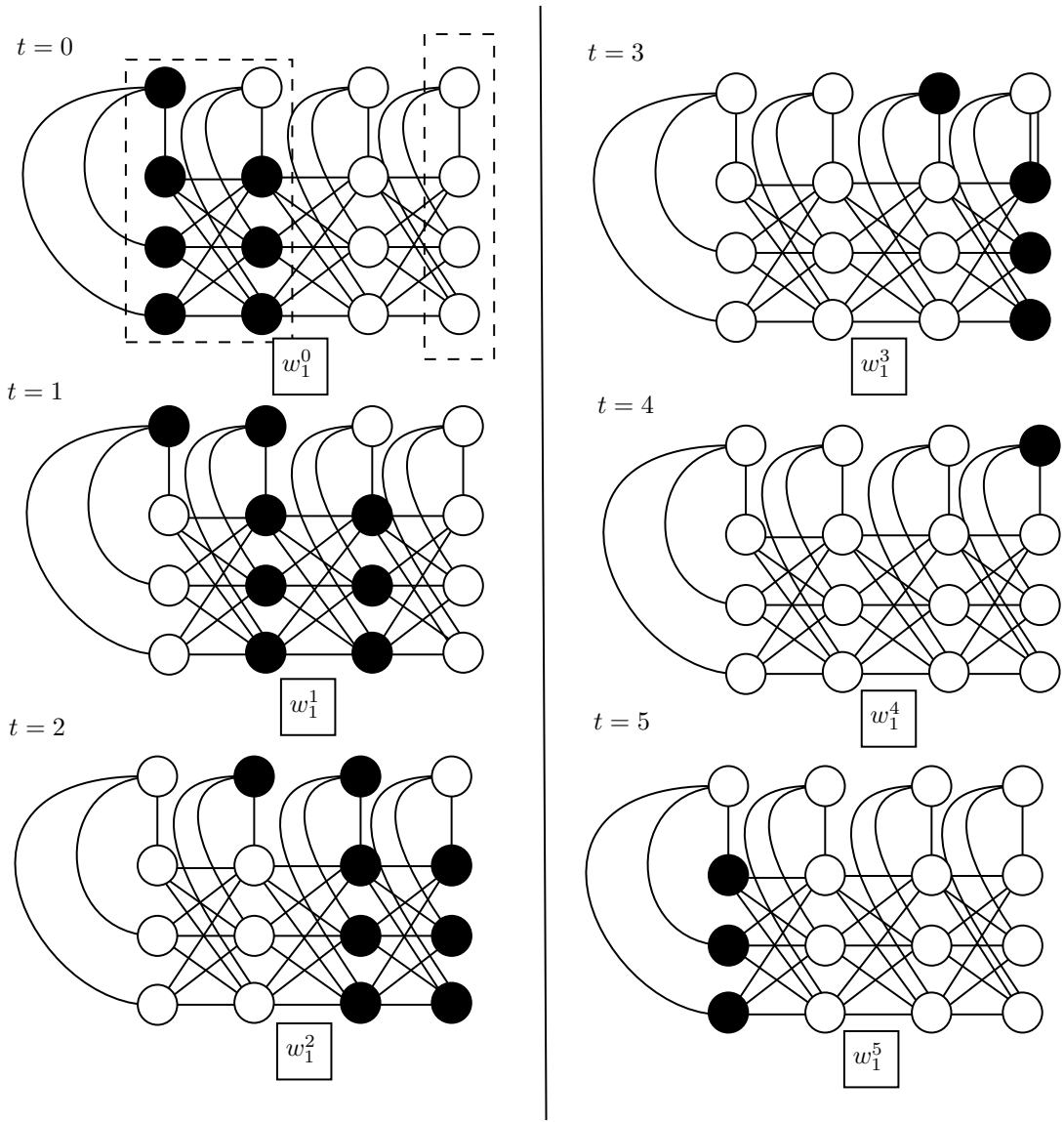


\centering
\tikzset{every picture/.style={line width=0.75pt}} 



\caption{A pseudo-orbit representing a signal on a wire network. Black nodes are in state $1$ and white nodes are in state $0$. The configuration at each time step is represented by the notation $w^t_i, t=0,1,2,3,4,5 i=0,1.$ This notation stands for the configuration in time $t$ that codes a signal representing the bit $i.$ Thus, the pseudo-orbit represented is $(w^0_1,w^1_1,w^2_1,w^3_1,w^4_1,,w^5_1).$  In time $t=0$, it is marked in dotted boxes the parts of the networks that do not depend on the local rule of the network.} 
\label{fig:wiregl}
\end{figure}

\begin{figure}
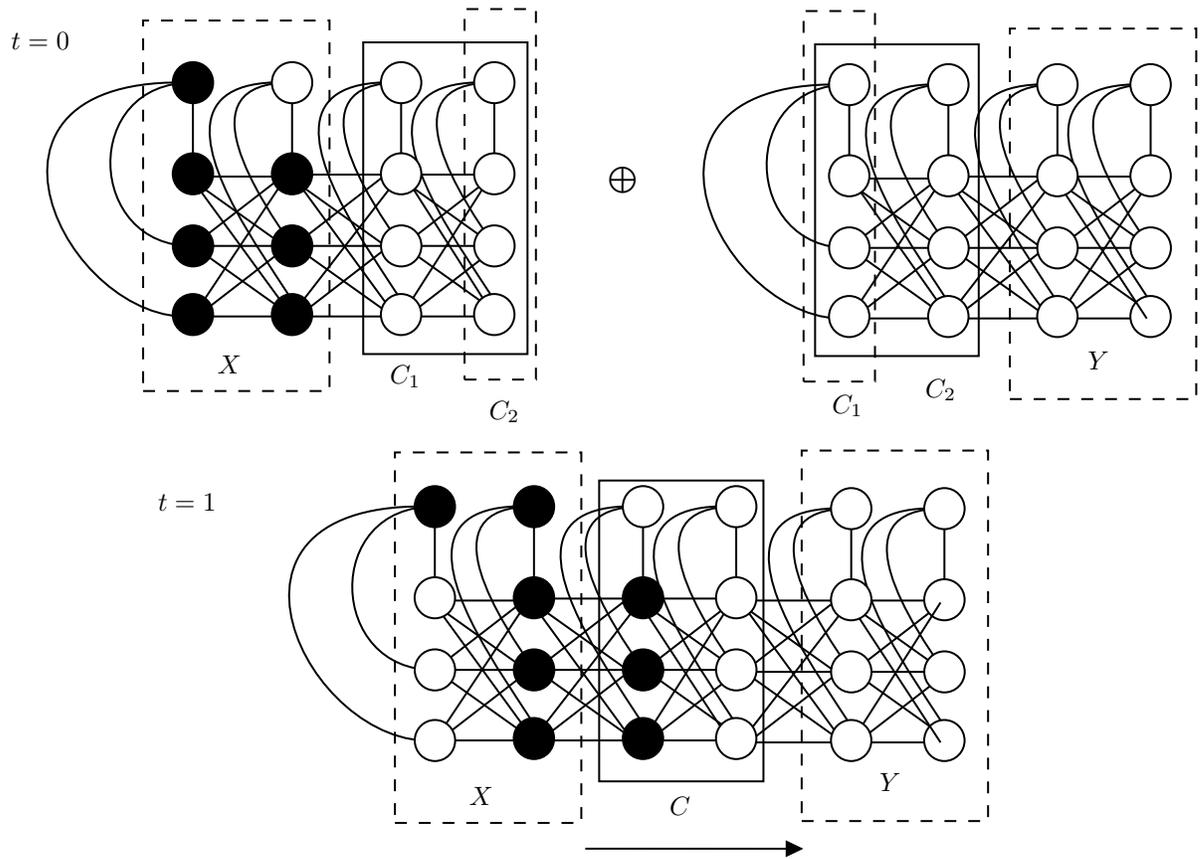

\centering

\tikzset{every picture/.style={line width=0.75pt}} 


\caption{Example of two wire gadgets transmitting a $1$ signal represented as a particular pseudo-orbit. Nodes in black are in state $1$ and nodes in white are in state $0$. Nodes inside dashed boxes represent the zones that are arbitrary for the pseudo-orbit. At time $t=0$ it is shown a $X\cup C_2$-pseudo-orbit for the left gadget and in the right a $Y \cup C_1$-pseudo-orbit.  At time $t=1$ it is shown the result of the glueing of the two gadgets and the resulting $X\cup Y$-pseudo-orbit given by the theorem.}
\label{fig:cablegluegl}
\end{figure}

Observe that in the latter example, the glueing is straightforward not only because the structure of the graph is uniform but due to the fact that the local rule is CSAN. However, even in the case of a CSAN family where the transition rules are determined by a labeled non-directed graph, the result of a glueing operation has no reason to belong to the family because the symmetry of the interaction graph might be broken (see Figure~\ref{fig:nonsymglueing}).
\begin{figure}
  \centering
  \begin{tikzpicture}
    \draw (2,0) node {$+$};
    \draw (5,0) node {$=$};
    \tikzstyle{every node}=[fill,shape=circle,inner sep=1pt]
    \coordinate[label = above:$\phi_1(C_1)$] (A) at (1,1);
    \coordinate[label = below:$\phi_1(C_2)$] (B) at (1,-1);
    \draw (0,0) node {} -- (A) node {} -- (B) node {} -- cycle;
    \coordinate[label = above:$\phi_2(C_1)$] (C) at (3,1);
    \coordinate[label = below:$\phi_2(C_2)$] (D) at (3,-1);
    \draw (4,0) node {} -- (C) node {} -- (D) node {} -- cycle;
    \draw (6,0) node {} -- (7,1) node {} -- (7,-1) node {} -- (8,0) node {};
    \draw[blue,-triangle 45] (7,1) -- (8,0);
    \draw[blue,-triangle 45] (7,-1) -- (6,0);
  \end{tikzpicture}
  \caption{Symmetry breaking in interaction graph after a glueing operation. Arrows indicate influence of a node (source) on another (target), edges without arrow indicates bi-directional influence. Here $C$ consists in two nodes only.}
  \label{fig:nonsymglueing}
\end{figure}
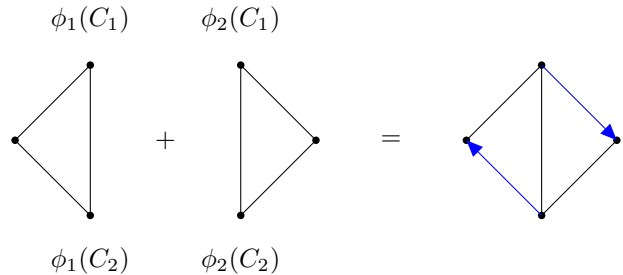
The following lemma gives a sufficient condition in graph theoretical terms for glueing within a concrete family of automata networks. Intuitively, it consists in asking that, in each graph, all the connections of one half of the dowel to the rest of the graph goes through the other half of the dowel. Here the wooden dowel metaphor is particularly relevant: when considering a single piece of wood with the dowel inserted inside, one half of the dowel is 'inside' (touches the piece of wood), the other half is 'outside' (not touching the piece of wood); then, when the two pieces are attached, each position in the wood assembly is locally either like in one piece of wood with the dowel inserted or like in the other one with the dowel inserted.

\begin{lemma}[Glueing for CSAN]
  \label{lem:concreteglueing}
  Let $(G_1,\lambda_1,\rho_1)$ and $(G_2,\lambda_2,\rho_2)$ be two CSAN from the same CSAN family $\mathcal{F}$ where $G_1$ and $G_2$ are disjoint and $F_1$ and $F_2$ are the associated global maps. Taking again the notations of Definition~\ref{def:glueing}, if the following conditions hold
  \begin{itemize}
  \item the labeled graphs induced by $\varphi_1(C)$ and $\varphi_2(C)$ in $G_1$ and $G_2$ are the same (using the identification ${\varphi_1(v)=\varphi_2(v)}$)
  \item ${N_{G_1}(\varphi_1(C_2))\subseteq \varphi_1(C)}$ 
  \item ${N_{G_2}(\varphi_2(C_1))\subseteq \varphi_2(C)}$
  \end{itemize}
  then the glueing ${F' = \glueing{F_1}{F_2}{C_1}{C_2}{\phi_1}{\phi_2}}$ can be defined as the CSAN on graph $G'=(V',E')$ where $V'$ is as in Definition~\ref{def:glueing}
  and each node ${v\in V'}$ has the same label and same labeled neighborhood as either a node of ${(G_1,\lambda_1,\rho_1)}$ or a node of ${(G_2,\lambda_2,\rho_2)}$. In particular $F'$ belongs to ${\mathcal{F}}$.
\end{lemma}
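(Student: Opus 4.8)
The plan is to show that the abstract glueing $F' = \glueing{F_1}{F_2}{C_1}{C_2}{\phi_1}{\phi_2}$ from Definition~\ref{def:glueing} coincides with a concretely-defined CSAN on the graph $G'$, and then check that this CSAN lies in $\mathcal{F}$ because the labeling constraints of a CSAN family are local (as stressed after Definition~\ref{def:csan}). First I would describe the labeled graph $G'=(V',E')$: the node set $V'$ is exactly as in Definition~\ref{def:glueing}, and via the map $\alpha:V'\to V_1\cup V_2$ each node of $V'$ is identified with a node of $G_1$ or of $G_2$. I put an edge in $E'$ between $u$ and $v$ whenever $\{\alpha(u),\alpha(v)\}$ is an edge in $G_1$ or in $G_2$; this is well-defined precisely because the first hypothesis says the labeled subgraphs induced by $\varphi_1(C)$ and $\varphi_2(C)$ agree, so the two "copies" of $C$ can be consistently glued. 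I give $v\in V'$ the vertex label $\lambda_1(\alpha(v))$ if $\alpha(v)\in V_1$ and $\lambda_2(\alpha(v))$ if $\alpha(v)\in V_2$ (on $C$ the two must be checked to agree, which again follows from hypothesis one), and I transport edge labels $\rho_i$ the same way.

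The key step is to verify that the closed neighborhood of each $v\in V'$ in $G'$, together with its edge labels, is isomorphic to the closed neighborhood in the corresponding $G_i$. For $v$ with $\alpha(v)\in V_1\setminus\varphi_1(C)$ this is immediate since no new edges touch such a node. For $v$ with $\alpha(v)=\varphi_1(w)\in\varphi_1(C_1)$, the worry is that $v$ might acquire, via $\alpha$, neighbors coming from $G_2$; but any $G_2$-neighbor of $\varphi_2(w)$ lies in $\varphi_2(C)$ by the third hypothesis (using $w\in C_1$, $N_{G_2}(\varphi_2(C_1))\subseteq\varphi_2(C)$), hence is already identified with a $G_1$-neighbor of $\varphi_1(w)$ inside $\varphi_1(C)$ by hypothesis one. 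Symmetrically, for $\alpha(v)\in\varphi_2(C_2)$ we use the second hypothesis $N_{G_1}(\varphi_1(C_2))\subseteq\varphi_1(C)$. So in all cases the labeled neighborhood of $v$ in $G'$ equals that of $\alpha(v)$ in $G_1$ or $G_2$; applying Definition~\ref{def:globalcsan} at node $v$ then gives the same local map as $(F_i)_{\alpha(v)}\circ\rho_i$, which is exactly $F'_v$ from Definition~\ref{def:glueing}. Since for every vertex $v$ of $G'$ the pair $(\lambda(v),\rho(E_v))$ equals a pair realized at some vertex of $(G_1,\lambda_1,\rho_1)$ or $(G_2,\lambda_2,\rho_2)$, and both belong to $\mathcal{F}$, each such pair lies in the constraint set $\mathcal{C}$, so $(G',\lambda,\rho)\in\mathcal{F}$.

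I expect the main obstacle to be bookkeeping rather than conceptual: one must be careful about the identification $\varphi_1(v)=\varphi_2(v)$ on $C$ and about the fact that edges of $G'$ may join a node of $V_1\setminus\varphi_1(C)$ to a node of $C$ (these behave fine) but never directly a node of $V_1\setminus\varphi_1(C)$ to a node of $V_2\setminus\varphi_2(C)$ — which is exactly what the neighborhood-containment hypotheses guarantee, and is the crux of why the interaction graph does not break symmetry here (contrast Figure~\ref{fig:nonsymglueing}). A secondary subtlety is checking that the multiset seen in a neighborhood (as in Definition~\ref{def:globalcsan}) is genuinely the same after glueing; this is again a direct consequence of the labeled-neighborhood isomorphism established above, combined with the definition of $\rho_i$ in Definition~\ref{def:glueing} which simply reads off states through $\varphi_i^{-1}$ on $\varphi_i(C)$ and the identity elsewhere. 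Once these identifications are pinned down, the statement "$F'$ belongs to $\mathcal{F}$" is just the locality of CSAN-family constraints applied vertex by vertex.
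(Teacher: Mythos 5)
Your proposal is correct and follows essentially the same route as the paper's proof: build the labeled graph $G'$ by transporting labels through $\alpha$, use the two neighborhood-containment hypotheses to show that no edge of $G'$ directly joins $V_1\setminus\varphi_1(C)$ to $V_2\setminus\varphi_2(C)$ (so every node's labeled neighborhood is a copy of one from $G_1$ or $G_2$, and the symmetry of dependencies is preserved), and conclude by the locality of the CSAN constraint set $\mathcal{C}$. The only cosmetic difference is that the paper phrases the key step as preservation of symmetric dependencies via the maps $\beta_i$ while you phrase it as a labeled-neighborhood isomorphism; these are the same argument.
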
 
\begin{proof} 
  Let us define ${\beta_i : V_i\to V'}$ by
  \[\beta_i(v) =
    \begin{cases}
      \phi_i^{-1}(v) &\text{ if }v\in\phi_i(C),\\
      v&\text{ else.}
    \end{cases}
  \]
  Fix ${i=1}$ or $2$. According to Definition~\ref{def:glueing}, if ${v\in V'}$ is such that ${\alpha(v) \in  V_i}$ then ${F'_v=(F_i)_{\alpha(v)}\circ\rho_i}$. By definition of CSAN, this means that for any ${x\in Q^{V'}}$ we have ${F'_v(x) = \psi_{i,\alpha(v)}(x_{|\beta(N_{G_i}(\alpha(v)))})}$ where $\psi_{i,\alpha(v)}$ is a map depending only on the labeled neighborhood of $\alpha(v)$ in $G_i$ as in Definition~\ref{def:csan}. So the dependencies of $v$ in $F'$ are in one-to-one correspondence through $\beta$ with the neighborhood of $\alpha(v)$ in $G_i$. They key observation is that the symmetry of dependencies is preserved, formally for any ${v'\in \beta(N_{G_i}(\alpha(v)))}$:
  \begin{itemize}
  \item either $\alpha(v')\in V_i$ in which case the dependency of $v'$ on $v$ (in map $\psi_{i,\alpha(v')}$) is the same as the dependency of $v$ on $v'$ (in map $\psi_{i,\alpha(v)}$), and both are determined by the undirected labeled edge ${\{\alpha(v),\alpha(v')\}}$ of $G_i$;
  \item or $\alpha(v')\not\in V_{i}$ and in this case necessarily ${v\in C_i}$ and ${v'\in C_{3-i}}$ (because ${N_{G_i}(V_i\setminus\varphi_i(C))\cap\varphi_i(C)\subseteq \varphi_i(C_i)}$ from the hypothesis), so the dependency of $v'$ on $v$ is the same as the dependency of $v$ on $v'$ because the labeled graphs induced by $\phi_1(C)$ and $\phi_2(C)$ in $G_1$ and $G_2$ are the same.
  \end{itemize}

  Concretely, $F'$ is a CSAN that can be defined on graph ${G'=(V',E_1'\cup E_2'\cup E(C))}$ with
  \[E_i' = E(V_i \setminus \varphi_i(C)) \cup \{(u,v_i): u \in V(C_i), v_i \in (V_i \setminus \varphi_i(C)),  (\varphi_i(u),v_i) \in  E_i\},\]
  and labels as follows:
  \begin{itemize}
  \item on $E(C)$ as in both $G_1$ and $G_2$ (which agree through maps $\phi_1$ and $\phi_2$ on $C$), 
  \item on ${E(V_i \setminus \varphi_i(C))}$ as in $G_i$,
  \item for each $u \in V(C_i), v_i \in (V_i \setminus \varphi_i(C))$ such that  $(\varphi_i(u),v_i) \in  E_i$, edge $(u,v_i)$ has same label as $(\varphi_i(u),v_i)$.
  \end{itemize}
  Since any CSAN families (Definition~\ref{def:csan}) is entirely based on local constraints on labels (vertex label plus set of labels of the incident edges), we deduce that $F'$ is in $\mathcal{F}$. 
\end{proof}

\begin{figure}\label{fig:concreteglueing}
\centering
\begin{tikzpicture}[x=0.75pt,y=0.75pt,yscale=-1,xscale=1]

\draw    (110,239) -- (156.5,239) ;
\draw    (300.75,95.25) -- (263.75,123) ;
\draw    (263.75,123) -- (291.5,160) ;
\draw    (209,123) -- (255.5,123) ;
\draw  [fill={rgb, 255:red, 255; green, 255; blue, 255 }  ,fill opacity=1 ] (192.5,123) .. controls (192.5,118.44) and (196.19,114.75) .. (200.75,114.75) .. controls (205.31,114.75) and (209,118.44) .. (209,123) .. controls (209,127.56) and (205.31,131.25) .. (200.75,131.25) .. controls (196.19,131.25) and (192.5,127.56) .. (192.5,123) -- cycle ;
\draw  [fill={rgb, 255:red, 255; green, 255; blue, 255 }  ,fill opacity=1 ] (255.5,123) .. controls (255.5,118.44) and (259.19,114.75) .. (263.75,114.75) .. controls (268.31,114.75) and (272,118.44) .. (272,123) .. controls (272,127.56) and (268.31,131.25) .. (263.75,131.25) .. controls (259.19,131.25) and (255.5,127.56) .. (255.5,123) -- cycle ;
\draw  [fill={rgb, 255:red, 255; green, 255; blue, 255 }  ,fill opacity=1 ] (283.25,160) .. controls (283.25,155.44) and (286.94,151.75) .. (291.5,151.75) .. controls (296.06,151.75) and (299.75,155.44) .. (299.75,160) .. controls (299.75,164.56) and (296.06,168.25) .. (291.5,168.25) .. controls (286.94,168.25) and (283.25,164.56) .. (283.25,160) -- cycle ;
\draw  [fill={rgb, 255:red, 255; green, 255; blue, 255 }  ,fill opacity=1 ] (292.5,95.25) .. controls (292.5,90.69) and (296.19,87) .. (300.75,87) .. controls (305.31,87) and (309,90.69) .. (309,95.25) .. controls (309,99.81) and (305.31,103.5) .. (300.75,103.5) .. controls (296.19,103.5) and (292.5,99.81) .. (292.5,95.25) -- cycle ;

\draw    (64,123) -- (110.5,123) ;
\draw  [fill={rgb, 255:red, 255; green, 255; blue, 255 }  ,fill opacity=1 ] (55.75,123) .. controls (55.75,118.44) and (59.44,114.75) .. (64,114.75) .. controls (68.56,114.75) and (72.25,118.44) .. (72.25,123) .. controls (72.25,127.56) and (68.56,131.25) .. (64,131.25) .. controls (59.44,131.25) and (55.75,127.56) .. (55.75,123) -- cycle ;
\draw    (110.5,123) -- (157,123) ;
\draw  [fill={rgb, 255:red, 255; green, 255; blue, 255 }  ,fill opacity=1 ] (102.25,123) .. controls (102.25,118.44) and (105.94,114.75) .. (110.5,114.75) .. controls (115.06,114.75) and (118.75,118.44) .. (118.75,123) .. controls (118.75,127.56) and (115.06,131.25) .. (110.5,131.25) .. controls (105.94,131.25) and (102.25,127.56) .. (102.25,123) -- cycle ;
\draw  [fill={rgb, 255:red, 255; green, 255; blue, 255 }  ,fill opacity=1 ] (148.75,123) .. controls (148.75,118.44) and (152.44,114.75) .. (157,114.75) .. controls (161.56,114.75) and (165.25,118.44) .. (165.25,123) .. controls (165.25,127.56) and (161.56,131.25) .. (157,131.25) .. controls (152.44,131.25) and (148.75,127.56) .. (148.75,123) -- cycle ;

\draw    (132.95,41.71) -- (209.72,41.71) ;
\draw  [fill={rgb, 255:red, 255; green, 255; blue, 255 }  ,fill opacity=1 ] (119.33,41.71) .. controls (119.33,33.82) and (125.43,27.42) .. (132.95,27.42) .. controls (140.48,27.42) and (146.57,33.82) .. (146.57,41.71) .. controls (146.57,49.6) and (140.48,56) .. (132.95,56) .. controls (125.43,56) and (119.33,49.6) .. (119.33,41.71) -- cycle ;
\draw  [fill={rgb, 255:red, 255; green, 255; blue, 255 }  ,fill opacity=1 ] (196.1,41.71) .. controls (196.1,33.82) and (202.2,27.42) .. (209.72,27.42) .. controls (217.24,27.42) and (223.34,33.82) .. (223.34,41.71) .. controls (223.34,49.6) and (217.24,56) .. (209.72,56) .. controls (202.2,56) and (196.1,49.6) .. (196.1,41.71) -- cycle ;
\draw    (257.75,211.25) -- (220.75,239) ;
\draw    (220.75,239) -- (248.5,276) ;
\draw    (166,239) -- (212.5,239) ;
\draw  [fill={rgb, 255:red, 255; green, 255; blue, 255 }  ,fill opacity=1 ] (149.5,239) .. controls (149.5,234.44) and (153.19,230.75) .. (157.75,230.75) .. controls (162.31,230.75) and (166,234.44) .. (166,239) .. controls (166,243.56) and (162.31,247.25) .. (157.75,247.25) .. controls (153.19,247.25) and (149.5,243.56) .. (149.5,239) -- cycle ;
\draw  [fill={rgb, 255:red, 255; green, 255; blue, 255 }  ,fill opacity=1 ] (212.5,239) .. controls (212.5,234.44) and (216.19,230.75) .. (220.75,230.75) .. controls (225.31,230.75) and (229,234.44) .. (229,239) .. controls (229,243.56) and (225.31,247.25) .. (220.75,247.25) .. controls (216.19,247.25) and (212.5,243.56) .. (212.5,239) -- cycle ;
\draw  [fill={rgb, 255:red, 255; green, 255; blue, 255 }  ,fill opacity=1 ] (240.25,276) .. controls (240.25,271.44) and (243.94,267.75) .. (248.5,267.75) .. controls (253.06,267.75) and (256.75,271.44) .. (256.75,276) .. controls (256.75,280.56) and (253.06,284.25) .. (248.5,284.25) .. controls (243.94,284.25) and (240.25,280.56) .. (240.25,276) -- cycle ;
\draw  [fill={rgb, 255:red, 255; green, 255; blue, 255 }  ,fill opacity=1 ] (249.5,211.25) .. controls (249.5,206.69) and (253.19,203) .. (257.75,203) .. controls (262.31,203) and (266,206.69) .. (266,211.25) .. controls (266,215.81) and (262.31,219.5) .. (257.75,219.5) .. controls (253.19,219.5) and (249.5,215.81) .. (249.5,211.25) -- cycle ;

\draw  [fill={rgb, 255:red, 255; green, 255; blue, 255 }  ,fill opacity=1 ] (101.75,239) .. controls (101.75,234.44) and (105.44,230.75) .. (110,230.75) .. controls (114.56,230.75) and (118.25,234.44) .. (118.25,239) .. controls (118.25,243.56) and (114.56,247.25) .. (110,247.25) .. controls (105.44,247.25) and (101.75,243.56) .. (101.75,239) -- cycle ;
\draw  [dash pattern={on 4.5pt off 4.5pt}] (98.75,111.13) -- (168.75,111.13) -- (168.75,134.88) -- (98.75,134.88) -- cycle ;
\draw  [dash pattern={on 4.5pt off 4.5pt}] (192.5,110.75) -- (271.5,110.75) -- (271.5,134.5) -- (192.5,134.5) -- cycle ;
\draw  [dash pattern={on 4.5pt off 4.5pt}]  (163.85,52) -- (138.96,94.41) ;
\draw [shift={(137.44,97)}, rotate = 300.40999999999997] [fill={rgb, 255:red, 0; green, 0; blue, 0 }  ][line width=0.08]  [draw opacity=0] (8.93,-4.29) -- (0,0) -- (8.93,4.29) -- cycle    ;
\draw  [dash pattern={on 4.5pt off 4.5pt}]  (185.12,52.73) -- (209.27,93.69) ;
\draw [shift={(210.79,96.27)}, rotate = 239.48] [fill={rgb, 255:red, 0; green, 0; blue, 0 }  ][line width=0.08]  [draw opacity=0] (8.93,-4.29) -- (0,0) -- (8.93,4.29) -- cycle    ;
\draw    (178.5,139) -- (178.5,171) ;
\draw [shift={(178.5,174)}, rotate = 270] [fill={rgb, 255:red, 0; green, 0; blue, 0 }  ][line width=0.08]  [draw opacity=0] (8.93,-4.29) -- (0,0) -- (8.93,4.29) -- cycle    ;

\draw (59,112.4+8) node [anchor=north west][inner sep=0.75pt]    {$a$};
\draw (105,113.4+5) node [anchor=north west][inner sep=0.75pt]    {$b$};
\draw (152,113.4+8) node [anchor=north west][inner sep=0.75pt]    {$c$};
\draw (196,113.4+5) node [anchor=north west][inner sep=0.75pt]    {$b$};
\draw (259,113.4+8) node [anchor=north west][inner sep=0.75pt]    {$c$};
\draw (295,85.4+5) node [anchor=north west][inner sep=0.75pt]    {$d$};
\draw (287,150.4+8) node [anchor=north west][inner sep=0.75pt]    {$e$};
\draw (244,266.4+8) node [anchor=north west][inner sep=0.75pt]    {$e$};
\draw (252,201.4+5) node [anchor=north west][inner sep=0.75pt]    {$d$};
\draw (216,229.4+8) node [anchor=north west][inner sep=0.75pt]    {$c$};
\draw (153,229.4+5) node [anchor=north west][inner sep=0.75pt]    {$b$};
\draw (124.01,28.42+8) node [anchor=north west][inner sep=0.75pt]    {$C_{1}$};
\draw (201.61,29.65+8) node [anchor=north west][inner sep=0.75pt]    {$C_{2}$};
\draw (105,228.4+8) node [anchor=north west][inner sep=0.75pt]    {$a$};
\draw (41,77.4) node [anchor=north west][inner sep=0.75pt]    {$G_{1}$};
\draw (256,71.4) node [anchor=north west][inner sep=0.75pt]    {$G_{2}$};
\draw (167,196.4) node [anchor=north west][inner sep=0.75pt]    {$G'$};
\draw (168,10.4) node [anchor=north west][inner sep=0.75pt]    {$C$};
\draw (115.94,61.46) node [anchor=north west][inner sep=0.75pt]    {$\varphi _{1}$};
\draw (211.56,62.19) node [anchor=north west][inner sep=0.75pt]    {$\varphi _{2}$};
\end{tikzpicture}

      
        

\caption{Example of a glueing of two compatibles CSAN. The labeling in nodes of $G_1$, $G_2$ and $G'$ shows equalities between local $\lambda$ maps of these three CSAN.}
\end{figure}
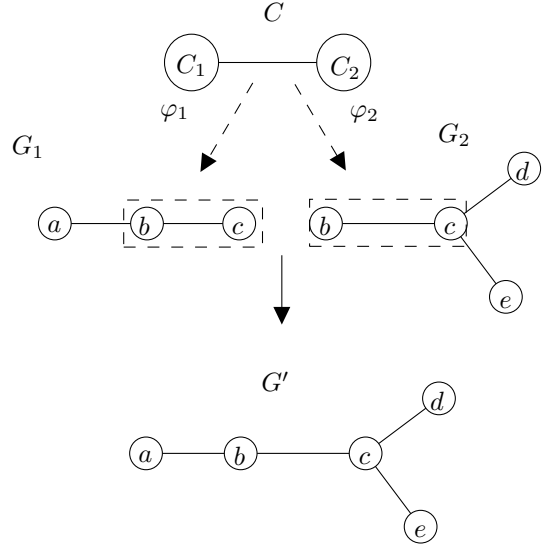

\subsection{$\GG$-gadgets, gadget glueing and simulation of $\GG$-networks}

We now give a precise meaning to the intuitively simple fact that, if a family of automata networks can coherently simulate a set of small building blocks (gates from $\GG$), it should be able to simulate any automata network that can be built out of them ($\GG$-networks).

The key idea here is that gates from $\GG$ will be represented by networks of the family called $\GG$-gadgets, and the wiring between gates to obtain a 
$\GG$-network will translate into glueing between $\GG$-gadgets. Following this idea there are two main conditions for the family to simulate any $\GG$-network:
\begin{itemize}
\item the glueing of gadgets should be freely composable inside the family to allow the building of any $\GG$-network;
\item the gadgets corresponding to gates from $\GG$ should correctly and coherently simulate the functional relation between inputs and outputs given by their corresponding gate.
\end{itemize}

For clarity, we separate these conditions in two definitions.


We start by developing a definition for gadget glueing. Recall first that Definition~\ref{def:glueing} relies on the identification of a common dowel in the two networks to be glued. Here, as we want to mimic the wiring of gates which connects inputs to outputs, several copies of a fixed network called \textit{glueing interface} will be identified in each gadget, some of them corresponding to input, and the other ones to outputs. In this context,  the only glueing operations we will use are those where some output copies of the interface in a gadget $A$ are glued on input copies of the interface in a gadget $B$ and some input copies of the interface in $A$ are glued on output copies of the interface in $B$. Then, the global dowel used to formally apply Definition~\ref{def:glueing} is a disjoint union of the selected input/output copies of the interface. Figure~\ref{fig:gadgetglue} illustrates with the notations of the following Definition.

\newcommand\QF{Q_{\mathcal F}}

\begin{definition}[Glueing interface and gadgets]\label{def:gadgets-glueing}
  Let ${C=C_i\cup C_o}$ be a fixed set partitioned into two sets.
  A \emph{gadget} with \textit{glueing interface} ${C=C_i\cup C_o}$ is an automata network ${F:Q^{V_F}\rightarrow Q^{V_F}}$ together with two collections of injective maps ${\phi_{F,k}^i:C\rightarrow V_F}$ for ${k\in I(F)}$ and ${\phi_{F,k}^o:C\rightarrow V_F}$ for ${k\in O(F)}$ whose images in $V_F$ are pairwise disjoint and where $I(F)$ and $O(F)$ are disjoint sets which might be empty.

  Given two disjoint gadgets ${(F,(\phi_{F,k}^i),(\phi_{F,k}^o))}$ and ${(G,(\phi_{G,k}^i),(\phi_{G,k}^o))}$ with same alphabet and interface ${C=C_i\cup C_o}$, a \emph{gadget glueing} is a glueing of the form ${H = \glueing{F}{G}{C_F}{C_G}{\phi_F}{\phi_G}}$ defined as follows: 
  \begin{itemize}
  \item a choice of a set $A$ of inputs from $F$ and outputs from $G$ given by injective maps ${\sigma_F : A \to I(F)}$ and ${\sigma_G: A\to O(G)}$,
  \item a choice of a set $B$ of outputs from $F$ and inputs from $G$ given by injective maps ${\tau_F : B \to O(F)}$ and ${\tau_G: B\to I(G)}$ (the set $B$ is disjoint from $A$),
  \item ${C_F}$ is a disjoint union of $|A|$ copies of $C_i$, and $|B|$ copies of $C_o$: ${C_F = A\times C_i\cup B\times C_o}$,
  \item ${C_G}$ is a disjoint union of $|A|$ copies of $C_o$, and $|B|$ copies of $C_i$: ${C_G = A\times C_o\cup B\times C_i}$,
  \item ${\phi_F:C_F\cup C_G\to V_F}$ is such that ${\phi_F(a,c) = \phi_{F,\sigma_F(a)}^i(c)}$ for ${a\in A}$ and ${c\in C}$, and ${\phi_F(b,c) = \phi_{F,\tau_F(b)}^o(c)}$ for ${b\in B}$ and ${c\in C}$,
  \item ${\phi_G:C_F\cup C_G\to V_G}$ is such that ${\phi_G(a,c) = \phi_{G,\sigma_G(a)}^o(c)}$ for ${a\in A}$ and ${c\in C}$, and ${\phi_G(b,c) = \phi_{G,\tau_G(b)}^i(c)}$ for ${b\in B}$ and ${c\in C}$. 
  \end{itemize}
  The resulting network $H$ is a gadget with same alphabet and same interface with ${I(H) = I(F)\setminus\sigma_F(A) \cup I(G)\setminus\tau_G(B)}$ and ${O(H) = O(F)\setminus\tau_F(B)\cup O(F)\setminus\sigma_G(A)}$ and ${\phi_{H,k}^i}$ is $\phi_{F,k}^i$ when $k\in I(F)$ and ${\phi_{G,k}^i}$ when ${k\in I(G)}$, and ${\phi_{H,k}^o}$ is ${\phi_{F,k}^o}$ when ${k\in O(F)}$ and ${\phi_{G,k}^o}$ when ${k\in O(G)}$.

  Given a set of gadgets $X$ with same alphabet and interface, its closure by gadget glueing is the closure of $X$ by the following operations:
  \begin{itemize}
  \item add a disjoint copy of some gadget from the current set,
  \item add the disjoint union of two gadgets from the current set,
  \item add a gadget glueing of two gadgets from the current set.
  \end{itemize}
\end{definition}

\begin{remark}\label{rem:gadgets-glueing}
  The representation of the result of a gadget glueing can be easily computed from the two gadgets $F$ and $G$ and the choices of inputs/outputs given by maps $\sigma_F$, $\sigma_G$, $\tau_F$ and $\tau_G$: precisely, by definition of glueing (Definition~\ref{def:glueing}) the local map of each node of the result automata network is either a local map of $F$ (when in $V_F\setminus \phi_F(C_G)$ or in $C_F$) or a local map of $G$ (when in $V_G\setminus \phi_G(C_F)$ or in $C_G$). Note also that the closure by gadget glueing of a finite set of gadgets $X$ is always a set of automata networks of bounded degree.
\end{remark}

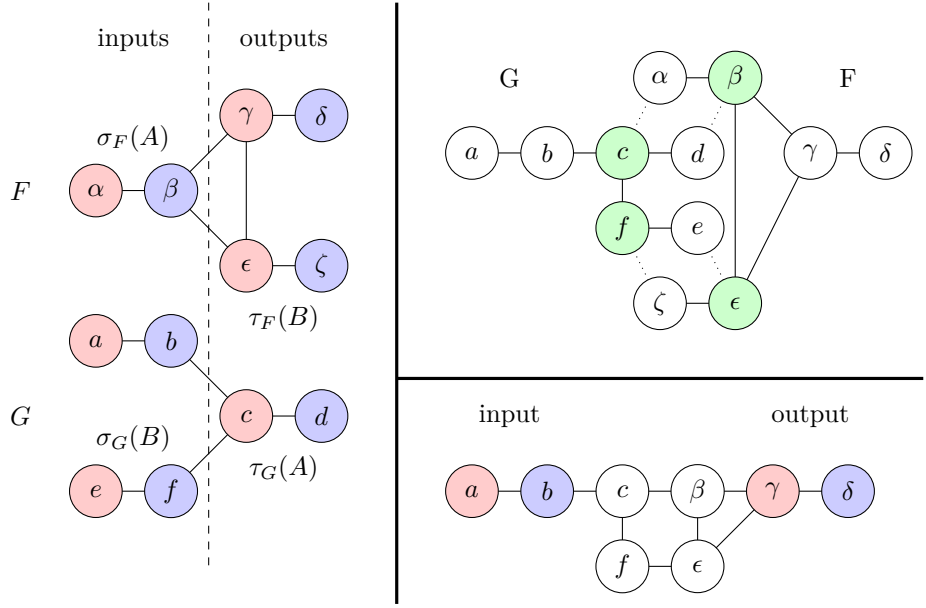
\begin{figure}
  \begin{center}
    \begin{tikzpicture}[
      mnode/.style={circle,fill=white,draw=black,minimum size=.7cm},
      minode/.style={circle,fill=red!20!white,draw=black,minimum size=.7cm},
      mvnode/.style={circle,fill=green!20!white,draw=black,minimum size=.7cm},
      monode/.style={circle,fill=blue!20!white,draw=black,minimum size=.7cm}
      ]
      \draw[dashed] (1.5,2.5)--(1.5,-5);
      \draw (.5,2) node {inputs};
      \draw (2.5,2) node {outputs};
      \draw (1,0) -- (2,-1) -- (2,1);
      \draw (-1,0) node {$F$};
      \draw (.5,.7) node {$\sigma_F(A)$};
      \draw (2.5,-1.7) node {$\tau_F(B)$};
      \draw (0,0) node[minode] {$\alpha$} --  (1,0) node[monode] {$\beta$} -- (2,1) node[minode] {$\gamma$} -- (3,1) node[monode] {$\delta$};
      \draw (2,-1) node[minode] {$\epsilon$} -- (3,-1) node[monode] {$\zeta$};
      \begin{scope}[shift={(0,-2)}]
        \draw (1,-2) -- (2,-1);
      \draw (.5,-1.3) node {$\sigma_G(B)$};
      \draw (2.5,-1.7) node {$\tau_G(A)$};
        \draw (-1,-1) node {$G$};
        \draw (0,0) node[minode] {$a$} -- (1,0) node[monode] {$b$} -- (2,-1) node[minode] {$c$} -- (3,-1) node[monode] {$d$};
        \draw (0,-2) node[minode] {$e$} -- (1,-2) node[monode] {$f$};
      \end{scope}
      \begin{scope}[shift={(7,.5)}]
        \draw (-1.5,1) node {G};
        \draw (0,0) -- (0,-1);
        \draw (-1,0) -- (0,0);
        \draw[dotted] (0,0) -- (.5,1);
        \draw[dotted] (1,0) -- (1.5,1);
        \draw[dotted] (0,-1) -- (.5,-2);
        \draw[dotted] (1,-1) -- (1.5,-2);
        \draw (0,0) node[mvnode] {$c$} -- (1,0) node[mnode] {$d$};
        \draw (0,-1) node[mvnode] {$f$} -- (1,-1) node[mnode] {$e$};
        \draw (-2,0) node[mnode] {$a$} -- (-1,0) node[mnode] {$b$};
      \end{scope}
      \begin{scope}[shift={(7.5,.5)}]
        \draw (2.5,1) node {F};
        \draw (1,1) -- (1,-2);
        \draw (1,-2) -- (2,0);
        \draw (0,1) node[mnode] {$\alpha$} -- (1,1) node[mvnode] {$\beta$} -- (2,0) node[mnode] {$\gamma$} -- (3,0) node[mnode] {$\delta$};
        \draw (0,-2) node[mnode] {$\zeta$} -- (1,-2) node[mvnode] {$\epsilon$};
      \end{scope}
      \draw[very thick] (4,-2.5) -- (11,-2.5);
      \draw[very thick] (4,2.5) -- (4,-5.5);
      \begin{scope}[shift={(7,-4)}]
        \draw (-1.5,1) node {input};
        \draw (2.5,1) node {output};
        \draw (0,0) -- (0,-1);
        \draw (1,0) -- (1,-1);
        \draw (1,-1) -- (2,0);
        \draw (-1,0) -- (0,0);
        \draw (0,0) node[mnode] {$c$} -- (1,0) node[mnode] {$\beta$} -- (2,0) node[minode] {$\gamma$} -- (3,0) node[monode] {$\delta$};
        \draw (0,-1) node[mnode] {$f$} -- (1,-1) node[mnode] {$\epsilon$};
        \draw (-2,0) node[minode] {$a$} -- (-1,0) node[monode] {$b$};
      \end{scope}
    \end{tikzpicture}
  \end{center}
  \caption{\label{fig:gadgetglue}Gadget glueing as in Definition~\ref{def:gadgets-glueing}. On the left, two gadgets with interface ${C=C_i\cup C_o}$ where $C_i$ part in each copy of the interface dowel is in red and $C_o$ part in blue. The gadget glueing is done with input ${\sigma_F(A)}$ on output ${\tau_G(A)}$ (here $A$ is a singleton) and output ${\tau_F(B)}$ on input ${\tau_G(B)}$ ($B$ is also a singleton). On the upper right, a representation of the global glueing process where nodes in green are those in the copy of $C_F$ in $F$ or in the copy $C_G$ in $G$; dotted links show the bijection between the embeddings of ${C=C_F\cup C_G}$ into $V_F$ and $V_G$ via maps $\phi_F$ and $\phi_G$. On the lower right the resulting gadget with the same interface ${C=C_i\cup C_o}$ as the two initial gadgets.}
\end{figure}

Lemma~\ref{lem:concreteglueing} gives sufficient conditions on a set of gadgets to have its closure by gadget glueing contained in a CSAN family.

\begin{lemma}\label{lem:csan-gadgets-glueing-closure}
  Fix some alphabet $Q$ and some glueing interface ${C=C_i\cup C_o}$ and some CSAN family $\mathcal{F}$.
  Let ${(G_n,\lambda_n,\rho_n)}$ for ${n\in S}$ be a set of CSAN belonging to $\mathcal{F}$ with associated global maps $F_n$. Let ${\phi_{F_n,k}^i}$ for ${k\in I(F_n)}$ and ${\phi_{F_n,k}^o}$ for ${k\in O(F_n)}$ be maps as in Definition~\ref{def:gadgets-glueing} so that ${(F_n,(\phi_{F_n,k}^i),(\phi_{F_n,k}^o))}$ is a gadget with interface ${C=C_i\cup C_o}$. Denote by $X$ the set of such gadgets. If the following conditions hold:
  \begin{itemize}
  \item the labeled graphs induced by ${\phi_{F_n,k}^i(C)}$ and by ${\phi_{F_n,k}^o(C)}$ in $G_n$ are all the same for all $n$ and $k$ with the identification of vertices given by the $\phi_{\ast,\ast}^\ast$ maps,
  \item ${N_{G_n}(\phi_{F_n,k}^i(C_o))\subseteq \phi_{F_n,k}^i(C)}$ for all ${n\in S}$ and all ${k\in I(F_n)}$,
  \item ${N_{G_n}(\phi_{F_n,k}^o(C_i))\subseteq \phi_{F_n,k}^o(C)}$ for all ${n\in S}$ and all ${k\in O(F_n)}$,
  \end{itemize}
  then the closure by gadget glueing of $X$ is included in $\mathcal{F}$.
\end{lemma}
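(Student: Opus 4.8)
The plan is to prove Lemma~\ref{lem:csan-gadgets-glueing-closure} by induction on the construction of the closure by gadget glueing, using Lemma~\ref{lem:concreteglueing} at each glueing step. First I would set up the induction: the base case is that each gadget $(F_n,(\phi_{F_n,k}^i),(\phi_{F_n,k}^o))$ in $X$ is, by hypothesis, a CSAN belonging to $\mathcal{F}$. The inductive step has three cases corresponding to the three closure operations of Definition~\ref{def:gadgets-glueing}: taking a disjoint copy of a gadget already in the set (trivial, since the CSAN family constraints are local and a disjoint copy of a CSAN in $\mathcal{F}$ is still in $\mathcal{F}$); taking the disjoint union of two gadgets already in the set (also immediate, as disjoint union does not change any labeled neighborhood and the defining constraints $(\lambda(v),\rho(E_v))\in\mathcal{C}$ of Definition~\ref{def:csan} are checked vertex by vertex); and the genuinely interesting case, a gadget glueing of two gadgets $F$ and $G$ already known to be CSAN in $\mathcal{F}$.

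For the glueing case, the key is to recognize that a gadget glueing as in Definition~\ref{def:gadgets-glueing} is literally a glueing in the sense of Definition~\ref{def:glueing} with a specific choice of dowel $C$, partition $C_F,C_G$, and maps $\phi_F,\phi_G$. So I would invoke Lemma~\ref{lem:concreteglueing}, whose conclusion is precisely that the result belongs to $\mathcal{F}$, provided its three graph-theoretic hypotheses hold. The work is therefore to verify those three hypotheses from the three hypotheses of the present lemma together with the (inductively maintained) structure of the interface embeddings. I would first make explicit what extra invariant the induction must carry: that in any gadget $H$ produced by the closure, for each $k\in I(H)$ the labeled subgraph induced by $\phi_{H,k}^i(C)$ is the fixed labeled graph (call it the ``interface graph''), for each $k\in O(H)$ the labeled subgraph induced by $\phi_{H,k}^o(C)$ is that same interface graph, and moreover the ``outside halves'' are isolated in the appropriate sense: $N_H(\phi_{H,k}^i(C_o))\subseteq\phi_{H,k}^i(C)$ for all $k\in I(H)$ and $N_H(\phi_{H,k}^o(C_i))\subseteq\phi_{H,k}^o(C)$ for all $k\in O(H)$. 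The hypotheses of the lemma give exactly this invariant for the base gadgets, and I must check it is preserved by disjoint copy, disjoint union, and glueing.

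Then the verification of Lemma~\ref{lem:concreteglueing}'s three hypotheses for a single glueing step goes as follows. The global dowel is $C_F$ (a disjoint union of copies of $C_i$ indexed by $A$ and copies of $C_o$ indexed by $B$), and it is embedded in $V_F$ via $\phi_F$ and in $V_G$ via $\phi_G$. The first hypothesis of Lemma~\ref{lem:concreteglueing} --- that the labeled graphs induced by $\phi_F(C_F\cup C_G)$ in $G_1=\mathrm{graph}(F)$ and by $\phi_G(C_F\cup C_G)$ in $G_2=\mathrm{graph}(G)$ agree --- follows because, copy by copy, each copy of $C$ inside $F$ (whether it is an input interface $\phi_{F,\sigma_F(a)}^i$ or an output interface $\phi_{F,\tau_F(b)}^o$) induces the fixed interface graph by the induction invariant, and likewise inside $G$; since both sides equal the same fixed interface graph on each copy and the copies are disjoint, the two labeled graphs coincide under the identification built into $\phi_F,\phi_G$. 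The second hypothesis, $N_{G_1}(\varphi_1(C_2))\subseteq\varphi_1(C)$ where $C_2=C_F$ plays the role of the ``$F$-inside'' half: here one must check that for each copy of $C_i$ coming from an input $\sigma_F(a)$ the neighbors in $F$ of $\phi_{F,\sigma_F(a)}^i(C_o)$ --- wait, one must be careful which half of each interface copy is designated inside versus outside; by the definition, on the $A$-indexed copies of $C_i$ the relevant ``inside'' portion is $C_o$ viewed inside that copy, and the invariant $N_F(\phi_{F,k}^i(C_o))\subseteq\phi_{F,k}^i(C)$ gives exactly containment inside $\varphi_1(C)=\phi_F(C_F\cup C_G)$; symmetrically for the $B$-indexed copies of $C_o$ one uses $N_F(\phi_{F,k}^o(C_i))\subseteq\phi_{F,k}^o(C)$. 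The third hypothesis is the mirror statement for $G$, handled identically. Finally I would note that the invariant is maintained for the output/input interface maps of $H$ that are inherited unchanged from $F$ and $G$ (those not consumed by $\sigma,\tau$), since the glueing only modifies local maps near the consumed interfaces and leaves the surviving interface copies and their neighborhoods untouched; this closes the induction and yields the stated conclusion.

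The main obstacle I anticipate is purely bookkeeping: correctly tracking which half ($C_i$ or $C_o$) of each interface copy is the ``dowel half inside the piece of wood'' in the sense of Lemma~\ref{lem:concreteglueing}, since an input interface of $F$ is glued against an output interface of $G$, so the roles of $C_i$ and $C_o$ swap between the two sides, and the partition $C_F = A\times C_i\cup B\times C_o$ versus $C_G = A\times C_o\cup B\times C_i$ encodes precisely this swap. Making sure the containment hypotheses line up with the right halves on the right sides --- rather than any genuine mathematical difficulty --- is where care is needed; once the invariant is stated with the correct orientation, every verification reduces to a one-line application of the corresponding hypothesis copy by copy.
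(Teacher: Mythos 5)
Your proposal is correct and follows essentially the same route as the paper's proof: reduce each gadget glueing to Lemma~\ref{lem:concreteglueing} by observing that the global dowel is a disjoint union of interface copies embedded via the $\phi_{\ast,\ast}^{\ast}$ maps, check that its three hypotheses follow copy by copy from the lemma's three hypotheses, and note that the resulting gadget again satisfies those hypotheses so the induction closes (the paper states this last step in one line where you spell out the invariant explicitly). Your mid-proof worry about which half of each interface copy is the ``outside'' half is resolved exactly as you say by the partition $C_F=A\times C_i\cup B\times C_o$ versus $C_G=A\times C_o\cup B\times C_i$; only your passing identification ``$C_2=C_F$'' is a slip (it should be $C_G$), but the verification you actually carry out uses the correct halves.
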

\begin{proof}
  Consider first the gadget glueing $H$ of two gadgets $F_{n}$ and $F_{n'}$ from $X$. Following Definition~\ref{def:gadgets-glueing}, the global dowel ${C_{F_n}\cup C_{F_{n'}}}$ used in such a glueing is a disjoint union of copies of $C$, and its embedding $\phi_{F_n}$ in $G_n$ (resp. $\phi_{F_{n'}}$ in $G_{n'}$) is a disjoint union of maps $\phi_{F_n,\ast}^\ast$ (resp. ${\phi_{F_{n'},\ast}^\ast}$). Therefore the three conditions of Lemma~\ref{lem:concreteglueing} follow from the three conditions of the hypothesis on gadgets from $X$ and we deduce that $H$ belongs to family $\mathcal{F}$. Moreover, it is clear that gadget $H$ then also verifies the three conditions from the hypothesis, and adding a copy of any gadget to the set also verifies the conditions. We deduce that the closure by gadget glueing of $X$ is included in ${\mathcal{F}}$.
   
\end{proof}

The second key aspect to have a coherent set $X$ of $\GG$-gadgets is of dynamical nature: there must exists a collection of pseudo-orbits on each gadget satisfying suitable conditions to permit application of Lemma~\ref{lem:pseudo-orbit-glueing} for any gadget glueing in the closure of $X$; moreover, these pseudo-orbits must simulate via an appropriate coding the input/output relations of each gate ${g\in\GG}$ in the corresponding gadget. To obtain this, we rely on a standard set of traces on the glueing interface that must be respected on any copy of it in any gadget.

\begin{definition}[Coherent $\GG$-gadgets]\label{def:coherent-gadgets}
  Let $\GG$ be any set of finite maps over alphabet $Q$ and let ${\mathcal F}$ be any set of abstract automata networks over alphabet $\QF$.
  We say  ${\mathcal F}$ has \emph{coherent $\GG$-gadgets} if there exists:
    \begin{itemize}
    \item a unique glueing interface ${C=C_i\cup C_o}$, 
    \item a set $X$ of gadgets ${(F_g,(\phi_{g,k}^i)_{1\leq k\leq i(g)},(\phi_{g,k}^o)_{1\leq k\leq o(g)})}$ for each $g\in\GG$ where ${F_g:\QF^{V_g}\rightarrow\QF^{V_g}\in\mathcal{F}}$ and sets $V_g$ and $C$ are pairwise disjoint, and the closure of $X$ by gadget glueing is contained in $\mathcal F$,
    \item a \emph{state configuration} ${s_q\in \QF^{C}}$ for each ${q\in Q}$ such that ${q\mapsto s_q}$ is an injective map,
    \item a \emph{context configuration} ${c_g\in \QF^{{\hat V}_g}}$ for each ${g\in\GG}$ where ${{\hat V}_g=V_g\setminus\bigl(\cup_k\phi_{g,k}^i(C)\cup_k \phi_{g,k}^o(C)\bigr)}$,
    \item a time constant $T$,
    \item a \emph{standard trace} $\tau_{q,q'}\in (\QF^C)^{\{0,\ldots,T\}}$ for each pair ${q,q'\in Q}$ such that ${\tau_{q,q'}(0)=s_{q}}$ and ${\tau_{q,q'}(T)=s_{q'}}$,
    \item for each ${g\in\GG}$ and for any uples of states ${q_{i,1},\ldots,q_{i,i(g)}\in Q}$ and ${q_{o,1},\ldots,q_{o,o(g)}\in Q}$ and ${q_{i,1}',\ldots,q_{i,i(g)}'\in Q}$ and ${q_{o,1}',\ldots,q_{o,o(g)}'\in Q}$ such that ${g(q_{i,1},\ldots,q_{i,i(g)}) = (q'_{o,1},\ldots,q'_{o,o(g)})}$, a $P_g$-pseudo-orbit ${(x^t)_{0\leq t\leq T}}$ of $F_g$ with ${P_g = \bigcup_{1\leq k\leq i(g)}\phi_{g,k}^i(C_o)\cup \bigcup_{1\leq k\leq o(g)}\phi_{g,k}^o(C_i)}$ and with
      \begin{itemize}
      \item for each ${1\leq k\leq i(g)}$, the trace ${t\mapsto x^t_{\phi_{g,k}^i(C)}}$ is exactly $\tau_{q_{i,k},q_{i,k}'}$,
      \item for each ${1\leq k\leq o(g)}$, the trace ${t\mapsto x^t_{\phi_{g,k}^o(C)}}$ is exactly $\tau_{q_{o,k},q_{o,k}'}$,
      \item ${x^0_{{\hat V}_g} = x^T_{{\hat V}_g} = c_g}$.
      \end{itemize}
    \end{itemize}
\end{definition}

We can now state the key lemma of our framework: having coherent $\GG$-gadgets is sufficient to simulate the whole family of $\GG$-networks.

\begin{lemma}\label{lem:from-gadgets-to-networks}
  Let $\GG$ be a set of irreducible gates.
  If an abstract automata network family $\mathcal{F}$ has coherent $\GG$-gadgets then it contains a subfamily of bounded degree networks with the canonical bounded degree representation ${(\mathcal{F}_0,\mathcal{F}_0^*)}$ that simulates $\Gamma(\mathcal{G})$ in time $T$ and space $S$ where $T$ is a constant map and $S$ is bounded by a linear map.
\end{lemma}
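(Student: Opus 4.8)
The plan is to show that, given coherent $\GG$-gadgets for $\mathcal{F}$, one can effectively turn any $\GG$-network $F\in\Gamma(\GG)$ into a network $\hat F\in\mathcal{F}$ obtained as an iterated gadget glueing of the gadgets $F_g$, and that $\hat F$ simulates $F$ with the block embedding induced by the state configurations $s_q$ and the context configurations $c_g$. First I would fix the bounded-degree representation of $F$ given by its list of gates $g_1,\ldots,g_n$ together with the bijections $\alpha\colon I\to V$ and $\beta\colon V\to O$ (Definition~\ref{def:g-network}); since $\GG$ is fixed and the gates are irreducible, Remark~\ref{rem:represent-g-networks} guarantees this data is $\DLOG$-recoverable from the representation. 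For each gate $g_j$ I take a fresh disjoint copy of the gadget $F_{g_j}$, and then for every wire of $F$ — that is, every pair $(j,k)\in I$ with $\beta(\alpha(j,k))=(j',k')$ connecting output $(j',k')$ of $g_{j'}$ to input $(j,k)$ of $g_j$ — I perform the corresponding gadget glueing identifying the input-interface copy $\phi^i_{F_{g_j},k}$ with the output-interface copy $\phi^o_{F_{g_{j'}},k'}$. Because the closure of $X=\{F_g:g\in\GG\}$ under gadget glueing is contained in $\mathcal{F}$ by hypothesis, the resulting network $\hat F$ lies in $\mathcal{F}$, and by Remark~\ref{rem:gadgets-glueing} it has bounded degree (the bound depending only on $\GG$) and a representation computable in $\DLOG$ from that of $F$; this gives the subfamily $(\mathcal{F}_0,\mathcal{F}_0^*)$ with its canonical bounded-degree representation. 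The spatial blow-up is $S(n)=\sum_j|V_{g_j}|\le c\cdot n$ for a constant $c$ depending on $\GG$, hence linear, and the time constant is the $T$ of Definition~\ref{def:coherent-gadgets}, independent of $n$.

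Next I would set up the block embedding $\phi$. For each node $v\in V(F)$ with $\beta(v)=(j,k)$ the associated output-interface copy $\phi^o_{F_{g_j},k}(C)$ survives in $\hat F$ exactly when that output is not wired to any input (no self-loop condition and the fact that, by construction, at most one of the two endpoints of each wire keeps its interface visible); to make the partition total I let each block $D_v$ consist of that interface copy of $C$ together with a share of the context vertices $\hat V_{g_j}$ and of the glued-away interface copies, distributed so that $\{D_v\}$ partitions $V(\hat F)$. The pattern $p_{v,q}\in\QF^{D_v}$ is defined by placing $s_q$ on the interface copy of $C$ and the relevant restriction of the context configuration $c_{g_j}$ on the rest; injectivity of $q\mapsto p_{v,q}$ follows from injectivity of $q\mapsto s_q$. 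It then remains to prove the simulation identity $\phi\circ F = \hat F^{\,T}\circ\phi$. The idea is: starting from $\phi(x)$, on each individual gadget $F_{g_j}$ the configuration restricted to $V_{g_j}$ agrees with the initial configuration of the $P_{g_j}$-pseudo-orbit prescribed by Definition~\ref{def:coherent-gadgets} for the uple of input/output states read off from $x$ — here I use that $g_j$ applied to the inputs gives precisely the output values, which matches how $F$ updates the corresponding nodes. The standard traces $\tau_{q,q'}$ are, by design, shared across all gadgets on every interface copy, so condition~\eqref{eq:sametrace} of Lemma~\ref{lem:pseudo-orbit-glueing} holds at every glued interface; iterating the pseudo-orbit glueing lemma over all the glueings producing $\hat F$ yields a genuine pseudo-orbit of $\hat F$ of length $T$ whose perturbation set $X\cup Y$ collapses to the empty set once all interfaces are internal (every $P_{g_j}$ vertex becomes glued), so it is in fact a true orbit of $\hat F$; reading its value at time $T$ on each $D_v$ gives $s_{q'}$ with $q'=F(x)_v$ on the interface and the context $c_{g_j}$ back on $\hat V_{g_j}$, i.e.\ exactly $\phi(F(x))$.

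The $\DLOG$ bookkeeping — producing $w'\in L_{\mathcal{F}_0}$, the value $T(n)=T$, and a representation of $\phi$ — is routine given that $\GG$ and all the gadget data are constant-size objects and the wiring of $F$ is read directly from its bounded-degree representation, so $M$ only has to emit one gadget copy per gate and one glueing instruction per wire, relabel vertices, and list the patterns block by block; all of this is local and logspace.

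\textbf{Main obstacle.} The delicate point I expect to be the crux is verifying that the sequence built by iterated application of Lemma~\ref{lem:pseudo-orbit-glueing} is really a pseudo-orbit of $\hat F$ whose exceptional set vanishes, and in particular that the perturbation sets $P_{g_j}$ of the individual gadget pseudo-orbits are exactly absorbed by the glueings. Each $P_{g_j}$ consists of the $C_o$-halves of input interfaces and the $C_i$-halves of output interfaces; when two gadgets are glued along an input/output interface pair, Lemma~\ref{lem:pseudo-orbit-glueing} requires the perturbation of one side to sit in $\varphi_1(C_2)$ and of the other in $\varphi_2(C_1)$, which forces a careful, consistent choice of which half ($C_i$ vs.\ $C_o$) is declared "$C_1$" and which "$C_2$" at each glued wire, uniformly across the whole network, so that after all glueings no exceptional vertices remain. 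Getting this combinatorial orientation right — essentially checking that the interface bookkeeping of Definition~\ref{def:gadgets-glueing} is compatible with the pseudo-orbit perturbation sets of Definition~\ref{def:coherent-gadgets} under arbitrary wirings, including the subtlety that an output of $F$'s node may feed several inputs (handled by the gadget having several input-interface copies and the state configuration being broadcast) — is where the proof needs genuine care rather than routine computation.
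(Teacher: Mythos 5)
Your proposal is correct and follows essentially the same route as the paper's proof: disjoint copies of the gadgets $F_{g_j}$, glueing according to the wiring $\alpha,\beta$, a block embedding placing $s_{x_v}$ on the dowel copy associated to each node $v$ and the context configurations $c_{g_j}$ elsewhere, and an iterated application of Lemma~\ref{lem:pseudo-orbit-glueing} whose exceptional sets are absorbed precisely because $P_g$ was defined to match the $C_i/C_o$ split of Definition~\ref{def:gadgets-glueing} — the crux you correctly single out. The only organizational caveat is that glueing literally one wire at a time can require glueing an assembled component to itself (not an instance of Definition~\ref{def:glueing}, which needs disjoint node sets); the paper avoids this by adding one gadget at a time and glueing all wires between the new gadget and the already-assembled network in a single gadget-glueing step.
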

\begin{proof}
  We take the notations of Definition~\ref{def:coherent-gadgets}.
  To any $\GG$-network $F$ with set of nodes $V$ given as in Definition~\ref{def:g-network} by a list of gates ${g_1,\ldots,g_k\in\GG}$ and maps $\alpha$ and $\beta$ (see Remark~\ref{rem:represent-g-networks}) we associate an automata network from $\mathcal{F}$ as follows. First, let $(F_{g_i})_{1\leq i\leq k}$ be the gadgets corresponding to gates $g_i$ and suppose they are all disjoint (by taking disjoint copies when necessary). Then, start from the gadget $F_1=F_{g_1}$ and for any ${1\leq i< k}$ we define ${F_{i+1}}$ as the gadget glueing of $F_i$ and $F_{g_{i+1}}$ on the input/outputs as prescribed by maps $\alpha$ and $\beta$. More precisely, the gadget glueing select the set of inputs ${(j,k)}$ with ${1\leq j\leq i}$ and ${1\leq k\leq i(g_j)}$ such that ${\beta(\alpha(j,k))=(i+1,k')}$ for some ${1\leq k'\leq o(g_{i+1})}$ and glue them on their corresponding output ${(i+1,k')}$ of $g_{i+1}$ (precisely, through maps $\sigma_{F_i}$ and $\sigma_{F_{g_{i+1}}}$ of domain $A_{i+1}$ playing the role of maps $\sigma_F$ and $\sigma_G$ of Definition~\ref{def:gadgets-glueing}), and, symmetrically, selects the inputs ${(i+1,k)}$ with ${1\leq k\leq i(g_{i+1})}$ such that ${\beta(\alpha(i+1,k))=(j,k')}$ for some ${1\leq j\leq i}$ and ${1\leq k'\leq o(g_{j})}$ and glue their corresponding output ${(j,k')}$ (precisely, through maps $\tau_{F_{g_{i+1}}}$ and $\tau_{F_i}$ of domain $B_{i+1}$ playing the role of maps $\tau_G$ and $\tau_F$ from Definition~\ref{def:gadgets-glueing}). If both of these sets of inputs/outputs are empty, the gadget glueing is replaced by a simple disjoint union.

  The final gadget $F_k$ has no input and no output, and a representation of it as a pair graph and local maps can be constructed in \DLOG{}, because the local map of each of its nodes is independent of the glueing sequence above and completely determined by the gadget $F_{g_j}$ it belongs to and whether the node is inside some input or some output dowel or not (see Reamrk~\ref{rem:gadgets-glueing}).

  It now remains to show that the automata network $F_k$ simulates $F$. To fix notations, let $V_k$ be the set of nodes of $F_k$. For each ${v\in V}$, define ${D_v\subseteq V_k}$ as the copy of the dowel that correspond to node $v$ of $F$, \textit{i.e.} that was produced in the gadget glueing of ${F_i}$ with $F_{g_{i+1}}$ for $i$ such that ${\beta(v)=(i+1,k')}$ for some ${1\leq k'\leq o(g_{i+1})}$ (or symmetrically $\alpha(i+1,k)=v$ for some ${1\leq k\leq i(g_j)}$). More precisely, if $a\in A_{i+1}$ is such that ${\sigma_{F_{g_{i+1}}}(a)=(i+1,k')}$ then $D_v = \{a\}\times C$ (symmetrically if $b\in B_{i+1}$ is such that ${\tau_{F_{g_{i+1}}}(b)=(i+1,k)}$ then $D_v = \{b\}\times C$). Also denote by $\rho_v : D_v\to C$ the map such that ${\rho_v(a,c)=c}$ for all ${c\in C}$ (symmetrically, ${\rho_v(b,c)=c}$). With these notations, we have 
  \[V_k = \bigcup_{v\in V} D_v\cup\bigcup_{1\leq i\leq k}\hat{V}_{g_i}\]
  Let us define the block embedding $\phi : Q^V\to Q_{\mathcal{F}}^{V_k}$ as follows 
  \[\phi(x)(v') =
    \begin{cases}
      s_{x_v}(\rho_v(v'))&\text{ if }v'\in D_v,\\
      c_{g_i}(v')&\text{ if }v'\in\hat{V}_{g_i}.\\
    \end{cases}
  \]
  for any ${x\in Q^V}$ and any ${v'\in V_k}$, where $s_q$ for ${q\in Q}$ are the state configurations and $c_g$ for ${g\in\GG}$ are the context configurations granted by Definition~\ref{def:coherent-gadgets}. Note that $\phi$ is injective because the map ${q\mapsto s_q}$ is injective. By inductive applications of Lemma~\ref{lem:pseudo-orbit-glueing}, the $P_{g_i}$-pseudo-orbits of each ${F_{g_i}}$ from Definition~\ref{def:coherent-gadgets} can be glued together to form valid orbits of $F_k$ that start from any configuration ${\phi(x)}$ with ${x\in Q^V}$ and ends after $T$ steps in a configuration ${\phi(y)}$ for some ${y\in Q^V}$ which verifies ${y=F(x)}$. Said differently, we have the following equality on $Q^V$: 
  \[\phi\circ F = F_k^T\circ\phi.\]
  Note that $T$ is a constant and that the size of $V_k$ is at most linear in the size of $V$. The lemma follows. 
\end{proof}

\begin{remark}
  Note that in Lemma~\ref{lem:from-gadgets-to-networks} above, the block embedding that is constructed can be viewed as a collection of blocks of bounded size that encode all the information plus a context (see Remark~\ref{rem:bloc-simul}).
\end{remark}

In the case of CSAN families and using Remark~\ref{rem:csan-bounded-degree} we have a simpler formulation of the Lemma.

\begin{corollary}\label{cor:gadgets-mon-csan}
  If $\GG$ is a set of irreducible gates and $\mathcal{F}$ a CSAN family which has coherent $\GG$-gadgets then $\mathcal{F}$ simulates $\Gamma(\mathcal{G})$ in time $T$ and space $S$ where $T$ is a constant map and $S$ is bounded by a linear map.  
\end{corollary}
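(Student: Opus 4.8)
The plan is to reduce Corollary~\ref{cor:gadgets-mon-csan} directly to Lemma~\ref{lem:from-gadgets-to-networks}, whose hypotheses (namely, that $\mathcal{F}$ has coherent $\GG$-gadgets) are identical to those of the corollary. So the only work is bridging the gap between the two conclusions: Lemma~\ref{lem:from-gadgets-to-networks} produces a \emph{subfamily of bounded-degree networks} $(\mathcal{F}_0,\mathcal{F}_0^*)$ inside $\mathcal{F}$ (with the canonical bounded-degree representation) that simulates $\Gamma(\GG)$ in constant time and linear space, whereas the corollary claims $\mathcal{F}$ \emph{itself}, as a CSAN family with its CSAN standard representation, simulates $\Gamma(\GG)$ with the same time/space bounds.

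First I would invoke Lemma~\ref{lem:from-gadgets-to-networks} to obtain the bounded-degree subfamily $(\mathcal{F}_0,\mathcal{F}_0^\ast)$ with $\mathcal{F}_0\subseteq\mathcal{F}$ and the simulation $\Gamma(\GG)\preccurlyeq^T_S\mathcal{F}_0^\ast$, where $T$ is constant and $S$ linear. Now I would note that $\mathcal{F}_0$, being a set of bounded-degree networks all lying inside the CSAN family $\mathcal{F}$, admits the CSAN standard representation of $\mathcal{F}$; and by Remark~\ref{rem:csan-bounded-degree} there is a \DLOG{} algorithm transforming the bounded-degree representation of any automata network of a CSAN family into a CSAN representation of the same network. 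Composing this \DLOG{} conversion with the \DLOG{} Turing machine $M$ produced by the simulation $\Gamma(\GG)\preccurlyeq^T_S\mathcal{F}_0^\ast$ (legitimate since \DLOG{} is closed under composition, as noted after Definition~\ref{def:simu-family}) yields a \DLOG{} machine that, from a representation $w$ of some $\Gamma(\GG)$-network $H_w$, outputs a CSAN representation $w'$ of the very same simulating network $F_{w'}\in\mathcal{F}$, together with the same $T(n)$ and the same block embedding $\phi$. The underlying individual-network simulation ($\phi\circ H_w = F_{w'}^{T(n)}\circ\phi$) is untouched because only the encoding string has changed, not the automata network. This exhibits $\Gamma(\GG)\preccurlyeq^T_S\mathcal{F}^\ast$ with the CSAN representation $\mathcal{F}^\ast$ of $\mathcal{F}$, with $T$ constant and $S$ linear, which is exactly the claim.

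There is essentially no hard step here: the content of the corollary is entirely carried by Lemma~\ref{lem:from-gadgets-to-networks}, and the remaining argument is the routine observation that a CSAN family can absorb the bounded-degree subfamily via the \DLOG{} representation conversion of Remark~\ref{rem:csan-bounded-degree}. The only point requiring a little care is checking that $S(n)$, being linear hence certainly polynomial, is \DLOG{}-computable and that the block embedding $\phi$ remains valid verbatim --- both are immediate, the latter because $\phi$ is a combinatorial object attached to the pair of abstract networks, independent of how those networks are encoded. I would therefore keep the proof to the two short observations above.
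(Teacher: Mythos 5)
Your proof is correct and follows exactly the route the paper intends: the paper introduces the corollary with the phrase ``using Remark~\ref{rem:csan-bounded-degree} we have a simpler formulation of the Lemma,'' i.e.\ it derives the statement from Lemma~\ref{lem:from-gadgets-to-networks} plus the \DLOG{} conversion of bounded-degree representations into CSAN representations, which is precisely your argument. Your additional remarks on composing the \DLOG{} machines and on the block embedding being representation-independent just make explicit what the paper leaves implicit.
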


\subsubsection{Game of life has coherent gadgets}

\newcommand\Gnor{\GG_{\text{NOR}}}

In this section, we are going to show that \emph{Game of life} has coherent $\Gnor$ gadgets, where $\Gnor = \{\text{NOR}(x,y) = (\overline{x\vee y},\overline{x\vee y})\}.$ In order to do that, we are going to show that we are able to simulate this two gates by combining wire networks and a clock network. 

First, we show, in Figure \ref{fig:NORgadg}, the structure of the communication graph of the NOR gadget. Observe that it is composed by $2$ copies of the wire network and $2$ copies of the clock network (see Figure \ref{fig:gadgetsym}). Now, we present the main result of this subsection:
\begin{lemma}
Game of Life automata networks admits coherent $\Gnor$ gadgets. \label{lemma:GOCcoherent}
\end{lemma}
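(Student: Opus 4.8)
The plan is to verify that Game of Life (GoL), seen as a CSAN family, satisfies the six ingredients in Definition~\ref{def:coherent-gadgets} for the gate set $\Gnor=\{\mathrm{NOR}(x,y)=(\overline{x\vee y},\overline{x\vee y})\}$. Since $\mathrm{NOR}$ is a single irreducible gate (both outputs depend on both inputs, and the dependency graph is weakly connected), showing coherent $\Gnor$-gadgets will, by Corollary~\ref{cor:gadgets-mon-csan}, give a simulation of $\Gamma(\Gnor)$ in constant time and linear space. First I would fix the glueing interface $C=C_i\cup C_o$: it should be one "end-layer" of the wire/clock networks of Figure~\ref{fig:clockgl}, namely a triple of nodes (a layer) together with its auxiliary node, with $C_o$ the part of that dowel that is "inside" one gadget and $C_i$ the part shared with the next; concretely $C$ is the three-node layer plus the auxiliary node that in Figure~\ref{fig:cablegluegl} is boxed as the glueing zone, and $C_i,C_o$ split it so that Lemma~\ref{lem:csan-gadgets-glueing-closure}'s neighbourhood conditions hold (every node of $C_o$ has all its $G$-neighbours inside $C$). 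I would check the three graph-theoretic conditions of Lemma~\ref{lem:csan-gadgets-glueing-closure} for the NOR gadget of Figure~\ref{fig:NORgadg}: all input/output copies of the interface induce the same labelled subgraph (trivially true since GoL edges carry the identity label $\rho=\mathrm{Id}$ and the interface is always the same layer gadget), and the "one half of the dowel screens the other" condition, which is exactly how the wire network is laid out (a layer's only connections to the rest go through the adjacent layer).

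Next I would pin down the dynamical data. The state alphabet of $\Gamma(\Gnor)$ is $Q=\{0,1\}$; I would take $s_0$ and $s_1$ to be two distinct configurations of the interface $C$ — say $s_0$ = all nodes dead on the layer-plus-auxiliary dowel, $s_1$ = the "signal present" pattern (the boxed black pattern in Figure~\ref{fig:wiregl}); injectivity of $q\mapsto s_q$ is immediate. The time constant $T$ is the period of the clock network, here $T=6$ (Figure~\ref{fig:clockgl}). The context configuration $c_g$ for the single gate $g=\mathrm{NOR}$ is the "rest" of the NOR gadget (the internal nodes of the two wire copies and two clock copies of Figure~\ref{fig:NORgadg}) set so that after $T=6$ steps the gadget returns to the same internal state — this is where I use that the clock is genuinely periodic of period $6$ and the wires are "transparent" in $6$ steps. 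The standard traces $\tau_{q,q'}$ for $q,q'\in\{0,1\}$ are the length-$T$ pseudo-orbits on $C$ that a wire carries: $\tau_{0,0}$ is the all-dead trace, $\tau_{1,1}$ is the "signal passes through in $6$ steps" trace of Figure~\ref{fig:wiregl}, and $\tau_{0,1},\tau_{1,0}$ are the remaining two (signal being created/consumed at the boundary) — I must check each is realizable as a $P$-pseudo-orbit on a wire between the prescribed endpoints $s_q,s_{q'}$, which is what Figure~\ref{fig:wiregl} is meant to illustrate and which follows from the explicit GoL local rule $B=\{3\},S=\{2,3\}$ applied to the layered graph (each node has three neighbours in the next layer plus the auxiliary node, so the birth/survival counts work out to push the $1^k$-block of Figure~\ref{fig:clockgl} one layer to the right each step).

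The substantive step is the last bullet: for the gate $\mathrm{NOR}$ and every quadruple of input/output states with $\mathrm{NOR}(q_{i,1},q_{i,2})=(q'_{o,1},q'_{o,2})$ I must exhibit a $P_g$-pseudo-orbit of the NOR gadget $F_g$ whose traces on the two input dowels are $\tau_{q_{i,1},q'_{i,1}},\tau_{q_{i,2},q'_{i,2}}$ (for arbitrary compatible primed values), whose traces on the two output dowels are $\tau_{q_{o,1},q'_{o,1}},\tau_{q_{o,2},q'_{o,2}}$, and which starts and ends at the context $c_g$ on the internal nodes. This is done by describing the NOR gadget concretely: two incoming wires feeding into a small GoL sub-pattern that computes $\overline{x\vee y}$, driven by two clock copies so that the computed bit is re-emitted on the two outgoing wires exactly $6$ steps later, synchronized with the standard traces. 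I would present the pseudo-orbit as the glueing (via Lemma~\ref{lem:pseudo-orbit-glueing}) of: the wire pseudo-orbits carrying the input/output bits, the clock pseudo-orbits (periodic), and the "logic core" pseudo-orbit, each agreeing on the shared interface by construction; compatibility (Equation~\ref{eq:sametrace}) is exactly the statement that consecutive sub-gadgets share the same trace of $s_q$'s at the glueing layers. The main obstacle — and the reason the paper promises a figure — is the explicit design of the GoL logic core for $\mathrm{NOR}$ together with the timing so that signals arrive and leave in lockstep with period $6$; once that finite gadget and its finite family of pseudo-orbits (there are only $2^2$ input patterns, times a constant number of compatible primed endpoints) are exhibited and checked step-by-step against the GoL rule, all remaining verifications are the routine local checks of Lemma~\ref{lem:csan-gadgets-glueing-closure} and Definition~\ref{def:coherent-gadgets}, and the lemma follows.
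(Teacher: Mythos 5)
Your proposal is correct and follows essentially the same route as the paper's proof: it fixes the same layer-based glueing interface coming from the wire/clock networks, the same time constant $T=6$, the same injective state coding and standard traces read off the wire pseudo-orbits, and reduces the lemma to the finite, explicit verification of the NOR core's pseudo-orbits against the Game of Life rule, which the paper carries out in Table~\ref{table:pseudoNOR} and Figures~\ref{fig:gadgetsym} and~\ref{fig:gadconnect}.
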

\begin{proof}
We are going to show that the NOR gadget satisfies the conditions of the Definition~\ref{def:coherent-gadgets}. In fact we have that:
\begin{itemize}
\item The NOR gadget has a unique glueing interface $C$ which is shown in Figure \ref{fig:NORgadg}. The functions $\phi^i, \phi^o$ are also represented in the same figure.
\item Observe that the map $q\in \{0,1\} \mapsto s_q = (w^q_0)|_{C}$ (observe that $w^i_0 = w_0$ for each $i$ since all the nodes are in state $0$ in this configuration), is injective. 
\item The context configuration is given by:
\begin{itemize}
\item For the clock network we use the initial condition of the dynamics of the clock network (see Figure \ref{fig:clockgl}). 
\item For the copies of the wire network: the nodes are in state $0$ with the exception of the nodes in the copies of $C$ (see Figure \ref{fig:wiregl}).
\end{itemize} 
\item The time constant is $T= 6$
\item The configurations $w^0_q, \hdots, w^5_q$ (see Figure \ref{fig:wiregl}) and $w^t_0$ (or simply $w_0$) where $w^t_0$ is the configuration in which each node is in state $0$ for every $t = 0,\hdots,T$ define a standard trace for each pair $(0,0),(0,1),(1,0),(1,1)$ as follows: $\tau_{q,q'} = ((w^0_q)|_C,(w^1_q)|_C,(w^3_q)|_C,(w^4_q)|_C,(w^5_{q'})|_C,(w^6_{q')}|_C = (w^0_{q')}|_C  ))$. 
\item The pseudo orbits are shown in Table \ref{table:pseudoNOR} and Figure \ref{fig:gadgetsym}. In the table, the detail of the local computation produced by the central part of the gadget is given. Figure \ref{fig:gadgetsym} shows how the nodes are labeled in the previous table. In addition, the latter figure shows a general picture on how the signals are transmitted and computed by the gadget. In particular, it is posible to verify that all the computation is produced in $T= 6$.
\item After $T=6$ time steps, the nodes that are not part of any copy of $C$ will return to the state given by the context configuration.
\end{itemize}
We conclude that the NOR gadget satisfies the conditions given by Definition~\ref{def:coherent-gadgets}. Thus, the lemma holds.
\end{proof}

\begin{table}[]
\tiny
\centering


\caption{Structure of the connections of the central nodes in the NOR gadget implemented using Game of Life rules. Nodes labeled as $l$ (resp. $r$) are nodes inside the last part (resp. first) of a wire network, nodes labeled as $c$ are nodes inside one layer of a clock network (see Figure \ref{fig:wiregl}).}
\label{fig:gadconnect}
\end{figure}

\newcommand\Gmontwo{\GG_{m,2}}
\subsection{$\Gmon$-networks and $\Gmontwo$-networks as standard universal families}
\label{sec:gmonuniv}
\newcommand\gateOR{\mathrm{OR}}
\newcommand\gateAND{\mathrm{AND}}
\newcommand\gateCOPY{\mathrm{COPY}}
Let $i,o \in \{1,2\}$ be two numbers. We define the functions $\gateOR, \gateAND: \{0,1\}^i \to \{0,1\}^o$ where $\gateOR(x) = \max(x)$ and $\gateAND(x) = \min(x)$. Note that in the case in which $i = o = 1$ we have $\gateAND(x) = \gateOR(x) = \text{Id}(x) = x$ and also in the case $i = 1$ and $o=2$ we have that $\gateAND(x)=\gateOR(x) = (x, x).$ We define the set $\Gmon = \{\gateAND, \gateOR\}.$ Observe that in this case $o$ and $i$ may take different values. In addition, we define the set $\Gmontwo$ in which we fix $i=o=2.$

It is folklore knowledge that monotone Boolean networks (with AND/OR local maps) can simulate any other network.
Here we make this statement precise within our formalism: $\Gmon$-networks are strongly universal.
Note that there is more work than the classical circuit transformations involving monotone gates because we need to obtain a simulation of any automata network via block embedding. In particular we need to build monotone circuitry that is synchronized and reusable (\textit{i.e.} that can be reinitialized to a standard state before starting a computation on a new input). Moreover, our definitions requires a production of $\Gmon$-networks in \DLOG{}.
The main ingredient for establishing universality of $\Gmon$-networks is an efficient circuit transformation due to Greenlaw, Hoover and Ruzzo in \cite[Theorems 6.2.3 to 6.2.5]{Greenlaw_1995}.
Let us start by proving that this family is strongly universal, which is slightly simpler to prove.

\begin{theorem}\label{theo:gmon-univ}
  The family $\Gamma(\mathcal{G}_m)$ of all $\Gmon$-networks is strongly universal.
\end{theorem}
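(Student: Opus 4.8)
The plan is to prove strong universality of $\Gamma(\Gmon)$ by combining three ingredients: the gadget machinery developed in Section~\ref{sec:gadgetsandglueing} (specifically Lemma~\ref{lem:from-gadgets-to-networks} and Corollary~\ref{cor:gadgets-mon-csan}), the simulation by an intrinsically universal cellular automaton from Theorem~\ref{theo:iuca}, and a classical efficient monotone-circuit transformation. Since strong universality asks, for every $Q$ and $\Delta$, to simulate $\mathcal{B}_{Q,\Delta}$ in constant time and linear space, and since simulation is transitive (Remark following Definition~\ref{def:simu-family}), it suffices to exhibit \emph{one} family of bounded-degree networks that is strongly universal and that $\Gamma(\Gmon)$ can simulate in constant time and linear space. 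The natural target is $\mathcal{B}_{\{0,1\},\Delta_0}$ for a well-chosen constant $\Delta_0$ (the bounded-degree networks coming from an intrinsically universal CA such as Banks' or Ollinger--Richard's): Theorem~\ref{theo:iuca} already shows such a CA family simulates $\mathcal{B}_{Q,\Delta}$, but only in time and space $O(n\log n)$, which is too weak for strong universality; so instead I will aim directly to show $\Gamma(\Gmon)$ simulates $\mathcal{B}_{Q,\Delta}$ with constant time and linear space for every $Q,\Delta$, using the gadget framework.

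The core of the argument is to show that the family of all $\Gmon$-networks has \emph{coherent} $\GG'$-gadgets for some finite set of gates $\GG'$ that is itself strongly universal (i.e. $\Gamma(\GG')$ simulates every $\mathcal{B}_{Q,\Delta}$ in constant time and linear space). For $\GG'$ one takes a finite universal set of gates over a fixed alphabet — for instance truth-table gates realizing, over a suitable fixed alphabet $Q'$, an intrinsically universal CA's local rule plus routing/delay gates, or more simply the classical $\{\text{NAND}, \text{fan-out}, \text{delay}\}$-type gate set that already supports synchronized reusable circuitry. Then the heart of the proof is building, \emph{in \DLOG}, for each gate $g\in\GG'$ a $\Gmon$-network gadget $F_g$ together with the glueing interface $C=C_i\cup C_o$, the state configurations $s_q$, context configurations $c_g$, a global time constant $T$, standard traces $\tau_{q,q'}$, and the required $P_g$-pseudo-orbits, exactly as demanded by Definition~\ref{def:coherent-gadgets}, and verifying the graph-theoretic conditions of Lemma~\ref{lem:csan-gadgets-glueing-closure} (or here, since these are $\GG$-networks rather than CSAN, the conditions making Lemma~\ref{lem:from-gadgets-to-networks} applicable) so that the closure under gadget glueing stays inside $\Gamma(\Gmon)$. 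The key building block is the efficient monotone circuit transformation of Greenlaw--Hoover--Ruzzo \cite[Theorems 6.2.3--6.2.5]{Greenlaw_1995}, which converts any Boolean circuit into an equivalent monotone one with only polynomial blow-up by carrying both a value and its complement through the circuit (so negations are pushed to the inputs). I would use this to realize each gate's truth table by an AND/OR circuit, then unroll it into a \emph{synchronized, layered} $\Gmon$-network: every node computes at each step and signals propagate layer by layer, so that inputs presented at time $0$ produce outputs at a fixed time $T$; reusability (returning to a standard state) is obtained by a clock/reset mechanism and by having data encoded on wire-like sub-networks whose quiescent state is $0$, in the same spirit as the Game-of-Life construction of the previous subsection (clock networks, wire networks, Lemma~\ref{lemma:GOCcoherent}).

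Once coherent $\GG'$-gadgets are established, Lemma~\ref{lem:from-gadgets-to-networks} gives that $\Gamma(\Gmon)$ contains a subfamily that simulates $\Gamma(\GG')$ in constant time and linear space. To close the proof I then need that $\Gamma(\GG')$ is strongly universal, i.e. that it simulates every $\mathcal{B}_{Q,\Delta}$ in constant time and linear space; this is where the intrinsic universality input enters, but now cleanly: encode any bounded-degree Boolean network as a $\GG'$-network directly (a bounded-degree local map of arity $\le\Delta$ is a fixed finite object, hence expressible by a bounded combination of gates in a universal gate set $\GG'$), so that each node of $\mathcal{B}_{Q,\Delta}$ maps to a bounded gadget of $\GG'$-gates, wired according to the communication graph, giving a constant-time linear-space simulation. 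Composing the two simulations (transitivity, with the time maps composing to a constant and the space maps to a linear map) yields that $\Gamma(\Gmon)$ is strongly universal, which is the statement of Theorem~\ref{theo:gmon-univ}.

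The main obstacle, and the place where most of the work lies, is the synchronization/reusability of the monotone gadgets: a naive unrolling of a monotone circuit does not respect the rigid step-for-step, block-embedding discipline required by Definition~\ref{def:bloc-simu} and Definition~\ref{def:coherent-gadgets} — one must ensure all signals traverse each gadget in exactly $T$ steps regardless of the gate and the input values (so AND and OR gadgets, and fan-out and delay gadgets, all have matched latency), that the interface copies $C$ carry exactly the prescribed standard traces $\tau_{q,q'}$, and that after $T$ steps every internal node has returned to its context value $c_g$ so gadgets can be glued arbitrarily and reused each macro-step. Designing wire/clock sub-networks in $\Gmon$ that accomplish this — and checking the graph-theoretic glueing conditions hold uniformly — is the delicate part; the Greenlaw--Hoover--Ruzzo transformation handles the logical content, and the \DLOG{} bound on producing the gadgets follows because each gadget is a bounded object assembled by a fixed procedure, but verifying the pseudo-orbit conditions of Definition~\ref{def:coherent-gadgets} gate by gate is the technical heart.
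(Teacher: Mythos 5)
Your architecture is genuinely different from the paper's, and the difference matters. The paper proves this theorem \emph{directly}, without any use of the gadget/glueing machinery of Section~\ref{sec:gadgetsandglueing}: given ${F\in\mathcal{B}_{Q,\Delta}}$, it takes a constant-depth synchronous circuit $C$ for $F$ (constant because $Q$ and $\Delta$ are fixed), applies the Greenlaw--Hoover--Ruzzo double-rail transformation ${x\mapsto(x,\neg x)}$ to obtain a layered monotone circuit $C^*$ with ${\phi\circ C=C^*\circ\phi}$, wires outputs back to inputs to get a $\Gmon$-network $F^*$ that is a directed cycle of $N$ layers, and observes that the block embedding placing the coded input on one layer and $0$ everywhere else satisfies ${\phi\circ F=(F^*)^N\circ\phi}$; reusability comes for free because a configuration with a single non-zero layer evolves to a configuration with a single non-zero layer (the next one), so no clock or reset mechanism is needed. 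Constant time and linear space are immediate from the constant depth. You instead route through coherent $\GG'$-gadgets for an auxiliary universal gate set $\GG'$ and Lemma~\ref{lem:from-gadgets-to-networks}. This is not formally circular (Lemma~\ref{lem:from-gadgets-to-networks} does not depend on this theorem, unlike Corollary~\ref{cor:univfrommon}, which you rightly avoid), and you do identify the one indispensable idea, namely the GHR monotone transformation. But the detour buys you nothing and costs you the burden of verifying Definition~\ref{def:coherent-gadgets} gate by gate, which you acknowledge but do not carry out.

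The concrete gap is your closing step: ``encode any bounded-degree Boolean network as a $\GG'$-network directly \ldots{} giving a constant-time linear-space simulation.'' Establishing that $\Gamma(\GG')$ is strongly universal for your chosen $\GG'$ is a statement of exactly the same type as the theorem you are proving, and it carries exactly the same synchronization, latency-matching, fan-in/fan-out, and injective-block-embedding obligations that you list as ``the technical heart'' of the gadget construction. As written, the whole weight of the proof rests on this asserted step. The remedy is to apply that direct encoding argument to $\Gmon$ itself rather than to $\GG'$: the only obstruction to doing so is that $\Gmon$ is monotone, and the double-rail trick you already cite removes it. Once you do that, the coherent-gadget layer of your argument disappears and you recover the paper's proof.
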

\begin{proof}
Let $Q$ an arbitrary alphabet and $F:Q^{n} \to Q^{n}$ an arbitrary automata network on alphabet $Q$ such that the communication graph of $F$ has maximum degree $\Delta.$  Let $C:\{0,1\}^{n} \to \{0,1\}^{n}$ be a constant depth circuit representing $F$. Let us assume that $C$  has only  OR, AND and NOT gates. We can also assume that $C$ is synchronous because, as its depth does not depend on the size of the circuit, one can always add fanin one and fanout one OR gates in order to modify layer structure. We are going to use a very similar transformation to the one proposed in \cite[Theorem 6.2.3]{Greenlaw_1995} in order to efficiently construct an automata network in $\Gamma(\mathcal{G}_m)$. In fact, we are going to duplicate the original circuit by considering the coding $x \in \{0,1\} \to (x, 1-x) \in \{0,1\}^{2}.$ Roughly, each gate will have a positive part (which is essentially a copy) and a negative part which is produces the negation of the original output by using De Morgan's laws.  More precisely, we are going to replace each gate in the network by the gadgets shown in Figure \ref{fig:ANDORgatesgad}.  The main idea is that one can represent the function $x \wedge y$ by the coding: $(x \wedge y, \overline{x} \vee \overline{y})$ and $x \vee y$ by the coding: $(x \vee y, \overline{x} \wedge \overline{y}).$ In addition, each time there is a NOT gate, we replace it by a fan in $1$ fan out $1$ OR gadget and we connect positive outputs to negative inputs in the next layer and negative outputs to positive inputs as it is shown in Figure \ref{fig:NOTgatesgad}.  
We are going to call $C^*$ to the circuit constructed by latter transformations. Observe that $C^*$ is such that it holds on $\{0,1\}^i$: 
  \[\phi\circ C = C^*\circ\phi\]
  where $\phi : \{0,1\}^n\to\{0,1\}^{2n}$ is defined for any $n$ by ${\phi(x)_{2j}=x_j}$ and ${\phi(x)_{2j+1}=\neg x_j}$.

  Now consider the coding map ${m_Q:Q\to\{0,1\}^k}$ and let ${n=k|V|}$. Build from $C^*$ the $\Gmontwo$-network $F^*:\{0,1\}^{V^{+}} \to \{0,1\}^{V^{+}}$ that correspond to it (gate by gate) and where the output $j$ is wired to input $j$ for all ${1\leq j\leq 2n}$. Define a block embedding of $Q^V$ into ${\{0,1\}^{V^+}}$ as follows (see Remark~\ref{rem:bloc-simul}):
  \begin{itemize}
  \item for each $v\in V$ let $D_v$ be the set of input nodes in $F^*$ that code $v$ (via $m_Q$ and then double railed logic),
  \item let $C=V^+\setminus \bigcup_v D_v$ be the remaining context block,
  \item let ${p_{v,q}\in \{0,1\}^{D_v}}$ be the pattern coding node $v$ in state $q$,
  \item let $p_C = 0^C$ be the context pattern,
  \item let $\phi: Q^V\to \{0,1\}^{V^+}$ be the associated block embedding map.
  \end{itemize}
  We claim that $F^*$ simulates $F$ via block embedding $\phi$ with time constant equal to the depth of $C^*$ plus $1$. Indeed, $F^* $ can be seen as a directed cycle of $N$ layers where layer $L_{i+1\bmod N}$ only depends on layer $i$. The block embedding is such that for any configuration ${x\in Q^V}$, ${\phi(x)}$ is $0$ on each layer except the layer containing the inputs. On configurations where a single layer $L_i$ is non-zero, $F^*$ will produce a configuration where the only non-zero layer is $L_{i+1\bmod N}$. From there, it follows by construction of $F^*$ that ${\phi\circ F (x) = (F^*)^N\circ\phi(x)}$ for all ${x\in Q^V}$.

The fact that construction is obtainable in $\DLOG$ follows from the same reasoning used to show in \cite[Theorem 6.2.3]{Greenlaw_1995}. In fact, authors show that reduction is actually better as they show it is $\NC^{1}.$  
\end{proof}
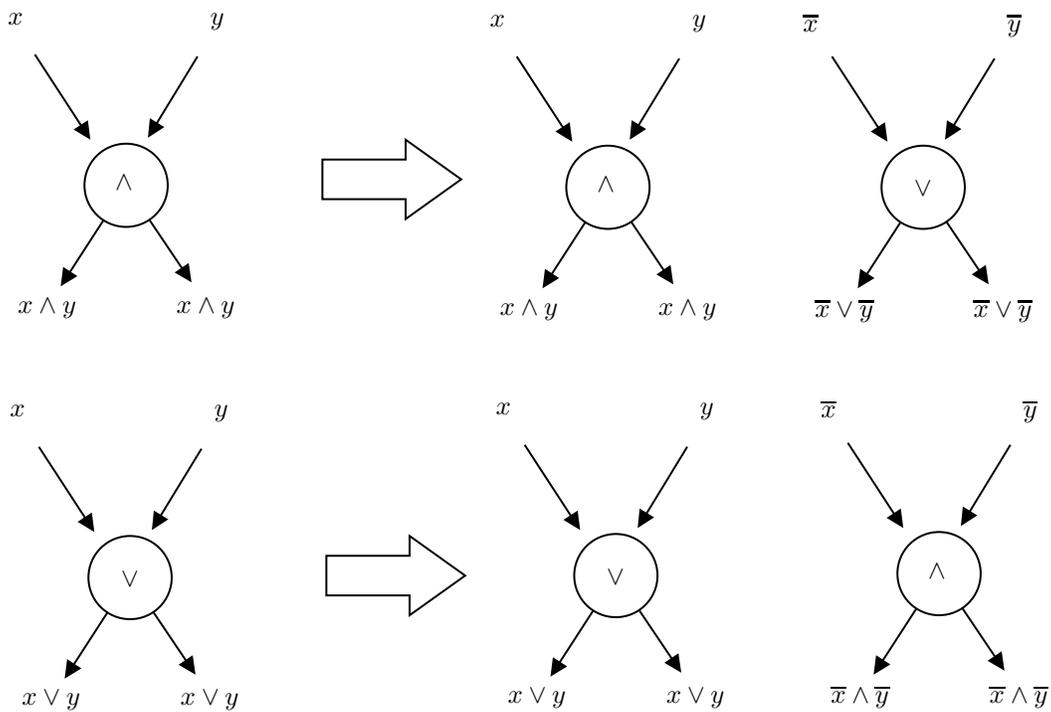
\begin{figure}

\centering

\tikzset{every picture/.style={line width=0.75pt}} 

\begin{tikzpicture}[x=0.75pt,y=0.75pt,yscale=-1,xscale=1]


\draw    (38,34) -- (64.36,74.49) ;
\draw [shift={(66,77)}, rotate = 236.93] [fill={rgb, 255:red, 0; green, 0; blue, 0 }  ][line width=0.08]  [draw opacity=0] (8.93,-4.29) -- (0,0) -- (8.93,4.29) -- cycle    ;
\draw    (120,35) -- (96.56,73.44) ;
\draw [shift={(95,76)}, rotate = 301.37] [fill={rgb, 255:red, 0; green, 0; blue, 0 }  ][line width=0.08]  [draw opacity=0] (8.93,-4.29) -- (0,0) -- (8.93,4.29) -- cycle    ;
\draw    (84,100) -- (115.32,146.51) ;
\draw [shift={(117,149)}, rotate = 236.04] [fill={rgb, 255:red, 0; green, 0; blue, 0 }  ][line width=0.08]  [draw opacity=0] (8.93,-4.29) -- (0,0) -- (8.93,4.29) -- cycle    ;
\draw    (84,100) -- (52.63,148.48) ;
\draw [shift={(51,151)}, rotate = 302.90999999999997] [fill={rgb, 255:red, 0; green, 0; blue, 0 }  ][line width=0.08]  [draw opacity=0] (8.93,-4.29) -- (0,0) -- (8.93,4.29) -- cycle    ;
\draw  [fill={rgb, 255:red, 255; green, 255; blue, 255 }  ,fill opacity=1 ] (63,100) .. controls (63,88.4) and (72.4,79) .. (84,79) .. controls (95.6,79) and (105,88.4) .. (105,100) .. controls (105,111.6) and (95.6,121) .. (84,121) .. controls (72.4,121) and (63,111.6) .. (63,100) -- cycle ;

\draw    (40,232) -- (66.36,272.49) ;
\draw [shift={(68,275)}, rotate = 236.93] [fill={rgb, 255:red, 0; green, 0; blue, 0 }  ][line width=0.08]  [draw opacity=0] (8.93,-4.29) -- (0,0) -- (8.93,4.29) -- cycle    ;
\draw    (122,233) -- (98.56,271.44) ;
\draw [shift={(97,274)}, rotate = 301.37] [fill={rgb, 255:red, 0; green, 0; blue, 0 }  ][line width=0.08]  [draw opacity=0] (8.93,-4.29) -- (0,0) -- (8.93,4.29) -- cycle    ;
\draw    (86,298) -- (117.32,344.51) ;
\draw [shift={(119,347)}, rotate = 236.04] [fill={rgb, 255:red, 0; green, 0; blue, 0 }  ][line width=0.08]  [draw opacity=0] (8.93,-4.29) -- (0,0) -- (8.93,4.29) -- cycle    ;
\draw    (86,298) -- (54.63,346.48) ;
\draw [shift={(53,349)}, rotate = 302.90999999999997] [fill={rgb, 255:red, 0; green, 0; blue, 0 }  ][line width=0.08]  [draw opacity=0] (8.93,-4.29) -- (0,0) -- (8.93,4.29) -- cycle    ;
\draw  [fill={rgb, 255:red, 255; green, 255; blue, 255 }  ,fill opacity=1 ] (65,298) .. controls (65,286.4) and (74.4,277) .. (86,277) .. controls (97.6,277) and (107,286.4) .. (107,298) .. controls (107,309.6) and (97.6,319) .. (86,319) .. controls (74.4,319) and (65,309.6) .. (65,298) -- cycle ;
\draw   (183,87) -- (225,87) -- (225,77) -- (253,97) -- (225,117) -- (225,107) -- (183,107) -- cycle ;
\draw    (281,35) -- (307.36,75.49) ;
\draw [shift={(309,78)}, rotate = 236.93] [fill={rgb, 255:red, 0; green, 0; blue, 0 }  ][line width=0.08]  [draw opacity=0] (8.93,-4.29) -- (0,0) -- (8.93,4.29) -- cycle    ;
\draw    (363,36) -- (339.56,74.44) ;
\draw [shift={(338,77)}, rotate = 301.37] [fill={rgb, 255:red, 0; green, 0; blue, 0 }  ][line width=0.08]  [draw opacity=0] (8.93,-4.29) -- (0,0) -- (8.93,4.29) -- cycle    ;
\draw    (327,101) -- (358.32,147.51) ;
\draw [shift={(360,150)}, rotate = 236.04] [fill={rgb, 255:red, 0; green, 0; blue, 0 }  ][line width=0.08]  [draw opacity=0] (8.93,-4.29) -- (0,0) -- (8.93,4.29) -- cycle    ;
\draw    (327,101) -- (295.63,149.48) ;
\draw [shift={(294,152)}, rotate = 302.90999999999997] [fill={rgb, 255:red, 0; green, 0; blue, 0 }  ][line width=0.08]  [draw opacity=0] (8.93,-4.29) -- (0,0) -- (8.93,4.29) -- cycle    ;
\draw  [fill={rgb, 255:red, 255; green, 255; blue, 255 }  ,fill opacity=1 ] (306,101) .. controls (306,89.4) and (315.4,80) .. (327,80) .. controls (338.6,80) and (348,89.4) .. (348,101) .. controls (348,112.6) and (338.6,122) .. (327,122) .. controls (315.4,122) and (306,112.6) .. (306,101) -- cycle ;

\draw    (440,35) -- (466.36,75.49) ;
\draw [shift={(468,78)}, rotate = 236.93] [fill={rgb, 255:red, 0; green, 0; blue, 0 }  ][line width=0.08]  [draw opacity=0] (8.93,-4.29) -- (0,0) -- (8.93,4.29) -- cycle    ;
\draw    (522,36) -- (498.56,74.44) ;
\draw [shift={(497,77)}, rotate = 301.37] [fill={rgb, 255:red, 0; green, 0; blue, 0 }  ][line width=0.08]  [draw opacity=0] (8.93,-4.29) -- (0,0) -- (8.93,4.29) -- cycle    ;
\draw    (486,101) -- (517.32,147.51) ;
\draw [shift={(519,150)}, rotate = 236.04] [fill={rgb, 255:red, 0; green, 0; blue, 0 }  ][line width=0.08]  [draw opacity=0] (8.93,-4.29) -- (0,0) -- (8.93,4.29) -- cycle    ;
\draw    (486,101) -- (454.63,149.48) ;
\draw [shift={(453,152)}, rotate = 302.90999999999997] [fill={rgb, 255:red, 0; green, 0; blue, 0 }  ][line width=0.08]  [draw opacity=0] (8.93,-4.29) -- (0,0) -- (8.93,4.29) -- cycle    ;
\draw  [fill={rgb, 255:red, 255; green, 255; blue, 255 }  ,fill opacity=1 ] (465,101) .. controls (465,89.4) and (474.4,80) .. (486,80) .. controls (497.6,80) and (507,89.4) .. (507,101) .. controls (507,112.6) and (497.6,122) .. (486,122) .. controls (474.4,122) and (465,112.6) .. (465,101) -- cycle ;

\draw    (285,231) -- (311.36,271.49) ;
\draw [shift={(313,274)}, rotate = 236.93] [fill={rgb, 255:red, 0; green, 0; blue, 0 }  ][line width=0.08]  [draw opacity=0] (8.93,-4.29) -- (0,0) -- (8.93,4.29) -- cycle    ;
\draw    (367,232) -- (343.56,270.44) ;
\draw [shift={(342,273)}, rotate = 301.37] [fill={rgb, 255:red, 0; green, 0; blue, 0 }  ][line width=0.08]  [draw opacity=0] (8.93,-4.29) -- (0,0) -- (8.93,4.29) -- cycle    ;
\draw    (331,297) -- (362.32,343.51) ;
\draw [shift={(364,346)}, rotate = 236.04] [fill={rgb, 255:red, 0; green, 0; blue, 0 }  ][line width=0.08]  [draw opacity=0] (8.93,-4.29) -- (0,0) -- (8.93,4.29) -- cycle    ;
\draw    (331,297) -- (299.63,345.48) ;
\draw [shift={(298,348)}, rotate = 302.90999999999997] [fill={rgb, 255:red, 0; green, 0; blue, 0 }  ][line width=0.08]  [draw opacity=0] (8.93,-4.29) -- (0,0) -- (8.93,4.29) -- cycle    ;
\draw  [fill={rgb, 255:red, 255; green, 255; blue, 255 }  ,fill opacity=1 ] (310,297) .. controls (310,285.4) and (319.4,276) .. (331,276) .. controls (342.6,276) and (352,285.4) .. (352,297) .. controls (352,308.6) and (342.6,318) .. (331,318) .. controls (319.4,318) and (310,308.6) .. (310,297) -- cycle ;

\draw    (448,230) -- (474.36,270.49) ;
\draw [shift={(476,273)}, rotate = 236.93] [fill={rgb, 255:red, 0; green, 0; blue, 0 }  ][line width=0.08]  [draw opacity=0] (8.93,-4.29) -- (0,0) -- (8.93,4.29) -- cycle    ;
\draw    (530,231) -- (506.56,269.44) ;
\draw [shift={(505,272)}, rotate = 301.37] [fill={rgb, 255:red, 0; green, 0; blue, 0 }  ][line width=0.08]  [draw opacity=0] (8.93,-4.29) -- (0,0) -- (8.93,4.29) -- cycle    ;
\draw    (494,296) -- (525.32,342.51) ;
\draw [shift={(527,345)}, rotate = 236.04] [fill={rgb, 255:red, 0; green, 0; blue, 0 }  ][line width=0.08]  [draw opacity=0] (8.93,-4.29) -- (0,0) -- (8.93,4.29) -- cycle    ;
\draw    (494,296) -- (462.63,344.48) ;
\draw [shift={(461,347)}, rotate = 302.90999999999997] [fill={rgb, 255:red, 0; green, 0; blue, 0 }  ][line width=0.08]  [draw opacity=0] (8.93,-4.29) -- (0,0) -- (8.93,4.29) -- cycle    ;
\draw  [fill={rgb, 255:red, 255; green, 255; blue, 255 }  ,fill opacity=1 ] (473,296) .. controls (473,284.4) and (482.4,275) .. (494,275) .. controls (505.6,275) and (515,284.4) .. (515,296) .. controls (515,307.6) and (505.6,317) .. (494,317) .. controls (482.4,317) and (473,307.6) .. (473,296) -- cycle ;

\draw   (185,287) -- (227,287) -- (227,277) -- (255,297) -- (227,317) -- (227,307) -- (185,307) -- cycle ;

\draw (72+5,88.4+5) node [anchor=north west][inner sep=0.75pt]    {$\land $};
\draw (23,11.4) node [anchor=north west][inner sep=0.75pt]    {$x$};
\draw (125,11.4) node [anchor=north west][inner sep=0.75pt]    {$y$};
\draw (28,155.4) node [anchor=north west][inner sep=0.75pt]    {$x\land y$};
\draw (108,155.4) node [anchor=north west][inner sep=0.75pt]    {$x\land y$};
\draw (75+5,288.4+5) node [anchor=north west][inner sep=0.75pt]    {$\lor $};
\draw (24,209.4) node [anchor=north west][inner sep=0.75pt]    {$x$};
\draw (127,209.4) node [anchor=north west][inner sep=0.75pt]    {$y$};
\draw (30,353.4) node [anchor=north west][inner sep=0.75pt]    {$x\lor y$};
\draw (110,353.4) node [anchor=north west][inner sep=0.75pt]    {$x\lor y$};
\draw (351,156.4) node [anchor=north west][inner sep=0.75pt]    {$x\land y$};
\draw (271,156.4) node [anchor=north west][inner sep=0.75pt]    {$x\land y$};
\draw (368,12.4) node [anchor=north west][inner sep=0.75pt]    {$y$};
\draw (266,12.4) node [anchor=north west][inner sep=0.75pt]    {$x$};
\draw (315+5,89.4+5) node [anchor=north west][inner sep=0.75pt]    {$\land $};
\draw (510,156.4) node [anchor=north west][inner sep=0.75pt]    {$\overline{x} \lor \overline{y}$};
\draw (430,156.4) node [anchor=north west][inner sep=0.75pt]    {$\overline{x} \lor \overline{y}$};
\draw (527,12.4) node [anchor=north west][inner sep=0.75pt]    {$\overline{y}$};
\draw (424,12.4) node [anchor=north west][inner sep=0.75pt]    {$\overline{x}$};
\draw (475+5,91.4+5) node [anchor=north west][inner sep=0.75pt]    {$\lor $};
\draw (355,352.4) node [anchor=north west][inner sep=0.75pt]    {$x\lor y$};
\draw (275,352.4) node [anchor=north west][inner sep=0.75pt]    {$x\lor y$};
\draw (372,208.4) node [anchor=north west][inner sep=0.75pt]    {$y$};
\draw (269,208.4) node [anchor=north west][inner sep=0.75pt]    {$x$};
\draw (320+5,287.4+5) node [anchor=north west][inner sep=0.75pt]    {$\lor $};
\draw (518,351.4) node [anchor=north west][inner sep=0.75pt]    {$\overline{x} \land \overline{y}$};
\draw (438,351.4) node [anchor=north west][inner sep=0.75pt]    {$\overline{x} \land \overline{y}$};
\draw (535,207.4) node [anchor=north west][inner sep=0.75pt]    {$\overline{y}$};
\draw (433,207.4) node [anchor=north west][inner sep=0.75pt]    {$\overline{x}$};
\draw (482+5,284.4+5) node [anchor=north west][inner sep=0.75pt]    {$\land $};

\end{tikzpicture}
\caption{AND and OR gadgets for simulating AND/OR gates with fanin and fanout 2. For other values of fanin and fanout gadgets are the same but considering different number of inputs/outputs}
\label{fig:ANDORgatesgad}
\end{figure}
\begin{figure}
\centering

\tikzset{every picture/.style={line width=0.75pt}} 

\begin{tikzpicture}[x=0.75pt,y=0.75pt,yscale=-1,xscale=1]

\draw   (202,181) -- (244,181) -- (244,171) -- (272,191) -- (244,211) -- (244,201) -- (202,201) -- cycle ;
\draw    (44,66) -- (44,97) ;
\draw [shift={(44,100)}, rotate = 270] [fill={rgb, 255:red, 0; green, 0; blue, 0 }  ][line width=0.08]  [draw opacity=0] (8.93,-4.29) -- (0,0) -- (8.93,4.29) -- cycle    ;
\draw    (45,130) -- (45,181) ;
\draw [shift={(45,184)}, rotate = 270] [fill={rgb, 255:red, 0; green, 0; blue, 0 }  ][line width=0.08]  [draw opacity=0] (8.93,-4.29) -- (0,0) -- (8.93,4.29) -- cycle    ;
\draw  [fill={rgb, 255:red, 255; green, 255; blue, 255 }  ,fill opacity=1 ] (24,129) .. controls (24,117.4) and (33.4,108) .. (45,108) .. controls (56.6,108) and (66,117.4) .. (66,129) .. controls (66,140.6) and (56.6,150) .. (45,150) .. controls (33.4,150) and (24,140.6) .. (24,129) -- cycle ;

\draw    (482,64) -- (482,95) ;
\draw [shift={(482,98)}, rotate = 270] [fill={rgb, 255:red, 0; green, 0; blue, 0 }  ][line width=0.08]  [draw opacity=0] (8.93,-4.29) -- (0,0) -- (8.93,4.29) -- cycle    ;
\draw    (483,128) -- (483,179) ;
\draw [shift={(483,182)}, rotate = 270] [fill={rgb, 255:red, 0; green, 0; blue, 0 }  ][line width=0.08]  [draw opacity=0] (8.93,-4.29) -- (0,0) -- (8.93,4.29) -- cycle    ;
\draw  [fill={rgb, 255:red, 255; green, 255; blue, 255 }  ,fill opacity=1 ] (462,127) .. controls (462,115.4) and (471.4,106) .. (483,106) .. controls (494.6,106) and (504,115.4) .. (504,127) .. controls (504,138.6) and (494.6,148) .. (483,148) .. controls (471.4,148) and (462,138.6) .. (462,127) -- cycle ;
\draw    (319,65) -- (319,96) ;
\draw [shift={(319,99)}, rotate = 270] [fill={rgb, 255:red, 0; green, 0; blue, 0 }  ][line width=0.08]  [draw opacity=0] (8.93,-4.29) -- (0,0) -- (8.93,4.29) -- cycle    ;
\draw    (320,129) -- (320,180) ;
\draw [shift={(320,183)}, rotate = 270] [fill={rgb, 255:red, 0; green, 0; blue, 0 }  ][line width=0.08]  [draw opacity=0] (8.93,-4.29) -- (0,0) -- (8.93,4.29) -- cycle    ;
\draw  [fill={rgb, 255:red, 255; green, 255; blue, 255 }  ,fill opacity=1 ] (299,128) .. controls (299,116.4) and (308.4,107) .. (320,107) .. controls (331.6,107) and (341,116.4) .. (341,128) .. controls (341,139.6) and (331.6,149) .. (320,149) .. controls (308.4,149) and (299,139.6) .. (299,128) -- cycle ;
\draw    (327,213) -- (353.36,253.49) ;
\draw [shift={(355,256)}, rotate = 236.93] [fill={rgb, 255:red, 0; green, 0; blue, 0 }  ][line width=0.08]  [draw opacity=0] (8.93,-4.29) -- (0,0) -- (8.93,4.29) -- cycle    ;
\draw    (409,214) -- (385.56,252.44) ;
\draw [shift={(384,255)}, rotate = 301.37] [fill={rgb, 255:red, 0; green, 0; blue, 0 }  ][line width=0.08]  [draw opacity=0] (8.93,-4.29) -- (0,0) -- (8.93,4.29) -- cycle    ;
\draw    (373,279) -- (404.32,325.51) ;
\draw [shift={(406,328)}, rotate = 236.04] [fill={rgb, 255:red, 0; green, 0; blue, 0 }  ][line width=0.08]  [draw opacity=0] (8.93,-4.29) -- (0,0) -- (8.93,4.29) -- cycle    ;
\draw    (373,279) -- (341.63,327.48) ;
\draw [shift={(340,330)}, rotate = 302.90999999999997] [fill={rgb, 255:red, 0; green, 0; blue, 0 }  ][line width=0.08]  [draw opacity=0] (8.93,-4.29) -- (0,0) -- (8.93,4.29) -- cycle    ;
\draw  [fill={rgb, 255:red, 255; green, 255; blue, 255 }  ,fill opacity=1 ] (352,279) .. controls (352,267.4) and (361.4,258) .. (373,258) .. controls (384.6,258) and (394,267.4) .. (394,279) .. controls (394,290.6) and (384.6,300) .. (373,300) .. controls (361.4,300) and (352,290.6) .. (352,279) -- cycle ;
\draw    (490,213) -- (516.36,253.49) ;
\draw [shift={(518,256)}, rotate = 236.93] [fill={rgb, 255:red, 0; green, 0; blue, 0 }  ][line width=0.08]  [draw opacity=0] (8.93,-4.29) -- (0,0) -- (8.93,4.29) -- cycle    ;
\draw    (572,214) -- (548.56,252.44) ;
\draw [shift={(547,255)}, rotate = 301.37] [fill={rgb, 255:red, 0; green, 0; blue, 0 }  ][line width=0.08]  [draw opacity=0] (8.93,-4.29) -- (0,0) -- (8.93,4.29) -- cycle    ;
\draw    (536,279) -- (567.32,325.51) ;
\draw [shift={(569,328)}, rotate = 236.04] [fill={rgb, 255:red, 0; green, 0; blue, 0 }  ][line width=0.08]  [draw opacity=0] (8.93,-4.29) -- (0,0) -- (8.93,4.29) -- cycle    ;
\draw    (536,279) -- (504.63,327.48) ;
\draw [shift={(503,330)}, rotate = 302.90999999999997] [fill={rgb, 255:red, 0; green, 0; blue, 0 }  ][line width=0.08]  [draw opacity=0] (8.93,-4.29) -- (0,0) -- (8.93,4.29) -- cycle    ;
\draw  [fill={rgb, 255:red, 255; green, 255; blue, 255 }  ,fill opacity=1 ] (515,279) .. controls (515,267.4) and (524.4,258) .. (536,258) .. controls (547.6,258) and (557,267.4) .. (557,279) .. controls (557,290.6) and (547.6,300) .. (536,300) .. controls (524.4,300) and (515,290.6) .. (515,279) -- cycle ;

\draw    (56,214) -- (82.36,254.49) ;
\draw [shift={(84,257)}, rotate = 236.93] [fill={rgb, 255:red, 0; green, 0; blue, 0 }  ][line width=0.08]  [draw opacity=0] (8.93,-4.29) -- (0,0) -- (8.93,4.29) -- cycle    ;
\draw    (138,215) -- (114.56,253.44) ;
\draw [shift={(113,256)}, rotate = 301.37] [fill={rgb, 255:red, 0; green, 0; blue, 0 }  ][line width=0.08]  [draw opacity=0] (8.93,-4.29) -- (0,0) -- (8.93,4.29) -- cycle    ;
\draw    (102,280) -- (133.32,326.51) ;
\draw [shift={(135,329)}, rotate = 236.04] [fill={rgb, 255:red, 0; green, 0; blue, 0 }  ][line width=0.08]  [draw opacity=0] (8.93,-4.29) -- (0,0) -- (8.93,4.29) -- cycle    ;
\draw    (102,280) -- (70.63,328.48) ;
\draw [shift={(69,331)}, rotate = 302.90999999999997] [fill={rgb, 255:red, 0; green, 0; blue, 0 }  ][line width=0.08]  [draw opacity=0] (8.93,-4.29) -- (0,0) -- (8.93,4.29) -- cycle    ;
\draw  [fill={rgb, 255:red, 255; green, 255; blue, 255 }  ,fill opacity=1 ] (81,280) .. controls (81,268.4) and (90.4,259) .. (102,259) .. controls (113.6,259) and (123,268.4) .. (123,280) .. controls (123,291.6) and (113.6,301) .. (102,301) .. controls (90.4,301) and (81,291.6) .. (81,280) -- cycle ;

\draw (91+5,270.4+5) node [anchor=north west][inner sep=0.75pt]    {$\lor $};
\draw (143,191.4) node [anchor=north west][inner sep=0.75pt]    {$y$};
\draw (46,335.4) node [anchor=north west][inner sep=0.75pt]    {$\overline{x} \lor y$};
\draw (126,335.4) node [anchor=north west][inner sep=0.75pt]    {$\overline{x} \lor y$};
\draw (313,43.4) node [anchor=north west][inner sep=0.75pt]    {$\overline{x}$};
\draw (362+5,269.4+5) node [anchor=north west][inner sep=0.75pt]    {$\lor $};
\draw (414,190.4) node [anchor=north west][inner sep=0.75pt]    {$y$};
\draw (317,334.4) node [anchor=north west][inner sep=0.75pt]    {$\overline{x} \lor y$};
\draw (397,334.4) node [anchor=north west][inner sep=0.75pt]    {$\overline{x} \lor y$};
\draw (524+5,267.4+5) node [anchor=north west][inner sep=0.75pt]    {$\land $};
\draw (577,190.4) node [anchor=north west][inner sep=0.75pt]    {$y$};
\draw (480,334.4) node [anchor=north west][inner sep=0.75pt]    {$x\land \overline{y}$};
\draw (560,334.4) node [anchor=north west][inner sep=0.75pt]    {$x\land \overline{y}$};
\draw (310+5,119.4+5) node [anchor=north west][inner sep=0.75pt]    {$\lor $};
\draw (315,191.4) node [anchor=north west][inner sep=0.75pt]    {$\overline{x}$};
\draw (478,42.4) node [anchor=north west][inner sep=0.75pt]    {$x$};
\draw (473+5,118.4+5) node [anchor=north west][inner sep=0.75pt]    {$\lor $};
\draw (479,189.4) node [anchor=north west][inner sep=0.75pt]    {$x$};
\draw (38+3,119.4+5) node [anchor=north west][inner sep=0.75pt]    {$\neg $};
\draw (40,44.4) node [anchor=north west][inner sep=0.75pt]    {$x$};
\draw (39,191.4) node [anchor=north west][inner sep=0.75pt]    {$\overline{x}$};

\end{tikzpicture}

\caption{NOT gadget wiring for circuit simulation using gates from $\Gmon.$ In this case a NOT gate is connected to an OR gate in the original circuit. Copies of the NOT gate in the circuit performing simulation are connected to the copies of the OR gate switched: positive part is connected to negative part of the OR gate and viceversa.}
\label{fig:NOTgatesgad}

\end{figure}
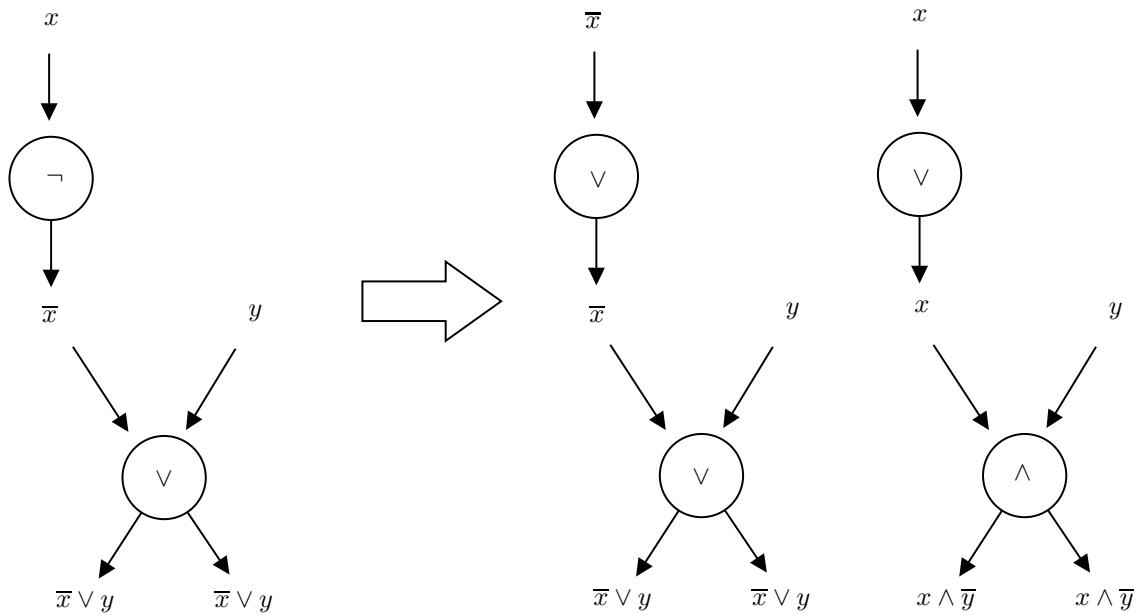

\begin{theorem}
The family $\Gamma(\Gmontwo)$ simulates in constant time and linear space the family $\Gamma(\Gmon)$, i.e. there exists a constant function $T: \N \to \N$ and a linear function $S: \N \to \N$ such that $\Gamma(\Gmon) \preccurlyeq^{T}_{S}\Gamma(\Gmontwo)$
\label{theo:gmontwosimgmon}
\end{theorem}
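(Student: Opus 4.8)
The plan is to prove Theorem~\ref{theo:gmontwosimgmon} by exhibiting coherent $\Gmon$-gadgets inside the family $\Gamma(\Gmontwo)$ and then invoking Lemma~\ref{lem:from-gadgets-to-networks} (or rather its bounded-degree instance), together with transitivity of the simulation relation. The only genuine difficulty is that a $\Gmon$-network mixes gates of various fan-in/fan-out values in $\{1,2\}$, whereas a $\Gmontwo$-network is built exclusively from gates with $i=o=2$. So the core of the argument is a uniform, \DLOG-computable local rewriting: replace each $\gateAND$ or $\gateOR$ gate of fan-in/fan-out signature $(i,o)$ by a small $\Gmontwo$-gadget that reproduces the same input/output functional relation, padding the missing inputs with a constant value and discarding surplus outputs, while maintaining synchrony so that all gadgets advance together. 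Concretely, a fan-in-$1$ gate is simulated by a fan-in-$2$ gate whose second input is fed a constant ($0$ for $\gateOR$, $1$ for $\gateAND$, which is a neutral element); a fan-out-$1$ gate is simulated by a fan-out-$2$ gate one of whose two outputs is simply left unread (in the $\GG$-network formalism this means we add an auxiliary ``sink'' gate absorbing it, or route it to an unused input of a constant-producing gadget). Producing the needed constant signals is itself done inside $\Gmontwo$: for instance $\gateOR(0,0)=0$ and $\gateAND(1,1)=1$ give self-sustaining constant loops.

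First I would set up the glueing interface $C=C_i\cup C_o$ to be a single node on each side, carrying one bit; the state configurations $s_0,s_1$ are just the two Boolean values, which is trivially an injective coding. Then for each of the finitely many signatures $(i,o)\in\{1,2\}^2$ and each gate type in $\{\gateAND,\gateOR\}$ I would describe one $\Gmontwo$-gadget $F_g$: its node set consists of the ``active'' gate node(s), the input dowel copies $\phi^i_{g,k}(C)$, the output dowel copies $\phi^o_{g,k}(C)$, and a bounded context block $\hat V_g$ holding the constant-producing loops and the sinks for unused outputs. The time constant can be taken uniform, say $T=2$ or $T=3$, long enough to let a signal cross one gate and to let the constant loops tick back to their standard state; padding with identity-like $\Gmontwo$ gates (note $\gateAND(x,x)=\gateOR(x,x)=(x,x)$) lets me equalize the depth of all gadgets. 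The standard trace $\tau_{q,q'}$ on the interface is the obvious two- or three-step trace from $s_q$ to $s_{q'}$ realized by a chain of copy gates inside a wire sub-gadget. Checking Definition~\ref{def:coherent-gadgets} then reduces to: (i) verifying the closure of this finite gadget set by gadget glueing stays in $\Gamma(\Gmontwo)$ — immediate since gadget glueing of $\Gmontwo$-networks is again a $\Gmontwo$-network, the local maps being unchanged; and (ii) exhibiting, for each gate $g$ and each tuple of boundary states $q_{i,k},q_{o,k},q'_{i,k},q'_{o,k}$ consistent with $g$, a $P_g$-pseudo-orbit of $F_g$ whose interface traces are exactly the prescribed standard traces and whose context returns to $c_g$ after $T$ steps. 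This last point is a routine finite check: because $\gateOR$ and $\gateAND$ with a neutral constant compute the identity-restricted map, the computation on the active node is exactly $g$ applied to the $i$ relevant inputs, delivered simultaneously to the $o$ outputs, and the constant loops and sinks evolve independently and periodically with period dividing $T$.

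Having established that $\Gamma(\Gmontwo)$ has coherent $\Gmon$-gadgets, Lemma~\ref{lem:from-gadgets-to-networks} applied with the bounded-degree representations yields a subfamily of $\Gamma(\Gmontwo)$ that simulates $\Gamma(\Gmon)=\Gamma(\mathcal{G})$ in constant time $T$ and linear space $S$, which is exactly the statement ${\Gamma(\Gmon)\preccurlyeq^T_S\Gamma(\Gmontwo)}$. I would also remark that, combined with Theorem~\ref{theo:gmon-univ} and transitivity of $\preccurlyeq$, this reproves strong universality of $\Gamma(\Gmontwo)$, which is presumably the intended corollary.

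The main obstacle I anticipate is purely bookkeeping rather than conceptual: making the padding of fan-in-$1$ and fan-out-$1$ gates fit the rigid $\GG$-network wiring conventions of Definition~\ref{def:g-network}, where every output of every gate must be wired to exactly one input (the bijections $\alpha,\beta$). Surplus outputs of fan-out-$2$ gates must be routed somewhere legal, so I need a ``garbage collector'' gadget — e.g. a $\gateOR$ gate whose output feeds a constant-$0$ loop or is wired back so as not to perturb anything — and I must be careful that adding it does not break the no-self-loop condition nor the irreducibility hypothesis needed for Lemma~\ref{lem:from-gadgets-to-networks}. Ensuring uniform depth across all signature/type combinations (so a single time constant $T$ works and Lemma~\ref{lem:pseudo-orbit-glueing} can be iterated) is the other place where care is needed, but the freedom to insert copy gates (which are in $\Gmontwo$) makes this straightforward.
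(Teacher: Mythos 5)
Your overall strategy is the same as the paper's: replace each $\Gmon$-gate by a small synchronized $\Gmontwo$-block, padding missing inputs with a neutral constant ($1$ for $\gateAND$, $0$ for $\gateOR$) and disposing of surplus outputs. The paper does this as a direct block embedding (with the $3$-bit coding $x\mapsto(x,x,0)$ and depth-$6$ blocks) rather than through the coherent-gadget/glueing machinery, but that difference is mostly packaging. The place where your sketch is genuinely thinner than the paper is exactly the point you dismiss as ``purely bookkeeping'': in a $\GG$-network the wiring is a \emph{bijection} between all inputs and all outputs, so your constant sources and sinks cannot be added locally and independently — the whole construction only closes up because of a global counting identity. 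Writing $n_{io}$ for the number of gates of signature $(i,o)$, the constraint $|I|=|O|$ forces $n_{12}=n_{21}$, hence the number of padding constants you must inject ($n_{11}+n_{12}$) equals the number of surplus outputs you must absorb ($n_{11}+n_{21}$). This identity is the actual content of the paper's proof (stated there as the bijection between fanin-$1$-fanout-$2$ and fanin-$2$-fanout-$1$ gates), and your proposal never establishes it; without it there is no guarantee the source/sink wiring can be completed. Moreover, a combined source-sink must absorb an \emph{arbitrary} bit while emitting a fixed constant, so your concrete suggestion of ``a $\gateOR$ gate whose output feeds a constant-$0$ loop'' fails: an OR-based $0$-loop is destroyed the first time a garbage $1$ enters it. You need AND-based loops for constant $0$ and OR-based loops for constant $1$ (and a two-gate cycle to respect the no-self-loop condition).

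The second gap is your claim that closure of the gadget set under gadget glueing ``is immediate since gadget glueing of $\Gmontwo$-networks is again a $\Gmontwo$-network''. The paper proves such a closure statement only for CSAN families (Lemma~\ref{lem:concreteglueing}); for $\GG$-network families nothing analogous is provided, and it is not automatic: a gadget in the sense of Definition~\ref{def:gadgets-glueing} must itself be a \emph{closed} automata network of the family, so its input-dowel nodes must already be wired to something inside the gadget, and one has to check that the read-redirection $\rho_i$ performed by the glueing always leaves a legal bijective input/output wiring with no self-loops. If you want to route the proof through Lemma~\ref{lem:from-gadgets-to-networks} rather than through a direct block embedding as the paper does, this verification has to be carried out explicitly.
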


\begin{proof}
Let $F:Q^{V} \to Q^{V}$ be an arbitrary $\Gmon$-network coded by its standard representation defined by a list of gates $g_{1}, \hdots, g_{n}$ and two functions $\alpha$ an $\beta$ mapping inputs to nodes in $F$ and nodes in $F$ to outputs respectively. We are going to construct in $\DLOG$ a $\Gmontwo$-network $G$ that simulates $F$ in time $T = \mathcal{O}(1)$ and space $S = \mathcal{O}(|V|)$ where $H$ is the communication graph of $F$. In order to do that, we are going to replace each gate $g_{k}$ by a small gadget. More precisely, we are going to introduce the following coding function: $x \in \{0,1\} \to (x,x,0) \in \{0,1\}^{3}$. We are going to define gadgets for each gate. Let us take $k \in \{1,\hdots,n\}$ and call $g^{*}_{k}$ the corresponding gadget associated to $g_{k}$. Let us that suppose $g_{k}$ is an OR gate and that it has fanin $2$ and fanout $1$ then, we define $g^{*}: \{0,1\}^{6} \to \{0,1\}^{6}$ as a function that for each input of the form $(x,x,y,y,0,0)$ produces the output $g^{*}((x,x,y,y,0,0)) = (x \vee y, x \vee y, 0,0,0,0)$. The case fanin $1$ and fanout $1$ is given by $g^{*}((x,x,0,0,0,0)) = (x, x, 0,0,0,0)$,  the case fanin 2 and fanout 2 is given by $g^{*}((x,x,y,y,0,0)) = (x \vee y, x \vee y, x \vee y, x \vee y,0,0)$ and finally case fanin $1$ and fanout $2$ is given by the same latter function but on input (x,x,0,0,0,0). The AND case is completely analogous. We are going to implement the previous functions as small (constant depth) synchronized circuits that we call block gadgets. More precisely, we are going to identify functions $g^{*}$ with its correspondent block gadget. The detail on the construction of these circuits that define latter functions are provided in Figures \ref{fig:simgmongmon21}, \ref{fig:simgmongmon22} and \ref{fig:simgmongmon23}.

Once we have defined the structure of block gadgets, we have to manage connections between them and  also manage the fixed $0$ inputs that we have added in addition  to the zeros that are produced by the blocks as outputs. In order to do that, let us assume that gates $g_{i} $ and $g_{j}$ are connected. Note from the discussion on coding above that AND/OR gadgets have between $2$ and $4$ inputs and outputs fixed to $0$. In particular, as it is shown in Figures \ref{fig:simgmongmon21}, \ref{fig:simgmongmon22} and \ref{fig:simgmongmon23}, all the block gadgets have the same amount of zeros in the input and in the output with the exception of the fanin $1$ fanout $2$ gates and the fanin $2$ fanout $1$ gates. However, as $\GG$-networks are closed systems (the amount of inputs must be the same that the amount of outputs) we have that, for each  fanin $1$ fanout $2$  gate,  it must be a fanin $2$ fanout $1$ gate and vice versa (otherwise there would be more input than outputs or more outputs than inputs). In other words, there is a bijection between the set of fanin $1$ fanout $2$  gates and the set of fanin $2$ fanout $1$. Observe that fanin $2$ fanout $1$ gates consume $2$ zeros in input but produce $4$ zeros in output while fanin $1$ fanout $2$ gates consume $4$ in input and produce $2$ zeros in output (see Figures \ref{fig:simgmongmon22} and \ref{fig:simgmongmon23}). So, between $g^{*}_{i}$ and $g^{*}_{j}$ we have to distinct two cases: a) if both gates have the same number of inputs and outputs, connections are managed in the obvious way i.e., outputs corresponding to the computation performed by original gate are assigned between $g^{*}_{i}$ and $g^{*}_{j}$  and each gate uses the same zeros they produce to feed its inputs. b) if $g^{*}_{i}$ or $g^{*}_{j}$ have more inputs than outputs or vice versa, we have to manage the extra zeros (needed or produced). Without lost of generality, we assume that $g^{*}_{i}$ is fanin $2$ fanout $1$. Then, by latter observation it must exists another gate $g_{k}$ and thus, a gadget block $g^{*}_{k}$ with fanin $1$ and fanout $2$. We simply connect extra zeros produced by  $g^{*}_{i}$ to block  $g^{*}_{k}$ and we do the same we did in previous case in order to manage connections.

Note that $F^{*}$  is constructible in $\DLOG$ as it suffices to read the standard representation of $F$ and produce the associated block gadgets which have constant size. In addition we have that previous encoding $g \to g^*$  induce a block map $\phi: \{0,1\}^{V} \to \{0,1\}^{V^{+}}$  where $|V^{+}| =  \mathcal{O} ( |V| )$ and that $\phi \circ F = F^{*T} \circ \phi$ where $T=6$ is the size of each gadget block in $F^{*}$. We conclude that $F^{*} \in \Gamma(\Gmontwo)$ simulates  $F$ in space $|V^{+}| = \mathcal{O} ( |V| )$ and time $T =6$ and thus,  $ \Gamma(\Gmon)  \preccurlyeq^T_{S} \Gamma(\Gmontwo)$ where $T$ is constant and $S$ is a linear function.  .

 \end{proof}
\begin{figure}
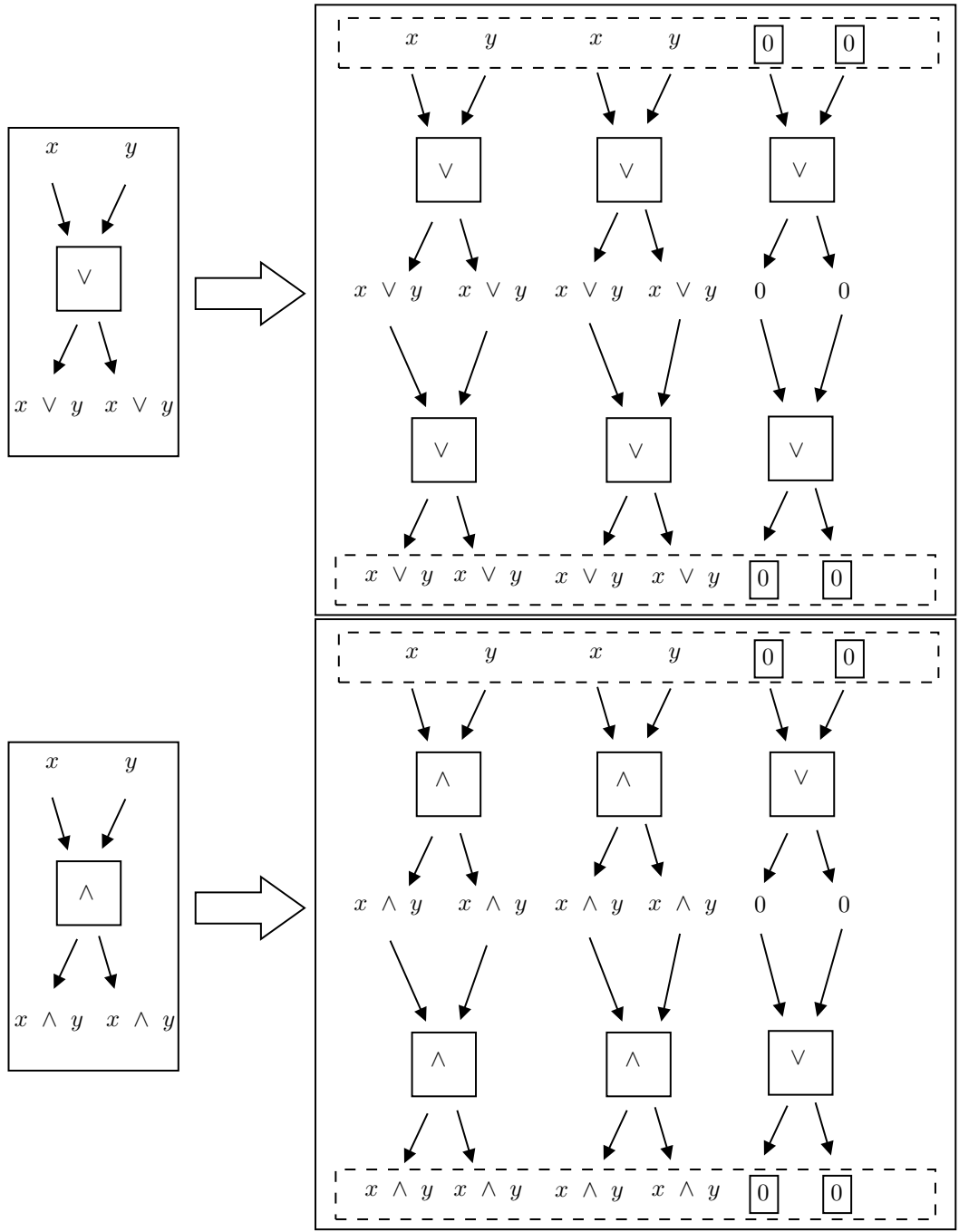

\centering

\tikzset{every picture/.style={line width=0.75pt}} 



\caption{Block gadgets for simulating  fanin $2$ fanout $2$ AND/OR gates using only gates in $\Gmontwo$. Squared zeros represent the amount of zeros that can be used as inputs for the same block. }
\label{fig:simgmongmon23}
\end{figure}
\begin{corollary}
The family $\Gamma(\Gmontwo)$ is strongly universal.
\end{corollary}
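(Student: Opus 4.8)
The plan is to combine the two preceding results by transitivity of simulation. By Theorem~\ref{theo:gmon-univ}, the family $\Gamma(\Gmon)$ is strongly universal; that is, for any alphabet $Q$ and any degree $\Delta\geq 1$ it simulates $(\mathcal{B}_{Q,\Delta},\mathcal{B}_{Q,\Delta}^*)$ in constant time and linear space. By Theorem~\ref{theo:gmontwosimgmon}, $\Gamma(\Gmontwo)$ simulates $\Gamma(\Gmon)$ in constant time $T$ and linear space $S$, i.e. $\Gamma(\Gmon) \preccurlyeq^{T}_{S}\Gamma(\Gmontwo)$ with $T$ constant and $S$ linear. So the main step is simply to compose these two simulations.

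First I would invoke the transitivity of the simulation relation between families (stated in the remark following Definition~\ref{def:simu-family}): since $\DLOG$ is closed under composition and block simulation between individual automata networks is transitive, $\mathcal{B}_{Q,\Delta} \preccurlyeq \Gamma(\Gmon) \preccurlyeq \Gamma(\Gmontwo)$ yields $\mathcal{B}_{Q,\Delta}\preccurlyeq\Gamma(\Gmontwo)$. The key point to check is that the composed time and space maps remain of the required order: when composing simulations, the time maps multiply (constant times constant is constant) and the space maps compose (linear composed with linear is linear). Hence for every alphabet $Q$ and degree $\Delta\geq 1$, the family $\Gamma(\Gmontwo)$ simulates $(\mathcal{B}_{Q,\Delta},\mathcal{B}_{Q,\Delta}^*)$ in constant time and linear space, which is exactly the definition of strong universality (Definition~\ref{def:universal}).

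There is essentially no obstacle here: the statement is a direct corollary and the only thing to be slightly careful about is bookkeeping of the representations (both families carry their bounded-degree / standard representations, and the $\DLOG$ machine producing $\Gmontwo$-networks from $\Gmon$-networks was already exhibited in the proof of Theorem~\ref{theo:gmontwosimgmon}, while the one producing $\Gmon$-networks from bounded-degree networks was exhibited in the proof of Theorem~\ref{theo:gmon-univ}). Concatenating the two $\DLOG$ machines and composing the two block embeddings gives the required single $\DLOG$ machine. I would write this up in a few lines, essentially: ``By Theorem~\ref{theo:gmon-univ}, $\Gamma(\Gmon)$ is strongly universal. By Theorem~\ref{theo:gmontwosimgmon}, $\Gamma(\Gmon)\preccurlyeq^T_S\Gamma(\Gmontwo)$ with $T$ constant and $S$ linear. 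By transitivity of simulation between families, $\Gamma(\Gmontwo)$ simulates $(\mathcal{B}_{Q,\Delta},\mathcal{B}_{Q,\Delta}^*)$ for any $Q,\Delta$ in constant time and linear space, hence $\Gamma(\Gmontwo)$ is strongly universal.''
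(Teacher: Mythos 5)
Your proposal is correct and follows exactly the paper's own argument: the corollary is obtained by composing Theorem~\ref{theo:gmon-univ} (strong universality of $\Gamma(\Gmon)$) with Theorem~\ref{theo:gmontwosimgmon} ($\Gamma(\Gmon)\preccurlyeq^T_S\Gamma(\Gmontwo)$ with $T$ constant and $S$ linear) via transitivity of simulation. The extra bookkeeping you mention about composing the \DLOG{} machines and checking that constant-times-constant and linear-composed-with-linear stay in the right classes is exactly what the paper's remark on transitivity relies on, so nothing is missing.
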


\begin{proof}
Result is direct from Theorem \ref{theo:gmon-univ} ($\Gamma(\Gmon)$ is strongly universal)  and Theorem \ref{theo:gmontwosimgmon} ($ \Gamma(\Gmon)  \preccurlyeq^T_{S} \Gamma(\Gmontwo)$ where $T$ is constant and $S$ is a linear function).
\end{proof}


Now we show the universality of $\Gamma(\Gmon)$. The proof is essentially a consequence of \cite[Theorem 6.2.5]{Greenlaw_1995}.
Roughly, latter result starts with alternated monotone circuit which has only fanin $2$ and fanout $2$ gates (previous results in the same reference show that one can always reduce to this case starting from an arbitrary circuit) and  gives an $\textbf{NC}^{1}$ construction of a synchronous circuit preserving latter properties.
We need additional care here because we want a reusable circuit whose output is fed back to its input.
Note also that the construction uses quadratic space in the number of gates of the circuit given in input, so we cannot show strong universality this way but only universality.
\begin{theorem}\label{teo:gmonuniv}
The family $\Gamma(\Gmon)$ of all $\Gmon$-networks is universal
\end{theorem}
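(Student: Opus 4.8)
The plan is to follow the proof scheme of Theorem~\ref{theo:gmon-univ} almost verbatim, the only change being that we now start from an arbitrary polynomial-size circuit rather than from a constant-depth one; this forces the spatio-temporal rescaling to be polynomial instead of constant/linear, which is exactly why we obtain universality but not strong universality.

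Fix an alphabet $Q$, a polynomial map $P$ and an arbitrary ${F\in\mathcal{U}_{Q,P}}$, given by a circuit ${C:\{0,1\}^{kn}\to\{0,1\}^{kn}}$ of size at most ${P(n)}$ with ${m_Q\circ F=C\circ m_Q}$. First I would, in $\DLOG$, normalize $C$ so that it uses only NOT, AND and OR gates of fanin and fanout at most $2$, and then apply the circuit transformations of \cite[Theorems 6.2.3--6.2.5]{Greenlaw_1995} (which run in ${\NC^1\subseteq\DLOG}$) to obtain a \emph{synchronous} (layered) \emph{monotone} circuit $C^*$ of fanin/fanout at most $2$ computing the double-railed encoding ${\phi_0:x\mapsto(x,\neg x)}$ of $C$, \textit{i.e.} ${\phi_0\circ C=C^*\circ\phi_0}$, realized by the AND/OR gadgets and the rail-swap trick for NOT as in Figures~\ref{fig:ANDORgatesgad} and~\ref{fig:NOTgatesgad}. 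Making a circuit synchronous inserts on each wire a chain of at most ${\mathrm{depth}(C)\leq\mathrm{size}(C)}$ identity (fanin-$1$/fanout-$1$) gates, so $C^*$ has size ${O(P(n)^2)}$ and polynomial depth $D$; I organize it into ${N=D+1}$ layers ${L_0,\dots,L_D}$ (inputs, then the $D$ levels of gates).

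Next I would build the $\Gmon$-network $F^*$ over ${\{0,1\}^{V^+}}$, as in Theorem~\ref{theo:gmon-univ}, by wiring output $j$ of $C^*$ back to input $j$ for all ${1\leq j\leq 2kn}$; since $C^*$ uses only AND and OR gates of fanin/fanout in ${\{1,2\}}$ this is a $\Gmon$-network, and a bounded-degree representation of it is $\DLOG$-computable from $w_F$ because all the above transformations are. The network $F^*$ is a directed cycle of layers: ${L_{i+1\bmod N}}$ depends only on ${L_i}$. I would then define the block embedding ${\psi:Q^V\to\{0,1\}^{V^+}}$ by taking ${D_v}$ to be the input nodes of $C^*$ (all contained in ${L_0}$) coding $v$ via ${m_Q}$ followed by double rails, ${p_{v,q}}$ the corresponding pattern, and everything else a context block with constant pattern $0$ (Remark~\ref{rem:bloc-simul}). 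On ${\psi(x)}$ every layer is zero except ${L_0}$; since all-zero is a fixed point of monotone gates, a straightforward induction on $t$ shows that a single ``computation wave'' travels once around the cycle, so that after exactly $N$ steps the only non-zero layer is again ${L_0}$ and it holds the double-railed encoding of ${m_Q(F(x))}$. Hence ${\psi\circ F=(F^*)^N\circ\psi}$ on ${Q^V}$, so $F^*$ simulates $F$ with time constant ${T(n)=N=P(n)^{O(1)}}$ on ${|V^+|=O(P(n)^2)}$ nodes; as ${T,S}$ are polynomial and $\DLOG$-computable and $Q,P$ were arbitrary, $\Gamma(\Gmon)$ is universal.

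The only point that needs care, and the place where the argument is not purely routine, is the \emph{reusability} of the circuit: after $C^*$ has produced ${F(x)}$ and the output has been fed back to the input layer, all internal layers must be clean so that the next computation round starts correctly. This is precisely what is guaranteed by the synchronous layered structure together with the fact that an all-zero layer is a fixed point of monotone gates, and it is also exactly what costs us the quadratic blow-up in the number of nodes — so this construction, unlike the one of Theorem~\ref{theo:gmon-univ}, cannot be pushed to strong universality.
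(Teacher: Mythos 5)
Your proposal is correct and follows essentially the same route as the paper's proof: both reduce to the Greenlaw--Hoover--Ruzzo transformation into a synchronous, bounded-fanin/fanout monotone circuit (with double-rail encoding for negations), wire outputs back to inputs, and use the layered ``single non-zero layer travels around the cycle'' argument to get a polynomial-time, polynomial-space block simulation. The only point the paper makes explicit that you gloss over is that the GHR construction duplicates inputs and introduces constant inputs, which must be reconciled with the output-to-input wiring via small $O(\log n)$-depth trees of copy/AND/OR gates; this is a routine fix and does not affect the validity of your argument.
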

\begin{proof}
Let $F:Q^{k} \to Q^{k}$ be some arbitrary network with a circuit representation $C:\{0,1\}^{n} \to \{0,1\}^{n}$ such that $n = k^{\mathcal{O}(1)}.$ By \cite[Theorem 6.2.5]{Greenlaw_1995} we can assume that there exists a circuit  $C':\{0,1\}^{n'} \to \{0,1\}^{n'}$ where $n' = \mathcal{O}(n^{2})$ such that $C'$ is synchronous alternated and monotone. In addition, every gate in $C'$ has fanin and fanout $2$. We remark that latter reference do not only provides the standard encoding of $C'$ but also give us a $\DLOG$ algorithm (it is actually $\textbf{NC}^{1}$) which takes the standard representation of $C:\{0,1\}^{n} \to \{0,1\}^{n}$ and produces $C'$. We are going to slightly modify latter algorithm in order to construct not only a circuit but a $\Gmon$-network. In fact, the only critical point is to manage the identification between outputs and inputs. This is not direct from the result by Ruzzo et al. as their algorithm involves duplication of inputs and also adding constant inputs. In order to manage this, it suffices to simply modify their construction in order to mark original, copies and constant inputs. Then, as $\Gmon$ includes COPY gates and also AND/OR gates with fanout $1$, one can always produce copies of certain input if we need more, or erase extra copies by adding and small tree of $\mathcal{O}(\log(n))$ depth. Same goes for constant inputs. Formally, at the end of the algorithm, the $\DLOG$ can read extra information regarding copies and constant inputs, and then can construct $\mathcal{O}(\log(n))$ depth circuit that produces a coherent encoding for inputs and outputs. This latter construction defines a $\Gmon$-network $G:\{0,1\}^{n''} \to \{0,1\}^{n''}$ and an encoding $\phi:Q^{n} \to \{0,1\}^{n''}$ where $n'' = \mathcal{O}(n^{2})$  such that $\phi \circ F = G^{T} \circ \phi$ where $T = \mathcal{O}(\text{depth(C')} + \log(n))$. Thus, $\Gmon$ is universal.
\end{proof}

We can now state the following direct corollary.
\begin{corollary}\label{coro:strongunivimpliesuniv}
Let $\mathcal{F}$ be a strongly universal automata network family. Then, $\mathcal{F}$ is universal.
\end{corollary}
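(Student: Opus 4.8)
The statement to prove is Corollary~\ref{coro:strongunivimpliesuniv}: every strongly universal family is universal. The plan is to reduce this to a single chain of simulations, using transitivity of the simulation preorder between families (which was noted earlier: \DLOG{} is closed under composition and block simulation between individual networks is transitive, so $\preccurlyeq$ is transitive, and time/space maps compose). The key observation is that strong universality already gives us, for free, a very tight simulation of the bounded-degree families $\mathcal{B}_{Q,\Delta}$, and universality only asks for a (looser, polynomial) simulation of the families $\mathcal{U}_{Q,P}$. So the whole argument is: strongly universal $\mathcal{F}$ simulates some fixed small bounded-degree family, which in turn simulates the universal $\Gmon$-networks, which in turn simulate everything in $\mathcal{U}_{Q,P}$.

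Concretely, first I would fix an arbitrary alphabet $Q$ and polynomial map $P$, and recall from Theorem~\ref{teo:gmonuniv} that $\Gamma(\Gmon)$ is universal, meaning it simulates $(\mathcal{U}_{Q,P},\mathcal{U}_{Q,P}^\ast)$ in polynomial time and space. Next I would note that, since $\GG_m = \{\gateAND,\gateOR\}$ is a fixed finite set of gates over the binary alphabet, all $\Gmon$-networks have bounded degree: there is a constant $\Delta_0$ such that $\Gamma(\Gmon)$ is a subfamily of $\mathcal{B}_{\{0,1\},\Delta_0}$, and moreover by Remark~\ref{rem:represent-g-networks} the bounded-degree representation of $\Gmon$-networks is \DLOG{}-equivalent to the gate-list representation, so $\Gamma(\Gmon)$ embeds into $(\mathcal{B}_{\{0,1\},\Delta_0},\mathcal{B}_{\{0,1\},\Delta_0}^\ast)$ via a \DLOG{} map (identity on networks, time constant $1$, identity block embedding) — hence $(\Gamma(\Gmon),\cdot)$ is simulated by $(\mathcal{B}_{\{0,1\},\Delta_0},\cdot)$ in constant time and linear space. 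Then, by the definition of strong universality, $\mathcal{F}$ simulates $(\mathcal{B}_{\{0,1\},\Delta_0},\mathcal{B}_{\{0,1\},\Delta_0}^\ast)$ in constant time $T_1$ and linear space $S_1$.

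Finally I would chain the three simulations using transitivity of $\preccurlyeq$:
\[
(\mathcal{U}_{Q,P},\mathcal{U}_{Q,P}^\ast)\ \preccurlyeq\ (\Gamma(\Gmon),\cdot)\ \preccurlyeq\ (\mathcal{B}_{\{0,1\},\Delta_0},\mathcal{B}_{\{0,1\},\Delta_0}^\ast)\ \preccurlyeq\ (\mathcal{F},\mathcal{F}^\ast),
\]
where the first simulation has polynomial time/space maps, the second has constant-time/linear-space maps, and the third has constant-time/linear-space maps. Composing, the overall time and space maps $\mathcal{F}$ uses to simulate $\mathcal{U}_{Q,P}$ are compositions of polynomial and (a fortiori polynomial) linear/constant maps, hence polynomial, and \DLOG{}-computable. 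Since $Q$ and $P$ were arbitrary, this is exactly the definition of universality for $\mathcal{F}$.

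The only genuinely delicate point — and the one I would be most careful about — is making the embedding $\Gamma(\Gmon)\hookrightarrow \mathcal{B}_{\{0,1\},\Delta_0}$ a legitimate simulation in the precise sense of Definition~\ref{def:simu-family}: one must check that a \DLOG{} machine, given the representation of a $\Gmon$-network, outputs a bounded-degree representation of the same network together with the (trivial) time constant $1$ and the trivial block embedding where each block is a single node; this is exactly what Remark~\ref{rem:represent-g-networks} provides (the two representations are \DLOG{}-equivalent for irreducible gates, and $\gateAND,\gateOR$ are irreducible). Everything else is routine bookkeeping of composing \DLOG{} reductions and polynomial resource bounds, so the proof is short.
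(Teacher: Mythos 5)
Your proposal is correct and follows essentially the same route as the paper: the paper's proof also goes through $\Gamma(\Gmon)$, observing that it is a bounded-degree family that is universal (Theorem~\ref{teo:gmonuniv}), so a strongly universal $\mathcal{F}$ simulates it via some $\mathcal{B}_{\{0,1\},\Delta_0}$ and transitivity yields universality. Your write-up is in fact more explicit than the paper's about the intermediate embedding $\Gamma(\Gmon)\hookrightarrow\mathcal{B}_{\{0,1\},\Delta_0}$ and the composition of resource bounds, but the underlying argument is identical.
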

\begin{proof}
In order to show the result, it suffices to exhibit a $\GG$-network family ($\GG$-networks are bounded degree networks) which is strongly universal and universal at the same time. By Theorem \ref{teo:gmonuniv} we take $\GG = \Gmon$ and thus, corollary holds.
\end{proof}

\begin{corollary}\label{cor:univfrommon}
  Let $\GG$ be either $\Gmon$ or $\Gmontwo$.
  Any family $\mathcal{F}$ that has coherent $\GG$-gadgets contains a subfamily of bound degree networks with bounded degree representation which is  (strongly) universal. Any CSAN family with coherent $\GG$-gadgets is  (strongly) universal.
\end{corollary}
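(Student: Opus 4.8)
The statement combines three ingredients already assembled in the paper, so the plan is essentially to chain them together. First, recall from Lemma~\ref{lem:from-gadgets-to-networks} that if an abstract family $\mathcal{F}$ has coherent $\GG$-gadgets for a set $\GG$ of irreducible gates, then $\mathcal{F}$ contains a subfamily of bounded degree networks $(\mathcal{F}_0,\mathcal{F}_0^*)$ that simulates $\Gamma(\GG)$ in constant time and linear space. So I would begin by checking that for $\GG\in\{\Gmon,\Gmontwo\}$ the gates are indeed irreducible: each gate is an $\mathrm{AND}$/$\mathrm{OR}$ map of type $Q^{i}\to Q^{o}$ with $i,o\in\{1,2\}$ whose dependency (bipartite) graph is weakly connected in every case (a single output depending on both inputs, or two outputs each depending on the single input, or the identity), so $\GG$ consists only of irreducible gates and Lemma~\ref{lem:from-gadgets-to-networks} applies.

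Second, for the universality conclusion I would invoke the transitivity of the simulation relation together with Theorems~\ref{theo:gmon-univ} and~\ref{teo:gmonuniv} (for $\Gmon$) and Theorem~\ref{theo:gmontwosimgmon} with the Corollary stating $\Gamma(\Gmontwo)$ is strongly universal (for $\Gmontwo$). Concretely: if $\mathcal{F}$ has coherent $\GG$-gadgets then $\mathcal{F}_0$ simulates $\Gamma(\GG)$ in constant time and linear space; since $\Gamma(\Gmon)$ is strongly universal and universal (Theorems~\ref{theo:gmon-univ},~\ref{teo:gmonuniv}) and $\Gamma(\Gmontwo)$ is strongly universal (via Theorem~\ref{theo:gmontwosimgmon} and Theorem~\ref{theo:gmon-univ}), composing simulations gives that $\mathcal{F}_0$, hence $\mathcal{F}$, simulates $\mathcal{B}_{Q,\Delta}$ in constant time and linear space (strong universality) for $\GG=\Gmontwo$, and simulates $\mathcal{U}_{Q,P}$ in polynomial time and space (universality) for $\GG=\Gmon$ — remembering that strong universality implies universality by Corollary~\ref{coro:strongunivimpliesuniv}, which is why the parenthetical "(strongly)" is correct in both cases once we note $\Gmon$-networks themselves are both. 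The only bookkeeping point is that the spatio-temporal rescaling factors compose correctly: constant $\circ$ linear stays within the constant-time/linear-space regime required for strong universality, and polynomial $\circ$ polynomial stays polynomial for universality, which is exactly what Definition~\ref{def:simu-family} and the remark on composition of $S,T$ maps guarantee.

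Third, for the CSAN clause I would use Corollary~\ref{cor:gadgets-mon-csan}: when $\mathcal{F}$ is a CSAN family with coherent $\GG$-gadgets, then $\mathcal{F}$ itself (not merely a subfamily) simulates $\Gamma(\GG)$ in constant time and linear space, because Remark~\ref{rem:csan-bounded-degree} lets one convert the bounded degree representation of the simulating networks into a CSAN representation in \DLOG{}. Then the same transitivity argument as above upgrades this to (strong) universality of the CSAN family $\mathcal{F}$ directly.

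I expect no genuine obstacle here — the corollary is a packaging of earlier results — but the one place requiring care is making sure the rescaling bounds are the right ones in each branch: for $\Gmontwo$ we must land in the strongly-universal regime (constant $T$, linear $S$), which forces us to use the constant-time/linear-space simulation of Lemma~\ref{lem:from-gadgets-to-networks} and the constant-time/linear-space simulation $\Gamma(\Gmon)\preccurlyeq^T_S\Gamma(\Gmontwo)$ of Theorem~\ref{theo:gmontwosimgmon}, together with strong universality of $\Gamma(\Gmon)$ (Theorem~\ref{theo:gmon-univ}); whereas for $\Gmon$ we may also conclude plain universality via Theorem~\ref{teo:gmonuniv}, whose construction is only polynomial-space and hence cannot give strong universality — so the "(strongly)" in the $\Gmon$ case is justified by $\Gamma(\Gmon)$'s strong universality (Theorem~\ref{theo:gmon-univ}) rather than by Theorem~\ref{teo:gmonuniv}. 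Stating the chain in this order, and explicitly pointing out which theorem supplies the strong version versus the plain version, is the whole content of the proof.
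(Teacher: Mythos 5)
Your proposal is correct and follows exactly the route the paper intends (the corollary is stated without an explicit proof, as a packaging of Lemma~\ref{lem:from-gadgets-to-networks}, Corollary~\ref{cor:gadgets-mon-csan}, Theorem~\ref{theo:gmon-univ}, the strong universality of $\Gamma(\Gmontwo)$, and transitivity of simulation with composed rescaling maps). Your explicit checks that the gates of $\Gmon$ and $\Gmontwo$ are irreducible and that the constant/linear rescaling factors compose within the strongly-universal regime are precisely the points that need verifying, and your remark that the ``(strongly)'' claim rests on Theorem~\ref{theo:gmon-univ} rather than Theorem~\ref{teo:gmonuniv} is accurate.
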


\subsection{Closure and synchronous closure}

Although monotone gates are sometimes easier to realize in concrete dynamical system which make the above results useful, there is nothing special about them to achieve universality: any set of gates that are expressive enough for Boolean functions yields the same universality result. Given a set of maps $\GG$ over alphabet $Q$, we define its \emph{closure} ${\overline{\GG}}$ as the set of maps that are computed by circuits that can be built using only gates from $\GG$. More precisely, $\overline{\GG}$ is the closure of $\GG$ by composition, \textit{i.e.} forming from maps ${g_1:Q^{I_1}\to Q^{O_1}}$ and ${g_2:Q^{I_2}\to Q^{O_2}}$ (with ${I_1, I_2, O_1, O_2}$ disjoint) a composition $g$ by plugging a subset of outputs $O\subseteq O_2$ of $g_1$ into a subset of inputs $I\subseteq I_2$ of $g_2$, thus obtaining ${g: Q^{I_1\cup I_2\setminus I}\to Q^{O_1\setminus O\cup O_2}}$ with 
\[g(x)_o =
  \begin{cases}
    g_1(x_{I_1})_o &\text{ if }o\in O_1\setminus O\\
    g_2(y)_o &\text{ if }o\in O_2
  \end{cases}
\]
where ${y_j = x_j}$ for ${j\in I_2\setminus I}$ and ${y_j= g_1(x_{I_1})_{\pi(j)}}$ where ${\pi:I\to O}$ is the chosen bijection between $I$ and $O$ (the wiring of outputs of $g_1$ to inputs of $g_2$).
A composition is \emph{synchronous} if either ${I=\emptyset}$ or ${I=I_2}$. We then define the synchronous closure ${\overline{\GG}^2}$ as the closure by synchronous composition. The synchronous composition correspond to synchronous circuits with gates in $\GG$. A $\GG$-circuit is a sequence of compositions starting from elements of $\GG$. It is synchronous if the compositions are synchronous. The depth of a $\GG$-circuit is the maximal length of a path from an input to an output. In the case of a synchronous circuits, all such path are of equal length.

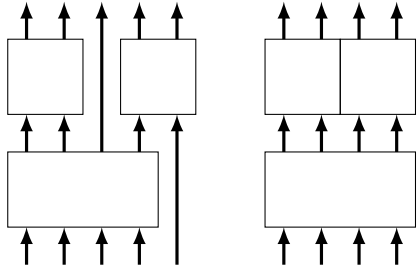
\begin{figure}
  \centering
  \begin{minipage}{.2\linewidth}
    \begin{tikzpicture}[scale=.5]
      \draw (-1.5,1)--(2.5,1)--(2.5,-1)--(-1.5,-1)--cycle;
      \draw[->,very thick] (-1,-2)  -- (-1,-1);
      \draw[->,very thick] (0,-2) -- (0,-1);
      \draw[->,very thick] (1,-2) -- (1,-1);
      \draw[->,very thick] (2,-2) -- (2,-1);
      \draw[->,very thick] (3,-2) -- (3,2);
      \draw[->,very thick] (-1,1)  -- (-1,2);
      \draw[->,very thick] (0,1) -- (0,2);
      \draw[->,very thick] (1,1) -- (1,5);
      \draw[->,very thick] (2,1) -- (2,2);
      \begin{scope}[shift={(0,3)}]
        \draw (-1.5,1)--(.5,1)--(.5,-1)--(-1.5,-1)--cycle;
        \draw[->,very thick] (-1,1)  -- (-1,2);
        \draw[->,very thick] (0,1) -- (0,2);
      \end{scope}
      \begin{scope}[shift={(3,3)}]
        \draw (-1.5,1)--(.5,1)--(.5,-1)--(-1.5,-1)--cycle;
        \draw[->,very thick] (-1,1)  -- (-1,2);
        \draw[->,very thick] (0,1) -- (0,2);
      \end{scope}
    \end{tikzpicture}
  \end{minipage}
  \begin{minipage}{.2\linewidth}
    \begin{tikzpicture}[scale=.5]
      \draw (-1.5,1)--(2.5,1)--(2.5,-1)--(-1.5,-1)--cycle;
      \draw[->,very thick] (-1,-2)  -- (-1,-1);
      \draw[->,very thick] (0,-2) -- (0,-1);
      \draw[->,very thick] (1,-2) -- (1,-1);
      \draw[->,very thick] (2,-2) -- (2,-1);
      \draw[->,very thick] (-1,1)  -- (-1,2);
      \draw[->,very thick] (0,1) -- (0,2);
      \draw[->,very thick] (1,1) -- (1,2);
      \draw[->,very thick] (2,1) -- (2,2);
      \begin{scope}[shift={(0,3)}]
        \draw (-1.5,1)--(.5,1)--(.5,-1)--(-1.5,-1)--cycle;
        \draw[->,very thick] (-1,1)  -- (-1,2);
        \draw[->,very thick] (0,1) -- (0,2);
      \end{scope}
      \begin{scope}[shift={(2,3)}]
        \draw (-1.5,1)--(.5,1)--(.5,-1)--(-1.5,-1)--cycle;
        \draw[->,very thick] (-1,1)  -- (-1,2);
        \draw[->,very thick] (0,1) -- (0,2);
      \end{scope}
    \end{tikzpicture}
  \end{minipage}
  \caption{Non-synchronous composition (on the left) and synchronous composition (on the right).}
  \label{fig:sync-nonsync-composition}
\end{figure}

\begin{remark}\label{rem:clone}
  The above definitions are very close to the classical notion of clones \cite{post_lattice}. However, we stress that, in our case, projections maps ${Q^k\to Q}$ are generally not available, nor duplication maps ${x\mapsto (x,x)}$ allowing to use the same variable several times. This is important because in a given dynamical systems, erasing or duplicating information might be impossible (think about reversible systems) and hiding it into some non-coding part might be complicated.
\end{remark}

\begin{proposition}\label{prop:closuregadgets}
  Fix some alphabet $Q$ and consider two finite sets of maps $\GG$ and $\GG'$ over alphabet $Q$ such that:
  \begin{itemize}
  \item either contains the identity map $Q\to Q$ and is such that $\overline{\GG}$ contains $\GG'$,
  \item or there is an integer $k$ such that ${\overline{\GG}^2_k}$, the set of elements of $\overline{\GG}^2$ that can be realized by a circuit of depth $k$, contains $\GG'$.
  \end{itemize}
  Then, any family $\mathcal{F}$ that has coherent $\GG$-gadgets has coherent $\GG'$-gadgets.
\end{proposition}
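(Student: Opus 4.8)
The plan is to show that any family with coherent $\GG$-gadgets can build coherent $\GG'$-gadgets by \emph{composing} the $\GG$-gadgets along the circuit structure that realizes each $g'\in\GG'$ from gates of $\GG$. The key point is that the gadget glueing operation of Definition~\ref{def:gadgets-glueing} mimics the wiring of gates exactly, and the closure by gadget glueing is contained in $\mathcal F$ by hypothesis; so composing $\GG$-gadgets according to a $\GG$-circuit yields again a network of $\mathcal F$. What we must check is that the resulting composite gadget inherits all the structure required by Definition~\ref{def:coherent-gadgets}: a single glueing interface $C=C_i\cup C_o$, injective state configurations $s_q$, context configurations, a time constant, standard traces, and a coherent family of pseudo-orbits realizing the input/output relation of $g'$.

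First I would treat the synchronous case (the second bullet). Fix $k$ such that $\overline{\GG}^2_k$ contains $\GG'$. For each $g'\in\GG'$, pick a synchronous $\GG$-circuit $\Gamma_{g'}$ of depth $k$ computing $g'$. I would build the gadget $F_{g'}$ by gadget glueing of the $\GG$-gadgets associated (by the coherent $\GG$-gadget structure of $\mathcal F$) to the gates appearing in $\Gamma_{g'}$, following the wiring of $\Gamma_{g'}$ exactly as in the proof of Lemma~\ref{lem:from-gadgets-to-networks}. The glueing interface $C=C_i\cup C_o$, the state configurations $s_q$, and the standard traces $\tau_{q,q'}$ are those of the original $\GG$-gadgets, reused verbatim: this is where synchronicity is essential, since every input-to-output path in $\Gamma_{g'}$ has length exactly $k$, so the new time constant is $T'=kT$, and the trace seen on any interface copy of $F_{g'}$ is a concatenation of $k$ standard traces of the $\GG$-gadgets, hence is itself a standard trace of $F_{g'}$ provided we define standard traces of $F_{g'}$ to be exactly those $k$-fold concatenations (indexed again by a pair $q,q'$, using $s_q$ at time $0$ and $s_{q'}$ at time $kT$). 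The context configuration $c_{g'}$ is the disjoint union of the context configurations of the constituent gadgets together with the interface contents on the internal wires at the initial (fired) step. The required $P_{g'}$-pseudo-orbits are obtained by iteratively applying Lemma~\ref{lem:pseudo-orbit-glueing}: the pseudo-orbits of the constituent $\GG$-gadgets, which realize each gate's input/output relation, glue together because condition~\eqref{eq:sametrace} is exactly the agreement of standard traces on matched interface copies, and the composite pseudo-orbit realizes $g'=\Gamma_{g'}$ by correctness of the circuit. Finally, the closure by gadget glueing of the new set $\{F_{g'}\}_{g'\in\GG'}$ is contained in the closure by gadget glueing of $\{F_g\}_{g\in\GG}$, hence in $\mathcal F$.

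For the first bullet (the non-synchronous case with the identity map available), the only extra issue is that a $\GG$-circuit computing $g'$ need not be synchronous, so different input-to-output paths may have different lengths, and a direct glueing would not yield a single time constant nor consistent traces. I would handle this by a standard padding argument: since the identity map $Q\to Q$ is in $\GG$, I can insert identity gates along the shorter paths to make the circuit synchronous of some uniform depth $k$ without changing the computed function, reducing to the previous case. One must verify that the identity gadget behaves well here, namely that its standard trace on its (one input, one output) interface copies is consistent with $\tau_{q,q}$, which follows from the coherence hypothesis applied to the identity gate; this is a finite check. The main obstacle, and the place I expect to spend the most care, is precisely the bookkeeping around traces and time constants in the non-synchronous-to-synchronous reduction: one must make sure that padding with identity gadgets does not break the pseudo-orbit conditions of Definition~\ref{def:coherent-gadgets} (in particular that the $P_g$ sets, restricted to $C_o$ on inputs and $C_i$ on outputs, remain compatible under glueing), and that the whole construction of $\Gamma_{g'}$ and its padding is uniform in $g'$ so that only finitely many gadgets $F_{g'}$ need to be exhibited (which suffices since $\GG'$ is finite). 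Once these verifications are in place, the statement follows.
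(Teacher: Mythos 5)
Your proposal is correct and follows essentially the same route as the paper's proof: translate the $\GG$-circuit realizing each $g'\in\GG'$ into a gadget glueing of the coherent $\GG$-gadgets, handle the synchronous case directly via Lemma~\ref{lem:pseudo-orbit-glueing}, and in the non-synchronous case use the identity gadget as a delay line to pad all circuits to a common depth before reducing to the synchronous case. Your write-up is in fact more explicit than the paper's (which leaves the trace/time-constant bookkeeping as ``straightforward''), but there is no substantive difference in approach.
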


\begin{proof}
  Suppose first that the first item holds.  Since ${\overline{\GG}}$ contains $\GG'$ there must exist a circuit made of gates from $\GG$ that produces any given element $g\in\GG'$. One then wants to apply gadget glueing on gadgets from $\GG$ to mimic the composition and thus obtain a gadget corresponding to $g$. However this doesn't work as simply because propagation delay is a priori not respected at each gate in the circuit composition yielding $g$ and there is a risk that information arrives distinct delays at different outputs. However, since $\GG$ contains the identity map, there is a corresponding gadget in the family that actually implements a delay line. This additionnal gadget solves the problem: it is straightforward to transform by padding with identity gates all circuit with gates in $\GG$ into synchronous ones. Moreover, by padding again, we can assume that the finite set of such circuits computing elements of $\Gmon$ are all of same depth. It is then straightforward to translate this set of circuits into coherent $\Gmon$-gadgets by iterating gadget glueing and using Lemma~\ref{lem:pseudo-orbit-glueing}.

  If the second item holds the situation is actually simpler because the synchronous closure contains only synchronous circuits of gates from $\Gmontwo$ so we can directly translate the circuits producing the maps of $\Gmontwo$ into gadgets via gadget glueing by Lemma~\ref{lem:pseudo-orbit-glueing} as in the previous case. Moreover, the hypothesis is that all elements of $\GG'$ are realized by circuit of same depth so we get gadgets that share the same time constant.
   
\end{proof}

\newcommand\Gnand{\GG_{\text{NAND}}}

As a direct corollary of Proposition~\ref{prop:closuregadgets}, we can extend the results about strong universality of $\Gmontwo$ to other families of $\GG$-networks associated to elementary Boolean gates, like $\Gnor$ and ${\Gnand = \{\text{NAND}(x,y) = (\overline{x\wedge y},\overline{x\wedge y})\}}$.
Note however that classical results on Boolean gates and clone theory cannot be applied immediately (see Remark~\ref{rem:clone}) and the expected universality result requires a little bit of care.

\begin{corollary}
  The families ${\Gamma(\Gnor)}$ and $\Gamma(\Gnand)$ are strongly universal. \label{cor:NORNAND}
\end{corollary}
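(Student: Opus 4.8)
The plan is to reduce the claim to the already-established strong universality of $\Gamma(\Gmontwo)$ (the corollary preceding Theorem~\ref{teo:gmonuniv}) together with Proposition~\ref{prop:closuregadgets} and Corollary~\ref{cor:univfrommon}. The key observation is that both $\text{NOR}$ and $\text{NAND}$ are, in the sense of the synchronous closure, expressive enough to realize every gate of $\Gmontwo$ at a fixed circuit depth. Concretely, first I would exhibit small synchronous $\Gnor$-circuits (respectively $\Gnand$-circuits) of a common depth $k$ that compute the four maps $\gateAND,\gateOR:\{0,1\}^i\to\{0,1\}^o$ for $i,o\in\{1,2\}$ making up $\Gmontwo$ (recall that for $\Gmontwo$ we fix $i=o=2$, so in fact only the fanin-$2$ fanout-$2$ versions of AND and OR are needed). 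For NOR this is routine: $\overline{x}=\text{NOR}(x,x)$, $x\vee y = \text{NOR}(\text{NOR}(x,y),\text{NOR}(x,y))$, $x\wedge y = \text{NOR}(\overline x,\overline y)$, and a duplication/delay $x\mapsto(x,x)$ is obtained from $\text{NOR}$ composed with itself; then pad all these circuits with iterated $\text{NOR}$-built identity gates so they share a single depth $k$. The NAND case is dual via De Morgan. This shows $\Gmontwo\subseteq\overline{\Gnor}^2_k$ and $\Gmontwo\subseteq\overline{\Gnand}^2_k$, which is exactly the hypothesis of the second item of Proposition~\ref{prop:closuregadgets} (taking $\GG=\Gnor$ or $\Gnand$ and $\GG'=\Gmontwo$).

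Second, I would apply Proposition~\ref{prop:closuregadgets} to conclude that any family having coherent $\Gnor$-gadgets (resp. $\Gnand$-gadgets) also has coherent $\Gmontwo$-gadgets. To make this bite for the families $\Gamma(\Gnor)$ and $\Gamma(\Gnand)$ themselves, I note that the family $\Gamma(\GG)$ of all $\GG$-networks trivially has coherent $\GG$-gadgets: each gate $g\in\GG$ is itself a $\GG$-network on a bounded number of nodes, take the glueing interface $C=C_i\cup C_o$ to be a single node for each, the state configurations $s_0,s_1$ the two Boolean values, context configurations empty, time constant $T=1$, standard traces the trivial length-$1$ traces, and the required pseudo-orbits are just the one-step transitions of the gate; the closure by gadget glueing stays inside $\Gamma(\GG)$ by construction of $\GG$-networks. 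Hence $\Gamma(\Gnor)$ has coherent $\Gnor$-gadgets, therefore coherent $\Gmontwo$-gadgets by the previous paragraph; likewise for $\Gamma(\Gnand)$.

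Third, I would invoke Corollary~\ref{cor:univfrommon} with $\GG=\Gmontwo$: any family with coherent $\Gmontwo$-gadgets contains a subfamily of bounded degree networks (with bounded degree representation) which is strongly universal, and the corollary also handles the CSAN case. Since $\Gamma(\Gnor)$ and $\Gamma(\Gnand)$ are themselves bounded degree $\GG$-network families with their canonical bounded degree representations, Lemma~\ref{lem:from-gadgets-to-networks} (or directly Corollary~\ref{cor:univfrommon}) gives that they simulate $\Gamma(\Gmontwo)$ in constant time and linear space; composing this with the strong universality of $\Gamma(\Gmontwo)$ and transitivity of simulation (Definition~\ref{def:simu-family} and the remark on transitivity) yields strong universality of $\Gamma(\Gnor)$ and $\Gamma(\Gnand)$.

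The main obstacle I anticipate is purely bookkeeping rather than conceptual: making sure the synchronous $\Gnor$- and $\Gnand$-circuits that realize the $\Gmontwo$ gates are genuinely \emph{synchronous} and all of the \emph{same depth} $k$, so that Proposition~\ref{prop:closuregadgets}'s second item applies verbatim and produces $\Gmontwo$-gadgets sharing a single time constant. This requires being careful about fanout: producing two identical outputs from a NOR/NAND stage and then equalizing path lengths by padding with identity (delay) gates built from $\text{NOR}$/$\text{NAND}$. Once the depths are aligned, everything downstream is a direct application of the stated lemmas, so no further difficulty is expected.
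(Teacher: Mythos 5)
Your overall architecture (reduce to the strong universality of $\Gamma(\Gmontwo)$ via Proposition~\ref{prop:closuregadgets} and Lemma~\ref{lem:from-gadgets-to-networks}, handle NAND by duality) is the right one and matches the paper's strategy, but your first step contains a genuine gap. You claim $\Gmontwo\subseteq\overline{\Gnor}^2_k$ by using the identities $\overline{x}=\mathrm{NOR}(x,x)$ and $x\wedge y=\mathrm{NOR}(\overline{x},\overline{y})$. These require feeding the \emph{same wire} into both inputs of a NOR gate, i.e.\ duplicating a variable, and this is precisely what the paper's notion of (synchronous) closure forbids: as Remark~\ref{rem:clone} stresses, projections and duplication maps are not available, and the wiring $\pi$ in a composition is a bijection between outputs and inputs. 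Concretely, $\gateAND:\{0,1\}^2\to\{0,1\}^2$ is \emph{not} in $\overline{\Gnor}^2$ at all: a derived gate with exactly two input wires $(x,y)$ must, at its first synchronous layer, consist of a single NOR gate consuming both wires (NOR has fanin $2$ and $\Gnor$ contains no identity gate), so everything downstream is a function of the single bit $\overline{x\vee y}$, which identifies $(1,1)$ with $(0,1)$ and hence cannot compute $x\wedge y$. So Proposition~\ref{prop:closuregadgets} cannot be applied with $\GG'=\Gmontwo$; note also that its first item is unavailable since $\Gnor$ does not contain the identity map.

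The paper's proof repairs exactly this point: instead of targeting $\Gmontwo$ directly, it exhibits two depth-$2$ synchronous $\Gnor$-circuits $\alpha,\omega:\{0,1\}^4\to\{0,1\}^4$ (so each Boolean value is carried on \emph{two} wires, and the duplicates needed for $\mathrm{NOR}(x,x)$ are honest separate inputs), applies Proposition~\ref{prop:closuregadgets} with $\GG'=\{\alpha,\omega\}$, and then uses that $\alpha(x,x,y,y)=(a,a,a,a)$ with $a=x\wedge y$ and $\omega(x,x,y,y)=(o,o,o,o)$ with $o=x\vee y$ to simulate $\Gamma(\Gmontwo)$ with spatial rescaling factor $2$ and temporal factor $1$ by doubling each node. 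Your proof would go through if you replaced your first paragraph with this doubling trick (or an equivalent one); the rest of your argument, including the observation that $\Gamma(\GG)$ trivially has coherent $\GG$-gadgets and the final appeal to transitivity of simulation, is fine.
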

\begin{proof}
  First, since NAND and NOR gates are conjugated by negation, it is clear that families $\Gamma(\Gnor)$ and $\Gamma(\Gnand)$ simulate each other with time constant $1$ via a block embedding that just apply ${x\mapsto\overline{x}}$ at each node.
  It is thus sufficient to prove that $\Gamma(\Gnor)$ is strongly universal.
  Consider the two maps ${\alpha : \{0,1\}^4\to\{0,1\}^4}$ and ${\omega : \{0,1\}^4\to \{0,1\}^4}$ that are synchronous $\Gnor$-circuits of depth $2$ defined by :
  \begin{center}
    \begin{minipage}{.45\linewidth}
      \begin{center}
        \begin{tikzpicture}
          \draw (0,0) rectangle +(1,1) node[pos=.5] {NOR};
          \draw (0,1) rectangle +(1,1) node[pos=.5] {NOR};
          \draw (1.5,0) rectangle +(1,1) node[pos=.5] {NOR};
          \draw (1.5,1) rectangle +(1,1) node[pos=.5] {NOR};
          \draw[->,very thick] (-.5,.3) node[left] {d} -- +(.5,0);
          \draw[->,very thick] (-.5,.6) node[left] {c} -- +(.5,0);
          \draw[->,very thick] (-.5,1.3) node[left] {b} -- +(.5,0);
          \draw[->,very thick] (-.5,1.6) node[left] {a} -- +(.5,0);
          \draw[->,very thick] (1,.3) -- +(.5,0);
          \draw[->,very thick] (2.5,.3) -- +(.5,0);
          \draw[->,very thick] (1,.6) -- +(.5,.7);
          \draw[->,very thick] (2.5,.6) -- +(.5,0);
          \draw[->,very thick] (1,1.3) -- +(.5,-.7);
          \draw[->,very thick] (2.5,1.3) -- +(.5,0);
          \draw[->,very thick] (1,1.6) -- +(.5,0);
          \draw[->,very thick] (2.5,1.6) -- +(.5,0);
        \end{tikzpicture}\\
        ${\alpha(a,b,c,d)}$
      \end{center}
    \end{minipage}\hskip 1cm
    \begin{minipage}{.45\linewidth}
      \begin{center}
        \begin{tikzpicture}
          \draw (0,0) rectangle +(1,1) node[pos=.5] {NOR};
          \draw (0,1) rectangle +(1,1) node[pos=.5] {NOR};
          \draw (1.5,0) rectangle +(1,1) node[pos=.5] {NOR};
          \draw (1.5,1) rectangle +(1,1) node[pos=.5] {NOR};
          \draw[->,very thick] (-.5,.3) node[left] {d} -- +(.5,0);
          \draw[->,very thick] (-.5,.6) node[left] {c} -- +(.5,.7);
          \draw[->,very thick] (-.5,1.3) node[left] {b} -- +(.5,-.7);
          \draw[->,very thick] (-.5,1.6) node[left] {a} -- +(.5,0);
          \draw[->,very thick] (1,.3) -- +(.5,0);
          \draw[->,very thick] (2.5,.3) -- +(.5,0);
          \draw[->,very thick] (1,.6) -- +(.5,0);
          \draw[->,very thick] (2.5,.6) -- +(.5,0);
          \draw[->,very thick] (1,1.3) -- +(.5,0);
          \draw[->,very thick] (2.5,1.3) -- +(.5,0);
          \draw[->,very thick] (1,1.6) -- +(.5,0);
          \draw[->,very thick] (2.5,1.6) -- +(.5,0);
        \end{tikzpicture}\\
        ${\omega(a,b,c,d)}$
      \end{center}
    \end{minipage}
  \end{center}
  By Proposition~\ref{prop:closuregadgets} (second item), the family $\Gamma(\Gnor)$ has coherent $\GG$-gadgets where ${\GG=\{\alpha,\omega\}}$ and therefore simulates the family $\Gamma(\GG)$ with constant spatio-temporal rescaling factors by Lemma~\ref{lem:from-gadgets-to-networks}.
  Now observe that for any ${x,y\in\{0,1\}}$ it holds that ${\alpha(x,x,y,y) = (a,a,a,a)}$ with ${a=x\wedge y}$ and ${\omega(x,x,y,y)=(o,o,o,o)}$ with ${o=x\vee y}$.
  This implies that family $\Gamma(\GG)$ simulates $\Gamma(\Gmontwo)$ with spatial rescaling factor 2 and temporal rescaling factor 1, simply by doubling each node because AND and OR gates of type ${\{0,1\}^2\to\{0,1\}^2}$ in $\Gmontwo$ are such that ${AND(x,y)=(a,a)}$ and ${OR(x,y)=(o,o)}$.
  We deduce that $\Gamma(\GG)$ and therefore $\Gamma(\Gnor)$ are strongly universal.
\end{proof}

\subsubsection{Game of life is strongly universal}
\begin{theorem}\label{theo:goluniv}
The family of outer-totalistic CSAN networks with $B={3}$ and $S={2,3}$ i.e. \emph{Game of life} automata networks, is strongly universal.
\end{theorem}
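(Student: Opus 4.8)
The plan is to apply the machinery developed in Sections~\ref{sec:gadgetsandglueing} rather than reproving universality from scratch. By Corollary~\ref{cor:univfrommon}, it suffices to show that the CSAN family of \emph{Game of Life} automata networks has coherent $\Gnor$-gadgets, since $\Gamma(\Gnor)$ is strongly universal by Corollary~\ref{cor:NORNAND} and $\Gnor$ consists of a single irreducible gate. But this is exactly the content of Lemma~\ref{lemma:GOCcoherent} (Game of Life admits coherent $\Gnor$-gadgets), so the final theorem follows by combining Lemma~\ref{lemma:GOCcoherent} with Corollary~\ref{cor:univfrommon}. The bulk of the work is therefore the verification carried out in the proof of Lemma~\ref{lemma:GOCcoherent}, and the remaining task here is essentially to assemble these pieces and check the hypotheses needed to invoke Corollary~\ref{cor:univfrommon}, in particular that Game of Life is a genuine CSAN family (which it is, being an outer-totalistic Boolean automata network as described in Section~\ref{sec:exampl-csan-netw}: alphabet $\{0,1\}$, identity edge labels, and a node label determined by $B=\{3\}$, $S=\{2,3\}$).

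First I would recall the building blocks. The key components are the \emph{clock network} (Figure~\ref{fig:clockgl}), a six-layer network of triples of nodes with an auxiliary feedback node, whose orbit is periodic of period $6$ and exhibits a signal pattern shifting left to right; and the \emph{wire network} (Figure~\ref{fig:wiregl}), the same structure with the first and last layers disconnected, which transmits a one-bit signal along a pseudo-orbit $(w^0_q,\ldots,w^5_q)$ with the all-zero trace $w_0$ serving as the ``no signal'' pseudo-orbit. Then I would describe the NOR gadget (Figures~\ref{fig:NORgadg}, \ref{fig:gadgetsym}, \ref{fig:gadconnect}): it is assembled from two wire networks (carrying the two inputs $x$, $y$) and two clock networks, wired into a small central computation region where, thanks to the Game of Life rule, a dead central node $v$ becomes alive precisely when the number of alive neighbors equals $3$ — and the gadget is arranged so that this realizes $\overline{x\vee y}$ (Table~\ref{table:pseudoNOR} tracks the local states $l_1,l_2,l_3,l'_1,l'_2,l'_3,c_1,c_2,c_3,a,v,r_1,\ldots,r'_3$ step by step). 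The glueing interface $C=C_i\cup C_o$ is identified in Figure~\ref{fig:NORgadg}, the state configurations are $s_q = (w^q_0)|_C$ (injective in $q$), the time constant is $T=6$, and the standard traces $\tau_{q,q'}$ are read off from the wire pseudo-orbits.

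Concretely, I would organize the proof as a short appeal to Lemma~\ref{lemma:GOCcoherent} preceded by the observation that the hypotheses of Corollary~\ref{cor:univfrommon} are met: the Game of Life networks form a CSAN family (Definition~\ref{def:csan}), and $\Gnor$ is a set consisting of a single irreducible gate (the NOR gate $(x,y)\mapsto(\overline{x\vee y},\overline{x\vee y})$ has a weakly connected bipartite dependency graph). Corollary~\ref{cor:univfrommon} then yields that any CSAN family with coherent $\Gnor$-gadgets is strongly universal, and Lemma~\ref{lemma:GOCcoherent} provides the coherent $\Gnor$-gadgets, so Game of Life is strongly universal. I expect the main obstacle — already absorbed into Lemma~\ref{lemma:GOCcoherent} and its figures — to be purely the careful bookkeeping: checking that the NOR gadget's pseudo-orbit genuinely respects the $P_g$-pseudo-orbit condition of Definition~\ref{def:coherent-gadgets} on the non-interface nodes, that all copies of the interface in all gadgets induce the same labeled subgraph with the right separation property $N_{G}(\phi^i_{g,k}(C_o))\subseteq\phi^i_{g,k}(C)$ (and symmetrically for outputs) so that Lemma~\ref{lem:csan-gadgets-glueing-closure} applies and the gadget-glueing closure stays inside the family, and that after exactly $T=6$ steps every non-interface node returns to its context value $c_g$. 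Once those local checks are done — and they are what Figures~\ref{fig:gadgetsym}, \ref{fig:NORgadg}, \ref{fig:gadconnect} and Table~\ref{table:pseudoNOR} establish — the theorem is immediate.
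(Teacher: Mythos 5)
Your proposal is correct and follows essentially the same route as the paper: Lemma~\ref{lemma:GOCcoherent} supplies the coherent $\Gnor$-gadgets, Corollary~\ref{cor:NORNAND} gives strong universality of $\Gamma(\Gnor)$, and the gadget machinery transfers this to the Game of Life family. The only slip is a citation: Corollary~\ref{cor:univfrommon} is stated only for $\GG=\Gmon$ or $\Gmontwo$, so for $\Gnor$ you should instead invoke Corollary~\ref{cor:gadgets-mon-csan} (the family simulates $\Gamma(\Gnor)$ in constant time and linear space) together with Corollary~\ref{cor:NORNAND} and transitivity of simulation, which is exactly what the paper does and what your argument implicitly reconstructs.
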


\begin{proof}
The result holds as a direct consequence of the Lemma \ref{lemma:GOCcoherent} which tell us that Game of life automata networks have coherent $\mathcal{G}_{\text{NOR}}$ gadgets, the Corollary \ref{cor:gadgets-mon-csan} which tell us that the family of Game of life automata networks simulates $\Gamma(\mathcal{G}_{\text{NOR}})$ in constant time and linear space and finally, the Corollary \ref{cor:NORNAND} which tell us that the familly  $\Gamma(\mathcal{G}_{\text{NOR}})$ is strongly universal and thus, the family of  Game of life automata networks are strongly universal. 
\end{proof}

\begin{remark}
We would like to remark three things about the latter result:
\begin{enumerate}
\item The first one is that the gadget used in the proof of Lemma \ref{lemma:GOCcoherent} is simpler than the case of cellular automata and intrinsic universality (see \cite{liferokadur} for more details).  In particular, the fact that the communication graph can be chosen freely, allow us to transmit information and perform calculations in less time. 
\item The second one is that the result is an improvement of the result obtained in the cellular automata context since, as we see in Corollary \ref{cor:intrinsicnotstrongly}, intrinsic universality is not enough for strong universlity. Again Theorem \ref{theo:ballgrowthnouniv} provides some insight on how the properties of the communication graph play an important role in terms of the universality. 
\item The third one is that the latter result is an application of a series of results contained in the article which can be easily applied to any other family. In other words, given a family of automata networks one can show the strong universality by simply showing that the family admits a set of coherent gadgets. We stress that this approach allows to derive a perfectly rigorous proof of many facts implied by strong universality (\textit{e.g.} Corollary~\ref{cor:universality} and Theorem~\ref{them:univ-rich-dynamics}) from a rather small set of observations on a finite set of pseudo-orbits of small automata networks (see proof of Lemma~\ref{lemma:GOCcoherent}).
\end{enumerate}
\end{remark}

\subsection{Super-polynomial periods without universality}

A universal family must exhibit super-polynomial periods, however universality is far from necessary to have this dynamical feature. In this subsection we define the family of wire networks to illustrate this.

 In order to do that, we need the following classical result about the growth of Chebyshev function and prime number theorem. 

\begin{lemma} \cite{hardy2008introduction}
  Let $m \geq 2$ and $\mathcal{P}(m) = \{p \leq m \text{ } | \text{ }  p \text{ prime}\}$. If we define $\pi(m) = |\mathcal{P}(m)|$ and  $\theta(m)  = \sum \limits_{p \in \mathcal{P}(m)}\log (p)$  then we have ${\pi (m) \sim \frac{m}{\log(m)}}$ and ${\theta(m) \sim m}$.
  \label{lemma:primes}
\end{lemma}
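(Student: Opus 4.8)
\textbf{Proof plan for Lemma~\ref{lemma:primes} (growth of $\pi$ and $\theta$).} This is a classical statement --- essentially the Prime Number Theorem packaged together with an elementary estimate on the Chebyshev $\theta$-function --- so the plan is to cite it rather than reprove it from scratch, while indicating the logical route for completeness. The asymptotic $\pi(m)\sim m/\log(m)$ is the Prime Number Theorem in its standard form, available in Hardy and Wright's \emph{An Introduction to the Theory of Numbers} (the reference already given in the excerpt); no work is needed beyond pointing to it. The second claim $\theta(m)\sim m$ is then deduced from the first.

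First I would recall the definitions: $\pi(m)$ counts primes up to $m$, and $\theta(m)=\sum_{p\le m,\ p\text{ prime}}\log p$. The key step is the standard equivalence between the two asymptotics. In one direction, $\theta(m)=\sum_{p\le m}\log p \le \pi(m)\log m$, which gives $\limsup_m \theta(m)/m \le \limsup_m \pi(m)\log(m)/m = 1$ by PNT. In the other direction, for any fixed $0<\alpha<1$ one has $\theta(m)\ge \sum_{m^\alpha < p\le m}\log p \ge \alpha\log(m)\,\bigl(\pi(m)-\pi(m^\alpha)\bigr)$, and since $\pi(m^\alpha)=o(m/\log m)$, PNT yields $\liminf_m \theta(m)/m \ge \alpha$; letting $\alpha\to 1$ gives $\liminf_m \theta(m)/m\ge 1$. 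Combining the two bounds, $\theta(m)\sim m$. (Conversely, partial summation recovers $\pi(m)\sim m/\log m$ from $\theta(m)\sim m$, so the two formulations are genuinely equivalent; I would mention this only in passing.)

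There is no real obstacle here: the only ``hard'' ingredient is the Prime Number Theorem itself, which we invoke as a black box from the cited textbook. The elementary squeeze argument above is the sole thing that needs to be written out, and it is a couple of lines. I would therefore keep the proof of this lemma to a short paragraph, citing \cite{hardy2008introduction} for the Prime Number Theorem and giving the two-sided estimate relating $\pi$ and $\theta$ as sketched, so that the reader sees why both asymptotics hold simultaneously. The lemma will then be used downstream (via the least common multiple of the first primes, whose logarithm is $\theta(m)\sim m$, hence super-polynomial in $\pi(m)$, the relevant size parameter) to produce wire networks with super-polynomial periods, but that application is outside the scope of this particular statement.
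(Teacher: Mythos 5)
Your proposal is correct and matches the paper's treatment: the lemma is stated as a classical fact and simply cited from Hardy and Wright, with no proof given in the text. Your optional squeeze argument relating $\pi$ and $\theta$ is the standard one and is accurate, but the paper does not include it.
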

\newcommand\Gwire{\GG_{w}}

By using the Lemma \ref{lemma:primes} we can construct automata networks with non-polynomial cycles simply by making disjoint union of rotations (\textit{i.e.} network whose interaction graph is a cycle that just rotate the configuration at each step). Indeed, it is sufficient to consider rotations on cycle whose length are successive prime numbers. It turns out that these automata networks are exactly $\Gwire$-networks where $\Gwire$ is a single 'wire gate': ${\Gwire =\{id_B\}}$ where ${id_B}$ is the identity map over $\{0,1\}$.

Formally, according to Definition \ref{def:g-network}, for any $\Gwire$-network $F:Q^V \to Q^V$ there exist a partition $V=C_1\cup C_2 \hdots \cup C_k$ where $C_i = \{u^i_1\hdots,u^i_{l_i}\}$ with ${l_i\geq 2}$ for each $i = 1,\hdots,k$ and $F(x)_{u^i_{s+1\bmod l_i}}= x_{u^i_s}$ for any $x \in Q^V$ and $0\leq s \leq l_i$.


\begin{theorem}\label{theo:non-polyn-cycl}
  Any family $\mathcal{F}$ that has coherent $\Gwire$-gadgets has superpolynomial cycles, more precisely: there is some ${\alpha>0}$ such that for infinitely many ${n\in\N}$, there exists a network $F_n\in\mathcal{F}$ with ${O(n)}$ nodes and a periodic orbit of size ${\Omega(\exp (n^\alpha))}$. 
\end{theorem}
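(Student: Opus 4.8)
\textbf{Proof plan for Theorem~\ref{theo:non-polyn-cycl}.}
The plan is to combine the elementary number-theoretic estimate of Lemma~\ref{lemma:primes} with the simulation machinery built in the previous sections. First I would observe that there is a single, fixed, very simple $\Gwire$-network that serves as a ``clock of length $\ell$'': take $C$ to be a cycle of $\ell$ nodes on which $F$ acts as a rotation, which by Definition~\ref{def:g-network} is indeed a $\Gwire$-network. A configuration with a single $1$ among $\ell-1$ zeros then has a periodic orbit of period exactly $\ell$. Taking the disjoint union of rotations on cycles of lengths equal to the first $k$ prime numbers $p_1<p_2<\cdots<p_k$ produces a $\Gwire$-network $R_k$ on $N_k := p_1+\cdots+p_k$ nodes whose period, on the configuration that is a single $1$ on each cycle, is $\lcm(p_1,\ldots,p_k) = p_1\cdots p_k = \exp(\theta(p_k))$. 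Since $\mathcal{F}$ has coherent $\Gwire$-gadgets, Lemma~\ref{lem:from-gadgets-to-networks} gives that $\mathcal{F}$ contains a subfamily that simulates $\Gamma(\Gwire)$ in constant time $T$ and linear space $S$; in particular there is a network $F_n\in\mathcal{F}$ with ${O(N_k)}$ nodes simulating $R_k$, and by Lemma~\ref{lem:simu-dynamic} the orbit graph of $R_k$ embeds into that of $F_n$, so $F_n$ has a periodic orbit of size $T\cdot\exp(\theta(p_k)) = \Omega(\exp(\theta(p_k)))$.

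The second step is the size bookkeeping, which is where the number theory of Lemma~\ref{lemma:primes} enters. Set $m = p_k$. By the prime number theorem $\theta(m)\sim m$, so the period is $\exp((1+o(1))m)$. On the other hand, by $\pi(m)\sim m/\log m$ we have $k = \pi(m)$ primes, and $N_k = \sum_{p\le m} p = \theta'$-type sum which by partial summation is $\Theta(m^2/\log m)$; in any case $N_k \le m\cdot\pi(m) = O(m^2/\log m) = O(m^2)$. Thus the number of nodes of $F_n$ is ${n := |V(F_n)| = O(N_k) = O(m^2)}$, hence $m = \Omega(\sqrt{n})$, and the period is ${\exp(\Omega(m)) = \exp(\Omega(\sqrt{n})) = \Omega(\exp(n^{1/2}))}$. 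This gives the statement with $\alpha = 1/2$ (any exponent strictly below $1/2$ would also do with room to spare, absorbing constants). Letting $k$ range over all positive integers produces infinitely many values of $n$ for which such an $F_n$ exists, as required.

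The one point that needs a little care — and the only place I expect friction — is making the quantifiers in the statement line up cleanly. The theorem asks for ``infinitely many $n$'' with a network on ${O(n)}$ nodes; the construction naturally produces one network per prime $p_k$, with node count $N_k$ that jumps as $k$ increases, so I would phrase it as: for each $k$, let $n_k$ be the number of nodes of the simulating network $F_{n_k}$ of $R_k$; then $n_k\to\infty$ and $F_{n_k}$ has period $\Omega(\exp(n_k^{1/2}))$, which is exactly ``infinitely many $n$'' once we note $n_k\le C\cdot n_k$ trivially. I also need to check that the ``single $1$ per cycle'' configuration of $R_k$ is genuinely periodic of full period $\lcm(p_i)$ and not an artifact: since each cycle is a pure rotation, the restriction to cycle $i$ returns to its start after exactly $p_i$ steps and never sooner (the single $1$ visits all $p_i$ positions), so the joint period is the least common multiple, which for distinct primes is the product — this is immediate but worth stating. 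Everything else is a direct invocation of Lemma~\ref{lem:from-gadgets-to-networks}, Lemma~\ref{lem:simu-dynamic}, and Lemma~\ref{lemma:primes}.
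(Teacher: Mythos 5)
Your proposal is correct and follows essentially the same route as the paper's proof: a disjoint union of rotations on cycles of prime lengths, the single-$1$-per-cycle configuration of period equal to the product of those primes, Lemma~\ref{lemma:primes} for the asymptotics, and Lemmas~\ref{lem:from-gadgets-to-networks} and~\ref{lem:simu-dynamic} to transfer the long cycle into $\mathcal{F}$. Your bookkeeping (explicitly tracking $N_k=\sum_{i\le k}p_i$ and the lcm argument) is if anything slightly more careful than the paper's, which bounds the size by $n\pi(n)$ and states the bound in the form $\Omega(\exp(\sqrt{m\log m}))$; both yield the claimed $\exp(n^\alpha)$ for some $\alpha>0$.
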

\begin{proof}
  Taking the notations of Lemma~\ref{lemma:primes}, define for any $n$ the $\Gwire$-network $G_n$ made of disjoint union of circuits of each prime length less than $n$. $G_n$ has size at most ${n\pi(n)}$ and if we consider a configuration $x$ which is in state $1$ at exactly one node in each of the $\pi(n)$ disjoint circuit, it is clear that the orbit of $x$ is periodic of period $\exp{\theta(n)}$. Therefore, from Lemma~\ref{lemma:primes}, for any $n$, $G_n$ is a circuit of size ${m\leq n\pi(n)}$ with a periodic orbit of size $\theta(n)\in\Omega(\exp(\sqrt{m\log m}))$.
 By hypothesis there are linear maps $T$ and $S$ such that for any $n$, there is $F_n$ that simulates $G_n$ (by Lemma~\ref{lem:from-gadgets-to-networks}), therefore $F_n$ also has a super-polynomial cycle by Lemma~\ref{lem:simu-dynamic}. 
\end{proof}

\newcommand\Gconj{\GG_{conj}}
\newcommand\Fconj{\mathcal{F}_{conj}}
\subsection{Conjunctive networks and $\Gconj$-networks}

Let $G=(V,E)$ be any directed graph. The conjunctive network associated to $G$ is the automata network ${F_G:\{0,1\}^V\to\{0,1\}^V}$ given by ${F(x)_i = \wedge_{j\in N^-(i)}x_j}$ where $N^-(i)$ denotes the incoming neighborhood of $i$. Conjunctive networks are thus completely determined by the interaction graph and a circuit representation can be deduced from this graph in \DLOG{}. We define the family $\Fconj$ as the set of conjunctive networks together with the standard representation $\Fconj^*$ which are just directed graphs encoded as finite words in a canonical way.

\begin{remark}
  We can of course do the same with disjunctive networks. Any conjunctive network $F_G$ on graph $G$ is conjugated to the disjunctive network $F'_G$ on the same graph by the negation map ${\rho : \{0,1\}^V\to \{0,1\}^V}$ defined by ${\rho(x)_i = 1-x_i}$, formally ${\rho\circ F_G = F'_G\circ \rho}$. In particular, this means that the families of conjunctive and disjunctive networks simulate each other. In the sequel we will only state results for conjunctive networks while they hold for disjunctive networks as well.
\end{remark}

Let us now consider the set $\Gconj = \{\gateAND,\gateCOPY\}$.
$\Gconj$-networks are nothing else than conjunctive networks with the following degree constraints: each node has either in-degree $1$ and out-degree $2$, or in-degree $2$ and out-degree $1$. The following theorem shows that, up to simulation, these constraints are harmless.

\begin{theorem}\label{theo:Gconj-networks}
  The family of $\Gconj$-networks simulates the family ${(\Fconj,\Fconj^*)}$ of conjunctive networks in linear time and polynomial space.
\end{theorem}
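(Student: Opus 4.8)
The plan is to build, from an arbitrary conjunctive network $F_G$ on a directed graph $G=(V,E)$, a $\Gconj$-network $F'$ that simulates it in the sense of Definition~\ref{def:bloc-simu}, with a \DLOG{} construction producing the bounded-degree representation of $F'$ together with the block embedding, time function $T$ linear and space function $S$ polynomial. The core difficulty is purely one of \emph{fan-in/fan-out normalization}: a node $i$ of $G$ may have arbitrary in-degree $d^-(i)$ and out-degree $d^+(i)$, whereas gates in $\Gconj=\{\gateAND,\gateCOPY\}$ have in/out degrees in $\{(1,2),(2,1)\}$ only. So the whole proof is a gadget construction: replace each node $i$ by a small sub-network (a \emph{fan-in tree} of $\gateAND$ gates of depth $\lceil\log_2 d^-(i)\rceil$ computing $\wedge_{j\in N^-(i)}x_j$, followed by a \emph{fan-out tree} of $\gateCOPY$ gates of depth $\lceil\log_2 d^+(i)\rceil$ distributing the result to the $d^+(i)$ outgoing wires), and wire these sub-networks together along the edges of $G$. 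Isolated sources/sinks and nodes of degree $1$ need small special cases (e.g. pad with a $\gateCOPY$ of one of its outputs, or feed a constant if in-degree is $0$, using that $\gateAND$ with in-degree $1$ is the identity).

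First I would make precise the gadget for a single node. For $i\in V$ with $d^-(i)=d^-$ and $d^+(i)=d^+$, I would define a balanced binary tree: the leaves are the $d^-$ incoming wires, internal nodes are $\gateAND$ gates, the root's value is $\wedge_{j\in N^-(i)}x_j$; then from the root, a balanced binary tree of $\gateCOPY$ gates with $d^+$ leaves, each leaf being an outgoing wire. To make the simulation synchronous one needs all paths through all gadgets to have the same length, so I would pad every tree with extra $\gateAND$/$\gateCOPY$ gates (acting as identity/relay) so that every node-gadget has exactly the same depth $D = O(\log n)$ (take $D$ the maximum over all $i$ of $\lceil\log d^-(i)\rceil+\lceil\log d^+(i)\rceil$, rounded up, which is $O(\log n)$ since degrees are at most $n$). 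Then the global $\Gconj$-network $F'$ is the disjoint union of all node-gadgets with the outgoing leaf-wires of gadget $i$ identified with the corresponding incoming leaf-wires of gadget $j$ whenever $(i,j)\in E$. Each such gadget has $O(n)$ nodes, so $F'$ has $O(n^2)$ nodes, giving $S$ polynomial.

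Next I would set up the block embedding $\phi$: for each $i\in V$, let $D_i$ be the set of nodes of the gadget of $i$ (together with the wire-nodes along outgoing edges, assigned to exactly one endpoint to get a partition of $V'$), and let the pattern $p_{i,q}$ for $q\in\{0,1\}$ be the configuration of $D_i$ obtained by pushing the value $q$ through a frozen (stationary) copy of the gadget — concretely, one checks by induction on depth that if every incoming wire of gadget $i$ carries $x_i$ then after one full pass of $D$ steps of $F'$ every node of the gadget again carries a value determined by $x$, and the outgoing wires carry $\wedge_{j\in N^-(i)}x_j = F_G(x)_i$. This gives $\phi\circ F_G = (F')^{D}\circ\phi$ on $\{0,1\}^V$, i.e.\ simulation with time constant $T=D=O(\log n)$; since the problem statement allows linear time this is fine (indeed better). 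Injectivity of the block encoding is immediate because the outgoing wires of $D_i$ literally hold $x_i$ (a single bit coding a single bit). Finally I would check the \DLOG{} bound: building the trees, padding them to uniform depth, and emitting the bounded-degree representation (graph plus local maps, all of which are $\gateAND$ or $\gateCOPY$) only requires counting in/out-degrees and manipulating $O(\log n)$-bit depth counters — all doable in logarithmic space — and the block embedding is likewise a simple local table.

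The main obstacle I expect is purely bookkeeping: getting the padding right so that the construction is \emph{exactly} synchronous (all input-to-output paths equal length across all gadgets) while staying within \DLOG{} and while correctly handling degenerate cases — nodes with in-degree or out-degree $0$ or $1$, and self-loops in $G$ (which are allowed for conjunctive networks but forbidden inside $\Gconj$-networks by the no-self-loop condition of Definition~\ref{def:g-network}; a self-loop at $i$ must be routed through an extra relay $\gateCOPY$ gate so it is no longer a direct loop). None of this is conceptually hard, but it is where the proof must be careful. Once the synchronous gadget construction is in place, the simulation identity and the complexity bounds follow directly, and the theorem is proved. (One can alternatively phrase the argument through Lemma~\ref{lem:from-gadgets-to-networks} by checking that $\Fconj$-networks of bounded degree have coherent $\Gconj$-gadgets, but since here the simulating and simulated families are both conjunctive and the issue is only degree, the direct construction above is cleaner.)
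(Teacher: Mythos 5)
Your proposal is correct and follows essentially the same route as the paper's proof: replace each node by a synchronized fan-in tree of $\gateAND$ gates and a fan-out tree of $\gateCOPY$ gates, wire the meta-nodes along the edges of $G$, and read off the simulation via the resulting block embedding. The only (harmless) difference is that you pad all gadgets to a common depth $O(\log n)$ whereas the paper normalizes each gadget to exactly $n$ steps; both satisfy the linear-time bound, and your treatment of the degenerate cases (self-loops, extreme degrees) is the same bookkeeping the paper leaves implicit.
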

\begin{proof}
  Let $F$ be an arbitrary conjunctive network on graph ${G=(V,E)}$ with $n$ nodes. Its maximal in/out degree is at most $n$. For each node of indegree ${i\leq n}$ we can make a tree-like $\Gconj$-gadget with $i$ inputs and $1$ output that computes the conjunction of its $i$ inputs in exactly ${n}$ steps: more precisely, we can build a sub-network of size ${O(n)}$ with $i$ identified 'input' nodes of fanin $1$ and one identified output node of fanout $1$ such that for any ${t\in\N}$ the state of the output node at time $t+n$ is the conjunction of the states of the input nodes at time $t$ (the only sensible aspect is to maintain synchronization in the gadget, see Figure~\ref{fig:faningadget}).
  \begin{figure}
    \centering
    \begin{tikzpicture}[shorten >=1pt,node distance=2cm,on grid,auto]
      \tikzstyle{every state}=[fill={rgb:black,2;white,10}]
      \node[state] (q_1)                    {};
      \node[state] (q_2)  [right of=q_1]    {};
      \node[state] (q_3)  [right of=q_2]    {};
      \node[state] (i_1)  [below of=q_1]    {$v_1$};
      \node[state] (i_2)  [below of=q_2]    {$v_2$};
      \node[state] (i_3)  [below of=q_3]    {$v_3$};
      \node[state] (q_4)  [above of=q_3]    {};
      \node[state] (q_5)  [above of=q_2]    {};
      \node[state] (q_6)  [above of=q_5]    {$v_o$};
      \node[state] (q_7)  [above of=q_1]    {};
      \node[state] (q_8)  [above of=q_7]    {};
      \node[state] (q_9)  [right of=q_7]    {};
      \node[state] (q'_6)  [left of=q_7]    {};
      \path[->]
      (q_2) edge (q_4);
      \path[->]
      (q_1) edge (q_5);
      \path[->]
      (q_5) edge (q_8);
      \path[->]
      (q_3) edge (q_4);
      \path[->]
      (q_5) edge (q_6);
      \path[->]
      (q_4) edge (q_6);
      \path[->]
      (q_8) edge (q_7);
      \path[->]
      (q_8) edge[bend right] (q'_6);
      \path[->]
      (q'_6) edge[bend right] (q_7);
      \path[->]
       (q_7) edge[bend right] (q'_6);
      \path[->]
      (i_1) edge[bend left] (q_1);
      \path[->]
      (i_1) edge[bend right] (q_1);
      \path[->]
      (i_2) edge[bend left] (q_2);
      \path[->]
      (i_2) edge[bend right] (q_2);
      \path[->]
      (i_3) edge[bend left] (q_3);
      \path[->]
      (i_3) edge[bend right] (q_3);
    \end{tikzpicture}
    \caption{Fanin gadget of degree $3$. For any configuration $x$, ${F^3(x)_{v_o}=x_{v_1}\wedge x_{v_2}\wedge x_{v_3}}$.}
    \label{fig:faningadget}
  \end{figure}
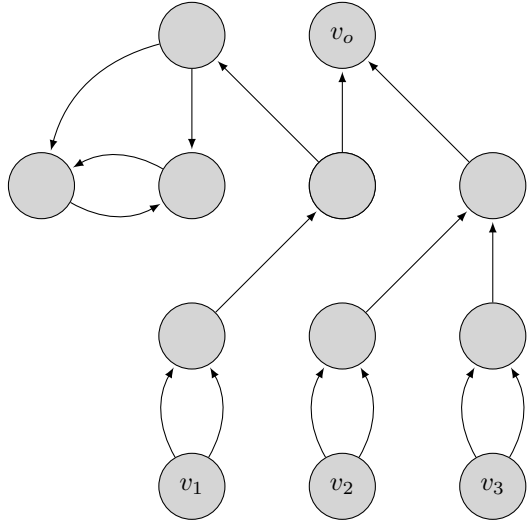

  We do the same for copying the output of a gate $i$ times and dealing with arbitrary fanout. Then we replace each node of $F$ by a meta node made of the two gadgets to deal with fanin/fanout and connect everything together according to graph $G$ (note that fanin/fanout is granted to be $1$ in the gadgets so connections respect the degree constraints). We obtain in \DLOG{} a $\Gconj$-network of size polynomial in $n$ that simulates $F$ in linear time. 
\end{proof}

\begin{remark}
  The family of conjunctive networks can produce super-polynomial periods but is not universal. There are several ways to show this. It is for instance impossible to produce super-polynomial transients within the family \cite[Theorem 3.20]{De_Schutter_1999} so Corollary~\ref{cor:universality} conclude. One could also use Corollary~\ref{coro:trickyuniversal} since a node in a strongly connected component of a conjunctive network must have a trace period of at most the size of the component (actually much more in known about periods in conjunctive networks through the concept of loop number or cyclicity, see \cite{De_Schutter_1999}).
\end{remark}

\newcommand\Gwfa{\GG_{t}}
\newcommand\stdAND{\mathrm{AND}_{\{0,1\}}}
\newcommand\specAND{\mathrm{AND}_{2}}
\newcommand\gateLoop{\mathrm{\Lambda}}
\newcommand\gateId{\mathrm{Id}}
\newcommand\gateCpy{\mathrm{\Upsilon}}

\subsection{Super-polynomial transients and periods without universality}

Let us consider in this section alphabet ${Q=\{0,1,2\}}$. We are going to define a set $\Gwfa$ such that $\Gwfa$-networks exhibit super-polynomial transients but are not universal. To help intuition, $\Gwfa$-networks can be though as standard conjunctive networks on ${\{0,1\}}$ that can in some circumstances produce state $2$ which is a spreading state (a node switches to state $2$ if one of its incoming neighbors is in state $2$). The extra state $2$ will serve to mark super-polynomial transients, but it cannot escape a strongly connected component once it appears in and, as we will see, $\Gwfa$-networks are therefore too limited in their ability to produce large periodic behavior inside strongly connected components.

$\Gwfa$ is made of the following maps:
\begin{align*}
  \stdAND &: (x,y) \mapsto
                    \begin{cases}
                      2&\text{ if $2\in\{x,y\}$}\\
                      x\wedge y&\text{ else.}
                    \end{cases}\\
  \specAND &: (x,y) \mapsto
                    \begin{cases}
                      2&\text{ if $2\in\{x,y\}$ or $x=y=1$}\\
                      0&\text{ else.}
                    \end{cases}\\
  \gateLoop &: (x,y) \mapsto
                    \begin{cases}
                      2&\text{ if $2\in\{x,y\}$}\\
                      x&\text{ else.}
                    \end{cases}\\
  \gateId &: x\mapsto x.\\
  \gateCpy &: x\mapsto (x,x).
\end{align*}

 $\Gwfa$-networks can produce non-polynomial periods by disjoint union of rotations of prime lengths as in Theorem~\ref{theo:non-polyn-cycl}, but they can also wait for a global synchronization of all rotations and freeze the result of the test for this synchronization condition inside a small feedback loop attached to a ``controlled AND map''.

More precisely, as shown in Figure~\ref{fig:freezeand} we can use in the context of any $\Gwfa$-network a small module ${T(x)}$ of made of five nodes with the following property: if the $\gateLoop$ node of the module is in state $0$ in some initial configuration, then it stays in state $0$ as long as nodes $x$ is not in state $1$, and when $x=1$ at some time step $t$ then from step $t+2$ on the $\gateLoop$ node is in state $2$ at least one step every two steps. This module is the key to control transient behavior.

\begin{figure}
  \centering
  \begin{tikzpicture}[shorten >=1pt,node distance=2cm,on grid,auto]
    \tikzstyle{every state}=[fill={rgb:black,2;white,10}]
    
    \fill[fill=white!90!gray] (-4,0)--(-2,-2)--(-4,-2)--cycle;

    \node[state] (q_1)                    {$\specAND$};
    \node[state] (q_2)  [left of=q_1]    {$\gateCpy_1$};
    \node[state] (q_3)  [below of=q_1]    {$\gateCpy_2$};
    \node[state] (q_4)  [right of=q_1]    {$\gateLoop$};
    \node[state] (q_5)  [below of=q_2]    {$x$};
    \node[state] (q_6)  [below of=q_4]    {$\gateId$};

    \path[<->]
    (q_6) edge (q_4);
    \path[->]
    (q_2) edge (q_1);
    \path[->]
    (q_1) edge (q_4);
    \path[->]
    (q_3) edge (q_1);
    \path[->]
    (q_5) edge (q_2);
    \path[->]
    (q_5) edge (q_3);
    \path[dotted,very thick] (q_5)+(-2,2) edge (q_5);
    \path[dotted,very thick] (q_5)+(-2,0) edge (q_5);
  \end{tikzpicture}
  
  \caption{Freezing the result of a test in a $\Gwfa$-network. The module $T(x)$ is made of the nodes marked $\gateCpy$, $\specAND$, $\gateLoop$ and $\gateId.$  Observe that each node represents some output of its corresponding label (for more details on $\GG$-networks see Definition \ref{def:g-network}). Each gate has one output with the exception of the gate $\Upsilon$ which is represented by two nodes. The module $T(x)$ reads the value of node $x$ belonging to an arbitrary $\Gwfa$-network (represented in light gray inside dotted lines).  The output $\Lambda$ is fed back to its control input via the $\gateId$ node (self-loops are forbidden in $\Gwfa$-networks). Note that $x$  as well as the rest of the network is not influenced by the behavior of the gates of the module $T(x)$.}
  \label{fig:freezeand}
\end{figure}

Besides, the map $\stdAND$ behaves like standard Boolean AND map when its inputs are in ${\{0,1\}}$. More generally, by combining such maps in a tree-like fashion, one can build modules ${A(x_1,\ldots,x_k)}$ for any number $k$ of inputs with a special output node which has the following property for some time delay $\Delta\in O(\log(k))$: the output node at time ${t+\Delta}$ is in state $1$ if and only if all nodes $x_i$ (with ${1\leq i\leq k}$) are in state $1$ at time $t$.

Combining these two ingredients, we can build upon the construction of Theorem~\ref{theo:non-polyn-cycl} to obtain non-polynomial transients in any family having coherent $\Gwfa$-gadgets.

\begin{theorem}
  \label{theo:non-poly-transient}
  Any family $\mathcal{F}$ that has coherent $\Gwfa$-gadgets has superpolynomial transients, more precisely: there is some ${\alpha>0}$ such that for any ${n\in\N}$, there exists a network $F_n\in\mathcal{F}$ with ${O(n)}$ nodes and a configuration $x$ such that ${F_n^t(x)}$ is not in an attractor of $F_n$ with ${t\in\Omega(\exp ( n^\alpha))}$.
\end{theorem}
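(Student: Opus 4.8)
\textbf{Proof proposal for Theorem~\ref{theo:non-poly-transient}.} The plan is to mirror the construction of Theorem~\ref{theo:non-polyn-cycl} but to wire a synchronization detector into the freezing module $T(x)$ so that the appearance of state $2$ at a distinguished node is delayed until all the prime-length rotations simultaneously align. First I would, for each $n$, build the $\Gwfa$-network $G_n$ consisting of a disjoint union of rotation cycles of lengths $p_1,\ldots,p_{\pi(n)}$ (the primes up to $n$), realized with $\gateId$ and $\gateCpy$ gates as in Theorem~\ref{theo:non-polyn-cycl}; this part has $O(n\pi(n))$ nodes. On each cycle I place the initial configuration with a single $1$, so that there is a unique node $y_i$ on cycle $i$ which is in state $1$ at times $\equiv 0 \pmod{p_i}$ and in state $0$ otherwise. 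I then tap off a copy of each $y_i$ (using a $\gateCpy$ node, which is permitted in $\Gwfa$) and feed the $\pi(n)$ taps into a tree-shaped synchronization module $A(y_1,\ldots,y_{\pi(n)})$ built out of $\stdAND$ gates (padded with $\gateId$ gates to make it synchronous, using Proposition~\ref{prop:closuregadgets} and the fact that $\gateId\in\Gwfa$); after a delay $\Delta\in O(\log\pi(n))$ its output node is in state $1$ precisely when all $y_i$ are simultaneously $1$, i.e.\ at times $\equiv 0\pmod{\mathrm{lcm}(p_1,\ldots,p_{\pi(n)})} = \exp(\theta(n))$.

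Next I would connect the output of this detector to the control input $x$ of the freezing module $T(x)$ of Figure~\ref{fig:freezeand}, initializing its $\gateLoop$ node to state $0$. By the stated property of $T(x)$: as long as the detector has never fired, the $\gateLoop$ node stays in state $0$; once it fires at some time $t_0$, from step $t_0+2$ onward the $\gateLoop$ node is in state $2$ at least once every two steps. Since the first firing happens at time exactly $\exp(\theta(n))+\Delta$, the $\gateLoop$ node is not eventually-periodic-with-value-$2$ until that time; more precisely, the orbit of the chosen initial configuration $x$ is such that $F^t(x)$ still has the $\gateLoop$ node outside of its final (state-$2$-containing) attractor behavior for all $t < \exp(\theta(n))$. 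This gives a configuration whose transient length is $\Omega(\exp(\theta(n)))$, and by Lemma~\ref{lemma:primes} we have $\theta(n)\sim n$ while the total network size is $m = O(n\pi(n)) = O(n^2/\log n)$; writing the transient bound as $\Omega(\exp(c\sqrt{m\log m}))$ gives the exponent $\alpha$ of the statement. Finally, invoking the hypothesis that $\mathcal{F}$ has coherent $\Gwfa$-gadgets, Lemma~\ref{lem:from-gadgets-to-networks} produces for each $n$ a network $F_n\in\mathcal{F}$ of size linear in $m$ that simulates $G_n$ (suitably extended with the detector and $T(x)$) with constant time and linear space rescaling, and Lemma~\ref{lem:simu-dynamic} transports the long transient to $F_n$ (a super-polynomial transient maps to a super-polynomial transient under linear rescaling), concluding the proof.

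\textbf{Main obstacle.} The delicate point will be verifying the precise behavioral contract of the freezing module $T(x)$ within an arbitrary $\Gwfa$-network: I must check that the $\specAND$/$\gateLoop$/$\gateId$ feedback loop genuinely latches — that once a $1$ arrives on $x$ the value $2$ is produced and then re-produced indefinitely by the self-contained $3$-cycle of gates (recall self-loops are forbidden, hence the $\gateId$ relay), and, conversely, that before any $1$ arrives the loop provably stays in state $0$ rather than drifting. This requires a small but careful case analysis on the truth tables of $\specAND$ and $\gateLoop$ over $\{0,1,2\}$, together with a synchronization argument that the two inputs of $\specAND$ (coming through $\gateCpy_1,\gateCpy_2$) arrive with equal delay so that the ``$x=y=1$'' branch of $\specAND$ fires exactly when $x$ is $1$. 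A secondary technical nuisance is keeping the synchronization tree $A$ genuinely synchronous after padding and ensuring its delay $\Delta$, the cycle phases, and the latching delay of $T(x)$ are all accounted for, so that the first firing time is correctly $\exp(\theta(n))$ up to an additive $O(\log n)$ — but this is routine bookkeeping compared to the latching argument.
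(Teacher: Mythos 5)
Your proposal is essentially the paper's own proof: the same disjoint union of prime-length rotations tapped by $\gateCpy$ nodes, the same $\stdAND$-tree detector $A(\cdot)$ feeding the freezing module $T(x)$ of Figure~\ref{fig:freezeand}, the same appeal to Lemma~\ref{lemma:primes} for the $\exp(\theta(n))$ bound, and the same transfer to $\mathcal{F}$ via Lemma~\ref{lem:from-gadgets-to-networks} and Lemma~\ref{lem:simu-dynamic}. The one slip is the phase of your initial configuration: if each $y_i$ is in state $1$ at times $\equiv 0 \pmod{p_i}$ then all taps are simultaneously $1$ already at $t=0$ and the latch fires immediately, so you must offset the initial $1$ (the paper places it at the \emph{successor} of $v_p$, making the first simultaneous alignment occur at time $\prod_{p\leq n} p - 1$).
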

\begin{proof}
  Like in Theorem~\ref{theo:non-polyn-cycl}, the key of the proof is to show that there is a $\Gwfa$-network with transient length as in the theorem statement, then the property immediately holds for networks of the family $\mathcal{F}$ by Lemma~\ref{lem:from-gadgets-to-networks} and Lemma~\ref{lem:simu-dynamic}.

  For any ${n>0}$ we construct a $\Gwfa$-network $G_n$ made of two parts:
  \begin{itemize}
  \item the 'bottom' part of $G_n$ uses a polynomial set of nodes $B_n$ and consists in a disjoint union of circuits for each prime length less than $n$ as in Theorem~\ref{theo:non-polyn-cycl}, but where for each prime $p$, the circuit of length $p$ has a node $v_p$ which implements a copy gate $\gateCOPY$, thus not only sending its value to the next node in the circuit, but also outputting it to the second part of $G_n$;
  \item the 'top' part of $G_n$ is made of a module ${A(x_1,\ldots,x_k)}$ connected to all nodes $v_p$ as inputs and whose output is connected to a test module $T(x)$ as in Figure~\ref{fig:freezeand}.
  \end{itemize}
  Note that the size of $G_n$ is polynomial in $n$.  With this construction we have the following property as soon as the modules ${A(x_1,\ldots,x_k)}$ and $T(x)$ are initialized to state $0$ everywhere: as long as nodes $v_p$ are not simultaneously in state $1$ then the output of the test module $T(x)$ stays in state $0$; moreover, if at some time $t$ nodes $v_p$ are simultaneously in state $t$, then after time $t+O(\log(t))$ the output node of module $T(x)$ is in state $1$ one step every two steps. This means that $t+O(\log(t))$ is a lower bound on the transient of the considered orbit. To conclude the theorem it is sufficient to consider the initial configuration where all nodes are in state $0$ except the successor of node $v_p$ in each circuit of prime length $p$, which are in state $1$. In this case it is clear that the first time $t$ at which all nodes $v_p$ are in state $1$ is the product of prime numbers less than $n$. As in theorem~\ref{theo:non-polyn-cycl}, we conclude thanks to Lemma~\ref{lemma:primes}. 
\end{proof}

As said above, $\Gwfa$-networks are limited in their ability to produce large periods. More precisely, as shown by the following lemma, their behavior is close enough to conjunctive networks so that it can be analyzed as the superposition of the propagation/creation of state $2$ above the behavior of a classical Boolean conjunctive network. To any $\Gwfa$-network $F$ we associate the Boolean conjunctive network $F^*$ with alphabet $\{0,1\}$ as follows: nodes with local map $\stdAND$ or $\specAND$ are simply transformed into nodes with Boolean conjunctive local maps on the same neighbors, nodes with local maps $\gateCpy$ or $\gateId$ are left unchanged (only their alphabet changes), and nodes with map $\gateLoop(x,y)$ are transformed into a node with only $x$ as incoming neighborhood. 

\begin{lemma}\label{lem:superconjunctive}
  Let $F$ be a $\Gwfa$-network with node set $V$ and $F^*$ its associated Boolean conjunctive network. Consider any ${x\in\{0,1,2\}^V}$ and any ${x^*\in\{0,1\}^V}$ such that the following holds: 
  \[\forall v\in V: x_v\in\{0,1\}\Rightarrow x^*_v=x_v,\]
  then the same holds after one step of each network: 
  \[\forall v\in V: F(x)_v\in\{0,1\}\Rightarrow F^*(x^*)_v=F(x)_v.\]
\end{lemma}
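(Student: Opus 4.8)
The statement is a one-step ``simulation'' invariance: if a Boolean configuration $x^*$ agrees with a ternary configuration $x$ on every node where $x$ is Boolean, then the same agreement persists after applying $F$ and $F^*$ respectively. The plan is to verify this node by node, splitting on the five types of local map in $\Gwfa$. Fix $v\in V$ and assume $F(x)_v\in\{0,1\}$; I must show $F^*(x^*)_v = F(x)_v$. The crucial observation, to be stated first, is that the hypothesis implies: for every incoming neighbor $u$ of $v$ in $F^*$, if $x_u\in\{0,1\}$ then $x^*_u=x_u$, so in particular whenever all the relevant neighbors of $v$ are Boolean in $x$, the local inputs seen by $F^*$ at $v$ coincide with those seen by $F$.

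First I would handle the cases $\gateId$ and $\gateCpy$: here $F(x)_v = x_u$ for the unique incoming neighbor $u$ (same in $F^*$), and $F(x)_v\in\{0,1\}$ forces $x_u\in\{0,1\}$, hence $x^*_u=x_u$ by hypothesis, giving $F^*(x^*)_v = x^*_u = x_u = F(x)_v$. Next, the $\gateLoop(x_u,x_w)$ case: by definition $F(x)_v = 2$ if $2\in\{x_u,x_w\}$ and $x_u$ otherwise; since $F(x)_v\in\{0,1\}$ we are in the second case with $x_u\in\{0,1\}$ and $x_w\neq 2$, and $F^*$ at $v$ depends only on $x^*_u$, which equals $x_u$; so $F^*(x^*)_v = x^*_u = x_u = F(x)_v$. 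This is exactly why the $\gateLoop$ node is reassociated to depend only on its first argument in the definition of $F^*$.

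The two AND-type cases are the ones to do carefully but they are still routine. For $\stdAND(x_u,x_w)$: $F(x)_v\in\{0,1\}$ means $2\notin\{x_u,x_w\}$, so $x_u,x_w\in\{0,1\}$ and $F(x)_v = x_u\wedge x_w$; by hypothesis $x^*_u=x_u$ and $x^*_w=x_w$, and $F^*$ at $v$ is Boolean conjunction, so $F^*(x^*)_v = x^*_u\wedge x^*_w = x_u\wedge x_w = F(x)_v$. For $\specAND(x_u,x_w)$: here $F(x)_v\in\{0,1\}$ means $2\notin\{x_u,x_w\}$ and it is \emph{not} the case that $x_u=x_w=1$; so $x_u,x_w\in\{0,1\}$ with at least one equal to $0$, hence $F(x)_v = 0$ and also $x_u\wedge x_w = 0$; again $x^*_u=x_u$, $x^*_w=x_w$, so $F^*(x^*)_v = x^*_u\wedge x^*_w = 0 = F(x)_v$. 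Since these five maps exhaust the local maps appearing in a $\Gwfa$-network, the induction step holds for every $v$, which proves the lemma.

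\textbf{Main obstacle.} There is no serious obstacle: the only point requiring attention is bookkeeping which neighbors of $v$ in $F$ survive as neighbors in $F^*$ (all of them, except that the second argument of $\gateLoop$ is dropped), and checking that in each case the assumption $F(x)_v\in\{0,1\}$ already forces all the \emph{relevant} neighbor states to be Boolean, so that the agreement hypothesis applies to exactly the nodes needed. I would present the proof as a short case analysis preceded by this remark about neighborhoods.
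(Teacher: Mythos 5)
Your proof is correct and follows essentially the same route as the paper's: the paper's (much terser) argument likewise rests on the observation that $F(x)_v\neq 2$ forces all relevant incoming neighbors of $v$ to be in $\{0,1\}$, so that $x$ and $x^*$ agree there and the local maps of $F$ and $F^*$ coincide on those inputs, with the $\gateLoop$ case handled by noting $F^*$ only retains the first argument. Your explicit five-case check, including the verification that $\specAND$ agrees with Boolean conjunction whenever its output is not $2$, is just a fully spelled-out version of the same argument.
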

\begin{proof}
  It is sufficient to check that if $F(x)_v\neq 2$, it means that all its incoming neighbors are in ${\{0,1\}}$ so $x$ and $x^*$ are equal on these incoming neighbors, and that it only depend on neighbor $a$ in the case of a local map ${\gateLoop(a,b)}$. In any case, we deduce ${F^*(x^*)_v=F(x)_v}$ by definition of $F^\ast$. 
\end{proof}

$\Gwfa$-networks are close to Boolean conjunctive networks as shown by the previous lemma. The following result shows that this translates into strong limitations in their ability to produce large periods and prevents them to be universal.

\begin{theorem}
  \label{theo:transient-nonuniversal}
  The family of $\Gwfa$-networks is not universal.
\end{theorem}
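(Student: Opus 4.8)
The plan is to show that the family of $\Gwfa$-networks fails one of the necessary conditions for universality established earlier, namely Corollary~\ref{coro:trickyuniversal}: a universal family must contain, for some fixed $\alpha>0$, networks of size $n$ with a node in a strongly connected component of the interaction graph whose trace along some periodic orbit has period at least $2^{n^\alpha}$. So it suffices to prove that in any $\Gwfa$-network $F$ with node set $V$, a node $v$ lying in a strongly connected component of the interaction graph has, along any periodic orbit, a trace of period bounded polynomially in $|V|$ — indeed bounded by $|V|$ itself or a small polynomial in it.

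The key tool is Lemma~\ref{lem:superconjunctive}, which says that the $\{0,1\}$-part of any orbit of $F$ is governed by the associated Boolean conjunctive network $F^\ast$, with the spreading state $2$ sitting on top. First I would analyze where state $2$ can go: state $2$ is spreading (a node is in state $2$ at the next step iff some relevant incoming neighbor is in state $2$, looking at the definitions of $\stdAND$, $\specAND$, $\gateLoop$; note $\gateLoop(x,y)$ only ``watches'' $x$ for the non-$2$ behavior but still catches $2$ from either input, and $\specAND$ can even create a $2$ from two $1$'s). The crucial observation is that once state $2$ is present at some node inside a strongly connected component $C$ of the interaction graph, it spreads to all of $C$ and, since $C$ is strongly connected, it can never leave in the sense of disappearing from $C$ — so on a periodic orbit, either $2$ is permanently present on all of $C$ (then the trace at $v\in C$ is eventually constantly $2$, period $1$), or $2$ never appears on $C$ at all. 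Here I would need to be careful: $\specAND$ creates a $2$ from $1,1$, so I must argue that if the orbit is periodic and $C$ is strongly connected, the presence of a $1$-$1$ configuration feeding a $\specAND$ node in $C$ would inject a permanent $2$ into $C$, contradicting periodicity unless it was already permanently $2$; so on a genuinely $2$-free periodic orbit, no $\specAND$ node in $C$ ever sees $1,1$, meaning it behaves like a conjunctive node outputting $0$ consistently with $F^\ast$.

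Having reduced to the $2$-free case, the trace at $v\in C$ along the periodic orbit coincides (by Lemma~\ref{lem:superconjunctive}) with the trace of the corresponding node in $F^\ast$, a Boolean conjunctive network, restricted to $C$ (note that $\gateLoop$ nodes in $F^\ast$ drop their second, control, input, but this only removes edges and cannot merge distinct SCCs in a harmful way; and the relevant sub-structure we care about is still strongly connected since dependence of $v$ on itself is preserved by the gate semantics). Then I invoke the known bound on periods of conjunctive (equivalently disjunctive) networks: the trace period of a node in a strongly connected component of a conjunctive network is at most the size of that component (this is the loop-number / cyclicity bound referenced in the text around \cite{De_Schutter_1999}, and can also be derived from the fact that conjunctive networks have no super-polynomial transients, \cite[Theorem 3.20]{De_Schutter_1999}, together with Corollary~\ref{coro:trickyuniversal}). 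Hence the trace period at $v$ is polynomially bounded in $|V|$, so $\Gwfa$-networks cannot satisfy the conclusion of Corollary~\ref{coro:trickyuniversal}, and therefore $\Gamma(\Gwfa)$ is not universal.

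The main obstacle I anticipate is the bookkeeping around the $\specAND$ gate and the $\gateLoop$ gate: unlike a pure conjunctive network, a $\Gwfa$-network can spontaneously create the spreading state $2$, so the clean dichotomy ``$2$ permanently present on $C$ or never on $C$ along a periodic orbit'' needs a careful argument that creation events inside a strongly connected component are incompatible with periodicity unless $C$ is already saturated with $2$. A secondary subtlety is making sure that passing to $F^\ast$ does not accidentally destroy the strong connectivity we rely on — this should be fine because we only need that $v$ still depends (transitively) on itself in $F^\ast$, which follows from checking each gate's effective dependencies in the $2$-free regime, but it must be stated explicitly rather than waved through.
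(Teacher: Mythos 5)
Your overall strategy is the paper's: you target the necessary condition of Corollary~\ref{coro:trickyuniversal} and combine Lemma~\ref{lem:superconjunctive} with the period bound for conjunctive networks on strongly connected components. But there is a genuine gap in the middle step, namely the claimed dichotomy that on a periodic orbit restricted to a strongly connected component $C$, state $2$ is either permanently present on all of $C$ or never appears on $C$. This is false. State $2$ spreads forward along edges, so once it appears every node of $C$ sees it at \emph{some} time, but nothing forces every node to be in state $2$ at \emph{every} time: a cycle of $k\geq 2$ nodes with local map $\gateId$ is a legitimate $\Gwfa$-network, and starting from a configuration with a single $2$ it has a periodic orbit of period $k$ in which the trace at each node is $2,0,\ldots,0$ repeated. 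Such an orbit is neither saturated with $2$ nor $2$-free, so your case analysis does not cover it and the reduction to the $2$-free conjunctive case breaks down — precisely in the regime the family is designed to exploit, since the module $T(x)$ of Theorem~\ref{theo:non-poly-transient} injects state $2$ intermittently.

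The repair is what the paper actually does: rather than an all-or-nothing claim, show that the \emph{indicator of state $2$} at a node $v$ lying on a closed walk of length $L$ is itself periodic of period at most $L$ along any periodic orbit (if $x_v=2$ then $F^L(x)_v=2$ by spreading along the walk; conversely $F^L(x)_v=2$ forces $x_v=2$, since otherwise iterating $L$-step blocks up to a multiple of the global period would contradict periodicity). Superposing this $2$-pattern of period at most $n$ with the $\{0,1\}$-trace of period at most $n$ obtained from $F^\ast$ via Lemma~\ref{lem:superconjunctive} bounds the trace period at $v$ by $n^2$, which is what Corollary~\ref{coro:trickyuniversal} requires. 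A secondary point: your parenthetical claim that the conjunctive period bound ``can also be derived from'' the transient bound of \cite{De_Schutter_1999} together with Corollary~\ref{coro:trickyuniversal} does not work — transient bounds do not control periods, and that corollary is a necessary condition for universality, not a tool for bounding periods of a given family; the elementary closed-walk argument (or the loop-number results you cite) is the correct justification.
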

\begin{proof}
  Consider a Boolean conjunctive automata network $F$, a configuration $x$ with periodic orbit under $F$ and some node $v$ such that there is a walk of length $L$ from $v$ to $v$. We claim that $x_v=F^L(x)_v$ so the trace at node $v$ in $x$ is periodic of period less than $L$. Indeed, in a conjunctive network state $0$ is spreading so clearly if $x_v=0$ then $F^L(x)_v=0$ and, more generally, ${F^{kL}(x)_v=0}$ for any ${k\geq 1}$. On the contrary, if ${x_v=1}$ then we can't have ${F^L(x)_v=0}$ because then ${F^{Pk}(x)_v=0}$ with $P$ the period of $x$ which would imply $x_v=0$.

  With the same reasoning, if we consider any $\Gwfa$-network $F$, any configuration $x$ with periodic orbit and some node $v$ such that there is a walk of length $L$ from $v$ to $v$, then it holds: 
  \[x_v=2 \Leftrightarrow F^L(x)_v=2.\]
  We deduce thanks to Lemma~\ref{lem:superconjunctive} that for any configuration $x$ with periodic orbit of some $\Gwfa$-network $F$ with $n$ nodes, and for any node $v$ belonging to some strongly connected component, the period of the trace at $v$ starting from $x$ is less than ${n^2}$: it is a periodic pattern of presence of state $2$ of length less than $n$ superposed on a periodic trace on $\{0,1\}$ of length less than $n$. We conclude that the family of $\Gwfa$-networks cannot be universal thanks to Theorem~\ref{coro:trickyuniversal}. 
\end{proof}

\section{Perspectives}
\label{sec:perspectives}

The main contribution of this paper is a general formalism and a proof technique to show intrinsic universality of families of automata networks, with all the dynamical and computational consequences such a result implies.
As announced earlier, the first perspective is the use of this framework in a companion paper to show how some non-universal concrete families can recover universality by changing the update schedule of the system, thus extending previous results like \cite{goles2020firing,goles2016pspace,goles2014computational}.

However, we believe that several research directions directly connected to the notions developed along with our framework are worth being considered.
We detail some of them below.

\paragraph{Glueing} We think it would be interesting to understand the properties of the glueing process itself and see what information on the result of the glueing process can be deduced from the knowledge of each network to be glued. We are particularly interested in dynamical properties. In addition, it would be very interesting to explore if latter process can be seen in the opposite way, i.e., given an automata network, determine if it is possible to decompose the network into glued blocks satisfying some particular properties as gadgets do.

\paragraph{Simulations and universality}
An obvious working direction following our framework is to classify classical known families with respect to intrinsic universality.
Actually two notions of intrinsic universality are introduced in this paper, and we showed that families coming from intrinsically universal cellular automata are not strongly universal but very close to be. We also see how to build families which are universal but not strongly universal by adding a somewhat artificial mechanism that slows down polynomially any useful computation made by networks in the family, giving examples which are universal but requires a superlinear spatio-temporal rescaling factor. However, we don't have any natural example so far of such 'weakly universal' families and we would like to better understand this territory. In the same spirit, we can ask how a strongly universal family can fail to have coherent $\Gmon$-gadgets (recall that Corollary~\ref{cor:univfrommon} only gives a sufficient condition to be strongly universal). We don't think that strongly universality implies coherent $\Gmon$-gadgets in general, but the implication might at least be true under some additional hypothesis, and possibly in natural families like $\GG$-networks.

\paragraph{$\GG$-networks}
Proposition \ref{prop:closuregadgets} together with theorems~\ref{theo:non-polyn-cycl} and \ref{theo:transient-nonuniversal} provide an interesting starting point to explore the link between different gate sets and the richness of their synchronous closure and the associated family of $\GG$-networks. It is natural to further study the hierarchy between sets of gates and we believe that a promising direction would be to study reversible gate sets such as Toffoli or Fredkin gates. Also, we would like to understand how easy it is to deduce global properties of the family of $\GG$-networks from the knowledge of $\GG$. Typically, one can consider the following decision problem:
\begin{itemize}
\item input: $\GG$
\item question: is the family of $\GG$-networks strongly universal?
\end{itemize}
Is this problem undecidable? If it is the case, what is the minimum number of gates in $\GG$ to obtain undecidability?

\bibliographystyle{plain}

\bibliography{paper.bib}

\end{document}